\newcommand{\eat}[1]{}
\newcommand{\mr}{\text{\sc mr}\xspace}
\newcommand{\sol}{\text{\sc sol}\xspace}
\newcommand{\alg}{\text{\sc alg}\xspace}
\newcommand{\apx}{\text{\sc apx}\xspace}
\newcommand{\opt}{\text{\sc opt}\xspace}
\newcommand{\mrs}{\text{\sc mrs}\xspace}
\newcommand{\med}{\text{\sc mid}\xspace}
\newcommand{\fracc}{\text{\sc frac}\xspace}
\newcommand{\cost}{\text{\sc cost}\xspace}
\newcommand{\regalgsol}{\text{\sc reg}\xspace}
\newcommand{\norm}[1]{\left|\left|#1\right|\right|}
\newcommand{\I}{{\bf I}}
\newcommand{\poly}{{\bf P}}
\newcommand{\np}{{\bf NP}}
\newcommand{\lp}{\text{\sc lp}\xspace}
\newtheorem{theorem}{Theorem}
\newtheorem{lemma}[theorem]{Lemma}
\newtheorem{corollary}[theorem]{Corollary}
\newtheorem{fact}[theorem]{Fact}
\DeclareMathOperator*{\argmax}{argmax}
\newcommand{\bd}{\I}
\title{Universal Algorithms for Clustering Problems}
\author{
  Arun Ganesh
  \footnote{Department of Computer Science, UC Berkeley. Email: \texttt{arunganesh@berkeley.edu}}
  \and
  Bruce M. Maggs
  \footnote{Department of Computer Science, Duke University and Emerald Innovations. Email: \texttt{bmm@cs.duke.edu}}
  \and 
  Debmalya Panigrahi
  \footnote{Department of Computer Science, Duke University. Email: \texttt{debmalya@cs.duke.edu}}\\ 
  }
\date{}
\begin{document}

\maketitle

%\tableofcontents

\thispagestyle{empty}

\begin{abstract}
This paper presents {\em universal} algorithms for clustering problems, including the widely studied $k$-median, $k$-means, and $k$-center objectives. The input is a metric space containing all {\em potential} client locations. The algorithm must select $k$ cluster centers such that they are a good solution for {\em any} subset of clients that actually realize. Specifically, we aim for low {\em regret}, defined as the maximum over all subsets of the difference between the cost of the algorithm's solution and that of an optimal solution. A universal  algorithm's solution $\sol$ for a clustering problem is said to be an $(\alpha, \beta)$-approximation if for all subsets of clients $C'$, it satisfies $\sol(C') \leq \alpha \cdot \opt(C') + \beta \cdot \mr$, where $\opt(C')$ is the cost of the optimal solution for clients $C'$ and $\mr$ is the minimum regret achievable by any solution. 

Our main results are universal algorithms for the standard clustering objectives of $k$-median, $k$-means, and $k$-center that achieve $(O(1), O(1))$-approximations. These results are obtained via a novel framework for universal algorithms using linear programming (LP) relaxations. These results generalize to other $\ell_p$-objectives and the setting where some subset of the clients are \textit{fixed}. We also give hardness results showing that $(\alpha, \beta)$-approximation is NP-hard if $\alpha$ or $\beta$ is at most a certain constant, even for the widely studied special case of Euclidean metric spaces. This shows that in some sense, $(O(1), O(1))$-approximation is the strongest type of guarantee obtainable for universal clustering.

%For universal $k$-center, we provide a deterministic (3,3)-approximation algorithm for dynamic clients, and a deterministic $(9,3)$-approximation algorithm that allows static clients.

\end{abstract}

\setcounter{page}{0}

\clearpage

\section{Introduction}

%\alert{Remove the mention of facilities everywhere, and replace with cluster center.}

%\alert{Introduce the notion of cluster centers being limited to a subset of locations denoted $F$.}

\eat{In many optimization problems in practice, inherent uncertainty can lead to impreciseness of the input parameters. For instance, consider the problem of setting up data center facilities or warehouses to serve client demands at minimum cost. In a typical application, one would like to optimize not just for current demands but also for demands in the near future. Since future demands are not known precisely, a natural question is: {\em given a range of possible client demands, can we find a single solution that is robust to any actual realization of client demands?} Similarly, consider the problem of finding a shortest path between two locations in a network. Again, travel times on individual edges of the network are mere estimates, and one might ask: {\em given a range for possible travel times on every edge, can we find a single path that is robust to any actual realization of travel times?} This class of problems, where one seeks to optimize for input ranges instead of precise inputs, has been extensively studied in the OR literature, and are collectively called {\em range-robust algorithms}.  This includes problems in graph optimization \cite{AissiBV08, YamanKP01, Zielinski04}, clustering \cite{Averbakh05, AverbakhB97,AverbakhB00,KouvelisY97}, linear programming \cite{InuiguchiM95,MausserL98}, and other domains \cite{Averbakh01,KasperskiZ06}.}

%Informally, in a range-robust problem, one is given an instance of a classical optimization problem. However, some of the values in the instance may be specified as numerical ranges rather than as fixed numbers, allowing us to encode uncertainty into the input. For example, in a graph optimization problem one might replace the cost of edge $e$, $c_e$, with a range $[l_e, u_e]$ if $e$'s cost is uncertain. Let $\bd$ denote any realization of the uncertain values such that each value is contained within its corresponding range. The input then encodes a family of instances of the non-robust problem with one instance for each $\bd$. Our informal goal is now to find a solution that is robust in the sense that it is good for all realizations $\bd$.

 In {\em universal}\footnote{In the context of clustering, universal facility location sometimes refers to facility location where facility costs scale with the number of clients assigned to them. This problem is unrelated to the notion of universal algorithms studied in this paper.} approximation (e.g.,~\cite{BertsimasG89,BhalgatCK11,BuschDRRS12,GorodezkyKSS10,Gupta:2006:OND:1109557.1109665,Hajiaghayi:2006:ILU:1109557.1109628,10.1007/11776178_18,Platzman:1989:SCP:76359.76361,SCHALEKAMP2008}), the algorithm is presented with a set of {\em potential} input points and must produce a solution.  After seeing the solution, an adversary selects some subset of the points as the actual {\em realization} of the input, and the cost of the solution is based on this realization. 
 %For example, in the metric TSP problem, a universal algorithm is presented with a set of points and must produce a tour that visits all of the points.  After this tour is revealed, an adversary selects a subset of the points as the realization of the input.  The cost of the solution is then the cost of a tour that visits the points in the realization in the same order as in the solution. 
 The goal of a universal algorithm is to obtain a solution that is near-optimal for {\em every} possible input realization.   For example, suppose that a network-based-service provider can afford to deploy servers at $k$ locations around the world and hopes to minimize latency between clients and servers.  The service provider does not know in advance which clients will request service, but knows where clients are located.  A universal solution provides guarantees on the quality of the solution regardless of which clients ultimately request service.  As another example, suppose that a program committee chair wishes to invite $k$ people to serve on the committee.  The chair knows the areas of expertise of each person who is qualified to serve.  Based on past iterations of the conference, the chair also knows about many possible topics that might be addressed by submissions.  The chair could use a universal algorithm to select a committee that will cover the topics well, regardless of the topics of the papers that are submitted. The situation also arises in targeting advertising campaigns to client demographics. Suppose a campaign can spend for $k$ advertisements, each targeted to a specific client type. While the entire set of client types that are potentially interested in a new product is known, the exact subset of clients that will watch the ads, or eventually purchase the product, is unknown to the advertiser. How does the advertiser target her $k$ advertisements to address the interests of any realized subset of clients?

Motivated by these sorts of applications, this paper presents the first universal algorithms for clustering problems, including the classic $k$-median, $k$-means, and $k$-center problems. The input to these algorithms is a metric space containing all locations of {\em clients} and {\em cluster centers}. The algorithm must select $k$ cluster centers such that this is a good solution for {\em any} subset of clients that actually realize.  

It is tempting to imagine that, in general, for some large enough value of $\alpha$, one can find a solution $\sol$ such that for all realizations (i.e., subsets of clients) $C'$, $\sol(C') \leq \alpha \cdot \opt(C')$, where $\sol(C')$ denotes $\sol$'s cost in realization $C'$ and $\opt(C')$ denotes the optimal cost in realization $C'$.
%\footnote{This is the type of guarantee in Jia {\em et al.}~\cite{JiaLNRS05} for metric TSP, metric Steiner tree, and weighted set cover.}  
But this turns out to be impossible for many problems, including the clustering problems we study, and indeed this difficulty may have limited the study of universal algorithms. For example, suppose that the input for the $k$-median problem is a uniform metric on $k+1$ points, each with a cluster center and client.  In this case,
for any solution $\sol$ with $k$ cluster centers, there is some realization $C'$ consisting of a single client that is not co-located with any of the $k$ cluster centers in $\sol$. Then, $\sol(C') > 0$ but $\opt(C') = 0$. 
% Yet, for every realization $\I$ consisting of $k$ or fewer clients, including one-client realizations, $\opt(\I) = 0$.  
Since it is not possible to provide a strict approximation guarantee for every realization, we instead seek to minimize the {\em regret}, defined as the maximum difference between the cost of the algorithm's solution and the optimal cost across all realizations.  The solution that minimizes regret is called the {\em minimum regret solution}, or $\mrs$ for short, and its regret is termed {\em minimum regret} or $\mr$. More formally, $\mr = \min_\sol \max_{C'} [\sol(C') - \opt(C')]$. We now seek a solution $\sol$ that achieves, for all input realizations $C'$, $\sol(C') - \opt(C') \leq \mr$, i.e., $\sol(C') \leq \opt(C') + \mr$. But, obtaining such a solution turns out to be \np-hard for many problems, and one has to settle for an approximation: $\sol(C') \leq \alpha \cdot \opt(C') + \beta \cdot\mr$. The algorithm is then called an $(\alpha, \beta)$-approximate universal algorithm for the problem. Note that in the aforementioned example with $k+1$ points, any solution must pay $\mr$ (the distance between any two points) in some realization where $\opt(C') = 0$ and only one client appears (in which case paying $\mr$ might sound avoidable or undesirable). This example demonstrates that stricter notions of regret and approximation than $(\alpha, \beta)$-approximation are infeasible in general, suggesting that $(\alpha, \beta)$-approximation is the least relaxed guarantee possible for universal clustering.
\eat{\footnote{This paper shares the notion of regret and $(\alpha, \beta)$-approximation with a companion paper on robust graph algorithms submitted to the same conference~\cite{GaneshMP19a}.}}

\eat{For instance, it is known that the range-robust version of any problem in \poly\ has a 2-approximation \cite{KasperskiZ06} and that range-robust shortest path has an FPTAS on series-parallel multidigraphs \cite{KasperskiZ07}.  Somewhat surprisingly, however, this literature deals almost exclusively with problems in \poly, i.e., those that have exact polynomial algorithms.}

\subsection{Problem Definitions and Results}\label{section:defs}

We are now ready to formally define our problems and state our results. In all the clustering problems that we consider in this paper, the input is a metric space on all the potential client locations $C$ and cluster centers $F$.\footnote{We only consider finite $C$ in this paper. If $C$ is infinite (e.g. $C = \mathbb{R}^d$), then the minimum regret will usually also be infinite. If one restricts to realizations where, say, at most $m$ clients appear, it suffices to consider realizations that place $m$ clients at one of finitely many points, letting us reduce to universal $k$-center with finite $C$.}\footnote{The special case where $F = C$ has also been studied in the clustering literature, e.g., in \cite{HochbaumS85, CharikarGST99}, although the more common setting (as in our work) is to not make this assumption. Of course, all results (including ours) without this assumption also apply to the special case. If $F=C$, the constants in our bounds improve, but the results are qualitatively the same. We note that some sources refer to the $k$-center problem when $F \neq C$ as the $k$-supplier problem instead, and use $k$-center to refer exclusively to the case where $F = C$.} Let $c_{ij}$ denote the metric distance between points $i$ and $j$. The solution produced by the algorithm comprises $k$ cluster centers in $F$; let us denote this set by $\sol$. Now, suppose a subset of clients $C'\subseteq C$ realizes in the actual input. Then, the cost of each client $j\in C'$ is given as the distance from the client to its closest cluster center, i.e., $\cost(j, \sol) = \min_{i \in \sol} c_{ij}$. The clustering problems differ in how these costs are combined into the overall minimization objective. The respective objectives are given below:
\begin{itemize}
    \item {\bf $k$-median} (e.g.,~\cite{CharikarGST99,JainV01,AryaGKMMP01,LiS13,ByrkaGRS13}): {\em sum} of client costs, i.e., $\sol(C') = \sum_{j\in C'} \cost(j, \sol)$.
    \item {\bf $k$-center} (e.g.,~\cite{HochbaumS85, Gonzalez85, HochbaumS86, KhullerS00, NagarajanSS13}): {\em maximum} client cost, i.e., $\sol(C') = \max_{j\in C'} \cost(j, \sol)$.
    \item {\bf $k$-means} (e.g.,~\cite{Lloyd82, KanungoMNPSW02, KumarSS04, GuptaT08, AhmadianNSW17}): {\em $\ell_2$-norm} of client costs, i.e., $\sol(C') = \sqrt{\sum_{j\in C'} \cost(j, \sol)^2}$.
\end{itemize}
We also consider {\bf $\ell_p$-clustering}~(e.g., \cite{GuptaT08}) which generalizes all these individual clustering objectives.
In $\ell_p$-clustering, the objective is the $\ell_p$-norm of the client costs for a given value $p\geq 1$,
i.e., $$\sol(C') = \left(\sum_{j\in C'} \cost(j, \sol)^p\right)^{1/p}.$$ Note that $k$-median and $k$-means are special cases of 
$\ell_p$-clustering for $p=1$ and $p=2$ respectively. $k$-center can also be defined in the $\ell_p$-clustering framework as 
the limit of the objective for $p\rightarrow \infty$; moreover, it is well-known that $\ell_p$-norms only differ
by constants for $p > \log n$ (see Appendix~\ref{section:vector}), 
thereby allowing the $k$-center objective to be approximated within a constant by $\ell_p$-clustering
for $p = \log n$.

\eat{More generally, for a location $j$, the number of clients at the location (called {\em demand}) is denoted $d_j$, which is uncertain but lies in a given input range $[l_j, u_j]$. For notational convenience, we will consider the continuous version, where $d_j$ can assume any (possibly fractional) value in the input range. This is wlog since even while allowing fractional $d_j$, in designing approximation algorithms it suffices to only consider the case when $d_j = l_j$ or $d_j = u_j$ by the following lemma (proof in Section~\ref{section:extremedemands}):}

\eat{\begin{lemma}\label{lemma:extremedemands}
Let $\sol(\bd)$ denote the cost of $\sol$ when demands are set to $\bd = \{d_j\}_{j \in C}$, and $\opt(\bd)$ the cost of the optimal solution when demands are set to $\bd$. Given an instance of range-robust facility location or $k$-median, let $D$ be the set of all demand realizations and $D^*$ be the set of all demand realizations $\bd$ such that for any $j \in C$, $d_j \in \{l_j, u_j\}$. Then:
$$\forall \bd \in D^*: \alg(\bd) \leq \alpha \cdot \opt(\bd) + \beta \cdot \mr  
\quad \text{\rm implies that} \quad 
\forall \bd \in D: \alg(\bd) \leq \alpha \cdot \opt(\bd) + \beta \cdot \mr. $$
\end{lemma}}

\eat{
The operator of a latency-sensitive distributed service, such as a content delivery network (CDN), must make decisions about where to locate facilities, i.e., where to build data centers or where to install servers in existing data centers.  At a high level, the operator must balance two important considerations: the cost of the facilities versus the proximity to the clients that will access the service.  To make matters complicated, predicting client demands precisely may not be possible.  In the context of a CDN, for example, the demands depend on which content providers employ the CDN, which may change over time due to competition among CDNs, and the popularity of new content as it is produced by the content providers.  The CDN must nevertheless deploy its servers, and indeed as world-wide client demand increases year after year, CDNs are constantly deploying new servers based on their best predictions of demand.

One approach to solving a facility location problem in the face of uncertain demands is to first predict the demands and then find the best solution (or an approximation to the best solution) for the predicted demands.  This approach requires a separate prediction algorithm, however, and may produce poor solutions if the predictions are off.  An alternative would be to model each client's demand as a probability distribution, e.g., a Gaussian centered around some mean, and then to find a solution with minimum expected cost (or an approximation thereof).  Such an approach, however, requires that the demands are understood well enough that the distributions can be specified, and the difficulty of finding a solution may depend on the distributions.

In this paper we study facility location problems under a robust optimization model that assumes less is known about client demands called \textit{range-robust} facility location problems. In these problems, we are given an instance of a regular facility location problem with demands on the clients, except we only know upper and lower bounds on the demand of each client. Consider the following model: the algorithm must choose a solution given only these upper and lower bounds. After seeing the solution given by the algorithm, an adversary chooses a demand vector $\bd$ (a vector whose $j$th element $d_j$ is the demand of client $j$). The algorithm then has ``regret'' equal to the difference between the cost of its solution and the optimal solution for the chosen realization of demands. More formally, for a given solution $SOL$ whose cost when the demands are set to $\bd$  is $SOL(\bd)$, we define its regret to be:

$$\max_{\bd} [SOL(\bd) - \opt(\bd)]$$

Where the maximization is taken over all ``valid'' settings of demands, i.e. those that respect the lower/upper bounds given in the input, and $\opt(\bd)$ denotes the cost of the optimal solution when the demands are set to $\bd$. Equivalently, a solution has regret at most $r$ if for any setting of demands $\bd$, $SOL(\bd) \leq \opt(\bd) + r$. The solution with minimum regret is called the minimum regret solution, and we denote by $MR$ the regret achieved by this solution. We call an algorithm an $(\alpha, \beta)$-approximation algorithm if the solution $SOL$ output by the algorithm satisfies $SOL(\bd) \leq \alpha \opt(\bd) + \beta MR$ for all valid settings of demands $\bd$. More intuition behind why this is an appropriate method of benchmarking algorithms is given in Section \ref{section:preliminaries}.
}

%\input{problemdef}

%\subsection{Our Results}

Our main result is to obtain $(O(1), O(1))$-approximate universal algorithms for $k$-median, $k$-center, and $k$-means.
We also generalize these results to the $\ell_p$-clustering problem.

\begin{theorem}
\label{thm:upper}
    There are $(O(1), O(1))$-approximate universal algorithms for the $k$-median, $k$-means, and $k$-center 
    problems.
    More generally, there are $(O(p), O(p^2))$-approximate universal algorithms for 
    $\ell_p$-clustering problems, for any $p\geq 1$.
\end{theorem}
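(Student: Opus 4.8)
The plan is to prove every part of Theorem~\ref{thm:upper} through one common device: a linear programming relaxation of the universal clustering problem whose optimum is within a constant factor of $\mr$, followed by a rounding that loses only constants (or $\mathrm{poly}(p)$ factors for $\ell_p$). I describe the scheme for $k$-median, which is the representative case, and then indicate the adaptations.

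The first step is to rewrite the regret in an LP-friendly form. Because the $k$-median cost is additive over clients, for a fixed solution $\sol$ the realization that is worst against a given comparison set $S$ consists precisely of the clients that $S$ serves more cheaply than $\sol$, so
\[
  \mathrm{regret}(\sol) \;=\; \max_{S \subseteq F,\ |S| = k}\ \sum_{j \in C}\bigl(\cost(j,\sol) - \cost(j,S)\bigr)^{+},
\]
with $(x)^{+}=\max(x,0)$, and $\mr$ is the minimum of the right-hand side over integral $\sol$. I would then guess $\mr$ up to a constant factor --- binary search over the polynomially bounded range of candidate values, using the relaxation below as a feasibility test --- and, for the guessed value $r$, solve: pick fractional openings $y\in[0,1]^{F}$ with $\sum_i y_i\le k$ and fractional assignments, inducing connection costs $\kappa_j$, subject to $\sum_{j}(\kappa_j-\cost(j,S))^{+}\le r$ for every integral $S$ with $|S|=k$ (plus the valid auxiliary inequalities $\kappa_j\le\cost(j,F)+r$, which hold for the optimal solution via singleton realizations). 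The key observation is that $\mrs$, regarded as a $\{0,1\}$-valued solution, is feasible at $r=\mr$, so the relaxation's optimum is at most $\mr$; hence any feasible fractional $y$ has $\sum_{j\in C'}\kappa_j\le\opt(C')+r$ for \emph{every} realization $C'$, since $\opt(C')$ is attained by some integral $S$. The exponentially many constraints would be dealt with either by an approximate separation oracle --- separating is a $k$-median-with-penalties instance (penalty $\kappa_j$ on client $j$), which has constant-factor algorithms --- or by restricting the quantifier to a polynomially sized family of comparison sets that suffices up to constants.

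The second step is to round $y$ to an integral set $\hat y$ of $k$ centers using a clustering-style $k$-median rounding. For the well-served clients this already gives $\cost(j,\hat y)=O(1)\,\kappa_j$. The main obstacle is the ``overflow'' clients whose cluster cannot be opened within the budget of $k$ and who get routed to a distant open center: here I would show that the extra cost they incur, aggregated over any realization, is $O(\mr)$, by viewing these clients (grouped by the cluster that absorbs them) as realizations whose optimal solutions are cheap, so that the small fractional regret of $y$ on those realizations absorbs the overpayment. Combining the bound on overflow clients with $\cost(j,\hat y)=O(1)\kappa_j$ on the rest and with $\sum_{j\in C'}\kappa_j\le\opt(C')+r$ then yields $\hat y(C')\le O(1)\,\opt(C')+O(1)\,\mr$ for all $C'$, i.e.\ an $(O(1),O(1))$-approximation. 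I expect this charging argument --- making the rounding ``aware'' of the regret rather than only of the cost on $C$ --- to be the technically hardest point, since a rounding that merely approximates $\opt(C)$ provably fails.

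For $k$-center the template collapses to something much simpler: the worst realization against $\sol$ is a single client, so $\mathrm{regret}(\sol)=\max_{j}(\cost(j,\sol)-\cost(j,F))$, and since $\opt(C')\ge\max_{j\in C'}\cost(j,F)$ it suffices to select $k$ centers with $\cost(j,\sol)\le O(1)\cost(j,F)+O(1)\mr$ for every client --- a covering problem solved within small constants by standard $k$-center/$k$-supplier rounding after guessing $\mr$. For $k$-means and general $\ell_p$-clustering the objective is an $\ell_p$-norm of the client costs and is no longer additive, so the regret becomes $\max_{C',S}\bigl(\|(\cost(j,\sol))_{j\in C'}\|_p-\|(\cost(j,S))_{j\in C'}\|_p\bigr)$; I would control it with the reverse triangle inequality $\|a\|_p-\|b\|_p\le\|a-b\|_p$ and power-mean comparisons to reduce to the additive analysis on reweighted costs, using an $O(p)$-approximate $\ell_p$-clustering rounding in place of the $k$-median rounding. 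Each of the norm manipulation and the rounding costs a factor $\Theta(p)$, which is exactly where the $(O(p),O(p^{2}))$ bound comes from.
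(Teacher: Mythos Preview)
Your overall framework---a regret-minimization LP relaxation solved via an approximate separation oracle, followed by a rounding that preserves per-client guarantees up to an additive slack---is exactly the one the paper uses. The $k$-center paragraph is also on target and essentially matches the paper's direct greedy argument. But the two steps you flag as the substance of the $k$-median argument both contain genuine gaps.

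\textbf{Separation oracle.} You observe correctly that $\sum_j(\kappa_j-\cost(j,S))^+ = \sum_j\kappa_j - \sum_j\min(\kappa_j,\cost(j,S))$, so maximizing the left side over $|S|=k$ is equivalent to minimizing a $k$-median-with-penalties objective. The problem is that a $\gamma$-approximation to the \emph{minimization} problem does not yield a useful approximation to the \emph{maximization} problem: if the optimal penalty value is $P^*$ and you find $P_S\le\gamma P^*$, then $f_y(S)\ge \gamma\max_S f_y(S)-(\gamma-1)\sum_j\kappa_j$, and the additive loss $(\gamma-1)\sum_j\kappa_j$ can swamp $r$. The paper instead shows that $f_y(S)$ is monotone submodular in $S$ and applies greedy maximization, which gives a clean $\tfrac{e-1}{e}$-factor on the max directly and hence a working approximate separation oracle.

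\textbf{Rounding.} You correctly identify that a rounding which only controls $\sum_{j\in C}\kappa_j$ is insufficient, and that the hard part is bounding the aggregate overpayment of the ``overflow'' clients against \emph{every} realization by $O(\mr)$. But the sketch (``grouping overflow clients by absorbing cluster and viewing them as cheap realizations'') does not give such a bound: these groups are not realizations on which the fractional solution's regret is small, because the fractional solution pays $\kappa_j$ on them too. The paper's device is to set up a separate \emph{$k$-median with discounts} instance, with discount $3\kappa_j$ on client $j$, and to prove a bicriteria $(9,6)$-approximation for that problem via a Lagrangian primal--dual argument. This is what converts the per-realization fractional guarantee into a per-client integral guarantee with controlled additive excess; without it the rounding step does not close.

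For $\ell_p$, the difficulties compound: $f_y$ is no longer submodular (the worst realization is not simply ``all clients closer to $S$''), and the reverse triangle inequality alone does not reduce to the additive case. The paper introduces an auxiliary budget parameter $Y$ on the fractional cost and shows the resulting function $f_{y,Y}$ is submodular via a fractional-knapsack argument, and on the rounding side uses a virtual solution satisfying a ``farness'' condition to relate $\ell_p$ and $\ell_p^p$ regrets. Your sketch does not supply either ingredient.
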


\noindent{\bf Remark:}
The bound for $k$-means is by setting $p=2$ in $\ell_p$-clustering.
For $k$-median and $k$-center, we use separate algorithms to obtain improved
bounds than those provided by the $\ell_p$-clustering result. This is particularly noteworthy for $k$-center
where $\ell_p$-clustering only gives poly-logarithmic approximation.

\eat{
\noindent{\bf Remarks:} The precise approximation bounds that we obtain for these problems are:
$(27, 49)$ for $k$-median, $(3, 3)$ for $k$-center, $(108,412)$ for $k$-means,
and  $(54p, 103p^2)$ for $\ell_p$-clustering.\footnote{The exact 
constants have not been optimized, and it might be possible to improve them
with a more careful analysis, particularly for $k$-median and $k$-means.}
The bound for $k$-means is achieved by setting $p=2$ in the bound for $\ell_p$-clustering.
However, for $k$-median and $k$-center, we use specialized tools to improve on the
bound provided by the $\ell_p$-clustering result; this is particularly noteworthy for $k$-center
where $\ell_p$-clustering only gives us (poly)logarithmic approximation factors.
}

\paragraph{Universal Clustering with Fixed Clients.} We also consider a more general setting where 
some of the clients are {\em fixed}, i.e.,
are there in any realization, but the remaining clients may or may not realize as in the previous
case. (Of course, if no client is fixed, we get back the previous setting as a special case.)
This more general model is inspired by settings where a set of clients is already
present but the remaining clients are mere predictions. 
This surprisingly creates new technical challenges, 
that we overcome to get:

\begin{theorem}
\label{thm:upper-fixed}
    There are $(O(1), O(1))$-approximate universal algorithms for the $k$-median, $k$-means,
    and $k$-center problems
    with fixed clients. More generally, there are $(O(p^2), O(p^2))$-approximate universal algorithms for 
    $\ell_p$-clustering problems, for any $p\geq 1$.
\end{theorem}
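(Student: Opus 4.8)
}
The plan is to lift the LP-relaxation framework behind Theorem~\ref{thm:upper} to the fixed-client setting. Write the potential clients as $C = C_f \cup C_o$, where $C_f$ is the set of \emph{fixed} clients (present in every realization) and $C_o$ the set of optional clients, so the adversary realizes some $C'$ with $C_f \subseteq C' \subseteq C$. First I would fix a guess $r$ for $\mr$ (to be binary-searched) and write a polynomial-size LP asking for a fractional opening of $k$ centers together with a fractional assignment of every potential client, subject to two families of constraints: (i) a constraint forcing the fractional solution to be competitive on the fixed clients against \emph{every} realization containing $C_f$, and (ii) a per-client ``marginal'' constraint bounding, for each optional client $j$, the extra fractional cost of serving $j$ against the adversary's best response. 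The structural claim driving polynomiality---proved as in the unfixed case---is that the exponential family of regret constraints collapses: since the $\ell_p$ objective decomposes over clients and the adversary's best response is itself a clustering instance, it suffices to control the cost on $C_f$ together with a single ``bad'' realization per optional client. Feasibility of the LP whenever $r \ge \mr$ is witnessed by $\mrs$.

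Second, I would round the fractional solution to $k$ integral centers $S$ using the LP-rounding machinery for the objective at hand: a filtering/consolidation step followed by the standard $k$-median rounding for $p=1$, and the analogous $\ell_p$-rounding for general $p$ (this is where the $O(p)$, and here $O(p^2)$, factors enter). What makes the framework go through is that the rounding can be made \emph{pointwise cost-preserving}: for every potential client $j$, $\cost(j,S) \le O(1)\cdot(\text{fractional cost of }j) + (\text{a local slack term})$, uniformly over $j$ rather than only in aggregate for one fixed assignment. Given such a pointwise bound, one reassembles it for an \emph{arbitrary} realization $C'$: take $p$-th powers and sum over $j \in C'$, split into the $C_f$ part (charged to constraint (i)) and the $C_o \cap C'$ part (charged via constraint (ii) to $\opt(C')$ and $r$), and finish with Minkowski/triangle-inequality bookkeeping to get $S(C') \le \alpha\cdot\opt(C') + \beta\cdot r \le \alpha\cdot\opt(C') + \beta\cdot\mr$. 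For $k$-center the $\ell_p$ reduction only yields a polylogarithmic bound, so---exactly as for Theorem~\ref{thm:upper}---one instead runs a direct argument tailored to the bottleneck objective (guessing the optimal and the regret radii, then greedily/LP-selecting $k$ centers) to recover $(O(1),O(1))$.

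The main obstacle is precisely the interaction flagged in the statement: unlike the unfixed case, $\opt(C')$ may serve the fixed clients in a completely different way than any solution tailored to $C_f$ alone, so one cannot treat the fixed-client cost as an additive constant and superimpose the unfixed analysis. Overcoming this forces constraint (i) to certify competitiveness against the fixed clients \emph{simultaneously with} every optional augmentation rather than separately, and in the final charging a fixed client's cost must be routed either to the center $\opt(C')$ uses for it or to $\mr$; showing that at least one of these is always affordable, for every $C'$ at once, is the delicate part. A secondary point---and the reason the $\ell_p$ bound degrades from $(O(p),O(p^2))$ to $(O(p^2),O(p^2))$ here---is that the local slack terms from the $\ell_p$-rounding must themselves be shown to aggregate to $O(\mr)$ over any realization, and forcing this through the LP costs an additional factor of $p$.
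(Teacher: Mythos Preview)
Your proposal has a genuine gap at its core: the claim that the exponential family of regret constraints ``collapses'' to a polynomial-size LP with per-client marginal constraints is incorrect, and this is \emph{not} how the unfixed case works either. In the paper's framework, the regret constraint $\sum_{j\in C'}\sum_i c_{ij}y_{ij}-\opt(C')\le r$ couples all of $C'$ through $\opt(C')$, which is a \emph{global} minimum over $k$-subsets $S$ of $F$ of the quantity $S(C')$. There is no per-client surrogate for $\opt(C')$: the adversary's ``best response'' is a single set of $k$ centers chosen for the entire realized client set, not something that factors client-by-client. Even in the unfixed case the LP has exponentially many constraints; it is solved by the ellipsoid method with an \emph{approximate} separation oracle that, given $(x,y,r)$, maximizes the submodular function $f_y(S)=\max_{C'}\bigl[\text{frac}(C')-S(C')\bigr]$ over $|S|=k$ via greedy. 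Your ``single bad realization per optional client'' idea does not capture this, because two optional clients may be cheap for the adversary only when served \emph{together} by the same center.

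What actually changes in the fixed-client setting---and what you are missing---is that after splitting $\text{frac}(C')-S(C')$ into fixed and optional parts, a residual term $S(C_f)$ remains on the right-hand side that depends on $S$ but is not part of the submodular objective. The paper handles this by guessing $M\approx S(C_f)$ (polynomially many geometric guesses), and for each $M$ maximizing $f_y(S)$ subject to the additional constraint that $S$ is \emph{$M$-cheap}, i.e., $S(C_f)\le M$. This is submodular maximization over a $1$-independence system rather than a cardinality constraint, and even the membership oracle for this system (``is there an $M$-cheap superset of size $k$?'') is \np-hard incremental $k$-median. The paper therefore builds a bespoke greedy routine (\textsc{GreedyMax}) that uses a $\gamma$-approximation for incremental $k$-median to simulate membership queries, and proves it still realizes greedy over some $1$-system containing all $(M/\gamma)$-cheap solutions. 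This machinery, not a polynomial LP, is what yields the bicriteria fractional solution; the rounding then proceeds via $k$-median (or $\ell_p^p$-clustering) with discounts exactly as before. Your identification of the difficulty---that $\opt(C')$ may treat $C_f$ very differently than any solution for $C_f$ alone---is correct, and your high-level plan for $k$-center (direct combinatorial argument, guessing the regret radius) matches the paper. But the LP/separation-oracle part needs to be rebuilt along the independence-system lines above.
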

\eat{
\noindent{\bf Remarks:} The precise approximation bounds that we obtain for these problems are:
$(148, 60)$ for $k$-median, $(459, 458)$ for $k$-means, $(364p^2, 126p^2)$ for $\ell_p$-clustering, 
and $(9, 3)$ for $k$-center.\footnote{As earlier, the exact 
constants have not been optimized, and it might be possible to improve them
with a more careful analysis.}
}

\paragraph{Hardness Results.}

Next, we study the limits of approximation for universal clustering. In particular, we show that 
the universal clustering problems for all the objectives considered in this paper are \np-hard
in a rather strong sense. Specifically, we show that both $\alpha$ and $\beta$ are separately 
bounded away from $1$, {\em irrespective of the value of the other parameter}, showing the necessity of both $\alpha$ and $\beta$ in our approximation bounds. 
Similar lower bounds continue to hold for universal clustering 
in Euclidean metrics, even when PTASes are known in the offline (non-universal) 
setting~\cite{AroraRR98, KolliopoulosR99, KumarSS04, NagarajanSS13, AhmadianNSW17}.
\begin{theorem}
\label{thm:lower}
    In universal $\ell_p$-clustering for any $p \geq 1$, obtaining $\alpha < 3$  
    or $\beta < 2$ is \np-hard.
    Even for Euclidean metrics, obtaining $\alpha < 1.8$ or $\beta \leq 1$ is \np-hard.
    The lower bounds on $\alpha$ (resp., $\beta$) are independent of the 
    value of $\beta$ (resp., $\alpha$).
\end{theorem}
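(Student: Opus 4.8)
The plan is to establish all four bounds by reductions from classical \np-hard problems: \textsc{Dominating Set} / \textsc{Set Cover} for general metrics, and the \np-hardness of optimal (resp.\ approximate) Euclidean clustering for Euclidean metrics. The unifying device is to instantiate the inequality $\sol(C') \le \alpha\opt(C') + \beta\mr$ on \emph{singleton} realizations $C'=\{j\}$: since $\sol(\{j\})=\rho_j(\sol):=\min_{i\in\sol}c_{ij}$ and $\opt(\{j\})=\rho_j:=\min_{i\in F}c_{ij}$ for every $\ell_p$-objective, any $(\alpha,\beta)$-approximate universal solution must satisfy $\rho_j(\sol)\le\alpha\rho_j+\beta\mr$ for every client $j$ (and for $k$-center this is also sufficient, with $\mr=\min_{|S|=k}\max_j[\rho_j(S)-\rho_j]$). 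The key idea is then to design instances that ``switch off'' one of the two terms on the right-hand side, making the hardness of the surviving parameter independent of the other.

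For the bound on $\beta$ (independent of $\alpha$), I would use instances in which every client is \emph{co-located} with a potential center, so $\rho_j=0$ and $\opt(\{j\})=0$ for all $j$; the singleton condition then forces $\rho_j(\sol)\le\beta\mr$, i.e.\ $\sol$ must be a $\beta$-approximate ordinary $k$-center solution, once one shows $\mr$ equals (or is within a constant of) the ordinary $k$-center optimum on the full client set. Composing with the Hochbaum--Shmoys reduction from \textsc{Dominating Set} (shortest-path metric, \np-hard to tell $k$-center optimum $1$ from $\ge 2$) gives: if $\beta<2$ then $\sol$ must cover the graph within radius $1$, hence is a dominating set of size $k$, which we can verify; since $\rho_j=0$ throughout, $\alpha$ plays no role. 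For Euclidean metrics the same template, with points in $\mathbb{R}^d$, reduces the case $\beta=1$ to computing an \emph{optimal} Euclidean $k$-center (or, for general $\ell_p$, $k$-means/$k$-median) solution, which is \np-hard, while $\beta<1$ is outright infeasible (no solution beats $\opt_{\text{kc}}=\mr$); together this yields ``$\beta\le 1$ is \np-hard.''

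For the bound on $\alpha$ (independent of $\beta$), I would take a \emph{bipartite} instance: potential centers $F=\mathcal{S}$ on one side, clients $C=U$ on the other, edges given by set membership, shortest-path metric, built from a \textsc{Set Cover} instance so that $\rho_j=1$ for every client. If a cover of size $k$ exists, the corresponding $S^*$ satisfies $\rho_j(S^*)=\rho_j$ for all $j$, and a direct computation shows $S^*(C')=|C'|^{1/p}=\opt(C')$ for \emph{every} realization $C'$ (and $S^*(C')=\opt(C')$ for $k$-center), so $\mr=0$ on feasible instances — and the $\beta\mr$ term vanishes. The singleton condition now reads $\rho_j(\sol)\le\alpha$; because inter-side distances in a bipartite graph are odd integers, $\alpha<3$ forces $\rho_j(\sol)=1$ for all $j$, i.e.\ $\sol$ is a cover of size $k$. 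Hence a polynomial-time $(\alpha,\beta)$-approximation algorithm with $\alpha<3$ would, on every feasible instance, output a cover of size $k$ (which we verify), deciding \textsc{Set Cover} — impossible unless $\poly=\np$. The Euclidean bound $\alpha<1.8$ follows analogously by replacing the bipartite ``$1$ versus $\ge 3$'' distance gap with a geometric gadget (or an off-the-shelf Euclidean $k$-center/$k$-means hardness construction) that only realizes a weaker good-versus-bad separation, which is exactly why the constant degrades.

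I expect the main obstacle to be controlling $\mr$ in the finite-$p$ constructions. Unlike $k$-center, the $\ell_p$ regret aggregates over many clients in a realization, so one must guarantee that the hard gadget does not let some large realization inflate $\mr$; otherwise a large $\beta$ (or a poor solution) would satisfy the inequality trivially and break the reduction. In the $\alpha$-reduction the bipartite structure handles this cleanly, since $\rho_j(S^*)=\rho_j$ exactly makes $S^*$ optimal on every subset and pins $\mr=0$ for all $p$; in the $\beta$-reduction it requires building the co-located instance with enough uniformity that $\mr$ stays within a constant factor of the $k$-center optimum in both the yes- and no-cases. A secondary, routine point is the search-versus-decision subtlety: the reduction does not compare optimum values but instead runs the putative algorithm and \emph{checks} whether its output is a genuine cover/dominating set, which is correct precisely because on feasible instances the small value of $\mr$ (namely $0$, resp.\ $1$) forces the output to be one.
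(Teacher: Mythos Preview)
Your $\alpha$-bounds are correct and match the paper: the bipartite set-cover metric (the paper states it directly as a $\{1,3\}$-valued center--client metric) gives $\mr=0$ on yes-instances because the cover matches every client's nearest-center distance, and the singleton inequality then forces $\rho_j(\sol)<3$, hence $\rho_j(\sol)=1$. The Euclidean $\alpha<1.8$ bound is indeed obtained from an off-the-shelf gadget (Mentzer's planar-3-SAT reduction to Euclidean $k$-center), used in exactly this way.

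The $\beta$-bounds have a genuine gap, and it is precisely the obstacle you flag but cannot repair with this template. Your singleton inequality reads $\rho_j(\sol)\le\beta\,\mr$, so you need $\mr$ to be (within a constant of) the ordinary $k$-center optimum. This holds for $p=\infty$, but for finite $p$ it is impossible: in \emph{any} co-located instance with $|C|\ge 2k$, every size-$k$ solution $S$ misses at least $k$ clients, and on the realization $C'$ consisting of $k$ such clients one has $\opt(C')=0$ and $S(C')\ge k^{1/p}$, hence $\mr\ge k^{1/p}$. Since the $k$-center optimum in your Dominating-Set yes-case is $1$, the singleton bound degrades to $\rho_j(\sol)\le\beta\,k^{1/p}$, which is vacuous for growing $k$ and any fixed $\beta$. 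No ``uniformity'' of the co-located instance fixes this; the $k^{1/p}$ scaling is intrinsic to finite-$p$ regret whenever $\opt$ can hit $0$ on size-$k$ subsets.

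The paper abandons singletons for the $\beta$-bounds. For general metrics it blows up each Dominating-Set vertex into a $k$-\emph{clique} of clients and argues on realizations of size exactly $k$: a dominating-set solution has maximum regret $k^{1/p}$ (so $\mr=k^{1/p}$), while any non-dominating solution leaves an entire $k$-clique at distance $\ge 2$, paying $2k^{1/p}$ on that clique against $\opt=0$; plugging into $\alg(C')\le\alpha\cdot 0+\beta\,\mr$ gives $\beta\ge 2$. The Euclidean $\beta\le 1$ bound is proved by the same size-$k$-realization device on a second Mentzer gadget (every client co-located with a center, second-nearest center at distance exactly $1$), again with $\mr=k^{1/p}$; it does \emph{not} reduce to exact Euclidean $k$-center as you propose.
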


Interestingly, our lower bounds rely on realizations where sometimes as few as one client appears. This suggests that e.g. redefining regret to be some function of the number of clients that appear (rather than just their cost) cannot subvert these lower bounds.
%
%
%\noindent{\bf Remark:} Note that these lower bounds are stronger than approximation 
%hardness guarantees for the individual clustering problems: $k$-median, $k$-means,
%and $k$-center. Indeed, the lower bounds 
%are derived independent of any approximation hardness of the underlying clustering
%problem, and the construction uses the \np-hardness of the set cover problem 
%rather than that of clustering problems.
%\arun{This may be a bit misleading since the NP-hardness of clustering problems also follows from NP-hardness of set cover via a similar reduction.}
%
%We also prove several hardness results, including the following two theorems.
%
%\begin{theorem}
%\label{theorem:alphahardness}
%For all $p \geq 1$, finding an $(\alpha, \beta)$-approximate solution for universal $\ell_p$-clustering where $\alpha < 3$ is NP-hard.
%\end{theorem}
%
%\begin{theorem}
%\label{theorem:betahardness}
%For all $p \geq 1$, finding an $(\alpha, \beta)$-approximate solution for universal $\ell_p$-clustering where $\beta < 2$ is NP-hard.
%\end{theorem}

\eat{
{\bf Remark.}
Clustering problems are often studied in the geometric setting of {\bf Euclidean} metric spaces, i.e., where the distances are given 
by the Euclidean metric in $\mathbb{R}^d$ for some (typically constant) dimension $d$ (e.g. \cite{AroraRR98, KolliopoulosR99, KumarSS04, NagarajanSS13, AhmadianNSW17}). In this setting,. However, we show that constant lower bounds on $\alpha$ and $\beta$ continue to hold for universal clustering, even in the Euclidean setting. and one might wonder whether 
So, one might wonder whether the above hardness results can be overcome, and $\alpha = 1$ and/or $\beta = 1$, achieved in 
constant-dimensional Euclidean space. We rule out this possibility by showing the following lower bounds even on $\mathbb{R}^2$,
i.e., the Euclidean plane:
 
\begin{theorem}
\label{theorem:euclideanhardness}
In $\mathbb{R}^d$, for any $d \geq 2$, the following are \np-hard:
\begin{itemize}
    \item $\alpha < \frac{1+\sqrt{7}}{2} \approx 1.8$ for all universal $\ell_p$-clustering problems using the $\ell_2$-norm metric (irrespective of $\beta$).
    \item $\alpha < 2$ for all universal $\ell_p$-clustering problems using the $\ell_1$-norm or $\ell_\infty$-norm metric (irrespective of $\beta$).
    \item $\beta = 1$ for all universal $\ell_p$-clustering problems using the $\ell_1$-norm, $\ell_2$-norm, or $\ell_\infty$-norm metric (irrespective of $\alpha$).
\end{itemize}
\end{theorem}
}

%\alert{Are we able to say anything about $\beta$ for Euclidean metrics? If not, I wonder if this raises unnecessary questions about $\beta$ and should be omitted.}

%\begin{figure}
%\centering
%\begin{tabular}{ | c | c | }
%\hline
%& Universal $k$-median and $\ell_p$-clustering\\
% \hline
% Deterministic  & 
 %$\forall \epsilon > 0: \beta \leq (1-\epsilon) \ln n$ if $\alpha < 3$ (Theorem \ref{theorem:deterministichardness}) &
 % $\forall \epsilon > 0: \beta \geq (1-\epsilon) \ln n$ if $\alpha < 3$ (Theorem \ref{theorem:deterministichardness}) &
%\begin{minipage}[c]{0.12\textwidth}
 %\vspace{0.4ex}
 %$\alpha \geq 3$ (Theorem~\ref{theorem:alphahardness}), $\beta \geq 2$ 
 %(Theorem~\ref{theorem:betahardness}) \\
%\end{minipage} \\
%\hline
%Randomized &
%$\forall \epsilon > 0: \alpha < 1 - \epsilon + \frac{2}{e^\beta}$ (Theorem \ref{theorem:randomizedhardness}) &
%$\forall \epsilon > 0: \alpha \geq 1 - \epsilon + \frac{2}{e^\beta}$ (Theorem \ref{theorem:randomizedhardness}) &
%\begin{minipage}[c]{0.12\textwidth}
%$\beta \geq 1 + \frac{1}{e}$ 
%(Theorem \ref{theorem:kmrandhardness}) \\
%\end{minipage} \\
%\hline 
%\end{tabular}
%\caption{Lower bounds for $(\alpha, \beta)$-approximation algorithms assuming \poly $\ne$ \np.}
%\label{figure:hardnessresults}
%\end{figure}

%I don't think we need to mention that the bounds assume P != NP since saying it is NP-hard to get such a result already implies we assume P != NP - Arun

\eat{
Lastly, we give some hardness results for both problems for both deterministic and randomized algorithms in Section \ref{section:hardness}. Figure \ref{figure:hardnessresults} gives a summary of what approximation ratios and robustness ratios we show are NP-hard to achieve for each problem, based on whether or not randomization is used. 

One interesting high-level result is the contrast between Theorem \ref{thm:mainthm} and Theorems \ref{theorem:kmdethardness} and \ref{theorem:kmrandhardness}, which show that we achieve robustness ratio 1 in range-robust facility location but it is NP-hard to do so in range-robust $k$-median. Although it is well understood that the $k$-median problem is ``harder'' than the facility location problem, these results present another perspective on separating the difficulty of these two problems.

}

\subsection{Techniques}\label{subsection:techniques}

\eat{
In order to design algorithms for range-robust problems, we have to deal with the following challenges:

\begin{itemize}
\item There are exponentially many ``scenarios'' the algorithm may be evaluated on
\item The scenario the algorithm is evaluated on is chosen adversarially 
\item The algorithm cannot make any changes to its solution after seeing the chosen scenario
\item The algorithm's solution is compared to the optimal solution for the chosen scenario (as opposed to the solution that is the best for all scenarios simultaneously)
\item Finding the optimal solution to the robust optimization problem (the minimum regret solution) is NP-hard, and even just evaluating the objective in the robust optimization problem (regret) obtained by a given solution is NP-hard.
\end{itemize}

While many past results in robust optimization have given algorithms for problems where some of these challenges were present, to the best of our knowledge this is the first paper to give algorithms with provable guarantees for problems where all of these challenges are present.

\textbf{Problems with simple approaches: }

Before outlining our techniques, we discuss some of the first approaches one might try to solve range-robust problems and problems with these approaches to highlight some of the difficulties of solving range-robust problems. We focus on solving range-robust facility location problems, but these observations generalize to range-robust variations of many problems. 
}

\eat{
In solving a universal problem, the first question one might ask is: can it be reduced to an instance of the corresponding (non-robust) problem? For instance, suppose we fix demands to be $d_j = \frac{l_j + u_j}{2}$ (i.e., every demand is set to the midpoint of its given range) and solve the corresponding non-robust problem for this fixed setting of demands. Indeed, if we could solve the corresponding non-robust problem exactly, then we would get a $(1, 2)$-approximation (see \cite{KasperskiZ06}, for completeness we give a proof in Section \ref{subsection:midpointopt}):

\begin{lemma}\label{lemma:midpointopt}
For a range-robust problem where the objective function is linear in the input parameters $d_j\in [l_j, u_j]$, the optimal integral solution (denoted $\med$) for $d_j = \frac{l_j+u_j}{2}$ for all $j$ is a $(1, 2)$-approximation.
\end{lemma}

The above lemma gives a robustness ratio of $2$ instead of the desired robustness ratio of $1$ for the range-robust facility location problem. Indeed, we give an example in Section~\ref{subsection:mrsneveropt} where for all settings of demands in the given ranges (not just the midpoint demands), the optimal solution achieves robustness ratio at least $2-\epsilon$. More importantly, however, the problems that we study are NP-hard even for a fixed set of demands, and therefore, the above lemma does not yield a polynomial time algorithm. 

One possible fix is to instead solve for the fractional optimal solution to the canonical linear program for the set of midpoint demands, and hope to  round it while preserving the property of the lemma. But, interestingly, the proof of Lemma~\ref{lemma:midpointopt} fails for the fractional optimal solution, even before we attempt to round it! The main reason is that it crucially relies on \mr being a lower bound on the regret of \med, which may not be true for a fractional optimum.
A different approach is to approximately solve for midpoint  demands. However, this only guarantees that the {\em sum} of connection costs of clients is close to the optimum \med, whereas an individual client's connection cost could be much larger than that in \med. This affects the robustness of the solution since the adversary can maximize the demand of a client with a large connection cost, while minimizing that of other clients with lower connection costs (see Section \ref{subsection:midpointopt} for an example). This implies that Lemma~\ref{lemma:midpointopt} does not generalize either to the fractional optimum or to an integral approximation, which impedes the design of a polynomial time algorithm based on this lemma. 
}
\eat{
Thus, for problems such as facility location that are NP-hard to solve exactly in polynomial time even when demands are fixed, black-box approaches to the original problem appear to be insufficient for getting good guarantees for range-robust problems. In other words, the added difficulties introduced by the range-robust model appear to be challenging enough to require the design of new techniques and algorithms in order to get reasonable guarantees.
}
Before discussing our techniques, we discuss why standard approximations for clustering problems are insufficient. It is known that the {\em optimal} solution for the realization that includes all clients gives a $(1, 2)$-approximation for universal $k$-median (this is a corollary of a more general result in \cite{KasperskiZ07}; we do not know if their analysis can be extended to e.g. $k$-means), giving universal algorithms for ``easy'' cases of $k$-median such as tree metrics. But, the clustering problems we consider in this paper are \np-hard in general; so, the best we can hope for in polynomial time is to obtain optimal {\em fractional} solutions, or {\em approximate} integer solutions. Unfortunately, the proof of \cite{KasperskiZ07} does not generalize to {\em any} regret guarantee for the optimal {\em fractional} solution. Furthermore, for all problems considered in this paper, even $(1+\epsilon)$-approximate (integer) solutions for the ``all clients'' instance are not guaranteed to be $(\alpha, \beta)$-approximations for any finite $\alpha, \beta$ (see the example in Appendix~\ref{sec:approx-fail}). These observations fundamentally distinguish universal approximations for \np-hard problems like the clustering problems in this paper from those in \poly, and require us to develop new techniques for universal approximations.
%\noindent

%\textbf{Algorithmic results for universal $\ell_p$-clustering: } 

In this paper, 
we develop a general framework for universal approximation based on linear programming (\lp) relaxations that forms the basis of our results on $k$-median, $k$-means, and $k$-center (Theorem~\ref{thm:upper}) as well as the extension to universal clustering with fixed clients (Theorem~\ref{thm:upper-fixed}). 
%This is in contrast to previous approaches in universal algorithms focused on problem-specific combinatorial techniques. 

The first step in our framework is to write an \lp relaxation of the regret minimization problem.  In this formulation, we introduce a new regret variable that we seek to minimize and is constrained to be at least the difference between the (fractional) solution obtained by the \lp and the optimal integer solution {\em for every realizable instance}. Abstractly, if the \lp relaxation of the optimization problem is given by $\min\{{\bf c\cdot x}: {\bf x} \in P\}$, then the new {\em regret minimization \lp}\footnote{For problems like $k$-means with non-linear objectives, the constraint ${\bf c}(I){\bf \cdot x} \leq \opt(I) + {\bf r}$ cannot be replaced with a constraint that is simultaneously linear in ${\bf x, r}$. However, for a fixed value of ${\bf r}$, the corresponding non-linear constraints still give a convex feasible region, and so the techniques we discuss in this section can still be used.} is given by $$\min\{{\bf r}: {\bf x} \in P; {\bf c}(I){\bf \cdot x} \leq \opt(I) + {\bf r},~\forall I\}.$$ 

\eat{
For our algorithmic guarantees, we first write a regret-minimization polytope for clustering problems and find a fractional solution in this polytope with universal approximation guarantees. We write the polytope of fractional solutions with regret at most $r$ by taking the standard LP relaxation for clustering problems and adding additional set of constraints that bound the regret of the fractional solution by $r$ for every possible realization of demands:

\begin{equation*}
\begin{array}{ll}
\forall j \in C: &\sum_{i \in F} y_{ij} \geq 1\\
\forall i \in F,j \in C: &y_{ij} \leq x_i\\
 &\sum_{i \in F} x_i \leq k\\
 \forall C' \subseteq C:& \sum_{j \in C'}\sum_{i \in F} c_{ij}^p y_{ij} \leq (\opt(C') + r)^p\\
 \forall i \in F: &x_i \geq 0 \\
 \forall i \in F,j \in C: &y_{ij} \geq 0
\end{array}
\end{equation*}
}

Here, $I$ ranges over all realizable instances of the problem. Hence, the \lp is exponential in size, and we need to invoke the ellipsoid method via a separation oracle to obtain an optimal fractional solution. We first note that the constraints ${\bf x}\in P$ can be handled using a separation oracle for the optimization problem itself. So, our focus is on designing a separation oracle for the new set of constraints ${\bf c}(I){\bf \cdot x} \leq \opt(I) + {\bf r},~\forall I$. This amounts to determining the regret of a fixed solution given by $\bf x$, which unfortunately, is \np-hard for our clustering problems. So, we settle for designing an approximate separation oracle, i.e., approximating the regret of a given solution. For $k$-median, we reduce this to a submodular maximization problem subject to a cardinality constraint, which can then be (approximately) solved via standard greedy algorithms. For $k$-means, and more generally $\ell_p$-clustering, the situation is more complex. Similarly to $k$-median, maximizing regret for a fixed solution can be reduced to a set of submodular maximization problems, but deriving the functional value for a given set now requires solving the \np-hard knapsack problem. We overcome this difficulty by showing that we can use {\em fractional} knapsack solutions as surrogates for the optimal integer knapsack solution in this reduction, thereby restoring polynomial running time of the submodular maximization oracle. Finally, in the presence of fixed clients, we need to run the submodular maximization algorithm over a set of combinatorial objects called {\em independence systems}. In this case, the resulting separation oracle only gives a bicriteria guarantee, i.e., the solutions it considers feasible are only guaranteed to satisfy $\forall I: {\bf c}(I){\bf \cdot x} \leq \alpha \cdot \opt(I) + \beta \cdot {\bf r}$ for constants $\alpha$ and $\beta$. Note that the bi-criteria guarantee suffices for our purposes since these constants are absorbed in the overall approximation bounds.

The next step in our framework is to round these fractional solutions to integer solutions for the regret minimization \lp. Typically, in clustering problems such as $k$-median, \lp rounding algorithms give {\em average} guarantees, i.e., although the overall objective in the integer solution is bounded against that of the fractional solution, individual connection costs of clients are not (deterministically) preserved in the rounding. But, average guarantees are too weak for our purpose:  in a realized instance, an adversary may only select the clients whose connection costs increase by a large factor in the rounding thereby causing a large regret. Ideally, we would like to ensure that the connection cost of {\em every} individual client is preserved up to a constant in the rounding. However, this may be impossible in general, i.e., no integer solution might satisfy this requirement. Consider a uniform metric over $k+1$ points. One fractional solution is to make $\frac{k}{k+1}$ fraction of each point a cluster center. Then, each client has connection cost $\frac{1}{k+1}$ in the fractional solution since it needs to connect $\frac{1}{k+1}$ fraction to a remote point. However, in any integer solution, since there are only $k$ cluster centers but $k+1$ points overall, there is one client that has connection cost of $1$, which is $k+1$ times its fractional connection cost. 

To overcome this difficulty, we allow for a uniform {\em additive} increase in the connection cost of every client. We show that such a rounding also preserves the regret guarantee of our fractional solution within constant factors. The clustering problem we now solve has a modified objective: for every client, the distance to the closest cluster center is now discounted by the additive allowance, with the caveat that the connection cost is $0$ if this difference is negative. This variant is a generalization of a problem appearing in \cite{GuhaM09}, and we call it clustering {\em with discounts} (e.g., for $k$-median, we call this problem \textit{$k$-median with discounts}.) 
%This problem is identical to the usual (non-robust) $k$-median problem, except that every client has a ``discount'' and we only have to pay for the amount by which each client's connection cost exceeds its discount. Creating an instance of this problem where discounts are set to a constant multiple of the fractional solution's connection costs, we then use an approximation algorithm for this instance to produce the eventual solution for the range-robust $k$-median problem.
Our main tool in the rounding then becomes an approximation algorithm to clustering problems with discounts. For $k$-median, we use a Lagrangian relaxation of this problem to the classic facility location problem to design such an approximation. For $k$-means and $\ell_p$-clustering, the same general concept applies, but we need an additional ingredient called a {\em virtual} solution that acts as a surrogate between the regret of the (rounded) integer solution and that of the fractional solution obtained above. For $k$-center, we give a purely combinatorial (greedy) algorithm.

\subsection{Related Work}

\eat{\textbf{$k$-median: }Approximation algorithms for facility location and $k$-median problems have been extensively studied.
For facility location, 
%this line of work was started by Hochbaum who in \cite{Hochbaum82} gave a greedy $O(\log n)$-approximation algorithm for the problem. Shmoys et al.\ then gave the first constant factor approximation algorithm in \cite{AardalST97}, launching a long series of papers by various authors successively improving upon the constant, culminating in 
the current best result is a 1.488-approximation algorithm by Li~\cite{Li11}, which is nearly tight by a result of Guha and Khuller~\cite{GuhaK98-2} that shows that it is NP-hard to achieve a 1.463-approximation algorithm.}

For all previous universal algorithms, the approximation factor corresponds to our parameter $\alpha$, i.e., these algorithms are $(\alpha,0)$-approximate. The notion of regret was not considered. As we have explained, however, it is not possible to obtain such results for universal clustering.  Furthermore, it may be possible to trade-off some of the large values of $\alpha$ in the results below, e.g., $\Omega(\sqrt{n})$ for set cover, by allowing $\beta > 0$.

Universal algorithms have been of large interest in part because of their applications as online algorithms where all the computation is performed ahead of time. Much of the work on universal algorithms has focused on TSP. For Euclidean TSP in the plane, Platzman and Bartholdi~\cite{Platzman:1989:SCP:76359.76361} gave an $O(\log n)$-approximate universal algorithm. Hajiaghayi {\em et al.}~\cite{Hajiaghayi:2006:ILU:1109557.1109628} generalized this result to an $O(\log^2 n)$-approximation for minor-free metrics, and Schalekamp and Shmoys~\cite{SCHALEKAMP2008} gave an $O(\log n)$-approximation for tree metrics. For arbitrary metrics, Jia {\em et al.}~\cite{JiaLNRS05} presented an $O(\log^4 n / \log\log n)$-approximation, which improves to an $O(\log n)$-approximation for doubling metrics. The approximation factor for arbitrary metrics was improved to $O(\log^2 n)$ by Gupta {\em et al.}~\cite{Gupta:2006:OND:1109557.1109665}.  It is also known that these logarithmic bounds are essentially tight for universal TSP~\cite{BertsimasG89,Hajiaghayi:2006:ILU:1109557.1109628,GorodezkyKSS10,BhalgatCK11}. For the metric Steiner tree problem, Jia {\em et al.}~\cite{JiaLNRS05} adapted their own TSP algorithm to provide an $O(\log^4 n/ \log\log n)$-approximate universal algorithm, which is also tight up to the exponent of the log~\cite{AlonA92,JiaLNRS05,BhalgatCK11}.  Busch {\em et al.}~\cite{BuschDRRS12} present an $O(2^{\sqrt{\log n}})$-approximation for universal Steiner tree on general graphs and an $O(\mathrm{polylog}(n))$-approximation for minor-free graphs.  Finally, for universal (weighted) set cover, Jia {\em et al.}~\cite{JiaLNRS05} (see also \cite{GrandoniGLMSS08}) provide an $O(\sqrt{n \log n})$-approximate universal algorithm and an almost matching lower bound.

The problem of minimizing regret has been studied in the context of robust optimization. The robust $1$-median problem was introduced for tree metrics by Kouvelis and Yu in \cite{KouvelisY97} and several faster algorithms and for general metrics were developed in the following years (e.g. see \cite{AverbakhB00}).  For robust $k$-center, Averbakh and Berman\cite{AverbakhB00} gave a reduction to a linear number of ordinary $k$-center problems, and thus for classes of instances where the ordinary $k$-center problem is polynomial time solvable (e.g., instances with constant $k$ or on tree metrics) this problem is also polynomial time solvable \cite{AverbakhB97}. 
  A different notion of robust algorithms is one where a set $S$ of possible scenarios is provided as part of the input to the problem. This model was originally considered for network design problems (see the survey by Chekuri~\cite{Chekuri07}).  Anthony {\em et al.}~\cite{AnthonyGGN10} gave an $O(\log n + \log |S|)$-approximation algorithm for solving $k$-median and a variety of related problems in this model (see also \cite{BhattacharyaCMN14}) on an $n$-point metric space.
 %Bhattacharya {\em et al.}~\cite{BhattacharyaCMN14} showed a nearly matching lower bound of $\Omega \left(\frac{\log |S| }{\log \log |S|}\right)$. 
 However, note that $|S|$ can be exponential in $|C|$ in general. 
 %Our results are incomparable to this line of work.
%

Another popular model for uncertainty is two-stage optimization (e.g., \cite{SwamyS06,ShmoysS06,SwamyS12,
DhamdhereGRS05,FeigeJMM07,KhandekarKMS13,GuptaNR14,GuptaNR16, GuptaPRS04,CharikarCP05}). Here, the first stage presents a set of realizable instances (or a distribution over them) and the second stage chooses one of those realizations. The algorithm is free to make choices at either stage but those choices come at a higher cost in the second stage when it has more information about the input. Because of the different costs, results in this model have no bearing on our setting.

%Similarly, in response to the strong lower bounds for universal optimization in Jia {\em et al.}~\cite{JiaLNRS05}, later work such as Grandoni {\em et al.}~\cite{GrandoniGLMSS08} study settings where the input realization is stochastic rather than worst case in the universal framework. We note that these, and other similar models of uncertainty in the literature, do not have a direct bearing on our work.

\medskip\noindent
{\bf Roadmap.}
We present the constant approximation algorithms (Theorem~\ref{thm:upper}) for universal $k$-median, $\ell_p$-clustering ($k$-means is a special case), and $k$-center in Sections~\ref{section:kmalg}, \ref{section:lpalg}, and \ref{section:kcenteralg} respectively. In describing these algorithms, we defer the clustering with discounts algorithms used in the rounding to Appendix~\ref{section:kmwithdiscounts}. We also give the extensions to universal clustering with fixed clients for $k$-median, $k$-means/$\ell_p$-clustering, and $k$-center (Theorem~\ref{thm:upper-fixed}) in Sections~\ref{section:kmalg2}, \ref{sec:lpalg2}, and \ref{sec:kcalg2}. Finally, the hardness results for general metrics and for Euclidean metrics (Theorem~\ref{thm:lower}) appear in Sections~\ref{section:hardness} and \ref{sec:euclidean} respectively.

\section{Universal $k$-Median}\label{section:kmalg}

% * <bmm@cs.duke.edu> 2018-01-10T23:03:19.103Z:
% 
% When we say "not generally possible below" do we mean that there is a non-constant-factor integrality gap?  Or do we just mean that such a rounding is not straightforward?  Because it seems that we are (ultimately) going to provide such a rounding here, right?  (Well, for the robust k-median problem, not for the simpler problem.)
%
% I added clarification below - by 'not generally possible' I mean there are examples where for a given fractional solution, no integral solution exists where every client's connection cost is at most a constant times their connection cost in the given fractional solution. I added an example to the paragraph below. - Arun
% 
% ^.

In this section, we prove the following theorem:

\begin{theorem}\label{thm:kmalgorithm}
There exists a $(27, 49)$-approximate universal algorithm for the $k$-median problem.
\end{theorem}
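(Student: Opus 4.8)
The plan is to instantiate the LP-based framework of Section~\ref{subsection:techniques} for the $k$-median objective. First I would set up the \emph{regret-minimization LP}: starting from the standard bipartite $k$-median relaxation with open-center variables $x_i$ ($i \in F$) and assignment variables $y_{ij}$, constraints $\sum_i y_{ij} \ge 1$ for every client $j$, $y_{ij} \le x_i$, $\sum_i x_i \le k$, $x,y \ge 0$, I add a scalar variable $r$ together with the constraint family
$$\sum_{j \in C'}\sum_{i \in F} c_{ij}\,y_{ij} \ \le\ \opt(C') + r \qquad \text{for every realization } C' \subseteq C,$$
and minimize $r$. Writing $d_j := \sum_i c_{ij} y_{ij}$ for the fractional connection cost of $j$, any feasible fractional point has regret (against integral optima) at most $r$, and since $\mrs$ is itself feasible with $r = \mr$, the LP optimum $r^\star$ satisfies $r^\star \le \mr$. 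The LP has exponentially many constraints, so I would run the ellipsoid method, which only needs an (approximate) separation oracle; separating $x \in P$ and $\sum_i x_i \le k$ is trivial, so the real task is a separation oracle for the regret constraints, i.e.\ approximately computing the regret $\max_{C'}\bigl[\sum_{j \in C'} d_j - \opt(C')\bigr]$ of a fixed fractional solution, which is exactly \np-hard.

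Second, I would build the approximate oracle. Since $\opt(C') = \min_{|S| \le k}\sum_{j \in C'}\cost(j,S)$, swapping the two maximizations yields
$$\max_{C'}\Bigl[\sum_{j \in C'} d_j - \opt(C')\Bigr] \ =\ \max_{|S| \le k}\ \sum_{j \in C}\max\bigl(0,\ d_j - \cost(j,S)\bigr),$$
because for a fixed $S$ the adversary's best realization includes $j$ exactly when $d_j > \cost(j,S)$. The set function $S \mapsto \sum_j \max(0, d_j - \cost(j,S))$ is monotone and submodular (each term is $0$ maxed with the coverage-type function $\max_{i \in S}(d_j - c_{ij})$), so greedy gives a $(1 - 1/e)$-approximation to this cardinality-constrained maximization. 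Plugging greedy into the ellipsoid loop as the oracle — declare $(x,r)$ feasible when greedy's best realization has value at most $r$ — produces in polynomial time a fractional solution whose true regret is at most $\tfrac{e}{e-1} r^\star \le \tfrac{e}{e-1}\mr$, i.e.\ $\sum_{j \in C'} d_j \le \opt(C') + \tfrac{e}{e-1}\mr$ for every $C'$; this works by the standard argument that the true (convex) feasible region is contained in the oracle's feasible region.

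Third — and this is where I expect the main difficulty — I would round this fractional solution to an integral set $S$ of $k$ centers while preserving the regret guarantee up to constants. Ordinary $k$-median rounding controls only the aggregate $\sum_j \cost(j,S)$, which is useless here since the adversary can realize precisely the clients whose cost blew up; and one cannot hope to preserve every client's connection cost within a constant factor (the uniform metric on $k+1$ points rules this out), so the fix is to permit a bounded \emph{additive} slack. Concretely, I would round via an approximation algorithm for \emph{$k$-median with discounts}, in which client $j$ pays $\max(0, \cost(j,S) - \rho_j)$ with discounts $\rho_j$ set proportional to the fractional costs $d_j$ (together with a uniform term tied to $r^\star$); the fractional solution certifies that this discounted objective admits a cheap solution. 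I would obtain an integral $O(1)$-approximation to $k$-median with discounts by Lagrangian-relaxing the cardinality constraint to an uncapacitated facility location instance, solving that by a constant-factor primal–dual / LP-rounding algorithm, and binary-searching the Lagrange multiplier to land on at most $k$ open centers (handling the integrality of the Lagrangian step in the usual way). Combining the per-client bound $\cost(j,S) \le O(1)\,d_j + \max(0, \cost(j,S) - \rho_j)$ with $\sum_{j \in C'} d_j \le \opt(C') + O(\mr)$ and $\sum_j \max(0, \cost(j,S) - \rho_j) \le O(\mr)$ gives $\sum_{j \in C'}\cost(j,S) \le \alpha\,\opt(C') + \beta\,\mr$ for every realization $C'$; carefully accounting for the $(1-1/e)$ oracle loss, the discount scaling, the facility-location ratio, and the Lagrangian rounding yields $\alpha = 27$, $\beta = 49$. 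The crux of the proof is this rounding step: pinning down the right ``clustering with discounts'' formulation, showing the fractional optimum bounds its value, and verifying that the Lagrangian/facility-location route loses only constant factors in both connection costs and discounts.
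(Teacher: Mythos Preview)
Your proposal follows the paper's approach closely: the regret-minimization LP, a submodular-maximization separation oracle yielding an $\tfrac{e}{e-1}$-approximate fractional solution, and rounding via $k$-median with discounts solved by Lagrangian relaxation to facility location (the paper's Lemma~\ref{lemma:kmbicriteria} gives exactly the $(9,6)$-approximation you anticipate).

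There is one genuine gap in the rounding step. You write that ``the fractional solution certifies that this discounted objective admits a cheap solution,'' and you add a ``uniform term tied to $r^\star$'' to the discounts; neither is how the bound actually goes through. The fractional solution cannot witness smallness of the \emph{integer} $k$-median-with-discounts optimum, and no additive $r^\star$ term appears. The paper instead sets the discounts to exactly $3f_j$, where the factor $3$ is the integrality gap of the standard $k$-median LP (Lemma~\ref{lma:km-intgap}, due to Archer et~al.), and uses the minimum regret solution $\mrs$ as the integral witness:
\[
\sum_{j}(m_j - 3f_j)^+ \;=\; \max_{C'}\bigl[\mrs(C') - 3\,\fracc(C')\bigr] \;\le\; \max_{C'}\bigl[\mrs(C') - \opt(C')\bigr] \;=\; \mr,
\]
the inequality being precisely $\opt(C') \le 3\,\fracc(C')$. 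Applying the $(9,6)$-approximation then gives $\sum_j(c_j - 27f_j)^+ \le 6\,\mr$, and combining with $27\,\fracc(C') \le 27\,\opt(C') + 27\cdot\tfrac{e}{e-1}\,\mr$ yields the $(27,49)$ bound. Without the integrality-gap step you cannot bound the discounted optimum by $O(\mr)$, and both constants $27 = 9\cdot 3$ and $49 = \lceil 27\cdot\tfrac{e}{e-1} + 6\rceil$ trace directly to it.
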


We follow the recipe described in Section~\ref{subsection:techniques}. Namely, 
the algorithm has two components. The first component is a separation oracle for the regret minimization \lp based on 
submodular maximization, which we define below.

\paragraph{Submodular Maximization with Cardinality Constraints.} A (non-negative) function $f:2^E \rightarrow \mathbb{R}_0^+$ is said to be {\em submodular} if for all $S \subseteq T \subseteq E$ and $x \in E$, we have $f(T \cup \{x\}) - f(T) \leq f(S \cup \{x\}) - f(S)$. It is said to be {\em monotone} if for all $S \subseteq T \subseteq E$, we have $f(T) \geq f(S)$. The following theorem for maximizing monotone submodular functions subject to a cardinality constraint is well-known.

\begin{theorem}[Fisher {\em et al.}~\cite{FisherNW78}]
\label{thm:cardinality}
For the problem of finding $S \subseteq E$ that maximizes a monotone submodular function $f:2^E \rightarrow \mathbb{R}_0^+$, the natural greedy algorithm that starts with $S = \emptyset$ and repeatedly adds $x \in E$ that maximizes $f(S \cup \{x\})$ until $|S| = k$, is a $\frac{e}{e-1} \approx 1.58$-approximation. %That is, if $S'$ is the solution found by this greedy algorithm, we have $f(S') \geq \frac{e-1}{e} \max_{S: |S| = k} f(S)$.
\end{theorem}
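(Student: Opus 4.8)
\textbf{Proof proposal for Theorem~\ref{thm:cardinality}.} The plan is the standard greedy analysis tracking how much of the gap to the optimum is closed in each step. Let $S^\ast$ be an optimal solution with $|S^\ast| = k$ and let $S_0 = \emptyset, S_1, S_2, \dots, S_k$ be the sequence of sets produced by the greedy algorithm, so $S_{i+1} = S_i \cup \{x_{i+1}\}$ where $x_{i+1}$ maximizes $f(S_i \cup \{x\})$ over $x \in E$. Define the residual gap $\delta_i = f(S^\ast) - f(S_i)$. The goal is to show $\delta_{i+1} \le (1 - 1/k)\,\delta_i$ for every $i$, and then iterate.

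The key step is the following chain of inequalities. By monotonicity, $f(S^\ast) \le f(S^\ast \cup S_i)$. Writing $S^\ast \setminus S_i = \{z_1, \dots, z_m\}$ with $m \le k$ and telescoping the marginal gains of adding these elements one at a time to $S_i$, submodularity (the marginal of $z_\ell$ when added to a superset of $S_i$ is at most its marginal when added to $S_i$) gives
\[
f(S^\ast \cup S_i) - f(S_i) \;\le\; \sum_{\ell=1}^{m} \bigl(f(S_i \cup \{z_\ell\}) - f(S_i)\bigr) \;\le\; k\bigl(f(S_{i+1}) - f(S_i)\bigr),
\]
where the last inequality uses $m \le k$ and the fact that the greedy choice $x_{i+1}$ attains the largest marginal gain among all elements, in particular at least that of each $z_\ell$. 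Combining, $\delta_i = f(S^\ast) - f(S_i) \le k(f(S_{i+1}) - f(S_i)) = k(\delta_i - \delta_{i+1})$, which rearranges to $\delta_{i+1} \le (1 - 1/k)\delta_i$.

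Iterating this bound from $i = 0$ to $i = k$ yields $\delta_k \le (1 - 1/k)^k \delta_0$. Since $f$ is non-negative we have $f(S_0) = f(\emptyset) \ge 0$, hence $\delta_0 = f(S^\ast) - f(\emptyset) \le f(S^\ast)$; and using $(1 - 1/k)^k \le e^{-1}$ for all $k \ge 1$, we get $\delta_k \le e^{-1} f(S^\ast)$, i.e. $f(S_k) \ge (1 - 1/e) f(S^\ast)$. Equivalently $f(S^\ast) \le \frac{e}{e-1} f(S_k)$, which is the claimed approximation ratio. (If the greedy algorithm is forced to stop early because no element has positive marginal gain, then $f(S_i) = f(S_i \cup \{x\})$ for all $x$, and the same chain shows $\delta_i \le 0$, so the current set is already optimal.)

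The only slightly delicate point is the first displayed inequality: one must be careful that the telescoping decomposition introduces supersets of $S_i$ at intermediate steps, so that submodularity is applied in the correct direction (marginal gains only shrink as the base set grows); everything else is routine arithmetic. No separate handling of monotonicity beyond the single use $f(S^\ast) \le f(S^\ast \cup S_i)$ is needed.
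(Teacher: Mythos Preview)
Your proof is correct and is exactly the standard argument for the $(1-1/e)$ greedy guarantee. Note, however, that the paper does not actually prove Theorem~\ref{thm:cardinality}: it is stated as a known result attributed to Fisher~\emph{et al.}~\cite{FisherNW78} and used as a black box, so there is no in-paper proof to compare against.
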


We give the reduction from the separation oracle to submodular maximization in Section~\ref{sec:km-fractional}, and then employ the above theorem.
The second component of our framework is a rounding algorithm that employs the $k$-median with discounts problem, which we define below. 

\paragraph{$k$-median with Discounts.}
%\label{subsection:kmwd}
In the $k$-median with discounts problem, we are given a $k$-median instance, 
%(that is, a clustering problem where the objective is $\sol(C) = \sum_{j \in C} \cost(j, \sol)^p )$ 
but where each client $j$ has an additional (non-negative) parameter $r_j$ called its {\em discount}. 
Just as in the $k$-median problem, our goal is to place $k$ cluster centers that minimize the total connection costs of all clients. 
But, the connection cost for client $j$ can now be discounted by up to $r_j$, i.e., client $j$ with connection cost $c_j$ contributes 
$(c_j - r_j)^+ := \max\{0, c_j - r_j\}$ to the objective of the solution. %We refer to the special case of $p = 1$ as the $k$-median with discounts problem.

Let $\opt$ be the cost of an optimal solution to the $k$-median with discounts problem. 
We say an algorithm $\alg$ that outputs a solution with connection cost $c_j$ for client $j$ is a $(\gamma, \sigma)$-approximation if:
$$\sum_{j \in C} (c_j - \gamma \cdot r_j)^+ \leq \sigma \cdot \opt.$$
That is, a $(\gamma, \sigma)$-approximate algorithm outputs a solution whose objective function when computed using discounts 
$\gamma\cdot r_j$ for all $j$ is at most $\sigma$ times the optimal objective using discounts $r_j$. In the case where all $r_j$ are equal, \cite{GuhaM09} gave a $(9, 6)$-approximation algorithm for this problem based on the classic primal-dual algorithm for $k$-median.
%To provide some intuition for why solving this problem will allow us to round a fractional solution for range-robust $k$-median while preserving the approximation guarantee: Before, we noted that while in general rounding a $k$-median fractional solution such that no client's connection cost increases by more than a constant factor is not generally possible, if we allow ourselves to increase connection costs by a small additive amount more than a constant factor, it may become possible while still preserving regret guarantees. Given a range-robust $k$-median instance and a good fractional solution for this instance, suppose we construct a $k$-median with discounts instance where client $j$ has discount equal to some constant $c$ times the connection cost of $j$ in the fractional solution. Then, suppose we are able to find a solution to the $k$-median with discounts instance that achieves a small objective value. Intuitively, this solution is effectively rounding the fractional solution such that every client's connection cost increases by at most a factor of $c$, within some small additive ``excess'' (which is equal to the objective value). 
The following lemma generalizes their result to the setting where the $r_j$ may differ:
\begin{lemma}\label{lemma:kmbicriteria}
        There exists a (deterministic) polynomial-time $(9, 6)$-approximation algorithm for the $k$-median with discounts problem.
\end{lemma}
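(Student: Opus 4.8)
The plan is to generalize the primal-dual analysis of \cite{GuhaM09} for the uniform-discount case to arbitrary discounts $r_j$, via a Lagrangian relaxation to (uncapacitated) facility location. First I would write a facility location instance on the same metric, where opening a facility costs some uniform value $\lambda \geq 0$ (to be tuned), and the ``connection cost'' of client $j$ to an open facility at distance $c_j$ is taken to be the discounted cost $(c_j - r_j)^+$. Observe that the cost function $(c_j - r_j)^+$ is still a metric-like quantity in a weak sense — in particular it satisfies a relaxed triangle inequality that the primal-dual analysis can tolerate — and crucially it is nonincreasing and $1$-Lipschitz in $c_j$. I would run the classical Jain–Vazirani primal-dual algorithm for facility location on this instance: grow dual variables $\alpha_j$ for the clients uniformly, open a facility when its accumulated contributions pay for $\lambda$, freeze clients once they are connected (directly or via the witness/co-paying structure), and then prune the opened facilities so that no client pays toward two open facilities. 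The output is a set $F_\lambda$ of open facilities whose total discounted connection cost plus $\lambda |F_\lambda|$ is bounded by a constant times the optimal LP value of the Lagrangian-relaxed facility location instance, which in turn is at most a constant times $\opt$ plus $\lambda$ times (number of centers used by the $k$-median-with-discounts optimum) $\leq \lambda k$.

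The next step is the standard Lagrangian parameter search. As $\lambda$ increases from $0$ to large, $|F_\lambda|$ decreases from (roughly) $|F|$ down to $1$; by binary search I would find two nearby values $\lambda_1 < \lambda_2$ producing solutions $F_1$ with $|F_1| > k$ and $F_2$ with $|F_2| < k$ (if some $\lambda$ gives exactly $k$ we are done immediately). The classical trick is then to combine $F_1$ and $F_2$ into a single solution with exactly $k$ centers: take an appropriately weighted convex combination $a|F_1| + b|F_2| = k$, $a+b=1$, keep the cheaper-weighted structure, and ``merge'' the solution $F_1$ into $F_2$ by, for each center in $F_2$, picking the nearest center of $F_1$ and reasoning via the triangle inequality that re-routing clients costs only a constant factor more. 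The subtlety is that here ``cost'' means discounted cost, so I need: if a client's true distance goes from $c_j$ (to its $F_1$ center, say within $3\times$ its $F_2$-distance after merging) then its discounted cost $(3c_j' - r_j)^+ \le 3(c_j' - r_j)^+$, using that scaling the distance by $3$ scales the $(\cdot - r_j)^+$ quantity by at most $3$; this is exactly why the guarantee degrades the discount by the factor $\gamma = 9$ rather than keeping $r_j$ intact. Chasing the constants through the JV analysis (factor $3$ from primal-dual, another factor $3$ from the $k$-center-style merging of the two solutions) gives $\gamma = 9$ and the claimed $\sigma = 6$ on the cost side.

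The main obstacle I anticipate is making the primal-dual / dual-fitting argument go through cleanly with the non-standard ``connection cost'' $(c_{ij} - r_j)^+$ in place of $c_{ij}$. Two things need care: (i) the relaxed triangle inequality — for the witness argument one needs that if clients $j, j'$ both pay toward a common tentatively-open facility and $j'$ is connected to an open facility $i'$, then $j$'s discounted cost to $i'$ is bounded in terms of the dual values; since $(c - r_j)^+$ is $1$-Lipschitz in $c$ and $c_{i'j} \le c_{i'j'} + c_{j'i} + c_{ij}$, one gets $(c_{i'j} - r_j)^+ \le c_{i'j'} + (c_{j'i} - r_{j'})^+ + (c_{ij} - r_j)^+ + r_{j'}$, and the stray $+r_{j'}$ term has to be absorbed — this is precisely where the $\gamma$-slack in the definition of $(\gamma,\sigma)$-approximation is spent, and I should verify $3$ units of discount slack suffice as in \cite{GuhaM09}; (ii) confirming that the factor does not blow up when $r_j$ vary, i.e.\ that the per-client discount charged is always against that same client's own $r_j$ (scaled by $9$) and never against another client's discount. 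I would structure the write-up to first recall the JV facility-location primal-dual guarantee as a black box applied to the modified connection costs, then do the Lagrangian search and combination step, and finally verify the relaxed triangle inequality lemma that legitimizes treating $(c_{ij}-r_j)^+$ inside the JV framework; this last lemma is the only genuinely new ingredient over \cite{GuhaM09} and \cite{JainV01}.
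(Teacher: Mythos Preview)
Your overall architecture is correct and matches the paper: Lagrangian relaxation to a facility-location-with-discounts problem with uniform opening cost $\lambda$, a Jain--Vazirani-style primal-dual on that instance yielding a Lagrangian-multiplier-preserving guarantee, then binary search on $\lambda$ and bipoint rounding of the two resulting solutions. The constants $(9,6)$ arise exactly as you describe: a factor $3$ from the primal-dual step and another factor from the bipoint combination.

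However, there is a genuine gap in the primal-dual step. You propose to grow the dual variables $\alpha_j$ uniformly and treat $(c_{ij}-r_j)^+$ as a black-box connection cost. Tracing the witness argument with this setup: if $j$ is indirectly connected via some $j'$ who paid toward both $\pi(j)$ and the open facility $i'$, you get $c_{i'j}\le c_{i'j'}+c_{\pi(j)j'}+c_{\pi(j)j}$, and each of the first two terms is bounded only by $\alpha_{j'}+r_{j'}$ (since $\alpha_{j'}\ge (c_{\cdot j'}-r_{j'})^+$ gives $c_{\cdot j'}\le \alpha_{j'}+r_{j'}$). So $c_{i'j}\le 3\alpha_j + 2r_{j'} + r_j$. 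The stray $2r_{j'}$ is \emph{another client's} discount, and when the $r_j$ vary it can be arbitrarily large relative to $r_j$; no finite $\gamma$ absorbs it. This is precisely why the uniform-discount argument of \cite{GuhaM09} (where $r_{j'}=r_j$) does not extend as a black box.

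The paper's fix is a small but essential change to the dual growth: introduce a time parameter $t$ growing uniformly, and set $a_j=(t-r_j)^+$, i.e.\ delay client $j$'s dual by $r_j$. Then ``$j'$ froze no later than $j$'' means $t_{j'}\le t_j$ (not $a_{j'}\le a_j$), and ``$j'$ put weight on $i$'' gives $t_{j'}\ge c_{ij'}$ directly in the \emph{actual metric}. The triangle inequality now yields $c_{i'j}\le 3t_j$, whence $(c_{i'j}-3r_j)^+\le (3t_j-3r_j)^+ = 3a_j$, and only client $j$'s own discount ever appears. With this modification the rest of your plan (Lemma~\ref{lemma:uflbicriteria}-style bound, binary search on $\lambda$, bipoint rounding and derandomization) goes through exactly as you outline.
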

We give details of the algorithm and the proof of this lemma in Appendix~\ref{section:kmwithdiscounts}. We note that when all $r_j$ are equal, the constants in \cite{GuhaM09} can be improved (see e.g. \cite{ChakrabartyS19}); we do not know of any similar improvement when the $r_j$ may differ. In Section~\ref{sec:km-rounding}, we give the reduction from rounding the fractional solution for universal $k$-median to the $k$-median with discounts problem, and then employ the above lemma.

\subsection{Universal $k$-median: Fractional Algorithm}
\label{sec:km-fractional}

%In this section, we describe how to find a good fractional solution.

The standard $k$-median polytope (see e.g., \cite{JainV01}) is given by:
$$P = \{(x, y): \sum_{i} x_i \leq k; \forall i,j: y_{ij} \leq x_i; \forall j: \sum_{i} y_{ij} \geq 1; \forall i,j: x_i, y_{ij} \in [0, 1]\}.$$
Here, $x_i$ represents whether point $i$ is chosen as a cluster center, and $y_{ij}$ represents whether client $j$ connects to $i$ as its cluster center.
Now, consider the following \lp formulation for minimizing regret $r$:
\begin{equation}\label{eq:mainlp}
\min\{r: (x, y)\in P; \forall C' \subseteq C: \sum_{j \in C'} \sum_i c_{ij} y_{ij} - \opt(C') \leq r\},
\end{equation}
%
%
% \eat{
% \begin{array}{lll}
%  &&\text{\sc minimize} \ r  \\
% \textnormal{s.t.} &&\sum_{i} x_i \leq k\\
% &\forall i,j: & y_{ij} \leq x_i\\
% &\forall j:& \sum_{i} y_{ij} \geq 1\\
% &\forall C' \subseteq C:& \sum_{j \in C'} \sum_i c_{ij} y_{ij} - \opt(C') \leq r\\ 
% &\forall i, j: &x_i, y_{ij} \in [0, 1]
% \end{array}
% \end{equation}
% }
% * <bmm@cs.duke.edu> 2018-01-07T15:32:50.418Z:
% 
% 8 is a single inequality, not a set...
% I don't see anywhere where 8 is referred to?
% 
% Oops, sorry.
% 
% ^.
%
% * <bmm@cs.duke.edu> 2018-01-07T15:34:26.949Z:
% 
% Arun,  I've been removing instances of the following construction: "One sentence is not enough - instead we'll use two."  I.e., it should be rare to link two sentences that can stand independently with a hyphen
% 
% Okay, will avoid doing this in the future! - Arun
%
% ^.
%
where $\opt(C')$ is the cost of the (integral) optimal solution in realization $C'$. Note that the new constraints: $\forall C' \subseteq C: \sum_{j \in C'} \sum_i c_{ij} y_{ij} - \opt(C') \leq r$ (we call it the regret constraint set) require that the regret is at most $r$ in all realizations. %The objective is to minimize $r$. 

In order to solve \lp~\eqref{eq:mainlp}, we need a separation oracle for the regret constraint set. Note that there are exponentially many constraints corresponding to realizations $C'$; moreover, even for a single realization $C'$, computing $\opt(C')$ is \np-hard. So, we resort to designing an {\em approximate} separation oracle.
% * <bmm@cs.duke.edu> 2018-01-07T15:36:58.887Z:
% 
% Note, before you used $S$ to denote a realization in the term $\opt(S)$.  I just changed that to C', because we didn't really need a generic S in \opt(S), and in any case you are now using S for something else.
% 
% Ah, my mistake, I'll keep this in mind when I write the general case algorithm/analysis down - Arun
%
% ^.
Fix some fractional solution $(x, y, r)$. Overloading notation, let $S(C')$ denote the cost of the solution with cluster centers $S$ in realization $C'$. By definition, $\opt(C') = \min_{S \subseteq F, |S| = k} S(C')$. Then designing a separation oracle for the regret constraint set is equivalent to determining if the following inequality holds:
% * <bmm@cs.duke.edu> 2018-01-07T15:40:09.073Z:
% 
% Above we say "fix the y values".  Do we also have to fix a value of r here?
% 
% I changed the wording to fix this - Arun
%
% We use the term scenario above, which I think we meant to replace with realization everywhere.
% 
% Yup, I forgot to do that in this .tex file - should be fixed everywhere now - Arun
%
% Also, perhaps say that S denotes  a set of facilities?  (We just say "buying the facilities in S" without having introduced S before.)
%
% Changed to say "facility set S" - Arun
% 
% ^.

$$\max_{C'\subseteq C} \quad \max_{S \subseteq F, |S| = k} \left[\sum_{j \in C'} \sum_i c_{ij} y_{ij} - S(C')\right] \leq r.$$
We flip the order of the two maximizations, and define $f_y(S)$ as follows:
$$f_y(S) = \max_{C'\subseteq C} \left[\sum_{j \in C'} \sum_i c_{ij} y_{ij} - S(C')\right].$$
Then designing a separation oracle is equivalent to maximizing $f_y(S)$ for $S \subseteq F$ subject to $|S| = k$. The rest of the proof consists of showing that this function is monotone and submodular, and efficiently computable. 
%(Note that $f_y(S)$ is efficiently computable for a fixed $S$: the maximizing realization $C'$ comprises all clients $j$ that are closer to solution $S$ than to the solution $(x, y, r)$, i.e., $\sum_i c_{ij}y_{ij} > \min_{i'\in S} c_{i'j}$.)
%Thus $f_y(S)$ for any $S$ can be evaluated efficiently. %Furthermore, in the next Lemma we show $f_y$ is monotone submodular and thus easy to approximately maximize:
% * <bmm@cs.duke.edu> 2018-01-07T15:48:39.322Z:
% 
% A slight nuance in the notion of "closer to solution S than to the LP solution" is that the LP solution is fractional, so the notion of "closer" or "distance" in the LP solution is a weighted one...or perhaps we should explain that the LP will always assign all of a client's weight to the closest facility (? can break ties somehow?)
% 
% I clarified what 'closer' means above. I don't think explaining the latter point is quite necessary - since we're designing a general separation oracle, we could be trying to separate a fractional solution that doesn't assign clients to facilities in the best manner - Arun 
%
% ^.
% * <bmm@cs.duke.edu> 2018-01-07T15:47:09.278Z:
% 
% I think you we should say that the lemma shows that the function is monotone and submodular, which makes it easy to approximately maximize
% 
% Done - Arun
%
% ^.

\begin{lemma}\label{lemma:lbsubmodularity}
Fix $y$. Then, $f_y(S)$ is a monotone submodular function in $S$. Moreover, $f_y(S)$ is efficiently computable for a fixed $S$.
\end{lemma}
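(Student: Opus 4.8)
The plan is to analyze the function
$$f_y(S) = \max_{C' \subseteq C} \left[\sum_{j \in C'} \sum_i c_{ij} y_{ij} - S(C')\right]$$
by first getting a clean closed form for the inner maximization over $C'$. Since the objective decomposes additively over clients — each client $j$ contributes $\big(\sum_i c_{ij} y_{ij}\big) - \cost(j, S)$ if it is included in $C'$ and $0$ otherwise (recall $S(C') = \sum_{j \in C'} \cost(j,S)$ for $k$-median, where $\cost(j,S) = \min_{i \in S} c_{ij}$) — the optimal $C'$ simply picks every client whose contribution is positive. Hence
$$f_y(S) = \sum_{j \in C} \left(\sum_i c_{ij} y_{ij} - \cost(j, S)\right)^{+}.$$
This formula immediately gives efficient computability for a fixed $S$: compute $\cost(j,S) = \min_{i \in S} c_{ij}$ for each client (polynomial time), subtract from the precomputed fractional connection cost $\sum_i c_{ij} y_{ij}$, take positive parts, and sum.

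Next I would establish monotonicity and submodularity. The key structural observation is that $g_j(S) := \cost(j, S) = \min_{i \in S} c_{ij}$ is a monotone \emph{non-increasing} supermodular function of $S$ (adding a center can only decrease the min-distance, and the marginal decrease shrinks as $S$ grows — this is the standard facility-location coverage-type fact). Therefore $a_j - g_j(S)$, where $a_j := \sum_i c_{ij} y_{ij}$ is a constant, is monotone non-decreasing and submodular in $S$. The final step is to argue that taking the positive part $(\cdot)^{+} = \max\{0, \cdot\}$ and then summing preserves both properties: monotonicity is clear since $x \mapsto x^{+}$ is non-decreasing; for submodularity, I would use the general lemma that if $h(S)$ is monotone submodular then $h(S)^{+}$ is monotone submodular (when $h$ is already nonneg this is trivial, but more carefully: the composition of a monotone submodular function with a non-decreasing concave scalar function is monotone submodular, and $x \mapsto \max\{0,x\}$ is non-decreasing and concave). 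Then $f_y$ is a sum of such functions over $j \in C$, and sums of monotone submodular functions are monotone submodular.

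The main obstacle — really the only subtle point — is the composition/positive-part step: verifying that $(a_j - g_j(S))^{+}$ is submodular even though $a_j - g_j(S)$ can be negative for some $S$. I would handle this via the concave-composition lemma: for a monotone submodular $h$ and a concave non-decreasing $\phi: \mathbb{R} \to \mathbb{R}$, the function $S \mapsto \phi(h(S))$ is submodular, applied with $\phi(x) = \max\{0, x\}$. Alternatively, a direct case analysis on the signs of $a_j - g_j(S)$, $a_j - g_j(T)$, $a_j - g_j(S \cup \{x\})$, $a_j - g_j(T \cup \{x\})$ for $S \subseteq T$ works, using that $g_j$ is supermodular and non-increasing; the worst case is when the term flips sign upon adding $x$, and one checks the submodularity inequality still holds because a truncated-at-zero quantity changes by no more in the larger set than in the smaller set. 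Everything else (the closed form, efficient evaluation, monotonicity) is routine.
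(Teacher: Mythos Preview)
Your proposal is correct and follows essentially the same approach as the paper: derive the closed form $f_y(S) = \sum_{j \in C}\big(\sum_i c_{ij} y_{ij} - \min_{i \in S} c_{ij}\big)^{+}$, note efficient computability, and reduce to showing each per-client term is monotone submodular. The only cosmetic difference is that the paper proves submodularity of the per-client term by a short direct case analysis (if the marginal gain at $T$ is positive then $x$ is the nearest center in both $S\cup\{x\}$ and $T\cup\{x\}$, so the two post-addition values coincide and monotonicity finishes it), whereas you primarily invoke the concave-composition lemma---but you also mention the direct case analysis as an alternative, which is exactly what the paper does.
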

\begin{proof}
Let $d(j, S) := \min_{i'\in S} c_{i'j}$ denote the distance from client $j$ to the nearest cluster center in $S$. If $S = \emptyset$, we say $d(j, S) := \infty$. %(since once can think of $d(j, S)$ as the smallest value $c$ such that $S$ contains a facility within distance $c$ of $j$. Defining $d(j, \emptyset) = \infty$ is a natural extension of this definition, since there is no finite value $c$ for which the empty set contains a facility within distance $c$ of $j$).  
The value of $C'$ that defines $f_y(S)$ is the set of all clients closer to $S$ than to the fractional solution $y$, i.e., $\sum_i c_{ij}y_{ij} > \min_{i'\in S} c_{i'j}$. This immediately establishes efficient computability of $f_y(S)$. Moreover, we can equivalently write $f_y(S)$ as follows:
$$f_y(S) = \sum_{j \in C} (\sum_i c_{ij} y_{ij} - d(j, S))^+.$$
A sum of monotone submodular functions is a monotone submodular function, so it suffices to show that for all clients $j$, the new function $g_{y, j}(S) := (\sum_i c_{ij} y_{ij} - d(j, S))^+$ is monotone submodular. 
%($g_{y,j}$ can be thought of as the contribution of client $j$ to the regret of $y$ against solution $S$).
\begin{itemize}
\item
$g_{y, j}$ is {\em monotone}: for $S \subseteq T$, $d(j, T) \leq d(j, S)$, and thus $(\sum_i c_{ij} y_{ij} - d(j, S))^+ \leq (\sum_i c_{ij} y_{ij} - d(j, T))^+$.
\item
$g_{y, j}$ is submodular if: 
$$\forall S \subseteq T \subseteq F, \forall x \in F: g_{y,j}(S \cup \{x\}) - g_{y,j}(S) \geq g_{y,j}(T \cup \{x\}) - g_{y,j}(T)$$
Fix $S$, $T$, and $x$. Assume $g_{y,j}(T \cup \{x\}) - g_{y,j}(T)$ is positive (if it is zero, by monotonicity the above inequality trivially holds). This implies that $x$ is closer to client $j$ than any cluster center in $T$ (and hence $S$ too), i.e., $d(j, x) \leq d(j, T) \leq d(j, S)$. Thus, $d(j, x) = d(j, S \cup \{x\}) = d(j, T \cup \{x\})$ which implies that $g_{y,j}(S \cup \{x\}) = g_{y,j}(T \cup \{x\})$. Then we just need to show that $g_{y,j}(S) \leq g_{y,j}(T)$, but this holds by monotonicity. \qedhere
\end{itemize}
\end{proof}
% * <bmm@cs.duke.edu> 2018-01-07T17:29:51.559Z:
% 
% We say "also note" here, but in fact we really haven't defined what d(j,S) means when S is the empty set, which means that f_y(\phi) also isn't really defined.  Of course a standard notion like "infinite" might be OK, but not super clean.
% 
% Fixed - I think defining d(j, emptyset) = infinity makes the most sense, for the reasons I added above. I'll also go back to the preliminaries and add that we can think of the empty set as being a feasible solution to the realization where all clients have demand 0 to make this even more clear. - Arun
%
% ^.

% * <bmm@cs.duke.edu> 2018-01-11T14:53:09.173Z:
% 
% I think it would be good for us to explicitly state where we use the fact that all clients have lower demand 0.  Is it because f_y(empty set) = 0, where C'=empty set?
% 
% Noted - Arun
% ^.

By standard results (see e.g., GLS~\cite{GrotschelLS81}), 
%and since the optimum of \eqref{eq:mainlp} is at most $\mr$, 
we get an $(\alpha, \beta)$-approximate fractional solution for universal $k$-median via the ellipsoid method if we have an approximate separation oracle for \lp~\eqref{eq:mainlp} that given a fractional solution $(x, y, r)$ does either of the following:
\begin{itemize}
    \item Declares $(x, y, r)$ feasible, in which case $(x, y)$ has cost at most $\alpha \cdot \opt(\bd) + \beta \cdot r$ in all realizations, or
    \item Outputs an inequality violated by $(x, y, r)$ in \lp~\eqref{eq:mainlp}.
\end{itemize}
% * <bmm@cs.duke.edu> 2018-01-10T23:17:40.005Z:
% 
% Do we have a citation for the general algorithm for approximizing a monotone submodular function?
% 
% Added - Arun
% ^.
% * <bmm@cs.duke.edu> 2018-01-07T17:31:54.173Z:
% 
% Now we can easily prove the following lemma is not a very useful transition sentence.  Whether it's easy isn't important.  We should instead say something about what the lemma is about or how it will be useful or used.
%
% Fixed - Arun
% 
% ^.

The approximate separation oracle does the following for the regret constraint set (all other constraints can be checked exactly): Given a solution $(x, y, r)$, find an $\frac{e-1}{e}$-approximate maximizer $S$ of $f_y$ via Lemma~\ref{lemma:lbsubmodularity} and Theorem~\ref{thm:cardinality}. Let $C'$ be the set of clients closer to $S$ than the fractional solution $y$ (i.e., the realization that maximizes $f_y(S)$).
If $f_y(S) > r$, the separation oracle returns the violated inequality $\sum_{j \in C'} \sum_i c_{ij} y_{ij} - S(C') \leq r$; else, it declares the solution feasible. Whenever the actual regret of $(x, y)$ is at least $\frac{e}{e-1}\cdot r$, this oracle will find $S$ such that $f_y(S) > r$ and output a violated inequality. Hence, we get the following lemma:

\begin{lemma}\label{lemma:lpsolution}
There exists a deterministic algorithm that in polynomial time computes a fractional $\frac{e}{e-1} \approx 1.58$-approximate solution for \lp~\eqref{eq:mainlp} representing the universal $k$-median problem.
\end{lemma}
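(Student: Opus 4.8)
The plan is to assemble the two ingredients already in place—the monotone submodularity and efficient evaluability of $f_y(S)$ from Lemma~\ref{lemma:lbsubmodularity}, and the greedy $\frac{e}{e-1}$-approximation for monotone submodular maximization under a cardinality constraint from Theorem~\ref{thm:cardinality}—into an approximate separation oracle, and then feed that oracle to the ellipsoid method via the standard GLS machinery. First I would observe that all constraints of \lp~\eqref{eq:mainlp} other than the regret constraint set, namely membership in $P$, are explicitly given by polynomially many linear inequalities, so they can be checked exactly in polynomial time. The only work is the regret constraint set, and by the reformulation preceding Lemma~\ref{lemma:lbsubmodularity}, checking whether $(x,y,r)$ satisfies it is exactly the question of whether $\max_{S\subseteq F, |S|=k} f_y(S) \le r$.

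Next I would spell out the oracle: given $(x,y,r)$, run the greedy algorithm of Theorem~\ref{thm:cardinality} on $f_y$ (each greedy step is a polynomial number of evaluations of $f_y$, each polynomial-time by Lemma~\ref{lemma:lbsubmodularity}) to obtain $S$ with $|S|=k$ and $f_y(S) \ge \frac{e-1}{e}\max_{|S'|=k} f_y(S')$. If $f_y(S) > r$, recover the maximizing realization $C'$ (the set of clients $j$ with $\sum_i c_{ij}y_{ij} > d(j,S)$, which is what defines $f_y(S)$), and return the violated inequality $\sum_{j\in C'}\sum_i c_{ij}y_{ij} - S(C') \le r$; note $S(C')$ is computable since $S$ is an explicit set of $k$ centers, so this is a genuine inequality of \lp~\eqref{eq:mainlp}. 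Otherwise declare $(x,y,r)$ feasible. The key correctness claim is the contrapositive: if the true regret of $(x,y)$ is at least $\frac{e}{e-1}\,r$, i.e. $\max_{|S'|=k} f_y(S') \ge \frac{e}{e-1}\,r$, then the greedy output satisfies $f_y(S) \ge \frac{e-1}{e}\cdot\frac{e}{e-1}\,r = r$, but since $r$ is not attained with equality generically one argues $f_y(S) > r$ in the case of strict violation, so the oracle never wrongly certifies a solution whose regret exceeds $\frac{e}{e-1}\,r$. Plugging $\alpha = 1$, $\beta = \frac{e}{e-1}$ into the GLS-based guarantee stated just above (any approximate separation oracle of the stated form yields an $(\alpha,\beta)$-approximate fractional solution via the ellipsoid method) gives a fractional solution to \lp~\eqref{eq:mainlp} whose cost in every realization is at most $\opt(C') + \frac{e}{e-1}\,r^\star$, where $r^\star$ is the optimal regret of the \lp; this is the claimed $\frac{e}{e-1}$-approximation.

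The main obstacle, and the place I would be most careful, is the interface between the \emph{approximate} oracle and the ellipsoid method: one must check that running ellipsoid with this oracle terminates in polynomial time and that the ``feasible'' region it effectively optimizes over is sandwiched appropriately—namely that it contains the true feasible region of \lp~\eqref{eq:mainlp} (so the returned objective $r$ is a lower bound on $r^\star$, hence $r \le r^\star$) while every point it certifies feasible has regret at most $\frac{e}{e-1}\,r \le \frac{e}{e-1}\,r^\star$. This sandwiching, together with the boundedness/encoding-length conditions needed for ellipsoid, is exactly what the cited GLS result packages, so the argument reduces to verifying the hypotheses of that result: polynomial-time oracle, polynomially bounded facet complexity, and the one-sided approximation guarantee just established. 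The remaining details—that $f_y$ evaluation and the extraction of $C'$ are polynomial, that greedy makes $k \le |F|$ rounds—are routine given Lemma~\ref{lemma:lbsubmodularity} and Theorem~\ref{thm:cardinality}.
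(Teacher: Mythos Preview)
Your proposal is correct and follows essentially the same approach as the paper: build an approximate separation oracle by greedily maximizing the submodular function $f_y$ (via Lemma~\ref{lemma:lbsubmodularity} and Theorem~\ref{thm:cardinality}), return the corresponding violated regret constraint when $f_y(S)>r$, and invoke the GLS ellipsoid framework to conclude. Your write-up is in fact more explicit than the paper's about the sandwiching argument and the polynomial-time bookkeeping, which is fine.
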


% * <bmm@cs.duke.edu> 2018-01-07T17:35:12.805Z:
% 
% By  "return the inequality" are we saying that oracle returns C'  and the corresponding inequality?  Can we elaborate a little more on how the oracle is called by the ellipsoid algorithm, and what it is expected to return?  I.e., the ellipsoid algorithm wants to know which constraint is most violated, which it will then use to pare away that portion of the ellipse that violates the constraint.  The approximate optimizer helps us find an approximation to the constraint that is most violated (e.g., by giving us the right set of clients C')? 
% 
% Hopefully clarified what the requirements of the oracle are what our oracle will actually do. Maybe we can talk about this next time we meet if it's still unclear? - Arun
%
% ^.
\eat{
\subsection{Facility Location with Discounts}

We now describe the $k$-median with discounts problem and an algorithmic result for it, which will allow us to round the solution we derived in the previous section at some small additive loss.

The $k$-median with discounts problem is a generalization of the metric $k$-median problem defined as follows: We are given the usual inputs of a $k$-median problem, in addition to a nonnegative discount value $r_j$ for each client $j$. Just as in the $k$-median problem, our goal is to place $k$ facilities that minimize the total connection costs of all clients. However, our connection cost for client $j$ can be discounted by up to $r_j$.  That is, if in a given solution $c_j$ is the connection cost for client $j$, client $j$ contributes $(c_j - r_j)^+$ to the objective of that solution. Note that the $k$-median with discounts problem is generalized by the non-metric $k$-median problem, since setting client $j$'s distance to facility $i$ to be $(c_{ij} - r_j)^+$ gives a non-metric distance function.

Let $\opt$ be the cost of an optimal solution to the $k$-median with discounts problem. We say an algorithm $ALG$ that outputs a solution with connection cost $c_j$ for client $j$ is a $(\gamma, \sigma)$-approximation if:

$$\sum_j (c_j - \gamma r_j)^+ \leq \sigma \cdot \opt$$

% * <bmm@cs.duke.edu> 2018-01-10T23:36:00.082Z:
% 
% Perhaps a little confusing to reuse alpha and beta in this way.  Before alfa was the constant in front of \opt, and beta in front of \mr.  Should we use different variables here?
%
% Fixed - Arun
% 
% ^.

That is, an $(\gamma, \sigma)$-approximate algorithm outputs a solution whose objective function when computed using discounts $\gamma r_j$ for all $j$, is at most $\sigma$ times the optimal objective using discounts $r_j$. 

To provide some intuition for why solving this problem will allow us to round a fractional solution for universal $k$-median while preserving the approximation guarantee: Before, we noted that while in general rounding a $k$-median fractional solution such that no client's connection cost increases by more than a constant factor is not generally possible, if we allow ourselves to increase connection costs by a small additive amount more than a constant factor, it may become possible while still preserving regret guarantees. Given a universal $k$-median instance and a good fractional solution for this instance, suppose we construct a $k$-median with discounts instance where client $j$ has discount equal to some constant $c$ times the connection cost of $j$ in the fractional solution. Then, suppose we are able to find a solution to the $k$-median with discounts instance that achieves a small objective value. Intuitively, this solution is effectively rounding the fractional solution such that every client's connection cost increases by at most a factor of $c$, within some small additive ``excess'' (which is equal to the objective value). 

We state the following Lemma about the $k$-median with discounts problem:

\begin{lemma}\label{lemma:kmbicriteria}
There exists a (deterministic) polynomial-time $(9, 6)$-approximation algorithm for the $k$-median with discounts problem.
\end{lemma}

The proof is deferred to Section \ref{section:kmwithdiscounts}.
% * <bmm@cs.duke.edu> 2018-01-11T17:34:53.463Z:
% 
% Note that in the randomized algorithm we added the facilities i from S_1 \ S_1' arbitrarily, so we don't really have to choose one that minimizes the expected objective here, I think.  (In other words, our upper bound on the expectation assumed worst-case choice, so at least our upper bound on expectation won't be violated regardless of what we choose.)  But OK, maybe leave as is.  Just don't want the reader to be confused about a detail that might not be necessary.
%
% I think we addressed this in our call? - Arun
% 
% ^ <bmm@cs.duke.edu> 2018-01-11T19:16:41.239Z.

}
\subsection{Universal $k$-Median: Rounding Algorithm}
\label{sec:km-rounding}

% * <bmm@cs.duke.edu> 2018-01-11T17:41:01.330Z:
% 
% I think it would be nice to provide some intuition about why "discounts" are helping us here, and why we are setting them to 3 times the connection costs in the fractional solution.
%
% I tried to add some intuition to the start of the section, not sure it is the best though - Arun
% 
% And also perhaps why it is easier to find an approximate solution for k medians with discounts than for k medians directly.
% 
% This should hopefully be covered by the description at the start of Section 5 now - Arun
% ^.

%We begin by providing some high-level intuition for the proof of Theorem \ref{thm:kmalgorithm}.  
Let $\fracc$ denote the $\frac{e}{e-1}$-approximate fractional solution to the universal $k$-median problem provided by Lemma~\ref{lemma:lpsolution}.
We will use the following property of $k$-median, shown by Archer {\em et al.}~\cite{ArcherRS03}.
\begin{lemma}[\cite{ArcherRS03}]
\label{lma:km-intgap}
The integrality gap of the natural \lp relaxation of the $k$-median problem is at most $3$.
\end{lemma}
Lemmas~\ref{lemma:lpsolution} and \ref{lma:km-intgap} imply that that for any set of clients $C'$, 
\begin{equation}
\label{eq:frac}
\frac{1}{3} \cdot \opt(C') \leq \fracc(C') \leq \opt(C') + \frac{e}{e-1} \cdot \mr.    
\end{equation}
%i.e., $\fracc(C')$ is bounded above and below by constant factors times $\opt(C')$ with an additive $\frac{e}{e-1} \mr$ term in the upper bound.  
Our overall goal is to obtain a solution \sol that minimizes $\max_{C'\subseteq C}\left[\sol(C') - \opt(C')\right]$.
But, instead of optimizing over the exponentially many different $\opt(C')$ solutions, we use the 
surrogate $3\cdot \fracc(C')$ which has the advantage of being defined by a fixed solution $\fracc$,
but still 3-approximates $\opt(C')$ by Eq.~\ref{eq:frac}.
%Our goal is to find a solution with minimum regret. However, competing with $\opt(C')$ for all $C'\subseteq C$ requires competing with exponentially many solutions. On the other hand, $\fracc(C')$ is a single fixed solution, and we we know $3 \fracc(C') \geq \opt(C')$, i.e. $\mrs$ still has regret at most $\mr$ against $3 \fracc(C')$ (whereas e.g. $\mrs$ may have large regret against $\fracc(C')$). 
This suggests minimizing the following objective instead:
$\max_{C'}[\sol(C') - 3\cdot\fracc(C')]$.  
For a given solution $\sol$, the set of clients $C'$ that maximizes the new expression are the clients whose connection costs in $\sol$ (denoted $c_j$) exceeds $3$ times their cost in $\fracc$ (denoted $f_j$):
$$\max_{C'}[\sol(C') - 3\cdot\fracc(C')] = \sum_{j \in C} (c_j-3f_j)^+.$$  
But, minimizing this objective is precisely the aim of the $k$-median with discounts problem, where the discount for client $j$ is $3 f_j$. This allows us to invoke  Lemma~\ref{lemma:kmbicriteria} for the $k$-median with discounts problem.
%the approximation algorithm that we design to prove this lemma appears in Appendix~\ref{section:kmwithdiscounts}.

Thus, our overall algorithm is as follows. First, use Lemma~\ref{lemma:lpsolution} to find a fractional solution $\fracc = (x, y, r)$. Let $f_j := \sum_i c_{ij}y_{ij}$ be the connection cost of client $j$ in $\fracc$. Then, construct a $k$-median with discounts instance where client $j$ has discount $3f_j$, and use Lemma~\ref{lemma:kmbicriteria} on this instance to obtain the final solution to the universal $k$-median problem.

We now complete the proof of Theorem~\ref{thm:kmalgorithm} using the above lemmas.

\begin{proof}[Proof of Theorem \ref{thm:kmalgorithm}]
%Let $\fracc(C')$ be the cost of the fractional solution $\fracc$ with connection costs $f_j$ in any realization $C'$, and 
Let $m_j$ be the connection cost of \mrs to client $j$. 
%By Lemma \ref{lemma:lpsolution}, the connection costs $f_j$ in $\fracc$ satisfy:
%$$\forall C' \subseteq C: \fracc(C') \leq \opt(C') + \frac{e}{e-1}\cdot \mr$$
% * <bmm@cs.duke.edu> 2018-02-25T21:42:26.854Z:
% 
% Might want to note that because the optimal integral solution costs at most 3 times as much as the optimal fractional solution, it therefore must cost at most 3 times as much as ANY fractional solution.
% 
% ^.
%
%The integrality gap of the $k$-median problem is at most 3. That is, $\forall C' \subseteq C: \opt(C') \leq 3\fracc(C')$. Then:
%
% * <bmm@cs.duke.edu> 2018-01-11T17:52:55.037Z:
% 
% Ah, don't see where we get the last equality.  If we took max over C' subset of C, I would get it.  But not just summing over C. ?
% 
% Added a note below to clarify - Arun
% ^.
Then,
\begin{align*}
\mr = \max_{C' \subseteq C}[\mrs(C') - \opt(C')] &\geq \max_{C' \subseteq C}[\mrs(C') - 3\cdot\fracc(C')] \quad \text{(by Eq.~\eqref{eq:frac})} \\
&= \sum_{j \in C: m_j > 3f_j}(m_j - 3f_j) =\sum_{j \in C}(m_j - 3f_j)^+.
\end{align*}
%
%
%We get $\max_{C' \subseteq C}[\mrs(C') - 3\cdot\fracc(C')] = \sum_{j \in C: m_j > 3f_j}(m_j - 3f_j)$ as the choice of $C'$ that maximizes the quantity is exactly $C' = \{j \in C: m_j > 3f_j\}$. 
Thus, $\mr$ upper bounds the optimal objective in the $k$-median with discounts instance that we construct. 
Let $c_j$ be the connection cost of client $j$ in the solution output by the algorithm. Then by Lemma \ref{lemma:kmbicriteria} we get that: 
\begin{equation}
\label{eq:disc}
\sum_{j \in C}(c_j - 27f_j)^+ \leq 6 \cdot \sum_{j \in C} (m_j - 3f_j)^+ \leq 6\cdot\mr.
\end{equation}
As a consequence, we have:
$$\forall C' \subseteq C: \sum_{j \in C'} c_j 
= \sum_{j \in C'} [27f_j + (c_j - 27f_j)] \leq \sum_{j \in C'}27f_j + \sum_{j \in C'}(c_j - 27f_j)^+ 
%\leq \sum_{j \in C'}27f_j + \sum_{j \in C}(c_j - 27f_j)^+ 
\leq 27\cdot \fracc(C') + 6\cdot\mr,$$
where the last step uses the definition of $f_j$ and Eq.~\eqref{eq:disc}.
Now, using the bound on $\fracc(C')$ from Eq.~\eqref{eq:frac} in the inequality above, we have
the desired bound on the cost of the algorithm:
$$\forall C' \subseteq C: 
\sum_{j \in C'} c_j 
\leq 27\cdot\fracc(C') + 6\cdot\mr 
\leq 27\left[\opt(C') + \frac{e}{e-1}\cdot\mr\right] + 6\cdot\mr 
\leq 27 \cdot\opt(C') + 49\cdot\mr.\mbox{\qedhere}$$
\end{proof}

\section{Universal $\ell_p$-Clustering and Universal $k$-means}\label{section:lpalg}

In this section, we give universal algorithms for $\ell_p$-clustering with the following guarantee:

\begin{theorem}\label{thm:lpalg}
For all $p \geq 1$, there exists a $(54p, 103p^2)$-approximate 
universal algorithm for the $\ell_p$-clustering problem.
\end{theorem}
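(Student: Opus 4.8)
The plan is to mirror the $k$-median framework from Section~\ref{section:kmalg}, generalizing each of its two components---the submodular-maximization-based separation oracle and the discounts-based rounding---to the $\ell_p$ objective, and then to track how the extra factors of $p$ accrue. Concretely, I would start from the regret-minimization convex program obtained by taking the standard $\ell_p$-clustering relaxation $P$ with variables $(x,y)$ and adding, for each realization $C'$, the constraint $\bigl(\sum_{j\in C'}\sum_i c_{ij}^p y_{ij}\bigr)^{1/p} - \opt(C') \le r$. As the footnote in Section~\ref{subsection:techniques} notes, for fixed $r$ this is a convex (though not linear) feasible region, so the ellipsoid method still applies given an approximate separation oracle. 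For a fixed candidate center set $S$, I would define $f_y(S) = \max_{C'}\bigl[(\sum_{j\in C'}\sum_i c_{ij}^p y_{ij})^{1/p} - S(C')\bigr]$ as before; the main new wrinkle (flagged in the techniques section) is that evaluating $f_y(S)$ now requires choosing which clients to put in $C'$ when the benefit term $(\cdot)^{1/p}$ is a concave aggregate rather than a sum---this is a knapsack-like subproblem. I would resolve it exactly as the authors hint: work instead with the $p$-th-power objective so the benefit becomes additive, at the cost of solving a knapsack to compute $S(C')^p$, and then replace the integral knapsack by its fractional relaxation as a surrogate, which preserves submodularity of the resulting set function and keeps the oracle polynomial-time (losing only constant factors absorbed into $\alpha,\beta$). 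Running greedy (Theorem~\ref{thm:cardinality}) over this surrogate yields an $(O(1),O(1))$-approximate fractional solution $\fracc$ to the regret program, i.e.\ $\Omega(1)\cdot\opt(C') \le \fracc(C') \le \opt(C') + O(1)\cdot\mr$ after accounting for the $\ell_p$ integrality gap (which is $O(p)$, not $O(1)$---this is one source of the $p$).

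The second component is rounding $\fracc$ to an integral solution. Following the $k$-median template, I would set each client's discount $r_j$ to a constant times its fractional connection cost $f_j$ and invoke an approximation algorithm for \emph{$\ell_p$-clustering with discounts}, where client $j$ with connection cost $c_j$ contributes $((c_j - r_j)^+)^p$ to the $p$-th power of the objective. The techniques section warns that here we need an extra ingredient, a \emph{virtual solution} acting as a surrogate between the regret of the rounded integral solution and that of $\fracc$; I would construct it by, for each client, picking the center realizing its fractional cost (or a nearby integral proxy) and argue that the virtual solution's per-client costs are controlled by $f_j$ up to $O(p)$ factors, then bound the integral solution against the virtual one via the discounts approximation. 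The discounts algorithm itself would be obtained, as for $k$-median, by Lagrangian relaxation to a facility-location-type problem; the $\ell_p$ cost structure forces repeated use of the power-mean / triangle inequality for $\ell_p$ norms, each application of which typically costs a factor of $p$ (or $2^{p}$ if done naively---so care is needed to use the $O(p)$-type bounds, e.g.\ $(a+b)^p \le 2^{p-1}(a^p+b^p)$ combined with telescoping, rather than crude expansions). Assembling: $\sol(C') \le O(p)\cdot\fracc(C') + O(p^2)\cdot\mr \le O(p)\cdot\opt(C') + O(p^2)\cdot\mr$, and one then optimizes the hidden constants to land at $(54p, 103p^2)$.

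The step I expect to be the main obstacle is the rounding via $\ell_p$-clustering with discounts together with the virtual-solution argument. For $k$-median the discount is additive and the objective is linear, so the reduction to facility location is clean; for $\ell_p$ the objective is a concave aggregate of $p$-th powers, discounts interact nonlinearly with the norm, and one cannot simply say ``the sum of connection costs is preserved.'' Getting a $(\gamma,\sigma)$-approximation for $\ell_p$-clustering with discounts with $\gamma,\sigma = O(p)$ requires carefully choosing the Lagrange multiplier and controlling how the $(c_j - \gamma r_j)^+$ terms aggregate under the $\ell_p$ norm, and the virtual solution is needed precisely because the naive chain ``integral $\le$ fractional'' fails when the norm is not separable. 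The knapsack-to-fractional-knapsack substitution in the separation oracle is a secondary obstacle, but it is more of a clean trick (the fractional knapsack value is within a factor $2$ of the integral one, and concavity keeps the set function submodular) than a deep difficulty. Everything else---setting up the convex regret program, invoking ellipsoid with an approximate oracle, bookkeeping the integrality gap---is a direct transcription of the $k$-median argument with $O(1)$ replaced by $O(p)$ in the appropriate places.
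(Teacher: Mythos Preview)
Your high-level two-stage plan matches the paper, but the key technical mechanisms in both stages are misidentified.

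For the separation oracle, the knapsack is not used ``to compute $S(C')^p$.'' The paper factors $\fracc_p(C') - S_p(C') \simeq_p \frac{\fracc_p^p(C') - S_p^p(C')}{\fracc_p^{p-1}(C')}$, then introduces a parameter $Y$ upper-bounding $\fracc_p^p$ and iterates over geometric guesses for $Y$. For each $Y$, the function $f_{y,Y}(S) = \max\{\fracc_p^p(\bd) - S_p^p(\bd) : \bd\in[0,1]^C,\ \fracc_p^p(\bd)\le Y\}$ is a fractional knapsack in the \emph{client indicators} $\bd$ (items have weight $\sum_i c_{ij}^p y_{ij}$ and value the $\ell_p^p$-cost gap), and the paper proves (Lemma~\ref{lemma:maxatintegral}) that relaxing $\bd$ to be fractional does not change the maximum $\ell_p$-regret at all---there is no factor-$2$ loss. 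Also, the $\ell_p$ integrality gap is the constant $3$ from \cite{ArcherRS03}, not $O(p)$; the factor of $p$ in the fractional guarantee comes entirely from the $\simeq_p$ step, yielding a $(\tfrac{e}{e-1}p+\epsilon)$-approximate fractional solution (Lemma~\ref{lemma:lp-fractional}).

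The more serious gap is in the rounding, which you correctly flag as the hard part but then describe incorrectly. The paper does \emph{not} use a discount problem with contribution $((c_j-r_j)^+)^p$; it uses $\ell_p^p$-clustering with discounts, contribution $(c_j^p - r_j^p)^+$, whose guarantee is $(9^p,\tfrac{2}{3}\cdot 9^p)$ (Lemma~\ref{lemma:kmbicriteria-lpp})---exponential in $p$, not $O(p)$. The entire challenge is to convert this $\ell_p^p$-regret bound back to an $\ell_p$-regret bound with only polynomial-in-$p$ loss, and this is what the virtual solution does. Your description of the virtual solution (``picking the center realizing its fractional cost'') is wrong and would not work. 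The actual mechanism is a structural lemma (Lemma~\ref{lemma:gaprealization}): if, for every client, either $\alg$'s cost is at most $\sol$'s or $\alg$'s cost exceeds $p$ times $\sol$'s (the ``farness condition''), then the realization maximizing the $\ell_p$-regret surrogate $\frac{\alg_p^p(C')-\sol_p^p(C')}{\alg_p^{p-1}(C')}$ coincides with the one maximizing $\ell_p^p$-regret. The virtual solution $\widetilde{\sol}$ is then obtained by \emph{inflating} $\sol$'s per-client cost by a factor of $p$ precisely on the clients where $\alg$ and $18\cdot\sol$ violate farness, so that $\alg$ and $18\cdot\widetilde{\sol}$ satisfy the condition; this inflation is what costs the factor of $p$ in $\alpha$. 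A second subtlety you miss is that discounts must be set to \emph{twice} $\sol$'s costs (not once), so that in the endgame one can deduce $\sol_p(C_2)\le\mr$ from $2\sol_p(C_2)\le\mrs_p(C_2)\le\sol_p(C_2)+\mr$. Without Lemma~\ref{lemma:gaprealization} and this construction of $\widetilde{\sol}$, there is no route from the $9^p$-type discount guarantee to the $54p$-type final bound.
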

As a corollary, we obtain the following result for universal $k$-means ($p=2$).
\begin{corollary}\label{thm:kmeansalg}
There exists a $(108, 412)$-approximate universal algorithm for the $k$-means problem.
\end{corollary}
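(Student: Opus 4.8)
The plan is to prove Theorem~\ref{thm:lpalg} by following the same recipe as the $k$-median case, but adapting each component to the $\ell_p$ objective, and then to obtain Corollary~\ref{thm:kmeansalg} merely by plugging $p=2$ into the resulting bound (giving $(54\cdot 2, 103\cdot 4) = (108, 412)$). So the corollary is a one-line specialization; the real work is Theorem~\ref{thm:lpalg}, and I sketch that below.

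First I would write the regret-minimization program analogous to LP~\eqref{eq:mainlp}, but using the $\ell_p$ norm of the connection-cost vector. As the footnote in the excerpt notes, the constraint $\left(\sum_{j\in C'}\sum_i c_{ij}^p y_{ij}\right)^{1/p} \le \opt(C') + r$ is not jointly linear in $(y,r)$, but for each fixed $r$ it cuts out a convex region, so the ellipsoid method still applies. The separation-oracle step is then: given $(x,y,r)$, decide whether some realization $C'$ and some center set $S$ with $|S|=k$ witness regret exceeding $r$. As in the $k$-median analysis I would flip the two maximizations and define, for fixed $y$, the set function $f_y(S) = \max_{C'\subseteq C}\big[(\text{$\ell_p$-cost of $y$ on $C'$}) - (\text{$\ell_p$-cost of $S$ on $C'$})\big]$, and argue it is monotone submodular so that the greedy algorithm of Theorem~\ref{thm:cardinality} gives an $\frac{e}{e-1}$-approximate maximizer. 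The new wrinkle, flagged in Section~\ref{subsection:techniques}, is that evaluating $f_y(S)$ for a fixed $S$ is no longer a trivial "pick the clients closer to $S$" computation: with the $p$-th powers inside a root, choosing the optimal $C'$ is a knapsack-type problem. The key idea is to use the \emph{fractional} knapsack optimum as a surrogate — fractional knapsack is solvable greedily, loses only a bounded factor against the integral optimum, and (this is the part that needs care) still yields a function of $S$ that is submodular and monotone, so the outer greedy remains valid. This yields an approximate separation oracle and hence, via GLS, a fractional solution $\fracc$ that is an $O(p)$-approximation (the $p$ loss coming from the $\ell_p$ integrality gap of the natural relaxation and from the knapsack rounding) to the optimal fractional regret.

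Next I would handle the rounding. As with $k$-median, average LP-rounding guarantees are useless here, so I would pass to an $\ell_p$-clustering-with-discounts problem: give client $j$ a discount proportional to its fractional connection cost $f_j$, and seek a solution minimizing $\big(\sum_j (c_j - \gamma r_j)^{+\,p}\big)^{1/p}$ (or its $p$-th power). The excerpt promises an approximation algorithm for clustering with discounts in this regime, and additionally introduces a \emph{virtual solution} as a surrogate that mediates between the integer solution's regret and the fractional solution's regret — I would use that: bound $\mr$ below by the cost of the virtual solution on the discounts instance (using the $\ell_p$ analogue of Eq.~\eqref{eq:frac}, where the integrality gap contributes an $O(p)$ factor), then bound the rounded solution's discounted cost above by $\sigma$ times that, and finally reassemble $\sol(C') \le \alpha\cdot\fracc(C') + \beta\cdot\mr \le \alpha(\opt(C') + O(p)\,\mr) + \beta\,\mr$ via the triangle/Minkowski inequality for $\ell_p$ norms. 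Composing the $O(p)$ losses — integrality gap, knapsack surrogate, discounts approximation — multiplicatively gives $\alpha = O(p)$ and $\beta = O(p^2)$, matching the $(54p, 103p^2)$ claim after tracking constants.

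The main obstacle I anticipate is the $f_y(S)$ evaluation / separation-oracle step for general $p$: establishing that replacing the intractable integral knapsack (choosing which clients to include in the worst-case realization, where each client's "value" is $(\sum_i c_{ij} y_{ij})^p - d(j,S)^p$ but contributions combine under a $p$-th root rather than additively) by a fractional relaxation (i) only loses a constant or $O(p)$ factor in the regret estimate, and (ii) preserves monotone submodularity in $S$ so that Theorem~\ref{thm:cardinality} still applies to the outer maximization. Submodularity is delicate because the $p$-th-root normalization is not additive across clients, so the clean client-by-client decomposition $f_y = \sum_j g_{y,j}$ used in Lemma~\ref{lemma:lbsubmodularity} does not directly carry over and must be replaced by a more global argument. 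A secondary obstacle is carefully propagating Minkowski's inequality through the final cost bound so that the additive $\mr$ terms and the multiplicative $\opt(C')$ terms separate cleanly without extra $p$-dependent blowup beyond $O(p^2)$ in $\beta$.
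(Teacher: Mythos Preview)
Your high-level recipe is right and the corollary is indeed a one-line specialization of Theorem~\ref{thm:lpalg}, but your separation-oracle plan has a genuine gap. You propose to define $f_y(S)=\max_{C'}\big[\fracc_p(C')-S_p(C')\big]$ and argue it is submodular, with the only difficulty being \emph{evaluating} it (via a knapsack relaxation). In fact $f_y(S)$ is \emph{not} submodular for $p>1$: as the paper explicitly notes, the worst-case $C'$ no longer has the simple ``all clients closer to $S$'' structure, and the non-linearity of the $\ell_p$ norm destroys the client-by-client decomposition. So there is no function of $S$ alone to which Theorem~\ref{thm:cardinality} applies. The paper's fix is not to approximate $f_y(S)$ but to sidestep it: factor $\fracc_p(C')-S_p(C')=\frac{\fracc_p^p(C')-S_p^p(C')}{\sum_q \fracc_p^q(C')\,S_p^{p-1-q}(C')}$, observe the denominator lies within a factor $p$ of $\fracc_p^{p-1}(C')$, and then \emph{introduce an extra parameter} $Y\approx\fracc_p^p(C')$ that is iterated over geometrically. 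For each fixed $Y$ the resulting function $f_{y,Y}(S)=\max\{\fracc_p^p(\bd)-S_p^p(\bd):\bd\in[0,1]^C,\ \fracc_p^p(\bd)\le Y\}$ is a fractional-knapsack optimum and \emph{that} is what is shown submodular (Lemma~\ref{lemma:fracsubmodularity}); fractional knapsack is used here not as a lossy surrogate but because Lemma~\ref{lemma:maxatintegral} shows it coincides with the integral optimum. The $O(p)$ loss in the fractional guarantee comes from this factorization step, not from an integrality gap or knapsack rounding as you suggest.

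Your rounding sketch is also missing the mechanism behind the virtual solution. The paper does not just interpose a surrogate between the integer and fractional regrets; it needs to pass between $\ell_p$-regret and $\ell_p^p$-regret of $\alg$ against a fixed adversary, and these are maximized by \emph{different} realizations in general. The key technical lemma (Lemma~\ref{lemma:gaprealization}) shows they coincide under a ``farness'' condition---every client is either closer to $\alg$ than to $\sol$, or at least $p$ times farther---and the virtual solution $\widetilde{\sol}$ is constructed precisely to force this condition against $18\cdot\widetilde{\sol}$, at the price of an extra factor $p$ in connection costs. Without this device your final chain of inequalities (bounding $\ell_p$-regret via the $\ell_p^p$-discounts guarantee) does not go through, because the realization $C_1$ maximizing $\alg_p(C')-18\,\widetilde{\sol}_p(C')$ need not maximize $\alg_p^p(C')-18^p\,\widetilde{\sol}_p^p(C')$.
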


Before describing further details of the universal $\ell_p$-clustering algorithm, 
we note a rather unusual feature of the universal clustering framework. Typically,
in standard $\ell_p$-clustering, the algorithms effectively optimize the $\ell_p^p$
objective (e.g., sum of squared distances for $k$-means) because these are equivalent
in the following sense: an $\alpha$-approximation for the $\ell_p$ objective is equivalent 
to an $\alpha^p$-approximation for the $\ell_p^p$ objective. But, this equivalence fails
in the setting of universal algorithms for reasons that we discuss below. 
Indeed, we first give a universal
$\ell_p^p$-clustering algorithm, which is a simple extension of the $k$-median
algorithm, and then give universal $\ell_p$-clustering algorithms, which turns out to be much more
challenging.

%Now, the $\ell_p^p$ objective can be viewed as a non-metric $\ell_1$ norm similar to the $k$-median objective, standard approximation algorithms are often designed with the $\ell_p^p$ objective in mind. To this end, we first give universal algorithms for $\ell_p^p$-clustering. 

\subsection{Universal $\ell_p^p$-Clustering}
As in universal $k$-median, we can write an $\lp$ formulation for universal $\ell_p^p$-clustering, i.e. clustering with the objective $\sol(C') = \sum_{j\in C'} \cost(j, \sol)^p$:

\begin{equation}\label{eq:mainlp-ellpp}
\min\{r: (x, y)\in P; \forall C' \subseteq C: \sum_{j \in C'} \sum_i c_{ij}^p y_{ij} - \opt(C') \leq r\},
\end{equation}
where $P$ is still the $k$-median polytope defined in Section~\ref{sec:km-fractional}.

The main difficulty is that the $\ell_p^p$ distances no longer form a metric, i.e.,
do not satisfy triangle inequality.
Nevertheless, the distances still have a metric connection, that they are the $p$th power of metric distances. 
We show that this connection is sufficient to prove the following result:
\begin{theorem}\label{thm:lpp}
For all $p \geq 1$, there exists a $(27^p,27^p \cdot \frac{e}{e-1} + \frac{2}{3} \cdot 9^p)$-approximate 
algorithm for the universal  $\ell_p^p$ clustering problem.
\end{theorem}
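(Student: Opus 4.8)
The plan is to transcribe the two–phase argument behind Theorem~\ref{thm:kmalgorithm} — solve the regret‑minimization \lp, then round through a ``clustering with discounts'' subproblem — replacing every metric distance $c_{ij}$ by its $p$-th power $c_{ij}^p$, and tracking how the $k$-median constants $3$ (integrality gap), $9$ and $6$ (discounts) turn into $3^p$, $9^p$ and $\tfrac23\cdot 9^p$. For the fractional phase I would solve \lp~\eqref{eq:mainlp-ellpp} by the ellipsoid method with an approximate separation oracle for the regret constraints, exactly as in Section~\ref{sec:km-fractional}. The oracle again reduces to maximizing over $S\subseteq F$ with $|S|=k$ the function
\[
f_y(S)=\max_{C'\subseteq C}\Big[\sum_{j\in C'}\sum_i c_{ij}^p\,y_{ij}-S(C')\Big]=\sum_{j\in C}\Big(\sum_i c_{ij}^p\,y_{ij}-d(j,S)^p\Big)^{+},
\]
where now $S(C')=\sum_{j\in C'}d(j,S)^p$. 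The proof of Lemma~\ref{lemma:lbsubmodularity} carries over verbatim: efficient computability and monotonicity use only that $d(j,S)^p$ is nonincreasing as $S$ grows (true since $t\mapsto t^p$ is monotone), and submodularity uses only that if adding $x$ to $T$ strictly helps client $j$ then $x$ is $j$'s closest center in $T\cup\{x\}$, hence in $S\cup\{x\}$. Greedy (Theorem~\ref{thm:cardinality}) together with GLS then yields a $\tfrac{e}{e-1}$-approximate fractional solution $\fracc$ to \lp~\eqref{eq:mainlp-ellpp}, the analogue of Lemma~\ref{lemma:lpsolution}; writing $f_j:=\sum_i c_{ij}^p y_{ij}$, feasibility plus the oracle guarantee give $\fracc(C')\le\opt(C')+\tfrac{e}{e-1}\cdot\mr$ for every $C'$.

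For the rounding phase I would use two ingredients. First, the integrality gap of the natural LP relaxation of $\ell_p^p$-clustering is at most $3^p$ — the analogue of Lemma~\ref{lma:km-intgap}, obtained by applying the rounding of Archer {\em et al.}\ to the cost function $c_{ij}^p$ — which yields $\opt(C')\le 3^p\cdot\fracc(C')$ for every $C'$. Second, I would establish a $(9^p,\tfrac23\cdot 9^p)$-approximation for \emph{$\ell_p^p$-clustering with discounts} (client $j$ with connection cost $c_j$ contributing $(c_j^p-r_j)^+$) by adapting the primal–dual algorithm behind Lemma~\ref{lemma:kmbicriteria}, the relevant version being deferred to Appendix~\ref{section:kmwithdiscounts}. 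Given these, set the discount of client $j$ to $r_j:=3^p f_j$. As in the proof of Theorem~\ref{thm:kmalgorithm}, with $m_j$ the \mrs's connection cost to $j$,
\[
\mr\ \ge\ \max_{C'\subseteq C}\big[\mrs(C')-3^p\fracc(C')\big]\ =\ \sum_{j\in C}\big(m_j^p-3^p f_j\big)^{+},
\]
so $\mr$ upper bounds the optimum of this $\ell_p^p$-with-discounts instance; hence the algorithm's connection costs $c_j$ satisfy $\sum_{j}(c_j^p-27^p f_j)^+\le \tfrac23\cdot 9^p\cdot\mr$, and therefore for every $C'$,
\[
\sum_{j\in C'}c_j^p\ \le\ 27^p\cdot\fracc(C')+\tfrac23\cdot 9^p\cdot\mr\ \le\ 27^p\cdot\opt(C')+\Big(27^p\cdot\tfrac{e}{e-1}+\tfrac23\cdot 9^p\Big)\mr ,
\]
which is exactly the claimed bound.

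Most of this — writing \lp~\eqref{eq:mainlp-ellpp}, the submodularity check, and the final arithmetic — is a mechanical substitution $c_{ij}\mapsto c_{ij}^p$, $3\mapsto 3^p$, $9\mapsto 9^p$, $6\mapsto\tfrac23\cdot 9^p$ into the $k$-median argument. The real work, and the main obstacle, is the $\ell_p^p$-with-discounts approximation: since $c_{ij}^p$ is only a power of a metric and not a metric itself, each use of the triangle inequality in the Guha–Munagala-style primal–dual analysis must be replaced by an inequality of the form $(a_1+\dots+a_t)^p\le t^{p-1}(a_1^p+\dots+a_t^p)$, and one has to verify that all these telescoping losses land on exactly $9^p$ and $\tfrac23\cdot 9^p$ (which reduce to $9$ and $6$ at $p=1$) rather than something larger. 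The $3^p$ integrality-gap bound needs a similar, but easier, re-examination of the Archer {\em et al.}\ rounding with $p$-th-power costs.
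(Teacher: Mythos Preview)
Your proposal is correct and follows essentially the same approach as the paper: it too observes that the separation oracle of Lemma~\ref{lemma:lpsolution} needs no metric property and so carries over unchanged, invokes the $3^p$ integrality-gap bound from \cite{ArcherRS03}, and reduces the rounding to $\ell_p^p$-clustering with discounts via Lemma~\ref{lemma:kmbicriteria-lpp}, assembling the final constants exactly as you do. The paper likewise defers the proof of the $(9^p,\tfrac23\cdot 9^p)$ discounts approximation to Appendix~\ref{section:kmwithdiscounts}, where the losses from replacing triangle inequality by its $p$-th-power analogue are tracked as you anticipate.
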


As in universal $k$-median, a key component in proving Theorem~\ref{thm:lpp} is a rounding algorithm
that employs a bi-criteria approximation to the {\em $\ell_p^p$-clustering with discounts} problem.
Indeed, this result will also be useful in the next subsection, when we consider the universal 
$\ell_p$-clustering problem. So, we formally define $\ell_p^p$-clustering with discounts problem 
below and state our result for it.

\paragraph{$\ell_p^p$-clustering with Discounts.}
%\label{subsection:kmwd}
In this problem, are given a $\ell_p^p$-clustering instance, 
%(that is, a clustering problem where the objective is $\sol(C) = \sum_{j \in C} \cost(j, \sol)^p )$ 
but where each client $j$ has an additional (non-negative) parameter $r_j$ called its {\em discount}. 
Our goal is to place $k$ cluster centers that minimize the total connection costs of all clients. 
But, the connection cost for client $j$ can now be discounted by up to $r_j^p$, i.e., client $j$ with connection cost $c_j$ contributes 
$(c_j^p - r_j^p)^+ := \max\{0, c_j^p - r_j^p\}$ to the objective of the solution. 
(Note that the $k$-median with discounts problem that we described in the previous section is a special case of 
this problem for $p=1$.)

Let $\opt$ be the cost of an optimal solution to the $\ell_p^p$-clustering with discounts problem. 
We say an algorithm $\alg$ that outputs a solution with connection cost $c_j$ for client $j$ is a $(\gamma^p, \sigma)$-approximation\footnote{We refer to this as a $(\gamma^p, \sigma)$-approximation instead of a $(\gamma, \sigma)$-approximation to emphasize the difference between the scaling factor for discounts $\gamma$ and the loss in approximation factor $\gamma^p$.} if:
$$\sum_{j \in C} (c_j^p - \gamma^p \cdot r_j^p)^+ \leq \sigma \cdot \opt.$$
That is, a $(\gamma^p, \sigma)$-approximate algorithm outputs a solution whose objective function computed using discounts $\gamma \cdot r_j$  for all $j$ is at most $\sigma$ times the optimal objective using discounts $r_j$. 
%To provide some intuition for why solving this problem will allow us to round a fractional solution for range-robust $k$-median while preserving the approximation guarantee: Before, we noted that while in general rounding a $k$-median fractional solution such that no client's connection cost increases by more than a constant factor is not generally possible, if we allow ourselves to increase connection costs by a small additive amount more than a constant factor, it may become possible while still preserving regret guarantees. Given a range-robust $k$-median instance and a good fractional solution for this instance, suppose we construct a $k$-median with discounts instance where client $j$ has discount equal to some constant $c$ times the connection cost of $j$ in the fractional solution. Then, suppose we are able to find a solution to the $k$-median with discounts instance that achieves a small objective value. Intuitively, this solution is effectively rounding the fractional solution such that every client's connection cost increases by at most a factor of $c$, within some small additive ``excess'' (which is equal to the objective value). 
We give the following result about the $\ell_p^p$-clustering with discounts problem (see Appendix~\ref{section:kmwithdiscounts} for details):
\begin{lemma}\label{lemma:kmbicriteria-lpp}
        There exists a (deterministic) polynomial-time $(9^p, \frac{2}{3}\cdot 9^p)$-approximation algorithm for the $\ell_p^p$-clustering with discounts problem.
\end{lemma}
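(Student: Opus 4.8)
The plan is to reduce the $\ell_p^p$-clustering with discounts problem to the $k$-median with discounts problem (Lemma~\ref{lemma:kmbicriteria}) by exploiting the fact that $\ell_p^p$ distances are $p$th powers of metric distances. The key observation is that for nonnegative reals, $(a-b)^+$ and $(a^p - b^p)^+$ are related: $a^p - b^p \leq p \cdot a^{p-1}(a-b)$ when $a \geq b$, and conversely one can control $(a-b)^+$ in terms of $(a^p-b^p)^+$. However, the cleanest route — and the one I expect the authors take — is to run the primal-dual $k$-median with discounts algorithm of Lemma~\ref{lemma:kmbicriteria} \emph{directly on the metric} $c_{ij}$ with discounts $r_j$, and then argue that the guarantee translates appropriately when costs are raised to the $p$th power.

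First I would recall what Lemma~\ref{lemma:kmbicriteria} gives on the underlying metric: a solution with connection costs $c_j$ such that $\sum_j (c_j - 9 r_j)^+ \leq 6 \cdot \opt_1$, where $\opt_1$ is the optimal $k$-median-with-discounts cost using discounts $r_j$. The goal is to convert this into a statement about $p$th powers. The main technical step is an elementary inequality: for nonnegative $a, b$, one has $(a^p - (9b)^p)^+ \leq (a - 9b)^+ \cdot p \cdot a^{p-1} \cdot [\text{something}]$ — more usefully, when $a > 9b \geq 0$, write $a = 9b + t$ with $t = (a-9b)^+ > 0$; then $a^p - (9b)^p \leq a^p \leq$ a bound in terms of $t$ and the scale, but this needs care. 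The approach I would actually use: bound $(c_j^p - (9^p) r_j^p)^+$. If $c_j \leq 9 r_j$, this term is $0$. If $c_j > 9 r_j$, then $(c_j)^p - 9^p r_j^p \leq (c_j - 9 r_j)^+ \cdot p \cdot c_j^{p-1}$... but this still has a pesky $c_j^{p-1}$ factor. Better: use that $c_j^p - 9^p r_j^p = (c_j^p - 9^p r_j^p)$ and relate it to $(c_j - 9 r_j)$ times the optimum's scale via a global argument comparing to $\opt$ directly, rather than term-by-term.

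The hard part — and the main obstacle — will be handling the nonlinearity without picking up an extra factor that depends on $n$ or on the magnitudes of the distances. The resolution is to observe that the optimal $\ell_p^p$-with-discounts cost $\opt_p$ and the optimal $k$-median-with-discounts cost $\opt_1$ (on the same instance, discounts $r_j$) are related: the \emph{same} set of $k$ centers that is optimal for $\opt_p$ gives, for each client, a connection cost $c_j^*$ with $\sum_j (c_j^* {}^p - r_j^p)^+ = \opt_p$, and one wants to feed a suitably \emph{scaled} discount into the metric algorithm. I would set up the metric instance with discounts $r_j$ and argue: for each client, $(c_j^p - 9^p r_j^p)^+ \leq (c_j - 9r_j)^+ \cdot \bigl( \text{a polynomial in } p \text{ times a term that telescopes against } \opt_p\bigr)$ — concretely, using convexity, $x^p - y^p \le p\,(x-y)\,x^{p-1}$ and then $x^{p-1} \le (x-y+y)^{p-1}$, splitting into the part governed by the additive slack and the part governed by $y \le r_j$ which is absorbed into $\opt_p$. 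Carrying the constants through yields the $(9^p, \tfrac{2}{3}\cdot 9^p)$ bound; I would double-check that the constant $6$ from Lemma~\ref{lemma:kmbicriteria} combines with the $p$-dependent terms to give exactly $\tfrac23 \cdot 9^p = \tfrac{6 \cdot 9^p}{9}$, which suggests the intended argument loses a clean factor of $9^{p-1}$ relative to the $p=1$ case rather than anything involving $p$ explicitly — pointing to the slicker argument that $(c_j^p - 9^p r_j^p)^+ \le 9^{p-1}(c_j - 9 r_j)^+ \cdot (\text{max of } c_j^{p-1}, \ldots)$ being bounded by comparison to the optimal solution's per-client costs. I would then assemble these pieces, deferring the full constant-chasing to the appendix as the authors indicate.
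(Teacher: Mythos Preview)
Your reduction has a genuine gap. You propose to run the $k$-median-with-discounts algorithm of Lemma~\ref{lemma:kmbicriteria} as a black box on the metric $c_{ij}$ with discounts $r_j$, obtaining a center set $S$ with $\sum_j (c_j - 9r_j)^+ \le 6\cdot\opt_1$, and then to lift this to a statement about $p$th powers. But $\opt_1$ here is the optimal \emph{$k$-median} discounted cost, and the center set $S$ returned by that algorithm is only guaranteed to be good for the $\ell_1$ objective. There is no reason $S$ should be competitive for the $\ell_p^p$ objective: a solution minimizing $\sum_j (d(j,S)-r_j)^+$ can be arbitrarily bad for $\sum_j (d(j,S)^p - r_j^p)^+$ once $p>1$ (a few clients with moderately large distances dominate the $\ell_p^p$ cost but are cheap for $\ell_1$). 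The term-by-term inequalities you sketch, e.g.\ $c_j^p - (9r_j)^p \le p\,c_j^{p-1}(c_j-9r_j)$, all carry a factor $c_j^{p-1}$ that you cannot sum against $6\cdot\opt_1$ without an instance-dependent blowup; this is exactly the obstacle you flag and it does not go away.

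The paper takes a different route: it does \emph{not} reduce to Lemma~\ref{lemma:kmbicriteria} at all. Instead it reruns the entire Jain--Vazirani primal-dual plus Lagrangian bisection plus bi-point rounding analysis \emph{directly} on the non-metric LP with edge costs $(c_{ij}^p - r_j^p)^+$. The dual-fitting analysis still goes through because the only place the metric is used is a single triangle-inequality step (the ``indirect connection'' bound), which now reads $c_{ij} \le 3\,t_j^{1/p}$ and hence $c_{ij}^p \le 3^p t_j$; this is where the factor $3$ becomes $3^p$. The bi-point rounding then contributes another factor of roughly $2\cdot 3^{p-1}$ (via an AM--GM step on $(c_j^{(1)}+2c_j^{(2)})^p$), yielding $(9^p,\tfrac{2}{3}\cdot 9^p)$ overall. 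In fact the paper derives Lemma~\ref{lemma:kmbicriteria} as the $p=1$ corollary of Lemma~\ref{lemma:kmbicriteria-lpp}, so the dependency is the reverse of what you assumed.
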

We now employ this lemma in obtaining Theorem~\ref{thm:lpp}. 
Recall that the universal $k$-median result in the previous section had three main ingredients: 
\begin{itemize}
    \item 
    Lemma~\ref{lemma:lpsolution} to obtain an $\frac{e}{e-1}$-approximate fractional solution. This continues to hold
    for the $\ell_p^p$ objective, since Lemma ~\ref{lemma:lpsolution} does not use any metric property.
    \item
    An upper bound of $3$ on the integrality gap of the natural \lp relaxation of $k$-median from \cite{ArcherRS03}. 
    The same result now gives an upper bound of $3^p$ on the integrality gap of $\ell_p^p$-clustering.
    \item
    Lemma~\ref{lemma:kmbicriteria} to obtain an approximation guarantee for the $k$-median with discounts problem. 
    This is where the metric property of the connection costs in the $k$-median problem was being used. Nevertheless,
    Lemma~\ref{lemma:kmbicriteria-lpp} above gives a generalization of Lemma~\ref{lemma:kmbicriteria}
    to the $\ell_p^p$-clustering with discounts problem. 
\end{itemize}
Theorem~\ref{thm:lpp} now follows from these three observations using exactly the same steps as Theorem~\ref{thm:kmalgorithm}
in the previous section; we omit these steps for brevity.\qed

\eat{
\begin{proof}
The proof of Lemma~\ref{lemma:lpsolution} does not use any properties of metric distance functions, and so still gives a $\frac{e}{e-1}$-approximate fractional solution for universal $\ell_p^p$-clustering. Thus, it remains to extend the analysis of the $k$-median with discounts algorithm to $\ell_p^p$ clustering. This is  (see Appendix~\ref{section:kmwithdiscounts}):

Finally, the result of \cite{ArcherRS03} also gives an upper bound of $3^p$ on the integrality gap of $\ell_p^p$ clustering (for the same reasons as in Lemma~\ref{lemma:kmbicriteria-lpp}; wherever the factor of $3$ is lost due to triangle inequality in their analysis for $k$-median, a factor of $3^p$ is lost instead for $\ell_p^p$ clustering). The algorithm and analysis follow as that of Theorem~\ref{thm:kmalgorithm}, with Lemma~\ref{lemma:kmbicriteria-lpp} and the $3^p$ integrality gap substituted in.
\end{proof}
}

\bigskip

The rest of this section is dedicated to the universal $\ell_p$-clustering problem.
As for $k$-median, we have two stages, the fractional algorithm and the rounding 
algorithm, that we present in the next two subsections.

\subsection{Universal $\ell_p$-Clustering: Fractional Algorithm}\label{sec:ellp-frac}
Let us start by describing the fractional relaxation of the universal $\ell_p$-clustering problem\footnote{The constraints are not simultaneously linear in $y$ and $r$, although fixing $r$, we can write these constraints as $\sum_{j \in C'} \sum_{i} c_{ij}^p y_{ij} \leq (\opt(C') + r)^p$, which is linear in $y$. In turn, to solve this program we bisection search over $r$, using the ellipsoid method to determine if there is a feasible point for each fixed $r$.} 
(again, $P$ is the $k$-median polytope defined as in Section~\ref{sec:km-fractional}):

\begin{equation}\label{eq:mainlp-ellp}
\min\{r: (x, y)\in P; \forall C' \subseteq C: \left(\sum_{j \in C'} \sum_i c_{ij}^p y_{ij}\right)^{1/p} - \opt(C') \leq r\},
\end{equation}

%We now move on to designing universal algorithms for the $\ell_p$ objective. 
As described earlier, when minimizing regret, the $\ell_p$ and $\ell_p^p$ objectives are no longer equivalent. For instance,  recall that one of the key steps in Lemma~\ref{lemma:lpsolution} was to establish the submodularity of the function $f_y(S)$ denoting the maximum regret caused by any realization when comparing two given solutions: a fractional solution $y$ and an integer solution $S$. Indeed, the worst case realization had a simple structure: choose all clients that have a smaller connection cost for $S$ than for $y$. This observation continues to hold for the $\ell_p^p$ objective because of the linearity of $f_y(S)$ as a function of the realized clients once $y$ and $S$ are fixed. But, the $\ell_p$ objective is not linear even after fixing the solutions, and as a consequence, we lose both the simple structure of the maximizing realization as well as the submodularity of the overall function $f_y(S)$. For instance, consider two clients: one at distances $1$ and $0$, and another at distances $1+\epsilon$ and $1$, from $y$ and $S$ respectively. Using the $\ell_p$ objective, the regret with both clients is $(2+\epsilon)^{1/p} - 1$, whereas with just the first client the regret is $1$, which is larger for $p \geq 2$.

%Despite these differences, we can still modify our framework for $k$-median and $\ell_p^p$-clustering to get robust algorithms for $\ell_p$ objectives. As before, we start by finding a fractional solution in the standard $k$-facility polytope with low regret. Let $y$ be any fractional solution we are trying to determine the feasibility of. Our separation oracle for $k$-median and $\ell_p^p$-clustering used the fact that the regret of $y$ against an adversary using facilities $S$, which we denoted $f_y(S)$, was submodular and easy to compute (because the realization of clients which maximized this regret is easy to find). For $\ell_p$ objectives, this function is no longer submodular in general, and it is not immediately obvious what realization of clients should be used.

The above observation results in two related difficulties: first, that $f_y(S)$ is not submodular and hence standard submodular maximization techniques do not apply, but also that given $y$ and $S$, we cannot even compute the function $f_y(S)$ efficiently. To overcome this difficulty, we further refine the function $f_y(S)$ to a collection of functions $f_{y, Y} (S)$ by also fixing the cost of the fractional solution $y$ to at most a given value $Y$. As we will soon see, this allows us to relate the $\ell_p$ objective to the $\ell_p^p$ objective, but under an additional ``knapsack''-like packing constraint.
%Instead, we will show how to relate the regret in the $\ell_p$ and $\ell_p^p$ objectives under the assumption that $y$ has cost $Y$ in the realization maximizing its regret. In turn, we define $f_{y, Y}(S)$ to be the maximum regret of $y$ against $S$ while maintaining that $y$ has cost (at most) $Y$. 
It is still not immediate that $f_{y, Y}$ is efficiently computable because of the knapsack constraint that we have introduced. Our second observation is that relaxing the (\np-hard) integer knapsack problem to the corresponding (poly-time) {\em fractional} knapsack problem does not affect the optimal value of $f_{y, Y}(S)$ (i.e., allowing fractional clients does not increase $y$'s regret), while making the function efficiently computable. As a bonus, the relaxation to fractional knapsack also restores submodularity of the function, allowing us to use standard maximization tools as earlier. We describe these steps in detail below.

To relate the regret in the $\ell_p$ and $\ell_p^p$ objectives, let $\fracc_p^q(C')$ and $S_p^q(C')$  denote the $\ell_p$ objective to the $q$th power for $y$ and $S$ respectively in realization $C'$ (and let $\fracc_p(C'), S_p(C')$ denote the corresponding $\ell_p$ objectives). Assume that $y$'s regret against $S$ is non-zero. Then:

\begin{align*}
\max_{C' \subseteq C} \left[\fracc_p(C') - S_p(C')\right] &= \max_{C' \subseteq C} \left[\frac{\fracc_p^p(C') - S_p^p(C')}{\sum_{q = 0}^{p-1} \fracc_p^q(C') S_p^{p-1-q}(C')}\right]\\
&\simeq_p \max_{C' \subseteq C} \left[\frac{\fracc_p^p(C') - S_p^p(C')}{\fracc_p^{p-1}(C')}\right]\\
&= \max_{Y} \max_{C' \subseteq C: \fracc_p^p(C') \leq Y} \left[\frac{\fracc_p^p(C') - S_p^p(C')}{Y^{1-1/p}} \right]\\
&= \max_{Y}  \left[\frac{\max_{C' \subseteq C, \fracc_p^p(C') \leq Y}[\fracc_p^p(C') - S_p^p(C')]}{Y^{1-1/p}} \right].\\
\end{align*}

The $\simeq_p$ denotes equality to within a factor of $p$, and uses the fact that if the regret is non-zero, then for every $C'$ such that $\fracc_p(C') > S_p(C')$ (one of which is always the maximizer of all expressions in this equation), every term in the sum in the denominator is upper bounded by $\fracc_p^{p-1}(C')$. 
%The constraint $\fracc_p^p(C') \leq Y$ instead of $\fracc(C') = Y$ is fine because for a given $\fracc(C')$, the choice of $Y$ satisfying $Y \geq \fracc(C')$ that will maximize $\frac{\fracc_p^p(C') - S_p^p(C')}{Y}$ is $Y = \fracc(C')$. 

We would like to argue that the denominator, 

\[\max\{\fracc_p^p(C') - S_p^p(C'): C' \subseteq C, \fracc_p^p(C') \leq Y\},\]

is a submodular function of $S$. If we did this, then we could find an adversary and realization of clients that (approximately) maximizes the regret of $y$ by iterating over all (discretized) values of $Y$.
%and then using the greedy algorithm for submodular maximization on each $f_{y,Y}$. 
But, as described above, it is easier to work with its fractional analog 
because of the knapsack constraint: 
$$f_{y, Y}(S) := \max\{\fracc_p^p(\bd) - S_p^p(\bd): \bd \in [0, 1]^C, \fracc_p^p(\bd) \leq Y\}.$$

Here, $\fracc_p^p(\bd) := \sum_{j \in C} d_j \cdot \sum_{i \in F} c_{ij}^p y_{ij}$ and $S_p^p(\bd) := \sum_{j \in C} d_j \cdot \min_{i \in S} c_{ij}^p$ are the natural extensions of the $\ell_p^p$ objective to fractional clients. 
%The fractional analog effectively allows us to include fractional amounts of clients. 
The next lemma shows that allowing fractional clients does not affect the maximum regret:

\begin{lemma}\label{lemma:maxatintegral}
For any two solutions $y, S$, there exists a global maximum of $\fracc_p(\bd) - S_p(\bd)$ over $\bd\in[0, 1]^C$ where all the clients are integral, i.e., $\bd\in \{0, 1\}^C$. Therefore, 
$$\max_{\bd \in [0, 1]^C}\left[\fracc_p(\bd) - S_p(\bd)\right] = \max_{C' \subseteq C} \left[\fracc_p(C') - S_p(C')\right].$$
\end{lemma}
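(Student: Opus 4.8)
The plan is to prove that the maximum of $\fracc_p(\bd) - S_p(\bd)$ over the cube $[0,1]^C$ is attained at a vertex, i.e., at an integral point $\bd \in \{0,1\}^C$. Once this is established, the displayed identity follows immediately, since integral points of $[0,1]^C$ are in bijection with subsets $C' \subseteq C$ and the fractional objectives $\fracc_p^p(\bd), S_p^p(\bd)$ restricted to integral $\bd$ coincide with the ordinary $\ell_p^p$ costs $\fracc_p^p(C'), S_p^p(C')$.

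First I would fix an optimal $\bd^* \in [0,1]^C$ maximizing $\phi(\bd) := \fracc_p(\bd) - S_p(\bd) = \big(\fracc_p^p(\bd)\big)^{1/p} - \big(S_p^p(\bd)\big)^{1/p}$, and suppose for contradiction that some coordinate $d_j^*$ is strictly fractional. The key observation is that $\fracc_p^p(\bd) = \sum_j d_j \cdot a_j$ and $S_p^p(\bd) = \sum_j d_j \cdot b_j$ are both \emph{linear} (affine, in fact linear) in $\bd$, where $a_j := \sum_i c_{ij}^p y_{ij}$ and $b_j := \min_{i \in S} c_{ij}^p$. So along the segment obtained by varying $d_j$ alone (with all other coordinates frozen), we have $\fracc_p^p = A + d_j a_j$ and $S_p^p = B + d_j b_j$ for constants $A, B \ge 0$, and $\phi$ becomes the one-variable function $h(t) = (A + t a_j)^{1/p} - (B + t b_j)^{1/p}$ on $t \in [0,1]$. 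I would show $h$ has no interior strict local maximum, so its maximum over $[0,1]$ is attained at $t=0$ or $t=1$; moving $d_j^*$ to that endpoint does not decrease $\phi$ and reduces the number of fractional coordinates, so iterating gives an integral maximizer.

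The main technical step — and the one I expect to be the crux — is the claim about $h(t) = (A + t a_j)^{1/p} - (B + t b_j)^{1/p}$. Since $x \mapsto x^{1/p}$ is concave for $p \ge 1$, each summand $t \mapsto (A + t a_j)^{1/p}$ and $t \mapsto (B + t b_j)^{1/p}$ is concave in $t$; but $h$ is a \emph{difference} of concave functions, which need not be concave or convex, so I cannot simply invoke concavity. Instead I would argue directly. If $a_j \le b_j$, then for $t' \ge t$ we have $(A + t' a_j)^{1/p} - (A + t a_j)^{1/p} \le (B + t' b_j)^{1/p} - (B + t b_j)^{1/p}$ is not the right direction in general, so more care is needed: the cleanest route is to compute $h'(t) = \tfrac{1}{p}\big[ a_j (A + t a_j)^{1/p - 1} - b_j (B + t b_j)^{1/p - 1}\big]$ and $h''(t) = \tfrac{1}{p}(\tfrac1p - 1)\big[a_j^2 (A+ta_j)^{1/p-2} - b_j^2 (B + t b_j)^{1/p - 2}\big]$, and observe that whenever $h'(t_0) = 0$ at an interior point, the two terms $a_j(A+t_0 a_j)^{1/p-1}$ and $b_j(B + t_0 b_j)^{1/p-1}$ are equal; feeding this relation into $h''$ shows $h''(t_0)$ has a definite sign (one can check $h''(t_0) \ge 0$ when $A + t_0 a_j \le B + t_0 b_j$, etc.), so every interior critical point of $h$ is a local minimum, not a maximum. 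Hence $\max_{[0,1]} h \in \{h(0), h(1)\}$. (A degenerate case to dispatch separately is $a_j = b_j = 0$, where $h$ is constant and the endpoint value is already optimal, and the case where $A$ or $B$ is zero, handled by the same computation with the convention that the maximizing realization has nonzero regret so $\fracc_p^p(\bd^*) > 0$.)

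Finally, I would assemble these pieces: starting from $\bd^*$, repeatedly pick a fractional coordinate $d_j^*$, apply the one-variable argument to push it to $0$ or $1$ without decreasing $\phi$, and terminate after at most $|C|$ steps with an integral maximizer $\bd' \in \{0,1\}^C$ satisfying $\phi(\bd') \ge \phi(\bd^*) = \max_{[0,1]^C}\phi$. Since $\phi(\bd') \le \max_{[0,1]^C}\phi$ trivially, equality holds, and identifying $\bd'$ with the subset $C' = \{j : d_j' = 1\}$ gives $\max_{\bd \in [0,1]^C}[\fracc_p(\bd) - S_p(\bd)] = \max_{C' \subseteq C}[\fracc_p(C') - S_p(C')]$, as required.
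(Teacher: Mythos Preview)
Your approach is the same as the paper's: reduce to the one-variable function $h(t) = (A + t a_j)^{1/p} - (B + t b_j)^{1/p}$ obtained by freezing all coordinates but $d_j$, and show that an interior maximizer is impossible. The paper does this by tracking the sign pattern of $h'$ directly; you go via $h''$ at critical points. Both routes work, but your execution contains an error.

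You assert ``$h''(t_0) \ge 0$ when $A + t_0 a_j \le B + t_0 b_j$'' and hence that \emph{every} interior critical point of $h$ is a local minimum. The inequality is backwards, and the conclusion is false. Writing $u = A + t_0 a_j$, $v = B + t_0 b_j$, the critical-point relation $a_j u^{1/p-1} = b_j v^{1/p-1}$ plugged into $h''(t_0) = \tfrac{1}{p}(\tfrac{1}{p}-1)\bigl[a_j^2 u^{1/p-2} - b_j^2 v^{1/p-2}\bigr]$ gives (for $p>1$) that $h''(t_0) > 0$ exactly when $u > v$, and $h''(t_0) < 0$ when $u < v$; so $h$ can perfectly well have an interior local maximum where $\fracc_p^p < S_p^p$. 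The fix is what you buried in a parenthetical: since $\phi(\mathbf 0) = 0$, any global maximizer $\bd^*$ has $\phi(\bd^*) \ge 0$. If $\phi(\bd^*)=0$, the integral point $\mathbf 0$ already works. If $\phi(\bd^*)>0$ and $d_j^* \in (0,1)$, then $d_j^*$ is a global maximizer of $h$ with $h(d_j^*)>0$, so $u(d_j^*) > v(d_j^*)$; now the first-order condition $h'(d_j^*)=0$ together with the corrected second-derivative sign gives $h''(d_j^*)>0$, making $d_j^*$ a strict local minimum of $h$---a contradiction. This is precisely the role the hypothesis $\fracc_p(\bd) > S_p(\bd)$ plays in the paper's proof, and once you argue this way no iteration is needed.
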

We remark that unlike for the $\ell_p^p$ objective, integrality of the maximizer is not immediate for the $\ell_p$ objective because the regret of $y$ compared to $S$ is not a linear function of $\bd$. 

\begin{proof}
We will show that the derivative of $\fracc_p(\bd) - S_p(\bd)$ when $\fracc_p(\bd) > S_p(\bd)$ with respect to a fixed $d_j$ is either always positive or always negative for $d_j \in (0, 1)$, or negative while $0 < d_j < d'$ and then positive afterwards. This gives that any $\bd$ with a fractional coordinate where  $\fracc_p(\bd) > S_p(\bd)$ (which is necessary for a fractional $\bd$ to be a global maximum but not the integral all-zeroes vector) cannot be a local maximum, giving the lemma.

To show this property of the derivative, letting $\bd_{-j}$ denote $\bd$ with $d_j = 0$, we have $\fracc_p(\bd) - S_p(\bd) = (\fracc_p^p(\bd_{-j}) + c_1d_j)^{1/p} - (S_p^p(\bd_{-j}) + c_2d_j)^{1/p}$ where $c_1, c_2$ are the $\ell_p^p$ distance from $j$ to $\fracc, S$ respectively. For positive $d_j$, the derivative with respect to $d_j$ is well-defined and equals

$$\frac{1}{p}\left(\frac{c_1}{(\fracc_p^p(\bd_{-j}) + c_1d_j)^{1-1/p}} - \frac{c_2}{(S_p^p(\bd_{-j}) + c_2d_j)^{1-1/p}}\right).$$

The derivative is positive if the following inequality holds:

$$\frac{S_p^p(\bd_{-j}) + c_2d_j}{\fracc_p^p(\bd_{-j}) + c_1d_j} > (\frac{c_2}{c_1})^{\frac{p}{p-1}}.$$

We first focus on the case where $c_1 > c_2$. The left-hand side starts at $S_p^p(\bd_{-j}) / \fracc_p^p(\bd_{-j})$ and monotonically and asymptotically approaches $c_2/c_1$ as $d_j$ increases. This implies that either it is always at least $c_2 / c_1$ or it is increasing and approaching $c_2/c_1$, i.e. at least $(c_2/c_1)^{p/(p-1)}$ for $d_j > 0$ or less than $(c_2/c_1)^{p/(p-1)}$ for all $d_j < d'$ for some $d'$ and then greater than $(c_2/c_1)^{p/(p-1)}$ for all $d_j > d'$ (here, we are using the fact that $c_1 > c_2$ and so $(c_2/c_1)^{p/(p-1)} < c_2 / c_1$). In turn, the desired property of the derivative holds.

In the case where $c_1 \leq c_2$, at any optimum $S_p^p(\bd_{-j}) + c_2d_j < \fracc_p^p(\bd_{-j}) + c_1d_j$ (otherwise $\fracc_p^p(\bd) < S_p^p(\bd)$ and so $\bd$ cannot be a maximum because the all zeroes vector achieves a better objective) and so the derivative is always negative in this case as desired.
\end{proof}

%The fractional analog is easier to work with because computing $f'_{y, Y}(S)$ is the solution to a knapsack problem (where the clients have weight equal to their contribution to $\fracc_p^p(C')$ and value equal to their contribution to $\fracc_p^p(C') - \sol_p^p(C')$), which is hard while solving the equivalent fractional knapsack problem is easy. Approximately maximizing an ``approximately submodular'' function (i.e. a function $\tilde{f}$ such that for all $S$, $\tilde{f}(S) \in [1, 1+\epsilon] \cdot f(S)$ for some submodular $f$) is hard, so using a PTAS for regular knapsack does not suffice. In addition, the fact that fractional knapsack solutions use the entire weight of the knapsack (without loss of generality) makes proving submodularity much easier, as follows:

It turns out that relaxing to fractional clients not only helps in efficient computability of the function $f_{y, Y}(S)$, but also simplifies the proof of submodularity of the function. 

%\alert{Is there something particularly noteworthy in the proof of submodularity? If so, we should give a proof sketch.}

\begin{lemma}\label{lemma:fracsubmodularity}
The function $f_{y,Y}(S)$ as defined above is submodular.
\end{lemma}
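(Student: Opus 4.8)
The plan is to prove submodularity of $f_{y,Y}(S)$ directly from the definition, decomposing the objective client-by-client as we did for the $\ell_p^p$ case, but now tracking the knapsack-type constraint $\fracc_p^p(\bd) \le Y$, which does not depend on $S$ at all. First I would rewrite
$$f_{y,Y}(S) = \max\left\{\sum_{j \in C} d_j\left(a_j - b_j(S)\right) : \bd \in [0,1]^C,\ \sum_{j \in C} d_j a_j \le Y\right\},$$
where $a_j := \sum_{i \in F} c_{ij}^p y_{ij}$ is the (fixed) fractional $\ell_p^p$ cost of client $j$ and $b_j(S) := \min_{i \in S} c_{ij}^p$ is its cost under $S$ (with $b_j(\emptyset) = \infty$, so in that degenerate case clients with $a_j > 0$ are simply never selected). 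The key structural point is that for fixed $\bd$ the inner expression is linear in the quantities $b_j(S)$, and $b_j(S) = \min_{i \in S} c_{ij}^p$ is a nonincreasing submodular function of $S$ (a standard fact: minimum of a ground-set-indexed family). So $-b_j(S)$ is nondecreasing and supermodular... — here I must be careful about the direction, since $f_{y,Y}$ is a \emph{maximum} of such linear-in-$b$ functions.

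The cleanest route is to verify the defining inequality $f_{y,Y}(S \cup \{x\}) - f_{y,Y}(S) \ge f_{y,Y}(T \cup \{x\}) - f_{y,Y}(T)$ for $S \subseteq T \subseteq F$ and $x \in F$. Let $\bd^*$ be a maximizer achieving $f_{y,Y}(T \cup \{x\})$. Since the knapsack constraint $\sum_j d_j a_j \le Y$ is independent of the center set, $\bd^*$ is a feasible choice for every one of the four problems; using it as a (possibly suboptimal) witness for $f_{y,Y}(S \cup \{x\})$ and comparing exact optima for the two sides,
$$f_{y,Y}(S \cup \{x\}) - f_{y,Y}(S) \ge \sum_j d^*_j\big(a_j - b_j(S \cup \{x\})\big) - \sum_j d^*_j\big(a_j - b_j(S)\big) = \sum_j d^*_j\big(b_j(S) - b_j(S \cup \{x\})\big),$$
where the first inequality uses that $\bd^*$ is feasible-but-not-necessarily-optimal for $S \cup \{x\}$ while $-f_{y,Y}(S)$ is the true optimum (so subtracting the true max of the $S$-problem can only help). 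By the same reasoning applied with the maximizer for $T \cup \{x\}$ being exactly $\bd^*$ and $-f_{y,Y}(T)$ its true optimum,
$$f_{y,Y}(T \cup \{x\}) - f_{y,Y}(T) \le \sum_j d^*_j\big(b_j(T) - b_j(T \cup \{x\})\big).$$
It then remains to check the pointwise inequality $b_j(S) - b_j(S \cup \{x\}) \ge b_j(T) - b_j(T \cup \{x\})$ for every $j$, which is precisely submodularity of $b_j(S) = \min_{i \in S} c_{ij}^p$: adding $x$ can only drop the running minimum by at least as much when starting from the larger set has already been accounted for — concretely, if $c_{xj}^p \ge b_j(T)$ then the right side is $0$ and we are done by $b_j(S) \ge b_j(S \cup \{x\})$; otherwise $c_{xj}^p < b_j(T) \le b_j(S)$, so $b_j(S \cup \{x\}) = b_j(T \cup \{x\}) = c_{xj}^p$ and the inequality reduces to $b_j(S) \ge b_j(T)$, which holds since $S \subseteq T$. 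Since $d^*_j \ge 0$, summing gives the result.

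The main obstacle is getting the witness/optimality bookkeeping in the right direction on both sides simultaneously: we need to \emph{lower}-bound the left increment and \emph{upper}-bound the right increment using a \emph{single} common feasible point $\bd^*$, and this works only because the feasible region $\{\bd \in [0,1]^C : \sum_j d_j a_j \le Y\}$ is the same for all four center sets — i.e., the knapsack constraint involves only $y$, never $S$. (This is also exactly why introducing $Y$ was harmless: had the packing constraint depended on $S$, this argument would collapse.) A secondary subtlety is handling $S = \emptyset$ or the case $b_j(\cdot) = \infty$, but there the constraint $\fracc_p^p(\bd) \le Y$ with finite $Y$ forces $d_j = 0$ for any client with $a_j > 0$ and infinite $b_j$, so such clients contribute nothing and can be ignored; alternatively one notes $x$ is only ever added to nonempty or about-to-be-nonempty sets in the inequality we must verify, and the pointwise argument above already covers all finite cases. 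I would close by remarking that submodularity of $f_{y,Y}$, together with its monotonicity (each $-b_j(S)$ is nondecreasing in $S$) and — via Lemma~\ref{lemma:maxatintegral} and fractional-knapsack LP duality — its polynomial-time computability, lets us invoke Theorem~\ref{thm:cardinality} to approximately maximize it over $|S| = k$, which is the separation oracle we need.
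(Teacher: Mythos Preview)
Your first displayed inequality is false, and this is where the argument breaks. You claim
\[
f_{y,Y}(S \cup \{x\}) - f_{y,Y}(S) \;\ge\; \sum_j d^*_j\bigl(a_j - b_j(S \cup \{x\})\bigr) - \sum_j d^*_j\bigl(a_j - b_j(S)\bigr),
\]
justified by ``$\bd^*$ is feasible for $S\cup\{x\}$ while $-f_{y,Y}(S)$ is the true optimum, so subtracting the true max can only help.'' But subtracting the \emph{maximum} $f_{y,Y}(S)$ makes the left side as \emph{small} as possible, which is the wrong direction for a lower bound. Feasibility of $\bd^*$ gives $f_{y,Y}(S) \ge \sum_j d^*_j(a_j - b_j(S))$, i.e.\ $-f_{y,Y}(S) \le -\sum_j d^*_j(a_j - b_j(S))$; combining this with $f_{y,Y}(S\cup\{x\}) \ge \sum_j d^*_j(a_j - b_j(S\cup\{x\}))$ yields nothing, since the two inequalities point in opposite directions. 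Concretely: take two clients with $a_1=a_2=2$, knapsack $Y=2$ (so exactly one client fits), centers $u,v,x$ with $b_1(u)=0,\,b_2(u)=2$; $b_1(v)=b_2(v)=2$; $b_1(x)=2,\,b_2(x)=0$. With $S=\{u\}$, $T=\{u,v\}$, one has $f_{y,Y}(S)=f_{y,Y}(S\cup\{x\})=f_{y,Y}(T)=f_{y,Y}(T\cup\{x\})=2$, yet choosing $\bd^*=(0,1)$ as a maximizer for $T\cup\{x\}$ gives $\sum_j d^*_j(b_j(S)-b_j(S\cup\{x\}))=2$, so your inequality asserts $0\ge 2$.

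The deeper issue is that your argument, if valid, would prove that $\max_{\bd\in K} h_{\bd}(S)$ is submodular whenever each $h_{\bd}$ is submodular and $K$ is fixed --- a statement that is false in general. The paper's proof is necessarily more delicate: it exploits the specific fractional-knapsack structure (the ``prefix property'' that the optimum greedily packs items by value-to-weight ratio) to perform an explicit exchange argument, building a feasible knapsack for $S\cup\{x\}$ out of pieces of the optimal knapsacks for $S$, $T$, and $T\cup\{x\}$. Your second inequality (upper-bounding the $T$-increment via $\bd^*$) and the pointwise bound $b_j(S)-b_j(S\cup\{x\}) \ge b_j(T)-b_j(T\cup\{x\})$ are both correct, but they are not enough on their own; you still need to control the maximizer for $S$, and that is where the knapsack structure must enter.
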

\begin{proof}[Proof of Lemma~\ref{lemma:fracsubmodularity}]
Fix a universal clustering instance, fractional solution $y$, and value $Y$. Consider any $S \subseteq T \subseteq F$ and $x \in F$.  $f_{y,Y}(T \cup \{x\}) - f_{y,Y}(T) \leq f_{y,Y}(S \cup \{x\}) - f_{y,Y}(S)$. $f_{y,Y}(S)$ is the optimum of a fractional knapsack instance where each client is an item with value equal to the difference between its contribution to $\fracc_p^p(\bd) - S_p^p(\bd)$ and weight equal to its contribution to $\fracc_p^p(\bd)$, with the knapsack having total weight $Y$. For simplicity we can assume there is a dummy item with value 0 and weight $Y$ in the knapsack instance as well (a value 0 item cannot affect the optimum). Note that the weights are fixed for any $S$, and the values increase monotonically with $S$. We will refer to the latter fact as \textit{monotonicity of values}. The optimum of a fractional knapsack instance is given by sorting the items in decreasing ratio of value to weight, and taking the (fractional) prefix of the items sorted this way that has total weight $Y$. We will refer to this fact as the \textit{prefix property}. We will show that we can construct a fractional knapsack solution that when using cluster centers $S \cup \{x\}$ has value at least $f_{y,Y}(S) + f_{y,Y}(T \cup \{x\}) - f_{y,Y}(T)$, proving the lemma. For brevity, we will refer to fractions of clients which may be in the knapsack as if they were integral clients.

Consider $f_{y,Y}(T \cup \{x\}) - f_{y,Y}(T)$. We can split this difference into four cases:
\begin{enumerate}
    \item A client in the optimal knapsack for $S$, $T$, and $T \cup \{x\}$.
    \item A client in the optimal knapsack for $S$ and $T \cup \{x\}$ but not for $T$.
    \item A client in the optimal knapsack for $T$ and $T \cup \{x\}$ but not for $S$.
    \item A client in the optimal knapsack for $T \cup \{x\}$ but not for $S$ or $T$. 
\end{enumerate}
In every case, the client's value must have increased (otherwise, it cannot contribute to the difference in cases 1 and 3, or it must also be in $T$'s knapsack in cases 2 and 4), i.e. $x$ is the closest cluster center to the client in $T \cup \{x\}$ (and thus $S \cup \{x\}$). Let $w_1, w_2, w_3, w_4$ be the total weight of clients in each case. The total weight of clients in $T$'s knapsack but not $T \cup \{x\}$ is $w_2 + w_4$. We will refer to these clients as replaced clients. The increase in value due to cases 2 and 4 can be thought of as replacing the replaced clients with case 2 and 4 clients. In particular, we will think of the case 2 clients as replacing the replaced clients of weight $w_2$ with the smallest total value for $T$, and the case 4 clients as replacing the remaining replaced clients (i.e. those with the largest total value for $T$).

Without loss of generality, we assume there are no case 1 clients. By the prefix property, any of the knapsack instances for $S, T, T \cup \{x\}$ (and also $S \cup \{x\}$ by monotonicity of values and the prefix property) has optimal value equal to the total value of case 1 clients plus the optimal value of a smaller knapsack instance with total weight $Y - w_1$ and all clients except case 1 clients available. The value of case 1 clients for $S \cup \{x\}$ and $T \cup \{x\}$ is the same (since the values are determined by $x$), and can only be smaller for $S$ than $T$ by monotonicity of values. In turn, we just need to construct a knapsack for $S \cup \{x\}$ for the smaller instance with no case 1 clients whose value is at least that of $S$ plus the contribution to $f_{y,Y}(T \cup \{x\}) - f_{y,Y}(T)$ from cases 2-4.

To build the desired knapsack for $S \cup \{x\}$, we start with the knapsack for $S$. The case 2 clients in $S$'s knapsack by the prefix property have less value for $S$ than for $T$ by monotonicity of values. By the prefix property, for $T$ the case 2 clients have less value than the replaced clients of total weight $w_2$ with the smallest total value for $T$ (since the former are not in the optimal knapsack for $T$, and the latter are). So, the increase in value of the case 2 clients in $S$'s knapsack is at least the contribution to $f_{y,Y}(T \cup \{x\}) - f_{y,Y}(T)$ due to replacing clients in $T$'s knapsack with case 2 clients.

To account for the case 4 clients, we take the clients in $S$'s knapsack which are not case 2 clients with weight $w_4$ and the least total value for $T$, and replace them with the case 4 clients. These clients are among the clients of total weight $w_2 + w_4$ with the lowest value-to-weight ratios (for $T$) in $S$'s knapsack (they aren't necessarily the clients of total weight $w_4$ with the lowest value-to-weight ratios, because we chose not to include case 2 clients in this set). On the other hand, the replaced clients in $T$'s knapsack all have value-to-weight ratios for $T$ greater than at least $w_2$ weight of other clients in $T$'s knapsack (those replaced by the case 2 clients). So by monotonicity of values and the prefix property, the clients we replace in $S$'s knapsack with case 4 clients have lower value for $S$ than the clients being replaced by case 4 clients do for $T$, and so we increase the value of our knapsack by more than the contribution to $f_{y,Y}(T \cup \{x\}) - f_{y,Y}(T)$ due to case 4 clients.

Lastly, we take any clients in $S$'s knapsack which are not in $T$'s knapsack or case 2 or case 4 clients with total weight $w_3$ and replace them with the case 3 clients. Since these clients are not in $T$'s knapsack, their value for $T$ (and thus their value for $S$ by monotonicity of values) is less than the case 3 clients' value for $T$ by the prefix property. In turn, this replacement increases the value of our knapsack by more than the contribution to $f_{y,Y}(T \cup \{x\}) - f_{y,Y}(T)$ due to case 3 clients.
\end{proof}

This lemma allows us to now give an approximate separation oracle for fractional solutions of universal $\ell_p$-clustering by trying all guesses for $Y$ (\textproc{$\ell_p$-SepOracle} in Figure~\ref{fig:lpsep}). 

\begin{figure}[ht]
\fbox{\begin{minipage}{\textwidth}
\textproc{$\ell_p$-SepOracle}($(x, y, r)$, $F$, $C$):\newline
\textbf{Input:} Fractional solution $(x,y,r)$, set of cluster centers $F$, set of all clients $C$
\begin{algorithmic}[1]
\If{Any constraint in \eqref{eq:mainlp-ellp} except the regret constraint set is violated}
\State \Return the violated constraint
\EndIf
\State $c_{\min} \leftarrow \min_{i \in F, j \in C} c_{ij}^p, c_{\max} \leftarrow \sum_{j \in C} \max_{i \in F} c_{ij}^p$
\For {$Y \in \{ c_{\min}, c_{\min}(1+\epsilon'), c_{\min}(1+\epsilon')^2, \ldots c_{\min} (1+\epsilon')^{\lceil \log_{1+\epsilon'} c_{\max}/c_{\min}  \rceil}\}$}
\State $S \leftarrow \frac{e-1}{e}$-maximizer of $f_{y, Y}$ subject to $|S| \leq k$ via Theorem~\ref{thm:cardinality} 
\State $\bd' \leftarrow \argmax_{\bd \in [0, 1]^C: \sum_{j \in C} d_j \sum_{i \in F}  c_{ij}^p y_{ij}\leq Y} \sum_{j \in C} d_j \sum_{i \in F} c_{ij}^p y_{ij} - \sum_{j \in C} d_j \min_{i \in S} c_{ij}^p$ \label{line:demand} 
\If{$\frac{1}{pY^{1-1/p}}\left[\sum_{j \in C} d_j' \sum_{i \in F} c_{ij}^p y_{ij} - \sum_{j \in C} d_j' \min_{i \in S} c_{ij}^p\right] > r$}
\State \Return $\frac{1}{pY^{1-1/p}}\left[\sum_{j \in C} d_j' \sum_{i \in F} c_{ij}^p y_{ij} - \sum_{j \in C} d_j' \min_{i \in S} c_{ij}^p\right] \leq r$
\EndIf
\EndFor 
\State \Return ``Feasible''
\end{algorithmic}
\end{minipage}}
\caption{Separation Oracle for $\ell_p$-Clustering}
\label{fig:lpsep}
\end{figure}

%\alert{Can we replace the proof of the next lemma with another invocation of GLS? If not, we will move the proof to the appendix.}

One complication of using the above as a separation oracle in the ellipsoid algorithm is that it outputs linear constraints whereas the actual constraints in the fractional relaxation given in \eqref{eq:mainlp-ellp} are non-linear. So, we need some additional work to show that violation of the linear constraints output by \textsc{$\ell_p$-SepOracle} also implies violation of the non-linear constraints in \eqref{eq:mainlp-ellp}. We give these details in Appendix~\ref{sec:proofs} and prove the next lemma.

\begin{lemma}\label{lemma:lp-fractional}
For any $p \geq 1, \epsilon > 0$ there exists an algorithm which finds a $(\frac{e}{e-1}\cdot p+\epsilon)$-approximation to the
\lp for the universal $\ell_p$-clustering problem.
\end{lemma}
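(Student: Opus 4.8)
The plan is to solve \lp~\eqref{eq:mainlp-ellp} by running the ellipsoid method with the oracle \textsc{$\ell_p$-SepOracle} of Figure~\ref{fig:lpsep} as an approximate separation oracle, bisecting over the target regret value $r$. The first observation is that for a \emph{fixed} $r$ the constraints of \eqref{eq:mainlp-ellp} cut out a polytope $F_r$: the constraint $(\sum_{j\in C'}\sum_i c_{ij}^p y_{ij})^{1/p} - \opt(C') \le r$ is equivalent to the linear inequality $\sum_{j\in C'}\sum_i c_{ij}^p y_{ij} \le (\opt(C')+r)^p$ because $\opt(C')$ is a constant. So solving the \lp amounts to finding, by binary search on $r$, a point in $F_r$ for the smallest feasible $r$, and for each fixed $r$ we run the ellipsoid method inside $P$ using the oracle. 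For this to work I need the oracle to guarantee: (i) if it certifies $(x,y,r)$ ``feasible'', then the true regret $\max_{C'}[\fracc_p(C')-\opt(C')]$ of $(x,y)$ is at most $(\tfrac{e}{e-1}p+\epsilon)\cdot r$; (ii) if it returns a linear inequality, that inequality is violated by $(x,y,r)$ and valid for every point of $F_r$; and (iii) it never rejects a point whose regret is already at most $r$, so the search terminates at some $r^\star$ no larger than the \lp optimum.

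For (i) and (iii) I would track the multiplicative slack the oracle introduces. Write $R(y):=\max_{S:\,|S|\le k}\max_{C'\subseteq C}[\fracc_p(C')-S_p(C')]$ for the true regret of $y$, and $V(y)$ for the quantity $\max_{Y}\tfrac{1}{pY^{1-1/p}}f_{y,Y}(S_Y)$ the oracle compares to $r$, where $Y$ ranges over the geometric grid and $S_Y$ is the greedy $\tfrac{e-1}{e}$-maximizer of the monotone submodular function $f_{y,Y}$ (Lemma~\ref{lemma:fracsubmodularity}, Theorem~\ref{thm:cardinality}). I would chain the inequalities displayed just before Lemma~\ref{lemma:maxatintegral}: (1) for $a\ge b\ge 0$ and $p\ge 1$, $a^{1/p}-b^{1/p}\le \tfrac{a-b}{a^{1-1/p}}\le p(a^{1/p}-b^{1/p})$, which (taking $a=\fracc_p^p(\cdot)$, $b=S_p^p(\cdot)$ and using the knapsack cap $\fracc_p^p(\bd')\le Y$) relates the scaled $\ell_p^p$-gap to the $\ell_p$-gap up to a factor $p$; (2) Lemma~\ref{lemma:maxatintegral}, so the fractional-knapsack relaxation defining $f_{y,Y}$ loses nothing against integral realizations; (3) the greedy loss $\tfrac{e}{e-1}$; and (4) the geometric grid on $Y$ with ratio $1+\epsilon'$, costing $(1+\epsilon')^{1-1/p}\le 1+\epsilon'$ since the optimal $Y$ lies in $[c_{\min},c_{\max}]$. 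Together these give $\tfrac{e-1}{e(1+\epsilon')p}R(y)\le V(y)\le R(y)$: the right inequality yields (iii) ($R(y)\le r\Rightarrow V(y)\le r$), and the left yields (i) once $\tfrac{e}{e-1}(1+\epsilon')p\le \tfrac{e}{e-1}p+\epsilon$, i.e.\ for $\epsilon':=\tfrac{(e-1)\epsilon}{ep}$. With this $\epsilon'$ the grid has only $O(\tfrac{p}{\epsilon}\log\tfrac{c_{\max}}{c_{\min}})$ points, so the oracle remains polynomial (the maximizing $\bd'$ and the evaluations of $f_{y,Y}$ are fractional-knapsack computations, and the greedy subroutine is polynomial).

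The step I expect to be the main obstacle is (ii): showing the returned inequality $\tfrac{1}{pY^{1-1/p}}[\sum_j d_j'\sum_i c_{ij}^p y_{ij}-\sum_j d_j'\min_{i\in S}c_{ij}^p]\le r$ is valid for all of $F_r$. Violation by the current point is built into the oracle, so what is needed is that violating this linear inequality forces violation of a genuine (nonlinear) constraint of \eqref{eq:mainlp-ellp}. The idea: the cap $\fracc_p^p(\bd')\le Y$ makes $\tfrac{1}{pY^{1-1/p}}$ no larger than $\tfrac{1}{p\,\fracc_p^p(\bd')^{1-1/p}}$, so by bound (1) the left side is dominated by the $\ell_p$-gap $\fracc_p(\bd')-S_p(\bd')$ of a (possibly fractional) realization; Lemma~\ref{lemma:maxatintegral} replaces it by the $\ell_p$-gap of an integral $C'$; and $\opt(C')\le S_p(C')$ together with monotonicity of $\opt$ turns the result into a consequence of $\sum_{j\in C'}\sum_i c_{ij}^p y_{ij}\le(\opt(C')+r)^p$. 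The delicate points are the single fractionally-selected client of $\bd'$ and the dummy weight-$Y$ knapsack item; this is the argument deferred to the appendix referenced in the statement.

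Finally I would assemble the pieces with standard ellipsoid bookkeeping: for each candidate $r$, polynomially many oracle calls suffice, and the ellipsoid either produces a point certified feasible — which by (i) is a fractional solution in $P$ with regret at most $(\tfrac{e}{e-1}p+\epsilon)r$ in every realization — or shrinks past the volume threshold, certifying infeasibility so the search raises $r$. By (iii) the binary search converges to $r^\star$ at most the \lp optimum (which is at most $\mr$, since the minimum-regret solution is integral, hence feasible for the \lp), so the returned fractional solution is a $(\tfrac{e}{e-1}p+\epsilon)$-approximation to the \lp for universal $\ell_p$-clustering. The usual $\delta$-perturbation arguments handle the possibility that $F_r$ (or the relaxed region on which the oracle certifies feasibility) is lower-dimensional, and the bit-complexity of the binary search; these affect $\epsilon$ only by an amount that can be absorbed.
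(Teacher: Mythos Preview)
Your proposal is correct and takes essentially the same approach as the paper: run the ellipsoid method with \textsc{$\ell_p$-SepOracle}, tracking the loss factors $p$ (from the $\ell_p$-versus-$\ell_p^p$ translation), $\tfrac{e}{e-1}$ (greedy submodular maximization), and $1+\epsilon'$ (the geometric grid on $Y$), and verifying the two directions you label (i) and (iii). Your flagging of step (ii)---validity of the returned linear cuts for all of $F_r$---as the delicate point mirrors the paper's own remark that this needs ``additional work'' deferred to the appendix; note, though, that the real subtlety there is not the fractional client or the dummy item but that the knapsack cap $\fracc_p^p(\bd')\le Y$ is guaranteed only at the query point, not at an arbitrary feasible $\tilde y$.
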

\begin{proof}
The algorithm is to use the ellipsoid method with \textproc{$\ell_p$-SepOracle} as the separation oracle. We note that $f_{y, Y}(S)$ can be evaluated in polynomial time by computing optimal fractional knapsacks as discussed in the proof of Lemma~\ref{lemma:fracsubmodularity}. In addition, there are polynomially values of $Y$ that are iterated over, since $\lceil \log_{1+\epsilon'} c_{\max}/c_{\min}  \rceil$ is $O(p \log n/\epsilon')$ times the number of bits needed to describe the largest value of $c_{ij}$. So each call to \textproc{$\ell_p$-SepOracle} takes polynomial time.

By Lemma~\ref{lemma:maxatintegral} and the observation that $\fracc_p^p(\bd) - S_p^p(\bd) \leq p \cdot \fracc_p^{p-1}(\bd) (\fracc_p(\bd) - S_p(\bd))$ if $\fracc_p(\bd) - S_p(\bd) > 0$, we get that \textproc{$\ell_p$-SepOracle} cannot output a violated inequality if the input solution is feasible to Eq.~\eqref{eq:mainlp-ellp}. So we just need to show that if $\fracc_p(C') - S_p(C') \geq (\frac{e}{e-1}p + \epsilon) r$ for some $S, C'$, \textproc{$\ell_p$-SepOracle} outputs a violated inequality, i.e. does not output ``Feasible''. Let $Y'$ be the smallest value of $Y$ iterated over by \textproc{$\ell_p$-SepOracle} that is at least $\fracc_p^p(C')$, and $S'$ the $\frac{e-1}{e}$-maximizer of $f_{y, Y'}$ found by \textproc{$\ell_p$-SepOracle}. We have for the $\bd'$ found on Line~\ref{line:demand} of \textproc{$\ell_p$-SepOracle}:

$$\frac{1}{p(Y')^{1-1/p}}\left[ \fracc_p^p(\bd') - {S'}_p^p(\bd')\right] \geq \frac{e-1}{pe(Y')^{1-1/p}} \max_{S, \bd: \fracc_p^p(\bd) \leq Y'}[\fracc_p^p(\bd') - {S}_p^p(\bd')] \stackrel{(i)}{\geq} $$
$$\frac{e-1}{pe(Y')^{1-1/p}} \max_{S, \bd: \fracc_p^p(\bd) \leq \fracc_p^p(C')}[\fracc_p^p(\bd) - {S}_p^p(\bd)] \geq \frac{e-1}{pe(Y')^{1-1/p}}\left[\fracc_p^p(C') - S_p^p(C')\right] \geq$$
$$\frac{e-1}{pe(1+\epsilon')}\frac{\fracc_p^p(C') - S_p^p(C')}{\fracc_p^{p-1}(C')} = \frac{e-1}{pe(1+\epsilon')} \left[ \fracc_p(C') - S_p(C')\right]  > r.$$

In $(i)$, we use the fact that for a fixed $S$, $\max_{\bd: \fracc_p^p(\bd) \leq Y}[\fracc_p^p(\bd') - {S}_p^p(\bd')]$ is the solution to a fractional knapsack problem with weight $Y$, and that decreasing the weight allowed in a fractional knapsack instance can only reduce the optimum. For the last inequality to hold, we just need to choose $\epsilon' < \epsilon\frac{(e-1)}{pe}$ in \textproc{$\ell_p$-SepOracle} for the desired $\epsilon$. This shows that for $Y'$, \textproc{$\ell_p$-SepOracle} will output a violated inequality as desired.
\end{proof}

\subsection{Universal $\ell_p$-Clustering: Rounding Algorithm}

At a high level, we use the same strategy for rounding the fractional $\ell_p$-clustering solution as we did with $k$-median. Namely, we solve a discounted version of the problem where the discount for each client is equal to the (scaled) cost of the client in the fractional solution. However, if we apply this directly to the $\ell_p$ objective, we run into several problems. In particular, the linear discounts are incompatible with the non-linear objective defined over the clients. A more promising idea is to use these discounts on the $\ell_p^p$ objective, which in fact is defined as a linear combination over the individual client's objectives. But, for this to work, we will first need to relate the regret bound in the $\ell_p^p$ objective to that in the $\ell_p$ objective. This is clearly not true in general, i.e., for all realizations. However, we show that the realization that maximizes the regret of an algorithm $\alg$ against a fixed solution $\sol$ in both objectives is the same under the following ``farness'' condition: {\em for every client, either $\alg$'s connection is smaller than $\sol$'s or it is at least $p$ times as large as $\sol$'s}. Given any solution $\sol$, it is easy to define a \textit{virtual} solution $\widetilde{\sol}$ whose individual connection costs are bounded by $p$ times that in $\sol$, and $\widetilde{\sol}$ satisfies the farness condition. This allows us to relate the regret of $\alg$ against $\widetilde{\sol}$ (and thus against $p$ times $\sol$) in the $\ell_p^p$ objective to its regret in the $\ell_p$ objective.

We first state the technical lemma relating the $\ell_p^p$ and $\ell_p$ objectives under the farness condition. The proof of this lemma appears in Appendix~\ref{sec:proofs}. Informally, this lemma says that if we want to choose a realization maximizing the regret of $\alg$ against $\sol$ in (an approximation of) the $\ell_p$-objective, we should always include a client whose distance to $\alg$ exceeds their distance to $\sol$ by a factor larger than $p$. This contrasts (and limits) the example given at the beginning of this section, where we showed that including clients whose distance to $\alg$ exceeds the distance to $\sol$ by a smaller factor can actually reduce the regret of $\alg$ against $\sol$. In turn, if \textit{all} clients are closer to $\sol$ are closer by a factor of $p$, then the realization that maximizes regret in the $\ell_p$-objective is also the realization that maximizes regret in the $\ell_p^p$-objective. 
%\alert{Is there something technically interesting in the proof? Then, we should give a sketch.}

\begin{lemma}\label{lemma:gaprealization}
Suppose $\alg$ and $\sol$ are two %(possibly virtual) 
solutions to an $\ell_p$-clustering instance, such that there is a subset of clients $C^*$ with the following property: for every client in $C^*$, the connection cost in $\alg$ is greater than $p$ times the connection cost in $\sol$, while for every client not in $C^*$, the connection cost in $\sol$ is at least the connection cost in $\alg$. Then, $C^*$ maximizes the following function:
\begin{equation*}
f(C') :=  
\begin{cases} 
\frac{\alg_p^p(C') - \sol_p^p(C')}{\alg_p^{p-1}(C')} & \alg_p^p(C') > 0 \\
0 & \alg_p^p(C') = 0
\end{cases}
\end{equation*}
\end{lemma}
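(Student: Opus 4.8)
The plan is to show that $C^*$ is a maximizer by a local exchange argument: I want to argue that (i) adding any client not in $C^*$ to a candidate realization cannot increase $f$, and (ii) removing any client of $C^*$ from a realization cannot increase $f$. Combined, these two facts show that moving from an arbitrary realization $C'$ toward $C^*$ — dropping the clients in $C'\setminus C^*$ and adding the clients in $C^*\setminus C'$ — never decreases $f$, so $f(C^*)\geq f(C')$ for all $C'$. To make this rigorous I would parametrize by adding a single client $j$ fractionally, i.e. consider $d_j\in[0,1]$ and the function $d_j\mapsto f(C'\cup\{d_j\text{-fraction of }j\})$ written in the form $\frac{A+a d_j}{(B+b d_j)^{1-1/p}}$, where $A=\alg_p^p(C')$, $B=\alg_p^p(C')$ (so $A=B$ here, but it is cleaner to keep the roles of ``$\alg^p$ in the numerator'' and ``$\alg^p$ in the denominator'' separate), $a=c_{\alg}(j)^p-c_{\sol}(j)^p$ is the client's contribution to $\alg_p^p-\sol_p^p$, and $b=c_{\alg}(j)^p$ its contribution to $\alg_p^p$. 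Differentiating in $d_j$, the sign of the derivative is governed by the sign of $a(B+bd_j) - (1-1/p)\,b(A+ad_j) = a B - (1-1/p) b A + \tfrac1p ab\,d_j$; since in our situation $A=B=\alg_p^p(C')$, this is $\tfrac1p\big(aA - (p-1)bA\big) + \tfrac1p ab d_j$, whose sign depends on whether $a \gtrless (p-1)b$ at $d_j=0$ and stays monotone in $d_j$ because the $d_j$-coefficient $\tfrac1p ab$ has the same sign as $a$ (note $b\geq 0$).

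The key computation is therefore to translate the farness hypothesis into the inequality that decides this sign. For a client $j\in C^*$ we have $c_{\alg}(j) > p\, c_{\sol}(j)$, hence $c_{\alg}(j)^p > p^p c_{\sol}(j)^p$, so $a = c_{\alg}(j)^p - c_{\sol}(j)^p > (p^p-1)c_{\sol}(j)^p$ and $b=c_{\alg}(j)^p$; I need $a\geq (p-1)b$, i.e. $c_{\alg}(j)^p - c_{\sol}(j)^p \geq (p-1)c_{\alg}(j)^p$, i.e. $(2-p)c_{\alg}(j)^p \geq c_{\sol}(j)^p$. For $p\geq 2$ the left side is nonpositive while $c_{\sol}(j)^p\geq 0$, so this looks false — which tells me the right inequality to aim for is not $a\geq(p-1)b$ but rather the condition that the \emph{derivative is nonnegative}, namely $aA-(p-1)bA + ab d_j \geq 0$; dividing by $A>0$ (the degenerate $A=0$ case is handled by the piecewise definition) and using $d_j\in[0,1]$, at the worst point $d_j=0$ this is $a\geq (p-1)b$. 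So I must double-check the direction: I expect the correct statement is that for $j\in C^*$ the derivative is \emph{positive} (so $j$ should be included), and I will need the stronger quantitative bound $c_{\alg}(j)>p\,c_{\sol}(j)$ exactly to beat the factor $(1-1/p)=(p-1)/p$ that appears from differentiating $(\,\cdot\,)^{1-1/p}$. The clean way: the derivative has the sign of $\frac{a}{b} - (1-\tfrac1p)\frac{A+ad_j}{B+bd_j}$; since $A=B$ and the ratio $\frac{A+ad_j}{A+bd_j}$ lies between $1$ and $a/b$ (as $d_j$ ranges over $[0,1]$, monotonically), and since $a/b = 1 - (c_{\sol}(j)/c_{\alg}(j))^p > 1 - p^{-p}$, I get $\frac{a}{b} - (1-\tfrac1p)\cdot\frac{a}{b}\cdot(\text{something}\le1) \ge \frac{a}{b}\cdot\tfrac1p>0$ when the ratio is on the $a/b$ side, and when it is on the $1$ side I use $\frac{a}{b} \ge 1-p^{-p} \ge 1-\tfrac1p = 1-\tfrac1p$ for $p\ge2$ (and a direct check for $p\in[1,2)$), so the derivative is nonnegative throughout. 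Symmetrically, for $j\notin C^*$ we have $c_{\sol}(j)\geq c_{\alg}(j)$, hence $a = c_{\alg}(j)^p-c_{\sol}(j)^p\leq 0$, and then the $d_j$-coefficient $\tfrac1p ab\leq0$ too, so the numerator-over-something expression is nonincreasing in $d_j$; thus such a client should be excluded, i.e. $d_j=0$ is optimal for that coordinate.

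Putting the coordinate-wise analyses together with Lemma~\ref{lemma:maxatintegral} (to restrict attention to integral realizations, or equivalently to note that each coordinate's optimum is attained at an endpoint) finishes the argument: starting from any $C'$, flip each coordinate to its individually-optimal endpoint, which by the monotonicity-in-each-coordinate statements only increases $f$ and lands exactly on the indicator of $C^*$. I expect the main obstacle to be precisely the sign/constant bookkeeping in the derivative computation — making sure the factor $p$ in the farness hypothesis ``$c_{\alg}(j)>p\,c_{\sol}(j)$'' is exactly what is needed to dominate the $(1-1/p)$ coming from the exponent $1-1/p$, uniformly in the current ``background'' realization $C'$ (i.e. uniformly in $A=B=\alg_p^p(C')>0$), and separately handling the boundary case $\alg_p^p(C')=0$ via the piecewise definition of $f$. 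A secondary subtlety is that a client can be fractionally present, so I must confirm the derivative has constant sign (or at most one sign change of the right type) on all of $(0,1)$, not merely at the endpoints; the observation that the $d_j$-coefficient $\tfrac1p ab$ shares the sign of $a$ while the constant term's sign is controlled by $a$ vs.\ $(p-1)b$ gives exactly this ``monotone, or one-way sign change'' behavior, mirroring the structure already used in the proof of Lemma~\ref{lemma:maxatintegral}.
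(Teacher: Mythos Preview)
Your overall strategy --- extend $f$ continuously in one coordinate, differentiate, and do a local exchange argument --- is exactly the paper's approach. The execution, however, has a concrete error that propagates through the rest of the proposal.

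The numerator of $f(C')$ is $\alg_p^p(C') - \sol_p^p(C')$, not $\alg_p^p(C')$. So in your parametrization $\frac{A+ad_j}{(B+bd_j)^{1-1/p}}$ you should have $A=\alg_p^p(C')-\sol_p^p(C')$ and $B=\alg_p^p(C')$, hence $A\le B$ (not $A=B$). This single slip is what sends you to the false target $a\ge (p-1)b$ and the subsequent flailing. (There is also an algebra slip: with $A=B$ the constant term is $A\bigl(a-(1-\tfrac1p)b\bigr)=\tfrac{A}{p}\bigl(pa-(p-1)b\bigr)$, not $\tfrac1p\bigl(aA-(p-1)bA\bigr)$; you dropped a factor of $p$ on the $a$-term.)

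With the correct $A$, the clean argument (this is what the paper does) is: the derivative has the same sign as
\[
(a-s)\;-\;\frac{(p-1)a}{p}\cdot R,\qquad
R:=\frac{\alg_p^p(C')+ad_j-\sol_p^p(C')-sd_j}{\alg_p^p(C')+ad_j}\in[0,1],
\]
where now $a:=c_{\alg}(j)^p$ and $s:=c_{\sol}(j)^p$. For $j\in C^*$ the hypothesis gives $a>p^p s\ge ps$, so bounding $R\le 1$ yields derivative sign $\ge \tfrac{a}{p}-s>0$; for $j\notin C^*$ we have $a\le s$, and when $R\ge 0$ (i.e.\ when $\tilde f>0$) both terms are $\le 0$. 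Note that the paper's proof only needs $a>ps$, i.e.\ $c_{\alg}(j)>p^{1/p}c_{\sol}(j)$, so the lemma's hypothesis is stronger than what the derivative analysis actually requires --- you should not expect the factor $p$ to be tight. Finally, your treatment of the $j\notin C^*$ direction is incomplete: you only note the $d_j$-coefficient is $\le 0$, but you also need the constant term $\le 0$, which in turn needs $f(C')>0$; the paper handles the $f(C')\le 0$ case separately and trivially.
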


Intuitively, this lemma connects the $\ell_p$ and $\ell_p^p$ objectives as this subset of clients $C^*$ will also be the set that maximizes the $\ell_p^p$ regret of $\alg$ vs $\sol$, and $f(C')$ is (within a factor of $p$) equal to the $\ell_p$ regret.
\eat{
\begin{proof}
Fix any subset of clients $C'$ which does not include $j$. Let $\alg_p^q(C')$ be $\alg$'s $\ell_p^q$-objective cost on this subset, $\sol_p^q(C')$ be $\sol$'s $\ell_p^q$-objective on this subset, $a$ be $\alg$'s connection cost for $j$ to the $p$th power, and $s$ be $\sol$'s connection cost for $j$ to the $p$th power. To do this, we analyze a continuous extension of $f$, evaluated at $C'$ plus $j$ in fractional amount $x$:

$$\tilde{f}_j(C', x) = \frac{\alg_p^p(C') + ax - \sol_p^p(C') - sx}{(\alg_p^p(C') + ax)^{(p-1)/p}}.$$

When $x = 0$, this is $f(C')$ (if $f(C')$ is positive) and when $x = 1$, this is $f(C' \cup \{j\})$ (if $f(C' \cup \{j\})$ is positive). Its derivative with respect to $x$ is:

$$\frac{d}{dx}\tilde{f}_j(C', x) = \frac{(a-s)(\alg_p^p(C')+ax)^{(p-1)/p} - \frac{a(p-1)}{p} \cdot \frac{\alg_p^p(C')+ax-\sol_p^p(C')-sx}{(\alg_p^p(C')+ax)^{1/p}}}{(\alg_p^p(C')+ax)^{2(p-1)/p}}.$$

Which has the same sign as:
$$(a-s) - \frac{a(p-1)}{p} \cdot \frac{\alg_p^p(C')+ax-\sol_p^p(C')-sx}{\alg_p^p(C')+ax}.$$

If $\alg_p^p(C')+ax > \sol_p^p(C')+sx$, i.e. $\tilde{f}_j(C',x)$ is positive, then $\frac{d}{dx}\tilde{f}_j(C', x)$ is negative if $a \leq s$. Consider any $C'$ including a client $j$ not in $C^*$. Suppose $f(C') > 0$. Then $\tilde{f}_j(C', x)$ has a negative derivative on $[0, 1]$ (since $\tilde{f}_j(C', x)$ starts out positive and is increasing as $x$ goes from 1 to 0, i.e. it stays positive), so $f(C' \setminus \{j\}) = \tilde{f}_j(C', 0) > \tilde{f}_j(C', 1) = f(C')$, and $C'$ cannot be a maximizer of $f$. If otherwise $f(C') = 0$, then $C'$ clearly cannot be a maximizer of $f$ unless $C^*$ is as well.

Similarly, observe that:

$$(a-s) - \frac{a(p-1)}{p} \cdot \frac{\alg_p^p(C')+ax-\sol_p^p(C')-sx}{\alg_p^p(C')+ax} > (a-s) - \frac{a(p-1)}{p} = \frac{a}{p} - s.$$

So $\frac{d}{dx}\tilde{f}_j(C', x)$ is positive if $a > ps$, which holds for all $j \in C^*$. Consider any $C'$ not including a client $j$ in $C^*$. Suppose $f(C') > 0$. Then $\tilde{f}_j(C', x)$ has a positive derivative on $[0, 1]$, so $f(C') = \tilde{f}_j(C', 0) < \tilde{f}_j(C', 1) = f(C' \cup \{j\})$, and $C'$ cannot be a maximizer of $f$. If otherwise $f(C') = 0$, then $C'$ clearly cannot be a maximizer of $f$ unless $C^*$ is as well. Since we have shown that every $C' \neq C^*$ cannot be a maximizer of $f$ unless $C^*$ is also a maximizer of $f$, we conclude that $C^*$ maximizes $f$.
\end{proof}
}%end eat
We use this lemma along with the $\ell_p^p$-clustering with discounts approximation in Lemma~\ref{lemma:kmbicriteria-lpp} to design the rounding algorithm for universal $\ell_p$-clustering.
As in the rounding algorithm for universal $k$-median, let $\sol$ denote a (virtual) solution whose connection costs are 3 times that of the fractional solution $\fracc$ for all clients. The rounding algorithm solves an $\ell_p^p$-clustering with discounts instance, where the discounts are $2$ times $\sol$'s connection costs. (Recall that in $k$-median, the discount was equal to $\sol$'s connection cost. Now, we need the additional factor of $2$ for technical reasons.) Let $\alg$ be the solution output by the algorithm of Lemma~\ref{lemma:kmbicriteria-lpp} for this problem. We prove the following bound for $\alg$ (the full proof of this lemma is given in Appendix~\ref{sec:proofs}; here, we present a sketch with the main ideas):

\begin{lemma}\label{lemma:lp-rounding}
There exists an algorithm which given any $(\alpha, \beta)$-approximate fractional solution $\fracc$ for $\ell_p$-clustering, outputs a  $(54p \alpha , 54p \beta + 18p^{1/p})$-approximate integral solution.
\end{lemma}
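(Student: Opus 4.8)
The plan is to follow the same two-level reduction used for $k$-median, but with the extra ingredient of the virtual solution and the $\ell_p$-to-$\ell_p^p$ translation governed by Lemma~\ref{lemma:gaprealization}. Let $\fracc$ be the given $(\alpha,\beta)$-approximate fractional solution, and let $f_j := \left(\sum_i c_{ij}^p y_{ij}\right)^{1/p}$ be its fractional connection cost for client $j$. Define the (virtual) solution $\sol$ to have connection cost $3f_j$ for every client; note $\sol$ need not be a real set of $k$ centers, it is only a yardstick. Build an $\ell_p^p$-clustering with discounts instance on the same metric where client $j$ has discount $r_j := 2\cdot 3 f_j = 6 f_j$, and run the $(9^p,\tfrac23 9^p)$-approximation of Lemma~\ref{lemma:kmbicriteria-lpp} to get the integral solution $\alg$ with $k$ centers. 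This is the algorithm; the rest is the analysis.

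The first analytic step is to bound the optimum of the discounts instance by (a constant times) $\mr$. Since $\fracc$ is $(\alpha,\beta)$-approximate and the $\ell_p$ integrality gap is $3$ (from \cite{ArcherRS03}, using the argument already invoked for $\ell_p^p$), for every realization $C'$ we have $\tfrac13\opt(C')\le \fracc_p(C')\le \alpha\,\opt(C')+\beta\,\mr$; in particular $\mrs_p(C') - 3\fracc_p(C')\le \mrs_p(C') - \opt(C')\le \mr$ for all $C'$, so $\mrs$, viewed as a candidate solution to the $\ell_p^p$-discounts instance with discounts scaled appropriately, witnesses that its optimum is $O(\mr^p)$ — here I must be careful that the discounts instance objective is $\ell_p^p$ while regret is measured in $\ell_p$, so the translation costs a factor depending on $p$ and is exactly where Lemma~\ref{lemma:gaprealization} enters. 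The clean way: apply Lemma~\ref{lemma:gaprealization} with $\alg := \mrs$ (or $\alg$) and $\sol := \widetilde{\mrs}$ the virtual solution obtained by capping each connection cost at $p$ times $\sol$'s, so the farness condition holds by construction, giving that the regret-maximizing realization is common to both objectives and that $f(C')$ there is within a factor $p$ of the $\ell_p$-regret. This converts the $\ell_p^p$-discounts guarantee $\sum_j (c_j^p - 9^p r_j^p)^+ \le \tfrac23 9^p\cdot\opt_{\text{disc}}$ into a per-realization bound $\alg_p(C') \le O(p)\cdot\fracc_p(C') + O(p^{1/p})\cdot\mr$ after taking $p$-th roots and using subadditivity of $x\mapsto x^{1/p}$, i.e. $(\sum a_j + \sum b_j)^{1/p}\le (\sum a_j)^{1/p} + (\sum b_j)^{1/p}$ is false in general but $\|u+v\|_p\le\|u\|_p+\|v\|_p$ is the triangle inequality we actually use, splitting the connection cost vector into its $\le 27 f_j$ part (giving the $27$, then $\times 2$ from the virtual-solution factor gives $54$) and its excess part (bounded by the $\ell_p^p$-discounts objective, whose $p$-th root is $O(p^{1/p}\mr)$, then $\times 9^{1/p}\le$ constant, with the $2\cdot 9^p$ discount scaling and the factor-$p$ from Lemma~\ref{lemma:gaprealization} producing $54p\beta + 18p^{1/p}$ after plugging $\fracc_p(C')\le\alpha\opt(C')+\beta\mr$).

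Concretely the chain is: (i) $\|(c_j)_{j\in C'}\|_p \le \|(27 f_j)_{j\in C'}\|_p + \|((c_j - 27 f_j)^+)_{j\in C'}\|_p$ by the triangle inequality in $\ell_p$; (ii) the first term is $27\,\fracc_p(C')$, and accounting for the virtual solution's factor $2$ this becomes $54\,\fracc_p(C')$; (iii) the second term is at most $\left(\sum_{j\in C}(c_j^p - 27^p f_j^p)^+\right)^{1/p}$ up to the usual comparison between $(c_j-27f_j)^+$ and $(c_j^p - 27^p f_j^p)^{+}$ — note $(c_j - \gamma f_j)^+ \le (c_j^p - \gamma^p f_j^p)^{+\,1/p}$ requires $\gamma\ge 1$, which holds — and Lemma~\ref{lemma:kmbicriteria-lpp} with discounts $6f_j$ bounds $\sum_j (c_j^p - 9^p\cdot 6^p f_j^p)^+$, so I need $27 \ge 9\cdot 3$ and track that the discount used inside the lemma is $6f_j = 2\cdot 3f_j$ and the lemma's $\gamma=9$ gives an effective cap of $54 f_j$, consistent with the $54p$ in the statement once the $p$ from Lemma~\ref{lemma:gaprealization}'s $\ell_p$-vs-$\ell_p^p$ conversion is carried through; (iv) finally substitute $\fracc_p(C')\le\alpha\opt(C')+\beta\mr$ and collect constants to reach $\alg_p(C')\le 54p\alpha\,\opt(C') + (54p\beta + 18p^{1/p})\mr$.

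\textbf{Main obstacle.} The delicate point, and the reason this is harder than $k$-median, is the passage between the $\ell_p^p$ objective (in which the discounts algorithm and the integrality gap are stated) and the $\ell_p$ objective (in which regret is measured): a realization that is worst-case for one need not be worst-case for the other, and $\left(\sum (a_j - b_j)^+\right)^{1/p}$ is \emph{not} controlled by $\left(\sum a_j\right)^{1/p} - \left(\sum b_j\right)^{1/p}$. The virtual solution $\widetilde{\sol}$ — flooring each of $\alg$'s costs so that every client is either closer to $\alg$ than to $\sol$ or at least $p$ times farther — is precisely the device that forces the two worst-case realizations to coincide (via Lemma~\ref{lemma:gaprealization}) at the price of one factor of $p$ in $\alpha$ and a $p^{1/p}$ in the additive term, and carefully bookkeeping the three multiplicative sources ($27$ from the gap+discount cap, $2$ from the virtual solution, $p$ from the $\ell_p$/$\ell_p^p$ exchange, with $9^{1/p}\le 3$ and $\tfrac23$ folded into the additive constant) to land exactly on $(54p\alpha,\,54p\beta + 18p^{1/p})$ is the step where I expect to spend the most care; everything else mirrors the $k$-median argument from Section~\ref{sec:km-rounding}.
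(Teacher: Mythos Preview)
Your algorithm matches the paper's exactly (discounts $r_j=6f_j$, run Lemma~\ref{lemma:kmbicriteria-lpp}), but your analysis takes a \emph{different} route from the paper, and your write-up conflates the two.

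Your ``concrete chain'' (i)--(iv) is the triangle-inequality argument: split $c_j\le \gamma f_j+(c_j-\gamma f_j)^+$, bound the first term by $\gamma\,\fracc_p(C')$, and bound the $\ell_p$-norm of the excess via $((c_j-\gamma f_j)^+)^p\le (c_j^p-\gamma^p f_j^p)^+$ and the discounts guarantee. This is a valid and in fact \emph{simpler} approach than the paper's --- but you must split at $\gamma=54$, not $27$: with discounts $r_j=6f_j$, Lemma~\ref{lemma:kmbicriteria-lpp} controls $\sum_j(c_j^p-(9\cdot 6)^p f_j^p)^+$, so the cap is $54f_j$. Your ``$27\ge 9\cdot 3$'' betrays a slip between $r_j=6f_j$ and $\sol_j=3f_j$. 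Once you split at $54$, the rest closes cleanly: $\opt_{\text{disc}}\le \sum_j(m_j^p-6^p f_j^p)^+=\mrs_p^p(C_2)-6^p\fracc_p^p(C_2)$ for $C_2=\{j:m_j>6f_j\}$; then $6\fracc_p(C_2)\ge 2\opt_p(C_2)$ by the integrality gap, and $\mrs_p(C_2)\le\opt_p(C_2)+\mr$, forcing $\mrs_p(C_2)\le 2\mr$, hence $\opt_{\text{disc}}\le(2\mr)^p$. The excess term is then at most $(\tfrac23)^{1/p}\cdot 18\,\mr\le 18\,\mr$, giving $\alg_p(C')\le 54\,\fracc_p(C')+18\,\mr$. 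Notice this yields $(54\alpha,\,54\beta+18)$ with \emph{no} factor of $p$ --- strictly better than the lemma's stated $(54p\alpha,\,54p\beta+18p^{1/p})$. So when you write that the $p$ ``from Lemma~\ref{lemma:gaprealization}'s conversion is carried through,'' you are inserting a factor your own argument does not produce.

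By contrast, the paper does \emph{not} use the triangle-inequality split. It compares $\alg$ directly to the virtual solution $18\widetilde{\sol}$ (where $\widetilde{\sol}$ multiplies $\sol$'s costs by $p$ on the clients where $\alg$ is between $18$ and $18p$ times $\sol$), invokes Lemma~\ref{lemma:gaprealization} to align the $\ell_p$- and $\ell_p^p$-maximizing realizations, and then runs a two-case analysis on whether $\alg_p(C_1)\ge 9p^{1/p}\mrs_p(C_2)$. The factor $p$ in the paper's bound comes from $\widetilde{\sol}$; in your chain there is no $\widetilde{\sol}$ and no need for Lemma~\ref{lemma:gaprealization}, so you should drop those references entirely --- they only obscure an argument that is cleaner without them.
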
\begin{proof}
Let $\sol$ denote a (virtual) solution whose connection costs are 3 times that of the fractional solution $\fracc$ for all clients. The rounding algorithm solves an $\ell_p^p$-clustering with discounts instance, where the discounts are $2$ times $\sol$'s connection costs. Let $\alg$ be the solution output by the algorithm of Lemma~\ref{lemma:kmbicriteria-lpp} for this problem. We also consider an additional virtual solution $\widetilde{\sol}$, whose connection costs are defined as follows: For clients $j$ such that $\alg$'s connection cost is greater than $18$ times $\sol$'s but less than $18p$ times $\sol$'s, we multiply  $\sol$'s connection costs by $p$ to obtain connection costs in $\widetilde{\sol}$. For all other clients, the connection cost in $\widetilde{\sol}$ is the same as that in $\sol$. Now, $\alg$ and $18 \cdot \widetilde{\sol}$ satisfy the condition in Lemma~\ref{lemma:gaprealization} and $18 \cdot \widetilde{\sol}$ is a $(54p\alpha, 54p\beta)$-approximation.

Our goal in the rest of the proof is to bound the regret of $\alg$ against (a constant times) $\widetilde{\sol}$ by (a constant times) the minimum regret $\mr$. 
Let us denote this regret: 
$$\regalgsol := \max_{C'\subseteq C}\left[\alg_p(C') - 18 \cdot \widetilde{\sol}_p(C')\right].$$ 
Note that if $\regalgsol = 0$ (it can't be negative), then for all realizations $C'$, $\alg_p(C') \leq 18 \cdot \widetilde{\sol}_p(C')$. In that case, the lemma follows immediately.
So, we  assume that $\regalgsol > 0$.
%\begin{equation}
%\label{eq:pos}
%    \regalgsol = \max_{C'\subseteq C}\left[\alg_p(C') - 18 \cdot \widetilde{\sol}_p(C')\right] > 0.
%\end{equation}

Let $C_1 = \argmax_{C'\subseteq C}\left[\alg_p(C') - 18 \cdot \widetilde{\sol}_p(C')\right]$, i.e., the realization defining $\regalgsol$ that maximizes the regret for the $\ell_p$ objective. We need to relate $\regalgsol$ to the regret in the $\ell_p^p$-objective for us to use the approximation guarantees of $\ell_p^p$-clustering with discounts from Lemma~\ref{lemma:kmbicriteria-lpp}. Lemma~\ref{lemma:gaprealization} gives us this relation, since it tells us that $C_1$ is exactly the set of clients for which $\alg$'s closest cluster center is at a distance of more than $18$ times that of $\widetilde{\sol}$'s closest cluster center. But, this means that $C_1$ also maximizes the regret for the $\ell_p^p$ objective, i.e., $C_1 = \argmax_{C'}\left[\alg^p_p(C') - 18^p \cdot \widetilde{\sol}^p_p(C')\right]$. Then, we have:
\begin{eqnarray*}
\regalgsol
%\max_{C'\subseteq C}\left[\alg_p(C') - 18 \cdot \widetilde{\sol}_p(C') \right] 
& = & \frac{\max_{C'\subseteq C}\left[\alg_p^p(C') - 18^p\cdot {\widetilde{\sol}}_p^p(C')\right]}{\sum_{j=0}^{p-1} \alg_p^{j}(C_1) \cdot \left(18\cdot  {\widetilde{\sol}}_p(C_1)\right)^{p-1-j}} 
\quad \leq \frac{\max_{C'\subseteq C}\left[\alg_p^p(C') - 18^p\cdot {\widetilde{\sol}}_p^p(C')\right]}{\alg_p^{p-1}(C_1)} \\
& \leq & \frac{\max_{C'\subseteq C}\left[\alg_p^p(C') - 18^p\cdot {\sol}_p^p(C')\right]}{\alg_p^{p-1}(C_1)},
\text{ since connection costs in $\widetilde{\sol}$ are at least those in $\sol$}.
\end{eqnarray*}
Note that the numerator in this last expression is exactly the value of the objective for the $\ell_p^p$-clustering with discounts 
problem from Lemma~\ref{lemma:kmbicriteria-lpp}. Using this lemma, we can now bound the numerator by the optimum for this
problem, which in turn is bounded by the objective produced 
by the minimum regret solution $\mrs$ for the $\ell_p^p$-clustering with discounts instance:
\begin{equation}
\label{eq:kc}
\regalgsol
%\frac{\max_{C'\subseteq C}\left[\alg_p^p(C') - 18^p\cdot {\widetilde{\sol}}_p^p(C')\right]}{\alg_p^{p-1}(C_1)} 
\leq \frac{\max_{C'\subseteq C}\left[\alg_p^p(C') - 18^p\cdot {\sol}_p^p(C')\right]}{\alg_p^{p-1}(C_1)}
\leq \frac{2}{3} \cdot 9^p  \cdot \frac{\max_{C'\subseteq C}\left[\mrs_p^p(C') - 2^p \cdot \sol_p^p(C')\right]}{\alg_p^{p-1}(C_1)}.
\end{equation}
First, we bound the numerator in the above expression.
Let $C_2 := \argmax_{C'\subseteq C}[\mrs_p^p(C') -$ $2^p \cdot \sol_p^p(C')]$
be the realization that maximizes this term. 
We now relate this term to $\mr$ (the first step is by factorization
and the second step holds because $2\cdot \sol = 6\cdot \fracc$ exceeds 
the optimal integer solution by to the upper bound of $3$ on the integrality gap~\cite{ArcherRS03}):
\eat{
\begin{eqnarray*}
& & \frac{2}{3} \cdot 9^p  \cdot \frac{\max_{C'}\left[\mrs_p^p(C') - 2^p \cdot \sol_p^p(C')\right]}{\alg_p^{p-1}(C_1)} \\
& = & \frac{2}{3} \cdot 9^p \cdot \frac{\mrs_p^p(C_2) - 2^p \cdot \sol_p^p(C_2)}{\sum_{j=0}^{p-1} \mrs_p^{j}(C_2) 2^{p-1-j} \cdot {\sol}_p^{p-1-j}(C_2)} \cdot \frac{\sum_{j=0}^{p-1} \mrs_p^{j}(C_2)2^{p-1-j}\cdot  {\sol}_p^{p-1-j}(C_2)}{\alg_p^{p-1}(C_1)} \\
& = & \frac{2}{3} \cdot 9^p \cdot\left(\mrs_p(C_2) - 2 \cdot \sol_p(C_2)\right) \cdot \frac{\sum_{j=0}^{p-1} \mrs_p^{j}(C_2) 2^{p-1-j}{\sol}_p^{p-1-j}(C_2)}{\alg_p^{p-1}(C_1)} \\
%& \leq & \frac{2}{3} \cdot 9^p \left(\mrs_p(C_2) - \sol_p(C_2)\right) \cdot \frac{\sum_{j=0}^{p-1} \mrs_p^{j}(C_2) 2^{p-1-j}{\sol}_p^{p-1-j}(C_2)}{\alg_p^{p-1}(C_1)} \\
& \leq & \frac{2}{3} \cdot 9^p \cdot \mr \cdot \frac{\sum_{j=0}^{p-1} \mrs_p^{j}(C_2) 2^{p-1-j}{\sol}_p^{p-1-j}(C_2)}{\alg_p^{p-1}(C_1)}.
\end{eqnarray*}
}
\begin{eqnarray*}
\max_{C'}\left[\mrs_p^p(C') - 2^p \cdot \sol_p^p(C')\right] 
& = & \left(\mrs_p(C_2) - 2 \cdot \sol_p(C_2)\right) \cdot \sum_{j=0}^{p-1} \mrs_p^{j}(C_2) \cdot (2 \cdot {\sol}_p(C_2))^{p-1-j}\\ %\text{ (by factorization)} \\
& \leq & \mr \cdot \sum_{j=0}^{p-1} \mrs_p^{j}(C_2) \cdot (2 \cdot {\sol}_p(C_2))^{p-1-j}.
\end{eqnarray*}
%
%
%In the last step, we use that $\sol_p(C') \geq \opt_p(C')$ since $\sol$'s connection costs are a fractional solution's connection costs times 3, and the integrality gap of $\ell_p$-clustering is at most 3 \cite{ArcherRS03}. In turn, $\mrs$' regret against (two times) $\sol$ is at most $\mr$. 
%
Using the above bound in Eq.~\eqref{eq:kc}, we get:
\begin{equation}
    \label{eq:kc2}
    \regalgsol \leq \frac{2}{3} \cdot 9^p \cdot \mr \cdot \frac{\sum_{j=0}^{p-1} \mrs_p^{j}(C_2) \cdot (2 \cdot {\sol}_p(C_2))^{p-1-j}}{\alg_p^{p-1}(C_1)}.    
\end{equation}
In the rest of proof, we obtain a bound on the last term in Eq.~\eqref{eq:kc2}. 
We consider two cases. If $\alg_p(C_1) \geq 9p^{\frac{1}{p}} \cdot \mrs_p(C_2)$ 
(intuitively, the denominator is large compared to the numerator), 
then:

$$\frac{\sum_{j=0}^{p-1} \mrs_p^{j}(C_2) 2^{p-1-j}{\sol}_p^{p-1-j}(C_2)}{\alg_p^{p-1}(C_1)} \leq p \cdot \frac{\mrs_p^{p-1}(C_2)}{\alg_p^{p-1}(C_1)} \leq p \cdot (9p^{\frac{1}{p}})^{-p+1} = 9^{-p+1}p^{1/p},$$

The first step uses the fact that $\mrs_p(C_2) \geq  2 \sol_p(C_2)$, so the largest term in the sum is $\mrs_p^{p-1}(C_2)$. Combined with Eq.~\eqref{eq:kc2} this $\regalgsol \leq 6p^{1/p} \cdot \mr$, giving the lemma statement. 

If $\alg_p(C_1) < 9p^{\frac{1}{p}} \cdot \mrs_p(C_2)$, then we cannot hope to meaningfully bound Eq~\eqref{eq:kc2}. In this case, however, $\regalgsol$ is also bounded by $\mrs_p(C_2)$, which we will eventually bound by $\mr$. More formally, by our assumption that $\regalgsol > 0$, $\mrs_p^p(C_2) - 2^p \cdot \sol_p^p(C_2) > 0$ and we have $2 \sol_p(C_2) \leq \mrs_p(C_2) \leq \sol_p(C_2) + \mr$. The first inequality is by our assumption that $\mrs_p^p(C_2) - 2^p \cdot \sol_p^p(C_2) > 0$, the second inequality is by definition of $\mrs, \mr$ and the fact that $\sol(C_2)$ upper bounds $\opt(C_2)$. In turn, $\sol_p(C_2) \leq \mr$ which gives $\mrs_p(C_2) \leq 2 \mr$ and thus $\regalgsol  \leq \alg_p(C_1) \leq 18p^{\frac{1}{p}} \cdot \mr$. We note that for all $p \geq 1$, $p^{1/p} \leq e^{1/e} \leq 1.5$. 
\end{proof}

We note that the final step requires using discounts equal to twice $\sol$'s connection costs instead of just $\sol$'s connection costs. If we did the latter, we would have started with the inequality $\sol_p(C_2) \leq \mrs_p(C_2) \leq \sol_p(C_2) + \mr$ instead, which does not give us any useful bound on $\sol(C_2)$ or $\mrs(C_2)$ in terms of just $\mr$. We also note that we chose not to optimize the constants in the final result of Lemma~\ref{lemma:lp-rounding} in favor of simplifying the presentation.

%To simplify the presentation of the full proof, we do not attempt to optimize the constants in the above lemma. 
Theorem~\ref{thm:lpalg} now follows by using the values of $(\alpha, \beta)$ from Lemma~\ref{lemma:lp-fractional} (and a sufficiently small choice of the error parameter $\epsilon$) in the statement of Lemma~\ref{lemma:lp-rounding} above.
\section{Universal $k$-Center}\label{section:kcenteralg}

In the previous section, we gave universal algorithms for general $\ell_p$-clustering problems. 
Recall that the $k$-center objective, defined as the maximum distance of a client from its 
closest cluster center, can also be interpreted as the $\ell_{\infty}$-objective in the 
$\ell_p$-clustering framework. Moreover, it is well known that for any $n$-dimensional 
vector, its $\ell_{\log n}$ and $\ell_\infty$ norms differ only a constant factor
(see Fact~\ref{fact:plogn} in Appendix~\ref{sec:proofs}).
Therefore, choosing $p = \log n$ in Theorem~\ref{thm:lpp} gives 
poly-logarithmic approximation bounds for the universal $k$-center problem. 
In this section, we give direct techniques that improve these bounds 
to constants:

\begin{theorem}\label{thm:kcenteralg}
There exists a $(3, 3)$-approximate algorithm for the universal $k$-center problem.
\end{theorem}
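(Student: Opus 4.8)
The plan is to exploit the special structure of the $\max$ objective: for any fixed solution $\sol$, the worst-case realization can be taken to be a single client. Indeed, if $C'$ is any realization and $j^\star = \argmax_{j \in C'} \cost(j, \sol)$, then $\sol(C') = \cost(j^\star, \sol) = \sol(\{j^\star\})$, while $\opt(\{j^\star\}) \le \opt(C')$ since removing clients cannot increase the $k$-center cost. Writing $\rho_j := \opt(\{j\}) = \min_{i \in F} c_{ij}$, this shows the regret of any $\sol$ equals $\max_{j \in C}[\cost(j,\sol) - \rho_j]$, so $\mr = \min_{S \subseteq F, |S| = k}\max_j [d(j, S) - \rho_j]$ where $d(j,S) := \min_{i \in S} c_{ij}$. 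The same reduction shows that to get a $(3,3)$-approximation it suffices to output $k$ centers $S$ with $d(j, S) \le 3\rho_j + 3\mr$ for every client $j$: then for any $C'$, $\sol(C') = d(j^\star, S) \le 3\rho_{j^\star} + 3\mr \le 3\opt(C') + 3\mr$, using $\rho_{j^\star} \le \opt(C')$.

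So the task is purely combinatorial. I would not compute $\mr$ exactly but guess it: since $\mr$ has the form $\max_j[c_{ij} - c_{i'j}]$ for facilities $i, i'$, there are polynomially many candidate values, and the algorithm runs a greedy routine for each candidate $r$ and outputs the solution from the smallest successful $r$. For a fixed $r$: compute all $\rho_j$, associate to each client the ball $B_j := \{i \in F : c_{ij} \le \rho_j + r\}$, and sweep clients in nondecreasing order of $\rho_j$, maintaining a set of representatives. A client $j$ becomes a representative iff $B_j$ is disjoint from $B_{j'}$ for every representative $j'$ chosen so far. If more than $k$ representatives are created, declare this $r$ a failure; otherwise pick, for each representative $j'$, an arbitrary center $i^\star \in B_{j'}$, padding with arbitrary facilities to reach $k$.

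Correctness rests on two elementary facts. First, if $r \ge \mr$ the sweep never fails: the minimum-regret solution $S^\star$ satisfies $d(j, S^\star) \le \rho_j + \mr \le \rho_j + r$ for all $j$, so $S^\star$ hits every $B_j$; but the first $k{+}1$ representative balls are pairwise disjoint by construction, forcing $|S^\star| \ge k{+}1$, a contradiction. Hence the smallest successful candidate is $\le \mr$. Second, the chosen centers serve everyone well: a non-representative $j$ has an earlier representative $j'$ (so $\rho_{j'} \le \rho_j$) with some $i \in B_{j'} \cap B_j$, and then two triangle inequalities give $c_{i^\star j} \le c_{i^\star j'} + c_{j' i} + c_{i j} \le (\rho_{j'}+r)+(\rho_{j'}+r)+(\rho_j+r) \le 3\rho_j + 3r$; representatives $j'$ themselves have $d(j',S) \le \rho_{j'} + r$. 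Together with $r \le \mr$ this gives $d(j,S) \le 3\rho_j + 3\mr$ for all $j$, hence the theorem. All steps run in polynomial time.

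The only place any care is needed is calibrating the ball radius and the representative rule so that the triangle-inequality chain closes with exactly the constant $3$ — in particular, processing clients in order of $\rho_j$ is what makes the bound $2\rho_{j'} + \rho_j \le 3\rho_j$ valid; the rest is bookkeeping. An equivalent packaging is to present the sweep as a constant-factor approximation for $k$-center with discounts (discounts $\rho_j$) and then invoke the single-client reduction, but the self-contained version above is cleanest.
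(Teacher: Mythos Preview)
Your proof is correct and follows essentially the same approach as the paper: reduce to single-client realizations, guess the value of $r$ among polynomially many candidates, run a greedy sweep over clients ordered by $\rho_j$, and use a three-hop triangle inequality together with $\rho_{j'}\le\rho_j$ to close at the constant~$3$. The only cosmetic difference is your greedy criterion (make $j$ a representative when $B_j$ is disjoint from all previous representative balls) versus the paper's (open a center for $j$ when no currently opened center lies within $3r_j$ of $j$); both criteria yield at most $k$ representatives when $r\ge\mr$ via the same injection into the centers of $\mrs$, and both give the $3\rho_j+3r$ service guarantee.
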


First, note that for every client $j$, its distance to the closest cluster center 
in the minimum regret solution $\mrs$ is at most $\mr_j := \min_{i \in F} c_{ij} + \mr$; 
otherwise, in the realization with only client $j$, $\mrs$ would have regret $> \mr$. 
We first design an algorithm $\alg$ that $3$-approximates these distances $\mr_j$, i.e., 
for every client $j$, its distance to the closest cluster center in $\alg$ is at 
most $3 \mr_j$. Indeed, this algorithm satisfies a more general property: given 
{\em any} value $r$, it produces a set of cluster centers such that every 
client $j$ is at a distance $\leq 3 r_j$ from its closest cluster center, 
where $r_j := \min_{i \in F} c_{ij} + r$. Moreover, if $r\geq \mr$, then the 
number of cluster centers selected by $\alg$ is at most $k$ (for smaller 
values of $r$, $\alg$ might select more than $k$ cluster centers).

Our algorithm $\alg$ is a natural greedy algorithm. We order clients $j$ 
in increasing order of $r_j$, and if a client $j$ does not have a cluster center 
within distance $3 r_j$ in the current solution, then we add its closest cluster 
center in $F$ to the solution. 
\begin{lemma}
Given a value $r$, the greedy algorithm $\alg$ selects cluster centers
that satisfy the following properties:
\begin{itemize}
\label{lemma:kcenter}
    \item every client $j$ is within a distance of 
    $3r_j = 3(\min_{i \in F} c_{ij} + r)$ from their closest cluster center.
    \item If $r\geq \mr$, then $\alg$ does not select more than $k$ cluster centers,
    i.e., the solution produced by $\alg$ is feasible for the $k$-center problem.
\end{itemize}
\end{lemma}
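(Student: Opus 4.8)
The plan is to prove the two bullets separately; the first is immediate from the construction of the greedy algorithm, while the second --- feasibility (at most $k$ centers) when $r \geq \mr$ --- is the crux. For the first bullet, observe that when $\alg$ processes a client $j$, either the current solution already contains a cluster center within distance $3r_j$ of $j$, or $\alg$ adds $j$'s closest center $i_j \in F$, which satisfies $c_{i_j j} = \min_{i \in F} c_{ij} = r_j - r \leq r_j \leq 3r_j$. In either case, immediately after $j$ is processed there is a selected center within $3r_j$ of $j$, and since centers are only ever added (never removed), this property persists to the end. Hence every client $j$ ends up within $3r_j$ of its closest selected center.

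For the second bullet, I would call a client \emph{opened} if $\alg$ adds a new center while processing it, and let $t$ be the number of opened clients. Since $\alg$ adds at most one center per opened client (and, in fact, that center is genuinely new: if $j$'s closest $F$-center were already present it would lie within $\min_i c_{ij}\leq r_j\leq 3r_j$ of $j$, so $\alg$ would not open $j$), it suffices to show $t \leq k$. The idea is to inject the opened clients into the $k$ centers of the minimum regret solution $\mrs$. Recall (as noted just before the lemma) that every client $j$ has some center of $\mrs$ within distance $\mr_j = \min_{i \in F} c_{ij} + \mr \leq r_j$, using $r \geq \mr$; for an opened client $j$, let $\phi(j) \in \mrs$ denote such a center, so $c_{\phi(j)\,j} \leq r_j$. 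The key step is showing $\phi$ is injective. Suppose $j_a, j_b$ are opened clients with $j_a$ processed no later than $j_b$ (so $r_{j_a} \leq r_{j_b}$ by the ordering) and $\phi(j_a) = \phi(j_b) = i^*$. By the triangle inequality, $c_{j_a j_b} \leq c_{i^* j_a} + c_{i^* j_b} \leq r_{j_a} + r_{j_b}$. When $j_a$ was opened, $\alg$ added its closest $F$-center $i_a$, with $c_{i_a j_a} \leq r_{j_a}$; hence $c_{i_a j_b} \leq c_{i_a j_a} + c_{j_a j_b} \leq 2r_{j_a} + r_{j_b} \leq 3r_{j_b}$. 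But $i_a$ was already in the solution when $j_b$ was processed, so $j_b$ had a center within distance $3r_{j_b}$ and $\alg$ would not have opened it --- a contradiction. Therefore $\phi$ is injective, $t \leq |\mrs| = k$, and $\alg$'s solution has at most $k$ centers.

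The main obstacle is precisely this injectivity argument: one must combine three facts --- the $\mrs$ proximity bound ($c_{\phi(j)\,j} \leq r_j$, which is where $r \geq \mr$ enters), the monotone processing order (so $r_{j_a} \leq r_{j_b}$), and the fact that opening $j_a$ places an $F$-center within $r_{j_a}$ of $j_a$ --- and chain them through the triangle inequality so that the resulting distance lands at exactly $3r_{j_b}$, matching the greedy threshold. This is what dictates both the choice of radius $3r_j$ in the greedy rule and the choice to sort clients by $r_j$; a smaller radius or a different order would break the contradiction. Everything else (the first bullet, and the observation that opened clients correspond one-to-one with added centers) is routine bookkeeping.
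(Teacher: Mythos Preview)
Your proof is correct and follows essentially the same approach as the paper: both establish the first bullet directly from the greedy rule, and for the second bullet both map each opened client (equivalently, each added center) to a nearby center of $\mrs$, then prove this map is injective via the triangle-inequality chain $c_{i_a j_b} \leq c_{i_a j_a} + c_{i^* j_a} + c_{i^* j_b} \leq r_{j_a} + r_{j_a} + r_{j_b} \leq 3r_{j_b}$, contradicting that $j_b$ was opened. Your write-up is in fact slightly more careful than the paper's in explicitly verifying that each opened client contributes a \emph{new} center.
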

\begin{proof}
    The first property follows from the definition of $\alg$.

    To show that $\alg$ does not pick more than $k$ cluster centers, we map
    the cluster center $i$ added in $\alg$ by some client $j$ to its closest
    cluster center $i'$ in $\mrs$. Now, we claim that no two cluster centers
    $i_1, i_2$ in $\alg$ can be mapped to the same cluster center $i'$ in 
    $\mrs$. Clearly, this proves the lemma since $\mrs$ has only $k$ cluster 
    centers.
    
    \begin{figure}[ht]
        \centering
        \includegraphics[width=0.5\textwidth]{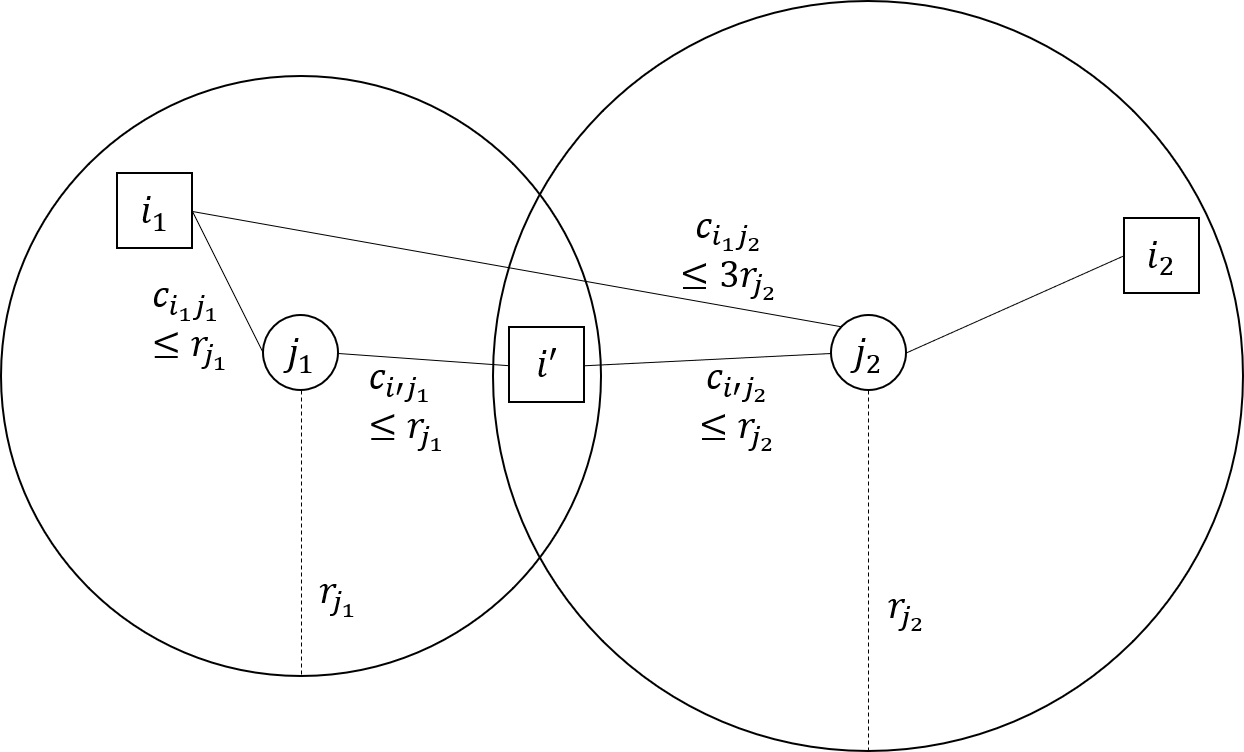}
        \caption{Two clients $j_1, j_2$ that are distance at most $r_{j_1}, r_{j_2}$ 
        respectively from the same cluster center $i'$ in $\mrs$ cannot cause $\alg$ 
        to add two different cluster centers $i_1, i_2$.}
        \label{fig:greedy}
    \end{figure}
    
    Suppose $i_1, i_2$ are two cluster centers in $\alg$ mapped to the same
    cluster center $i'$ in $\mrs$. Assume without loss of generality
    that $i_1$ was added to $\alg$ before $i_2$. Let $j_1, j_2$ be the clients 
    that caused $i_1, i_2$ to be added; since $i_2$ was added later, we have 
    $r_{j_1}\leq r_{j_2}$. 
    The distance from $j_2$ to $i_1$ is at most the length of the path 
    $(j_2, i', j_1, i_1)$ (see Fig.~\ref{fig:greedy}), which is at most
    $2r_{j_2} + r_{j_1} \leq 3r_{j_2}$. 
    But, in this case $j_2$ would not have added a new cluster center $i_2$,
    thus arriving at a contradiction.
\end{proof}

We now use the above lemma to prove Theorem~\ref{thm:kcenteralg}.

\begin{proof}[Proof of Theorem~\ref{thm:kcenteralg}]
\eat{
Suppose a solution $\sol$ for a $k$-center instance has regret $r$. Then, for every client $j$, its distance to the closest cluster center in $\sol$ must be at most $\min_{i \in F} c_{ij} + r$; otherwise, in the realization with only client $j$, $\sol$ would have regret $> r$. We give an algorithm that finds a solution within distance $3 ( \min_{i \in F} c_{ij} + r)$ for all clients $j$ using at most $k$ cluster centers. (If no solution with regret at most $r$ exists, our algorithm will still output a solution with the distance guarantee, but could use more than $k$ cluster centers.) Our algorithm for is the natural greedy algorithm. Namely, we compute $r_j = \min_{i \in F} c_{ij} + r$ for all $j \in C$. Then, for all clients $j$ in increasing order of $r_j$, we check if the solution already contains a cluster center within $3r_j$ of client $j$. If not, we add a cluster center within distance $r_j$ of client $j$ (by definition of $r_j$, at least one such cluster center always exists; we choose any one if there is more than one).

Suppose there exists $\sol$ with regret at most $r$, i.e., with a cluster center within $r_j$ of every client $j$. Now, in the greedy algorithm (call it $\alg$), suppose we add a cluster center $i$ within distance $r_j$ of $j$. Let $\sol$'s cluster center within distance $r_j$ of $j$ be $i'$. (If $\sol$ has multiple such cluster centers, $i'$ denotes any one of them.) Our solution is within distance $3 (\min_{i \in F} c_{ij} + r)$ of each client $j$ by definition. So, it suffices to show that our solution has at most $k$ cluster centers. To do this, we show that for any two cluster centers $i_1, i_2$ in $\alg$, the corresponding cluster centers $i_1', i_2'$ in $\sol$ are distinct. %Then, the number of cluster centers in our solution must be upper bounded by the number of cluster centers in $\sol$, i.e. $k$. 
If not, suppose $i_1' = i_2' = i'$. Moreover, suppose $i_1$ was added to $\alg$ before $i_2$. Let $j_1, j_2$ be the clients which caused $i_1, i_2$ to be added; since $i_2$ was added later, we have $r_{j_1}\leq r_{j_2}$. The distance from $j_2$ to $i_1$ is at most the length of the path $(j_2, i', j_1, i_1)$ (see Fig.~\ref{fig:greedy}), which is
$2r_{j_2} + r_{j_1} \leq 3r_{j_2}$. 
But, in this case $j_2$ would not have added a new cluster center $i_2$.

\begin{figure}[ht]
\centering
\includegraphics[width=0.75\textwidth]{images/greedy.png}
\caption{Two clients $j_1, j_2$ that are distance at most $r_{j_1}, r_{j_2}$ respectively from the same cluster center $i'$ in $\sol$ cannot cause the algorithm to add two different cluster centers $i_1, i_2$.}\label{fig:greedy}
\end{figure}
}
In any realization $C'\subseteq C$, the optimal value of the $k$-center objective is $\opt(C') \geq \max_{j \in C'} \min_{i \in F} c_{ij}$, whereas the solution produced by the algorithm $\alg$ given above has objective value at most $3 (\max_{j \in C'} \min_{i \in F} c_{ij} + r)$. So, $\alg$'s solution costs at most $3\cdot \opt(C') + 3\cdot r$ for all realizations $C'\subseteq C$. So, if we were able to choose $r = \mr$, we would prove the theorem. But, we do not know the value of $\mr$ (in fact, this is \np-hard). Instead, we increase the value of $r$ continuously until $\alg$ produces a solution with at most $k$ clients. By Lemma~\ref{lemma:kcenter}, we are guaranteed that this will happen for some $r\leq \mr$, which then proves the theorem. 

Our final observation is that this algorithm can be implemented in polynomial time, since there are only polynomially many possibilities for the $k$-center objective across all realizations (namely, the set of all cluster center to client distances) and thus polynomially many possible values for $\mr$ (the set of all differences between all possible solution costs). So, we only need to run $\alg$ for these values of $r$ in increasing order.
\end{proof}

We note that the greedy algorithm described above can be viewed as an extension of the $k$-center algorithm in \cite{HochbaumS86} to a $(3, 3)$-approximation for the ``$k$-center with discounts'' problem, where the discounts are the minimum distances $\min_{i \in F} c_{ij}$.
\section{Universal $k$-Median with Fixed Clients}\label{section:kmalg2}

In this section, we extend the techniques from Section~\ref{section:kmalg} to prove the following theorem:

\begin{theorem}\label{thm:fixedkmalgorithm}
If there exists a deterministic polynomial time $\gamma$-approximation algorithm for the $k$-median problem, then for every $\epsilon > 0$ there exists a $(54\gamma+\epsilon, 60)$-approximate universal algorithm for the universal $k$-median problem with fixed clients. 
\end{theorem}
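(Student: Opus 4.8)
\emph{Proof plan.} The plan is to reuse the two-stage framework of Section~\ref{section:kmalg}: first compute a near-optimal \emph{fractional} solution to a regret-minimization LP through an (here necessarily \emph{bicriteria}) approximate separation oracle built from submodular maximization, and then \emph{round} it using the $k$-median with discounts primitive of Lemma~\ref{lemma:kmbicriteria}. Writing $C_F\subseteq C$ for the set of fixed clients, the LP is the natural variant of \eqref{eq:mainlp} in which the regret constraints range only over realizations containing the fixed clients, i.e.\ $\min\{r:(x,y)\in P;\ \forall C'\supseteq C_F:\ \sum_{j\in C'}\sum_i c_{ij}y_{ij}-\opt(C')\le r\}$. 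One structural fact is used throughout: restricting an optimal integral solution for any realization $C'\supseteq C_F$ to $C_F$ is feasible for $C_F$, so $\opt(C_F)\le\opt(C')$ for all such $C'$; this is what lets the (unavoidable) cost of serving the fixed clients be charged against $\opt(C')$ rather than $\mr$, and is ultimately why the theorem needs a nontrivial $\alpha$ that scales with $\gamma$.

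\emph{Fractional stage --- the main obstacle.} As in Lemma~\ref{lemma:lbsubmodularity}, separating the regret constraints reduces to maximizing over center sets $S$ with $|S|=k$ the function $f_y(S)=\max_{C'\supseteq C_F}\bigl[\sum_{j\in C'}\sum_i c_{ij}y_{ij}-S(C')\bigr]$, and because $C_F$ is forced this equals $\sum_{j\in C_F}\bigl(\sum_i c_{ij}y_{ij}-d(j,S)\bigr)+\sum_{j\notin C_F}\bigl(\sum_i c_{ij}y_{ij}-d(j,S)\bigr)^+$. The optional-client sum is monotone, nonnegative and submodular exactly as before, and the fixed-client sum is still monotone and submodular (it is a constant minus the supermodular function $\sum_{j\in C_F}d(j,S)$); but $f_y$ is no longer \emph{normalized} --- it equals $-\infty$ at $S=\emptyset$ and can be negative --- which is precisely the hypothesis the greedy guarantee of Theorem~\ref{thm:cardinality} relies on, and indeed plain cardinality-constrained greedy can now be arbitrarily bad. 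This is the real difficulty. I would resolve it by \emph{reserving} part of the budget of $k$ centers for the fixed clients: run the assumed deterministic $\gamma$-approximation for $k$-median on the instance with client set $C_F$ to obtain a center set $I$ with $\sum_{j\in C_F}d(j,I)\le\gamma\cdot\opt(C_F)\le\gamma\cdot\opt(C')$ for every $C'\supseteq C_F$, and then search only over center configurations ``built on top of $I$'' --- formally, over an auxiliary ground set that pairs each fixed client with a candidate serving center, so that the feasible configurations form an \emph{independence system} and, restricted to it, the residual objective is again a genuine monotone, nonnegative, submodular function. Running the greedy algorithm for submodular maximization over an independence system (\cite{FisherNW78}) then gives a constant-factor approximation (weaker than the $\tfrac{e}{e-1}$ of the uniform-matroid case, but still $O(1)$), and the $\gamma$-factor slack introduced by pinning the fixed clients to $I$ is additive in $\opt(C')$; together these give a bicriteria separation oracle that, when it declares $(x,y,r)$ feasible, certifies $\sum_{j\in C'}\sum_i c_{ij}y_{ij}\le O(\gamma)\cdot\opt(C')+O(1)\cdot r$ for every $C'\supseteq C_F$, and otherwise returns a genuinely violated LP constraint. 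Feeding this oracle to the ellipsoid method as in Lemma~\ref{lemma:lpsolution} produces a fractional solution $\fracc$ that is an $(O(\gamma),O(1))$-approximation, with a small additive $\epsilon$ arising from having to geometrically discretize the guesses the oracle makes --- this is the source of the $+\epsilon$ in the theorem.

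\emph{Rounding stage.} This follows Section~\ref{sec:km-rounding} closely. Put $f_j=\sum_i c_{ij}y_{ij}$; since $\fracc$ restricted to any realization is LP-feasible for that realization and the integrality gap is at most $3$ (Lemma~\ref{lma:km-intgap}), $3\fracc(C')\ge\opt(C')$ for every $C'\supseteq C_F$, so $3\fracc(C')$ is still a legitimate surrogate for $\opt(C')$. I would then build a $k$-median with discounts instance with discount $3f_j$ on each client and apply Lemma~\ref{lemma:kmbicriteria}, as in the proof of Theorem~\ref{thm:kmalgorithm}. The one genuinely new point is that the realization witnessing $\mrs$'s objective in that instance need not contain $C_F$, so the bound is instead evaluated at the \emph{legal} realization obtained by adjoining $C_F$ to it, and the extra fixed clients are paid for using $3\fracc(C_F)\le 3\bigl(O(\gamma)\cdot\opt(C_F)+O(1)\cdot\mr\bigr)\le 3\bigl(O(\gamma)\cdot\opt(C')+O(1)\cdot\mr\bigr)$ --- this is where the fixed-client cost is routed into the $\opt(C')$ term rather than into $\mr$. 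Combining the resulting additive guarantee $\sum_j(c_j-27 f_j)^+=O(\gamma)\cdot\opt(C')+O(1)\cdot\mr$ with $\sum_{j\in C'}27 f_j\le 27\fracc(C')\le 27\bigl(O(\gamma)\cdot\opt(C')+O(1)\cdot\mr\bigr)$, and carefully bookkeeping the constants (and taking the discretization error small enough), yields $\sol(C')\le(54\gamma+\epsilon)\cdot\opt(C')+60\cdot\mr$ for every realization $C'\supseteq C_F$. The step I expect to be hardest is the first one --- choosing the right independence system so that reserving centers for $C_F$ via the black-box $\gamma$-approximation simultaneously keeps the residual maximization monotone submodular and keeps the overall oracle bicriteria with the constants above; the rounding stage, by contrast, is a fairly direct adaptation of Theorem~\ref{thm:kmalgorithm}.
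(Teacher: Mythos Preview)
Your two-stage framework is right, and you have correctly diagnosed the obstacle in the fractional stage: with fixed clients forced in, $f_y(S)$ acquires the unnormalized term $\sum_{j\in C_F}\bigl(\sum_i c_{ij}y_{ij}-d(j,S)\bigr)$, which can be arbitrarily negative and wrecks the greedy guarantee. But your proposed resolution---precompute a $\gamma$-approximate center set $I$ for $C_F$ and then search over configurations ``built on top of $I$''---does not work as stated, and is not what the paper does. The adversary's solution $S$ in the regret constraint is not constrained by $I$ in any way; you must maximize over \emph{all} $S$ with $|S|=k$, so restricting to configurations containing (or derived from) $I$ underestimates the regret and lets the oracle falsely declare feasibility. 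It is also unclear how an auxiliary ground set of (fixed client, serving center) pairs would help, since the regret depends only on the set of centers, not on an assignment.

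The paper instead \emph{separates out} the fixed-client cost algebraically. Rewriting the constraint gives
\[
\max_{C^*\subseteq C\setminus C_F}\bigl[\fracc(C^*)-S(C^*)\bigr]\ \le\ S(C_F)-\fracc(C_F)+r,
\]
so the left side is again the nonnegative, monotone, submodular $f_y(S)$ of Lemma~\ref{lemma:lbsubmodularity} (on optional clients only), and the $S$-dependence has been pushed to the right as $S(C_F)$. One then \emph{guesses} $M\approx S(C_F)$ and, for each discretized $M$, greedily maximizes $f_y$ over the $1$-independence system of sets extendable to an ``$M$-cheap'' size-$k$ solution. Membership in this system is exactly incremental $k$-median, tested via the $\gamma$-approximation (Lemma~\ref{lemma:fixedfacilities}); a modified greedy (\textsc{GreedyMax}) ensures the approximate membership oracle still defines a genuine $1$-system, giving a $\tfrac12$-maximizer and ultimately a $(2\gamma(1+\epsilon),2)$-approximate fractional solution. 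The iteration over $M$, with the independence system indexed by $M$, is the key idea missing from your plan.

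Your rounding stage is also close but off in a way that breaks the constants. You give every client discount $3f_j$; the paper gives fixed clients discount $0$ and only unfixed clients discount $3f_j$. With your choice, the optimum of the discounts instance is $\sum_j(m_j-3f_j)^+$, which as you note is bounded only by $\mr+3\fracc(C_F)$; carrying this through yields $\sum_{j\in C'}c_j\le 45\,\fracc(C')+6\,\mr$, hence roughly $(90\gamma,96)$ after plugging in the fractional bound, not $(54\gamma,60)$. With discount $0$ on fixed clients, the $\mrs$ cost in the discounts instance is $\sum_{j\in C_F}m_j+\sum_{j\notin C_F}(m_j-3f_j)^+=3\fracc(C_F)+\bigl[\sum_{j\in C_F}(m_j-3f_j)+\sum_{j\notin C_F}(m_j-3f_j)^+\bigr]\le 3\fracc(C_F)+\mr$; after applying Lemma~\ref{lemma:kmbicriteria}, the resulting $18\fracc(C_F)$ on the right is \emph{absorbed} by subtracting $27\fracc(C_F)$ from both sides (possible precisely because $\sum_{j\in C_F}c_j$ appears undiscounted on the left), giving the sharper $\sum_{j\in C'}c_j\le 27\,\fracc(C')+6\,\mr$ and hence $(54\gamma+\epsilon,60)$.
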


By using the derandomized version of the $(2.732 + \epsilon)$-approximation algorithm of Li and Svensson~\cite{LiS13} for the $k$-median problem, and appropriate choice of both $\epsilon$ parameters, we obtain the following corollary from Theorem~\ref{thm:fixedkmalgorithm}. 

\begin{corollary}\label{cor:kmalgorithm}
For every $\epsilon > 0$, there exists a $(148 + \epsilon, 60)$-approximate universal algorithm for the $k$-median problem with fixed clients. 
\end{corollary}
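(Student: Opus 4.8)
The plan is to obtain Corollary~\ref{cor:kmalgorithm} purely as an instantiation of Theorem~\ref{thm:fixedkmalgorithm}, so the only real work is to supply a suitable deterministic polynomial-time $\gamma$-approximation for ordinary (offline) $k$-median and then to chase constants. The natural choice is the Li--Svensson algorithm~\cite{LiS13}, which for every $\epsilon_0 > 0$ produces in polynomial time a solution of cost at most $(1 + \sqrt{3} + \epsilon_0)\cdot\opt$; since $1 + \sqrt{3} = 2.7320508\ldots$, I would set $\gamma = 2.732 + \epsilon_0 \ge 1 + \sqrt 3 + \epsilon_0$ (and note $2.7321$ is a safe rational upper bound on $1+\sqrt3$ if one wants to avoid irrationals).

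First I would address the fact that the Li--Svensson algorithm is originally stated as a randomized algorithm: it solves an LP, constructs a bipartite pseudo-solution, and then applies a dependent randomized rounding step. The derandomization is standard — the expected cost is a polynomial-time-computable function of the (limited) random choices, so the method of conditional expectations yields a deterministic procedure with the same worst-case ratio — and this is precisely the ``derandomized version of the $(2.732 + \epsilon)$-approximation algorithm of Li and Svensson'' invoked in the corollary's statement. After this step we have, as required by the hypothesis of Theorem~\ref{thm:fixedkmalgorithm}, a deterministic polynomial-time $\gamma$-approximation for $k$-median with $\gamma = 2.732 + \epsilon_0$.

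Next I would apply Theorem~\ref{thm:fixedkmalgorithm} with this $\gamma$ and with its internal error parameter $\epsilon_1$, obtaining a $(54\gamma + \epsilon_1,\ 60)$-approximate universal algorithm for $k$-median with fixed clients. It then remains only to pick $\epsilon_0,\epsilon_1$ small in terms of the target $\epsilon$: since $54\gamma + \epsilon_1 = 54(2.732) + 54\epsilon_0 + \epsilon_1 < 147.54 + 54\epsilon_0 + \epsilon_1$, choosing (for instance) $\epsilon_0 = \min\{\epsilon,1\}/108$ and $\epsilon_1 = \epsilon/2$ makes $54\epsilon_0 + \epsilon_1 \le \epsilon + 0.46$, hence $54\gamma + \epsilon_1 \le 148 + \epsilon$. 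As $\beta = 60$ is unaffected, this yields the claimed $(148 + \epsilon, 60)$-approximate universal algorithm, completing the deduction.

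The main obstacle is not conceptual: it is just (i) confirming that Li--Svensson can be derandomized with no loss in ratio, so that $\gamma$ can be pushed arbitrarily close to $1+\sqrt3$, and (ii) checking that $54(1+\sqrt3) < 148$ with enough slack to absorb both error terms; both are routine. I note that the sharp Li--Svensson constant is genuinely needed here — plugging in a simpler deterministic combinatorial $(3+\epsilon_0)$-approximation (e.g.\ local search) would only give $54\cdot 3 = 162 > 148$ — which is why the corollary is phrased in terms of that specific algorithm.
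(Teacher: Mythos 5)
Your proposal is correct and matches the paper's derivation exactly: the paper also obtains this corollary by instantiating Theorem~\ref{thm:fixedkmalgorithm} with the derandomized $(2.732+\epsilon)$-approximation of Li and Svensson and choosing both error parameters small enough that $54\gamma+\epsilon_1 \leq 148+\epsilon$. Your constant-chasing ($54(1+\sqrt{3}) \approx 147.53 < 148$) and your remark that a weaker $3$-approximation would not suffice are both accurate.
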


Our high level strategy comprises two steps. In Section~\ref{subsection:designingseparationoracle}, we show how to find a good fractional solution by approximately solving a linear program. In Section~\ref{subsection:kmedian-rounding}, we then round the fractional solution in a manner that preserves its regret guarantee within constant factors. As discussed in Section~\ref{section:defs}, for simplicity our algorithm's description and analysis will avoid the notion of demands and instead equivalently view the input as specifying a set of fixed and unfixed clients, of which multiple might exist at the same location.

\subsection{Preliminaries}
\label{sec:kmedian-prelim}
In addition to the preliminaries of Section~\ref{section:kmalg}, we will use the following tools:

\smallskip
\noindent
{\bf Submodular Maximization over Independence Systems.}
An \textit{independence system} comprises a ground set $E$ and a set of subsets (called {\em independent sets}) $\mathcal{I} \subseteq 2^E$ with the property that if $A \subseteq B$ and $B \in \mathcal{I}$ then $A \in \mathcal{I}$ (the \textit{subset closed} property). An independent set $S$ in $\mathcal{I}$ is \textit{maximal} if there does not exist $S' \supset S$ such that $S' \in \mathcal{I}$. Note that one can define an independence system by specifying the set of maximal independent sets $\mathcal{I}'$ only, since the subset closed property implies $\mathcal{I}$ is simply all subsets of sets in $\mathcal{I}'$. An independence system is a \textit{$1$-independence system} (or \textit{$1$-system} in short) if all maximal independent sets are of the same size. The following result on maximizing submodular functions over 1-independence systems follows from a more general result given implicitly in \cite{FisherNW78} and more formally in \cite{CalinescuCPV11}.

\begin{theorem}\label{thm:independencesystems}
There exists a polynomial time algorithm that given a $1$-independence system $(E, \mathcal{I})$ and a non-negative monotone submodular function $f:2^E \rightarrow \mathbb{R}^+$ defined over it, finds a $\frac{1}{2}$-maximizer of $f$, i.e. finds $S' \in \mathcal{I}$ such that $f(S') \geq \frac{1}{2} \max_{S \in \mathcal{I}} f(S)$.
\end{theorem}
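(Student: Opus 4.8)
The plan is to derive this from the classical analysis of the greedy algorithm for submodular maximization, specialized to $1$-independence systems where every maximal independent set has the same cardinality. First I would run the natural greedy algorithm: start with $S = \emptyset$, and repeatedly add the element $x \in E$ with $S \cup \{x\} \in \mathcal{I}$ that maximizes the marginal gain $f(S \cup \{x\}) - f(S)$, stopping when no such element exists (i.e., $S$ is maximal). Since we are in a $1$-system, the output $S'$ has the same size, call it $m$, as any optimal independent set $S^* \in \arg\max_{S \in \mathcal{I}} f(S)$. This is the structural fact that makes a constant-factor (rather than rank-ratio-dependent) bound possible.

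The key steps, in order: (1) Let $S' = \{e_1, \dots, e_m\}$ in the order the greedy added them, and let $S_t = \{e_1, \dots, e_t\}$. (2) Because $S'$ is maximal and $S^*$ is independent, a standard exchange argument for $1$-systems (implicit in Fisher--Nemhauser--Wolsey~\cite{FisherNW78}, made explicit in~\cite{CalinescuCPV11}) produces an injection $\phi$ from $S^* \setminus S'$ into $S' \setminus S^*$ such that, at the time $e_t := \phi(o)$ was chosen by greedy, $o$ was still a feasible candidate, i.e., $S_{t-1} \cup \{o\} \in \mathcal{I}$; hence by the greedy choice $f(S_{t-1} \cup \{e_t\}) - f(S_{t-1}) \ge f(S_{t-1} \cup \{o\}) - f(S_{t-1})$. (3) Now bound $f(S^* \cup S') - f(S')$ by summing marginals of the elements of $S^* \setminus S'$ added on top of $S'$; by submodularity each such marginal $f(S' \cup \{o\}) - f(S') \le f(S_{t-1} \cup \{o\}) - f(S_{t-1}) \le f(S_t) - f(S_{t-1})$ using step (2), and since $\phi$ is injective these telescope to at most $\sum_{t=1}^m [f(S_t) - f(S_{t-1})] = f(S') - f(\emptyset) \le f(S')$. (4) Combine with monotonicity, $f(S^*) \le f(S^* \cup S')$, to get $f(S^*) \le f(S^* \cup S') = f(S') + [f(S^* \cup S') - f(S')] \le 2 f(S')$, which is the desired $\frac12$-approximation. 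Finally, note the algorithm runs in polynomial time since each of the $\le |E|$ rounds makes $\le |E|$ queries to $f$ and to the independence oracle.

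The main obstacle is step (2): constructing the injection $\phi$ with the ``still feasible at selection time'' property. For general independence systems no such clean injection exists (one only gets bounds in terms of the rank quotient), so here one must genuinely use the $1$-system hypothesis — the equicardinality of maximal independent sets — to run an exchange/augmentation argument. Since the theorem statement explicitly attributes this to~\cite{FisherNW78,CalinescuCPV11}, I would cite that exchange lemma rather than reprove it, and focus the written proof on assembling steps (1), (3), and (4) around it. The remaining steps are routine submodularity bookkeeping.
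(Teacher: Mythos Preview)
The paper does not give its own proof of this theorem; it simply states it as a known result following from \cite{FisherNW78} and \cite{CalinescuCPV11}, and remarks that the algorithm is the natural greedy. Your sketch is precisely the standard greedy analysis from those references, so in that sense your proposal matches what the paper relies on.

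One caution worth flagging, since you plan to cite the exchange lemma rather than reprove it: the injection in your step~(2) is established in \cite{FisherNW78,CalinescuCPV11} for $p$-systems in the usual sense, where for \emph{every} $A\subseteq E$ all maximal independent subsets of $A$ have size within a factor $p$ (for $p=1$ this is exactly a matroid). The paper's written definition of a $1$-system only asks that the maximal independent sets of the \emph{full} ground set have equal size, which is strictly weaker; under that weaker hypothesis your injection need not exist, and indeed greedy can be arbitrarily bad (take two disjoint bases $\{a,b_1,\dots,b_k\}$ and $\{c_1,\dots,c_{k+1}\}$ with no cross-pairs independent, and a modular $f$ that makes $a$ slightly best and the $c_j$'s carry all remaining value). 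So if you write this out, you should either adopt the standard $p$-system definition---which is what the cited results actually use---or explicitly note that the exchange property you need is the real hypothesis.
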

The algorithm in the above theorem is the natural greedy algorithm, which starts with $S' = \emptyset$ and repeatedly adds to $S'$ the element $u$ that maximizes $f(S' \cup \{u\})$ while maintaining that $S' \cup \{u\}$ is in $\mathcal{I}$, until no such addition is possible. 

\smallskip
\noindent
{\bf Incremental $\ell_p$-Clustering.}
We will also use the \textit{incremental $\ell_p$-clustering} problem  which is defined as follows: Given an $\ell_p$-clustering instance and a subset of the cluster centers $S$ (the ``existing'' cluster centers), find the minimum cost solution to the $\ell_p$-clustering instance with the additional constraint that the solution must contain all cluster centers in $S$. When $S = \emptyset$, this is just the standard $\ell_p$-clustering problem, and this problem is equivalent to the standard $\ell_p$-clustering problem by the following lemma:

\begin{lemma}\label{lemma:fixedfacilities}
If there exists a $\gamma$-approximation algorithm for the $\ell_p$-clustering problem, there exists a $\gamma$-approximation for the incremental  $\ell_p$-clustering problem.
\end{lemma}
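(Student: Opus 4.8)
The plan is to reduce incremental $\ell_p$-clustering to ordinary $\ell_p$-clustering by attaching to the mandatory center set $S$ a gadget of co-located ``dummy'' clients heavy enough that even an \emph{approximately} optimal solution is forced to open all of $S$ (we treat finite $p$ here). After a routine preprocessing that merges co-located candidate centers (and correspondingly decreases $k$ when two mandatory centers coincide), I may assume that all points of $F$ have distinct locations and that $|S|\le k$ (otherwise the instance is infeasible). Given an incremental instance $I=(C,F,k,S,c)$, I would build an ordinary instance $I'$ on the same metric, with the same $F$ and the same $k$, by adding $N$ new clients co-located with each $i\in S$ — equivalently, placing one client of demand $N$ at each such location — for a value $N$ chosen below. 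Any solution that opens all of $S$ pays zero on these dummy clients, so its cost in $I'$ equals its cost on $C$.

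First I would note $\opt(I')\le\opt(I)$: the optimal incremental solution $S^\star\supseteq S$ for $I$ is feasible for $I'$ and opens all of $S$, so its $I'$-cost equals its cost on $C$, namely $\opt(I)$. The key step is choosing $N$ so that every $\gamma$-approximate solution of $I'$ opens all of $S$. Let $D$ be the diameter of the metric, $\delta$ the smallest nonzero pairwise distance, and $n=|C|$ counting multiplicities; padding $S$ out to $k$ arbitrary centers gives $\opt(I)\le n^{1/p}D$, hence $\opt(I')\le n^{1/p}D$. Conversely, if a solution $\sol$ for $I'$ fails to open some $i\in S$, then by distinctness of locations the $N$ dummies at $i$ each pay at least $\delta$, so $\sol$ has cost at least $N^{1/p}\delta$. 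Taking $N$ strictly larger than $\gamma^p\, n\,(D/\delta)^p$ forces $N^{1/p}\delta>\gamma\, n^{1/p}D\ge\gamma\,\opt(I')$, so no $\gamma$-approximate $\sol$ can omit any $i\in S$.

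It then follows immediately that running the assumed $\gamma$-approximation algorithm on $I'$ and returning its output $\sol$ yields a $\gamma$-approximation for $I$: $\sol$ contains $S$ and is thus feasible for $I$, and its cost on $C$ equals its $I'$-cost, which is at most $\gamma\cdot\opt(I')\le\gamma\cdot\opt(I)$. Finally I would observe that although $N$ is enormous, its binary encoding uses only $O(p\log(\gamma n D/\delta))$ bits, so $I'$ has size polynomial in that of $I$ and the reduction runs in polynomial time. The main obstacle is exactly this quantitative forcing step: it is not enough that the \emph{optimal} solution of $I'$ open $S$ — $N$ must carry the extra factor $\gamma^p$ so that the slack survives the approximation ratio — and one must verify that the diameter and minimum-distance bounds, the padding argument, and the handling of client multiplicities all behave as claimed.
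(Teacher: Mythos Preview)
Your proposal is correct and follows essentially the same approach as the paper: add heavy dummy clients co-located with each mandatory center in $S$ so that any $\gamma$-approximate solution on the augmented instance is forced to open all of $S$, and then observe that such a solution, restricted to the original clients, is a $\gamma$-approximation for the incremental problem. Your version is in fact a bit more careful than the paper's --- you explicitly introduce the minimum nonzero distance $\delta$ and compute $N$ with the $\gamma^p$ slack, whereas the paper tacitly normalizes the minimum distance to $1$ --- but the underlying idea is identical.
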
 

\begin{proof}[Proof of Lemma~\ref{lemma:fixedfacilities}]

The $\gamma$-approximation for incremental $\ell_p$-clustering is as follows: Given an instance $I$ of incremental $\ell_p$-clustering with clients $C$ and existing cluster centers $S$, create a $\ell_p$-clustering instance $I'$ which has the same cluster centers and clients as the $\ell_p$-clustering instance except that at the location of every cluster center in $S$, we add a client with demand $\gamma|C|^{1/p}\max_{i,j}c_{ij}+1$. 

Let $T^*$ be the solution that is a superset of $S$ of size $k$ that achieves the lowest cost of all such supersets in instance $I$. Let $T$ be the output of running a $\gamma$-approximation algorithm for $\ell_p$-clustering on $I'$. Then we wish to show $T$ is a superset of $S$ and has cost at most $\gamma$ times the cost of $T^*$ in instance $I$.

Any solution that buys all cluster centers in $S$ has the same cost in $I$ and $I'$. Then we claim it suffices to show that $T$ is a superset of $S$. If $T$ is a superset of $S$, then since both $T$ and $T^*$ are supersets of $S$ and since $T$ is a $\gamma$-approximation in instance $I'$, its cost in $I'$ is at most $\gamma$ times the cost of $T^*$ in $I'$. This in turn implies $T$ has cost at most $\gamma$ times the cost of $T^*$ in $I$, giving the Lemma. 

Assume without loss of generality no two cluster centers are distance 0 away from each other. To show that $T$ is a superset of $S$, note that in instance $I'$ any solution that does not buy a superset of $S$ is thus at least distance $1$ from the location of some cluster center in $S$ and thus pays cost at least $\gamma|C|^{1/p}\max_{i,j}c_{ij}+1$ due to one of the added clients. On the other hand, any solution that is a superset of $S$ is distance $0$ from all the added clients and thus only has to pay connection costs on clients in $C$, which in turn means it has cost at most $|C|^{1/p}\max_{i,j}c_{ij}$. Since $T$ is the output of a $\gamma$-approximation algorithm, $T$ thus has cost at most $\gamma|C|^{1/p}\max_{i,j}c_{ij}$, which means $T$ must be a superset of $S$.
\end{proof}

\subsection{Obtaining a Fractional Solution for Universal $k$-Median with Fixed Clients}\label{subsection:designingseparationoracle}

Let $C_f\subseteq C$ denote the set of fixed clients and for any realization of clients $C'$ satisfying $C_f\subseteq C'\subseteq C$, let $\opt(C')$ denote the cost of the optimal solution for $C'$. The universal $k$-median LP is given by:
\begin{eqnarray}
\nonumber & &\min \ r \qquad \text{($r$ denotes maximum regret across all demand realizations)}\\
\nonumber \textnormal{s.t.}& &\sum_{i\in F} x_i \leq k \qquad \text{($x_i$ is the indicator variable for opening cluster center $i$)}\\
\nonumber &\forall i\in F, j\in C: &y_{ij} \leq x_i \qquad \text{($y_{ij}$ is the indicator variable for cluster center $i$ serving client $j$ if it realizes)}\\
\nonumber &\forall j\in C: &\sum_{i\in F} y_{ij} \geq 1\\
&\forall C_f \subseteq C' \subseteq C: &\sum_{j \in C'} \sum_{i\in F} c_{ij} y_{ij} - \opt(C') \leq r \label{eq:regret-constraint}\\ 
\nonumber &\forall i\in F, j\in C: &x_i, y_{ij} \in [0, 1]
\end{eqnarray}
Note that Eq.~\eqref{eq:regret-constraint} and the objective function distinguish this LP from the standard $k$-median LP. We call Eq.~\eqref{eq:regret-constraint} the \textit{regret constraint set}. For a fixed fractional solution $x, y, r$, our goal is to approximately separate the regret constraint set, since all other constraints can be separated exactly. In the rest of this subsection, we describe our approximate separation oracle and give its analysis.

Let $S(C')$ denote the cost of the solution $S\subseteq F$ in realization $C'$ (that is, $S(C') = \sum_{j \in C'} \min_{i \in S} c_{ij}$). Since $\opt(C') = \min_{S: S\subseteq F, |S| = k} S(C')$, separating the regret constraint set exactly is equivalent to deciding if the following holds:
\begin{equation}\label{eq:regretconstraintset}
\forall S: S \subseteq F, |S| = k: \quad \max_{C': C_f \subseteq C' \subseteq C} \left[ \sum_{j \in C'} \sum_{i\in F} c_{ij} y_{ij} - S(C')\right] \leq r.
\end{equation}
By splitting the terms $\sum_{j \in C'} \sum_{i\in F} c_{ij} y_{ij}$ and $S(C')$ into terms for $C_f$ and $C' \setminus C_f$, we can rewrite Eq.~\eqref{eq:regretconstraintset} as follows:

\begin{equation*}
\begin{aligned}
\max_{C_f \subseteq C' \subseteq C, S \subseteq F, |S| = k}& \sum_{j \in C'} \sum_{i \in F} c_{ij} y_{ij} - S(C') \leq r\\
\forall S \subseteq F, |S| = k: \max_{C_f \subseteq C' \subseteq C}& \sum_{j \in C'} \left[\sum_{i \in F} c_{ij} y_{ij} - S(C')\right] \leq r\\
\forall S \subseteq F, |S| = k: \max_{C_f \subseteq C' \subseteq C}& \sum_{j \in C' \setminus C_f} \left[\sum_{i \in F} c_{ij} y_{ij} + \sum_{j \in C_f} \sum_{i \in F} c_{ij} y_{ij} - S(C' \setminus C_f) - S(C_f)\right] \leq r\\
\forall S \subseteq F, |S| = k: \max_{C_f \subseteq C' \subseteq C}&  \left[\sum_{j \in C' \setminus C_f} \sum_{i \in F} c_{ij} y_{ij} - S(C' \setminus C_f) \right]\leq S(C_f) - \sum_{j \in C_f} \sum_{i \in F} c_{ij} y_{ij} + r\\
\forall S \subseteq F, |S| = k: \max_{C^* \subseteq C \setminus C_f}& \left[\sum_{j \in C^*} \sum_{i \in F} c_{ij} y_{ij} - S(C^*)\right] \leq S(C_f) - \sum_{j \in C_f} \sum_{i \in F} c_{ij} y_{ij} + r\\
\end{aligned}
\end{equation*}
For fractional solution $y$, let 
\begin{equation}\label{eq:defn-f}
f_y(S) = \max_{C^*: C^* \subseteq C \setminus C_f} \left[\sum_{j \in C^*} 
\sum_{i \in F} c_{ij} y_{ij} - S(C^*)\right].
\end{equation}
Note that we can compute $f_y(S)$ for any $S$ easily since the maximizing value of $C^*$ is the set of clients $j$ for which $S$ has connection cost less than $\sum_{i \in F} c_{ij}y_{ij}$. We already know $f_y(S)$ is not submodular. But, the term $S(C_f)$ is not fixed with respect to $S$, so maximizing $f_y(S)$ is not enough to separate Eq.~\eqref{eq:regret-constraint}. To overcome this difficulty, for every possible cost $M$ on the fixed clients, we replace $S(C_f)$ with $M$ and only maximize over solutions $S$ for which $S(C_f) \leq M$ (for convenience, we will call any solution $S$ for which $S(C_f) \leq M$ an \textit{$M$-cheap} solution):
%
%Being able to approximately maximize $f_y(S)$, however, does not immediately imply we can approximately separate the regret constraint set. In particular, the term $S(C_f)$ is not a constant with respect to $S$, so it may be that while an approximate maximizer $S$ of $f_y(S)$ causes large regret for ``unfixed'' clients (clients with lower demand 0, i.e. $C \setminus C_f$), its cost $S(C_f)$ on fixed clients is very large, and thus the solution $S$ does not cause the fractional solution much regret. 
%
%However, we can take the above inequality and create an equivalent inequality set where for each integer from $0$ to $|C_f|\max_{i,j}c_{ij}$ (this range includes all possible costs of solutions on fixed clients) we have an inequality where we restrict the maximization to solutions $S$ whose cost on fixed clients is at most $M$ (we will call such a solution \textit{$M$-cheap}), and we replace the $S(C_f)$ term with $M$:
%
\begin{equation}
\forall M \in \left\{0, 1, \ldots, |C_f|\max_{i,j}c_{ij}\right\}:
\max_{S: S \subseteq F, |S| = k, S(C_f) \leq M}  f_y(S) \leq M - \sum_{j \in C_f} \sum_{i\in F} c_{ij} y_{ij} + r.
\end{equation}
Note that this set of inequalities is equivalent to Eq.~\eqref{eq:regret-constraint}, but it has the advantage that the left-hand side is approximately maximizable and the right-hand side is fixed. Hence, these inequalities can be approximately separated. However, there are exponentially many inequalities; so, for any fixed $\epsilon > 0$, we relax to the following polynomially large set of inequalities:
%
%Now we have a set of inequalities where the left hand side is $f_y(S)$, a monotone submodular function, and the right hand side consists of terms that are constant with respect to $S$.  Each individual inequality may be easy to approximately separate in polynomial time, but $|C_f|\max_{i,j}c_{ij}$ is exponential in the size of the input, so separating the entire set of inequalities remains difficult. To reduce the number of inequalities, we relax this inequality set by instead only considering values of $M$ that are $0$ or a power of $1+\epsilon$ for some constant $\epsilon > 0$:
%
\begin{equation}
\label{eq:separation}
\forall M \in \left\{0, 1, 1+\epsilon, \ldots, (1+\epsilon)^{\lceil\log_{1+\epsilon}(|C_f|\max_{i,j}c_{ij})\rceil + 1}\right\}:
\max_{S: S \subseteq F, |S| = k, S(C_f) \leq M}  f_y(S) \leq M - \sum_{j \in C_f} \sum_{i\in F} c_{ij} y_{ij} + r.
\end{equation}

%The number of values $M$ for which we need to separate this inequality is now roughly $\frac{\log(|C_f|\max_{i,j}c_{ij})}{\epsilon}$, which is polynomial in the input size. 
%Thus, if for all $M$ the inequality corresponding to $M$ is approximately separable, we can approximately separate the regret constraint set efficiently. 

Separating inequality Eq.~\eqref{eq:separation} for a fixed $M$ corresponds to submodular maximization of $f_y(S)$, but now subject to the constraints $|S| = k$ and $S(C_f) \leq M$ as opposed to just $|S| = k$. Let $\mathcal{S}_M$ be the set of all $S\subseteq F$ such that $|S| = k$ and $S(C_f) \leq M$. Since $f_y(S)$ is monotone, maximizing $f_y(S)$ over $\mathcal{S}_M$ is equivalent to maximizing $f_y(S)$ over the independence system $(F, \mathcal{I}_M)$ with maximal independent sets $\mathcal{S}_M$. 

Then all that is needed to approximately separate Eq.~\eqref{eq:separation} corresponding to a fixed $M$ is an oracle for deciding membership in $(F, \mathcal{I}_M)$. Recall that $S\subseteq F$ is in $(F, \mathcal{I}_M)$ if there exists a set $S'\supseteq S$ such that $|S'| = k$ and $S'(C_f) \leq M$. But, even deciding membership of the empty set in $(F, \mathcal{I}_M)$ requires one to solve a $k$-median instance on the fixed clients, which is in general NP-hard. More generally, we are required to solve an instance of the incremental $k$-median problem (see Section~\ref{sec:kmedian-prelim}) with existing cluster centers in $S$. 
%Of course, the fixed $k$-median problem is also NP-hard, since it generalizes the $k$-median problem.

While exactly solving incremental $k$-median is NP-hard, we have a constant approximation algorithm for it (call it $A$), by Lemma~\ref{lemma:fixedfacilities}. So, we could define a new system $(F, \mathcal{I'}_M)$ that contains a set $S\subseteq F$ if the output of $A$ for the incremental $k$-median instance with existing cluster centers $S$ has cost at most $M$. But, $(F, \mathcal{I'}_M)$ may no longer be a $1$-system, or even an independence system. To restore the subset closed property, the membership oracle needs to ensure that: (a) if a subset $S'\subseteq S$ is determined to not be in $(F, \mathcal{I'}_M)$, then $S$ is not either, and (b) if a superset $S'\supseteq S$ is determined to be in $(F, \mathcal{I'}_M)$, then so is $S$. \eat{Checking these properties explicitly would imply exponential time to decide the membership of each set. But, we observe that the membership oracle for $(F, \mathcal{I'}_M)$ is called by the greedy submodular maximization algorithm from Theorem~\ref{thm:independencesystems}, which implies additional structure over the set of invocations of this oracle. In particular, in any step of the greedy algorithm, there is a candidate set $S$ that is in $(F, \mathcal{I'}_M)$ and the membership oracle is called for sets $S \cup \{i\}$ for elements $i\in F$. We first check if there was a previous invocation of the oracle where the approximation algorithm $A$ returned an $M$-cheap $k$-median solution containing cluster centers $S \cup \{i\}$; if so, the oracle says that $S\cup \{i\}$ is in $(F, \mathcal{I'}_M)$. This ensures that property (b) above is not violated. To ensure property (a), if the oracle outputs that $S\cup \{i\}$ is not in $(F, \mathcal{I'}_M)$ at any stage, then the algorithm removes the cluster center $i$ entirely from future consideration.}

\begin{figure}[ht]
\fbox{\begin{minipage}{\textwidth}
\textproc{GreedyMax}($(x, y, r)$, $F$, $C_f$, $C$, $M$, $T_0$, $f$):\newline
\textbf{Input:} Fractional solution $(x,y,r)$, set of cluster centers $F$, set of fixed clients $C_f$, set of all clients $C$, value $M$, $M$-cheap solution $T_0$, submodular objective $f:2^F \rightarrow \mathbb{R}^+$
\begin{algorithmic}[1]
\State $S_0 \leftarrow \emptyset$
\State $F_0 \leftarrow F$
\For{$l$ from $1$ to $k$}
\For{Each cluster center $i$ in $F_{l-1} \setminus S_{l-1}$}
\If{$S_{l-1} \cup \{i\} \subseteq T_0$ and $T_0$ is $M$-cheap}
\State $T_{l, i} \leftarrow T_0$
\Else \If{For some $l', i'$, $S_{l-1} \cup \{i\} \subseteq T_{l', i'}$ and $T_{l', i'}$ is $M$-cheap}
\State $T_{l, i} \leftarrow T_{l', i'}$ 
\Else
\State $T_{l, i} \leftarrow$ Output of $\gamma$-approximation algorithm on incremental $\ell_p$-clustering instance with\par\hskip2cm cluster centers $F_{l-1}$, existing cluster centers $S_{l-1} \cup \{i\}$ and clients $C_f$.
\EndIf
\EndIf
\EndFor 
\State $F_l \leftarrow F_{l-1}$
\For{Each cluster center $i$ in $F_l \setminus S_{l-1}$}
\If{$i$ does not appear in any $T_{l, i'}$ that is $M$-cheap}
\State $F_l \leftarrow F_l \setminus \{i\}$
\EndIf
\EndFor
\State $S_l \leftarrow S_{l-1} \cup \{\argmax_{i \in F_l \setminus S_{l-1}} f(S_{l-1} \cup \{i\}) \}$
\EndFor
\State \Return $S_k$
\end{algorithmic}
\end{minipage}}
\caption{Modified Greedy Submodular Maximization Algorithm}
\label{fig:submaxalg}
\end{figure}

We now give the modified greedy maximization algorithm \textsc{GreedyMax} that we use to try to separate one of the inequalities in 
Eq. \eqref{eq:separation}, which uses a built-in membership oracle that ensures the above properties hold. Pseudocode is given in Figure~\ref{fig:submaxalg}, and we informally describe it here. \textsc{GreedyMax} initializes $S_0 = \emptyset$, $F_0 = F$, and starts with a $M$-cheap $k$-median solution $T_0$ (generated by running a $\gamma$-approximation on the $k$-median instance involving only fixed clients $C_f$). In iteration $l$, \textsc{GreedyMax} starts with a partial solution $S_{l-1}$ with $l-1$ cluster centers, and it is considering adding cluster centers in $F_{l-1}$ to $S_{l-1}$. For each cluster center $i$ in $F_{l-1}$, \textsc{GreedyMax} generates some $k$-median solution $T_{l,i}$ containing $S_{l-1} \cup \{i\}$ to determine if $S_{l-1} \cup \{i\}$ is in the independence system. If a previously generated solution, $T_0$ or $T_{l', i'}$ for any $l', i'$, contains $S_{l-1} \cup \{i\}$ and is $M$-cheap, then $T_{l,i}$ is set to this solution. Otherwise, \textsc{GreedyMax} runs the incremental $k$-median approximation algorithm on the instance with existing cluster centers in $S_{l-1} \cup \{i\}$, the only cluster centers in the instance are $F_{l-1}$, and the client set is $C_f$. It sets $T_{l,i}$ to the solution generated by the approximation algorithm.

After generating the set of solutions $\{T_{l,i}\}_{i \in F_{l-1}}$, if one of these solutions contains $S_{l-1} \cup \{i\}$ and is $M$-cheap, then \textsc{GreedyMax} concludes that $S_{l-1} \cup \{i\}$ is in the independence system. This, combined with the fact that these solutions may be copied from previous iterations ensures property (b) holds (as the $M$-cheap solutions generated by \textsc{GreedyMax} are implicitly considered to be in the independence system). Otherwise, since \textsc{GreedyMax} was unable to find an $M$-cheap superset of $S_{l-1} \cup \{i\}$, it considers $S_{l-1} \cup \{i\}$ to not be in the independence system. In accordance with these beliefs, \textsc{GreedyMax} initializes $F_l$ as a copy of $F_{l-1}$, and then removes any $i$ such that it did not find an $M$-cheap superset of $S_{l-1} \cup \{i\}$ from $F_l$ and thus from future consideration, ensuring property (a) holds. It then greedily adds to $S_{l-1}$ the $i$ in $F_l$ that maximizes $f_y(S_{l-1} \cup \{i\})$ as defined before to create a new partial solution $S_l$. After the $k$th iteration, \textsc{GreedyMax} outputs the solution $S_k$. 

\begin{figure}[ht]
\fbox{\begin{minipage}{\textwidth}
\textproc{SepOracle}($(x, y, r)$, $F$, $C_f$, $C$):\newline
\textbf{Input:} A fractional solution $x, y, r$, set of cluster centers $F$, set of fixed clients $C_f$, set of all clients $C$
\begin{algorithmic}[1]
\If{Any constraint in the universal $k$-median LP except the regret constraint set is violated} \Return the violated constraint
\EndIf
\State $T_0 \leftarrow$ output of $\gamma$-approximation algorithm $A$ for $k$-median run on instance with cluster centers $F$, clients $C_f$ 
\For{$M \in \{0, 1, 1+\epsilon, (1+\epsilon)^2, \ldots (1+\epsilon)^{\lceil\log_{1+\epsilon}(\gamma |C_f|\max_{i,j}c_{ij})\rceil + 1}\}$ such that $T_0$ is $M$-cheap} 
\State $S' \leftarrow$ \textproc{GreedyMax}($(x, y, r)$, $F$, $C_f$, $C$, $M$, $T_0$, $f_y$)
\State $C' \leftarrow \argmax_{C^* \subseteq C \setminus C_f} [\sum_{j \in C^*} \sum_{i \in F} c_{ij} y_{ij} - S'(C^*)]$
\If{$\sum_{j \in C'} \sum_{i \in F} c_{ij} y_{ij} - S(C') > M - \sum_{j \in C_f} \sum_{i \in F} c_{ij} y_{ij} + r$}
\State\hskip\algorithmicindent\Return $\sum_{j \in C'} \sum_{i \in F} c_{ij} y_{ij} - S(C') \leq M - \sum_{j \in C_f} \sum_{i \in F} c_{ij} y_{ij} + r$
\EndIf
\EndFor
\State \Return ``Feasible''
\end{algorithmic}
\end{minipage}}
\caption{Approximate Separation Oracle for Universal $k$-Median}
\label{fig:separationoracle}
\end{figure}

Our approximate separation oracle, \textsc{SepOracle}, can then use \textsc{GreedyMax} as a subroutine. Pseudocode is given in Figure~\ref{fig:separationoracle}, and we give an informal description of the algorithm here. \textsc{SepOracle} checks all constraints except the regret constraint set, and then outputs any violated constraints it finds. If none are found, it then runs a $k$-median approximation algorithm on the instance containing only the fixed clients to generate a solution $T_0$. For each $M$ that is 0 or a power of $1+\epsilon$ (as in Eq.~\eqref{eq:separation}), if $T_0$ is $M$-cheap, it then invokes \textsc{GreedyMax} for this value of $M$ (otherwise, \textsc{GreedyMax} will consider the corresponding independence system to be empty, so there is no point in running it), passing $T_0$ to \textsc{GreedyMax}. If then checks the inequality $\sum_{j \in C'} \sum_i c_{ij} y_{ij} - S(C') \leq M - \sum_{j \in C_f} \sum_i c_{ij} y_{ij} + r$ for the solution $S$ outputted by \textsc{GreedyMax}, and outputs this inequality if it is violated.

This completes the intuition behind and description of the separation oracle. We now move on to its analysis. First, we show that \textsc{GreedyMax} always finds a valid solution.

\begin{lemma}\label{lemma:nonempty}
\textsc{GreedyMax} always outputs a set $S_k$ of size $k$ when called by \textsc{SepOracle}.
\end{lemma}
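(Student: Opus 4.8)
The plan is to prove, by induction on $l \in \{0,1,\dots,k\}$, the invariant $(\star_l)$: $|S_l| = l$, $S_l \subseteq F_l$, and there is an \emph{$M$-cheap certificate} for $S_l$, i.e., a size-$k$ set $T^{(l)}$ that \textsc{GreedyMax} has stored (so $T^{(l)} = T_0$, or $T^{(l)} = T_{l',i'}$ for some earlier $l',i'$), is $M$-cheap, and satisfies $S_l \subseteq T^{(l)} \subseteq F_l$. The lemma is the $l = k$ case. The base case $l = 0$ holds since $S_0 = \emptyset \subseteq F = F_0$ and, by inspection of Figure~\ref{fig:separationoracle}, \textsc{SepOracle} only invokes \textsc{GreedyMax} for values $M$ for which $T_0$ is $M$-cheap; also $|T_0| = k$ since $T_0$ is the output of a $\gamma$-approximation algorithm for $k$-median, so $T^{(0)} := T_0$ is a valid certificate. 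Along the way we also record that $|F_{l-1}| \ge k$ (immediate from $(\star_{l-1})$, since $F_{l-1}$ contains the size-$k$ set $T^{(l-1)}$), so the incremental clustering instances \textsc{GreedyMax} solves are feasible and return size-$k$ solutions.

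The heart of the argument is a \emph{persistence} claim: if $T$ is a stored, $M$-cheap, size-$k$ set with $S_{l-1} \subseteq T \subseteq F_{l-1}$, then $T \subseteq F_l$. Indeed, fix $i'' \in T \setminus S_{l-1}$; since $T \subseteq F_{l-1}$ this $i''$ lies in $F_{l-1}\setminus S_{l-1}$ and is processed in iteration $l$. Because $S_{l-1}\cup\{i''\}\subseteq T$ and $T$ is $M$-cheap and already stored, the first or second branch of \textsc{GreedyMax} fires for $i''$ (the third branch, whose output need not be $M$-cheap, cannot), so $T_{l,i''}$ is set to some $M$-cheap solution containing $S_{l-1}\cup\{i''\}$, in particular containing $i''$; hence $i''$ is not pruned from $F_l$. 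Combined with the trivial observation that elements of $S_{l-1}$ are never pruned, this gives $T\subseteq F_l$.

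Given persistence, the inductive step runs as follows. Apply persistence to $T^{(l-1)}$ to get $T^{(l-1)}\subseteq F_l$; since $l-1 < k = |T^{(l-1)}|$ and $S_{l-1}\subseteq T^{(l-1)}$ there is some $i^* \in T^{(l-1)}\setminus S_{l-1} \subseteq F_l\setminus S_{l-1}$, so the greedy choice $i_l = \argmax_{i\in F_l\setminus S_{l-1}} f(S_{l-1}\cup\{i\})$ is well-defined, whence $|S_l| = l$ and $S_l\subseteq F_l$. For the new certificate: if $i_l \in T^{(l-1)}$, set $T^{(l)} := T^{(l-1)}$; otherwise $i_l$ survived pruning only because it appears in some $M$-cheap $T_{l,i''}$, and we set $T^{(l)} := T_{l,i''}$, which is size $k$, $M$-cheap, stored, and contains both $S_{l-1}$ (every $T_{l,\cdot}$ does) and $i_l$, hence contains $S_l$. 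The one remaining point is $T^{(l)}\subseteq F_l$: immediate in the first case and when $T_{l,i''}$ was produced by \textsc{GreedyMax}'s third branch (its centers lie in $F_{l-1}$ and, being $M$-cheap, survive pruning); when $T_{l,i''}$ is a copy of an earlier stored set $T^\circ$, we invoke an auxiliary claim that any stored $M$-cheap $T^\circ$ with $S_{l-1}\subseteq T^\circ$ satisfies $T^\circ\subseteq F_{l-1}$, proved by tracing $T^\circ$ back to the iteration where it was first produced not as a copy — there $T^\circ$ is a subset of the then-current center set, being either $T_0$ or an approximation-algorithm output on an instance with that center set — and then applying persistence across all intervening iterations, using $S_{l'-1}\subseteq S_{l-1}\subseteq T^\circ$ for every $l'\le l$.

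I expect the main obstacle to be exactly this last wrinkle — controlling \emph{copied} certificates. Since \textsc{GreedyMax} may reuse an $M$-cheap solution found many iterations earlier, a certificate for $S_l$ can be a set first produced long ago, and one must verify that none of its centers were pruned in the meantime. Organizing this cleanly, keeping the joint induction over $(\star_l)$, the persistence claim, $S_l\subseteq F_l$, and the feasibility fact $|F_{l-1}|\ge k$ strictly non-circular, is the delicate part; the rest is routine bookkeeping.
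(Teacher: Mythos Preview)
Your argument is correct and uses the same inductive idea as the paper: carry an $M$-cheap size-$k$ witness containing $S_l$ forward to guarantee $F_l\setminus S_{l-1}\neq\emptyset$ at each step. The paper's proof is much shorter --- it observes that the greedy element $i_l$ lies in some $M$-cheap $T_{l,i'}$, asserts that since $T_{l,i'}$ is $M$-cheap none of its elements is deleted from $F_l$, and then carries $T_{l,i'}$ into iteration $l+1$ --- but it never separately argues the containment $T_{l,i'}\subseteq F_{l-1}$ that this step needs when $T_{l,i'}$ is a copy of a solution produced many iterations earlier. Your persistence and auxiliary claims are exactly what is needed to make that rigorous, so the ``wrinkle'' you flagged is genuinely present in the paper's write-up, and your more careful handling of it is sound.
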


\begin{proof}
Note that \textsc{GreedyMax} is only invoked if $T_0$ is $M$-cheap. This implies some $T_{1,i}$ is $M$-cheap since some $T_{1,i}$ will be initialized to $T_0$. Then, it suffices to show that in the $l$th iteration, there is some $i$ that can be added to $S_{l}$. If this is true, it implies $S_k$ is of size $k$ since $k$ elements are added across all $k$ iterations.

This is true in iteration $1$ because some $T_{1, i}$ is $M$-cheap and thus any element of $T_{1, i}$ is in $F_1$ and can be added. Assume  this is inductively true in iteration $l$, i.e. $i$ is added to $S_l$ in iteration $l$ because $i$ is in some $M$-cheap $T_{l, i'}$. Since $T_{l, i'}$ is $M$-cheap, no element of $T_{l, i'}$ is deleted from $F_l$. Then in iteration $l+1$, for all $i''$ in $T_{l, i'} \setminus S_l$ (a set of size $k-l$, i.e. non-empty), $T_{l+1, i''}$ can be initialized to $T_{l, i'}$. Then all such $i''$ can be added to $S_{l+1}$ because all such $i''$ satisfy that $T_{l+1, i''}$ is $M$-cheap and thus are in $F_{l+1}$. By induction, in all iterations, there is some $i$ that can be added to $S_l$, giving the Lemma.
\end{proof}

Then, the following lemma asserts that \textsc{GreedyMax} is indeed performing greedy submodular maximization over some 1-system.
% * <bmm@cs.duke.edu> 2018-02-23T12:13:03.215Z:
% 
% Small: the "at the end of this run" might be taken to imply that we are only interested in the values for the final iteration.  Perhaps we should indicate that we are interested in all l values from 1 to k
% 
% Good note, fixed - Arun
% ^.
\begin{lemma}\label{lemma:1system}
Fix any run of \textsc{GreedyMax}. Consider the values of $S_l, T_{l, i}, F_l$ for all $l, i$ (defined as in Figure \ref{fig:submaxalg}) at the end of this run. Let $\mathcal{B}$ be the set containing $S_{l-1} \cup \{i\}$ for each $l$, $i \notin F_l, i \notin S_{l-1}$. Let $(F, \mathcal{S})$ be the independence system for which the set of maximal independent sets $\mathcal{S}_{max}$ consists of all size $k$ subsets $S$ of $F$ such that no subset of $S$ is in $\mathcal{B}$, and $S$ is $M$-cheap. Then the following properties hold:
\begin{enumerate}\vspace{-5pt}
\item For any $l$ and any $i \notin F_l$, $S_{l-1} \cup \{i\}$ is not in $\mathcal{S}$
\vspace{-5pt}
\item For any $l$ and any $i \in F_l$, $S_{l-1} \cup \{i\}$ is in $\mathcal{S}$
\vspace{-5pt}
\item $(F, \mathcal{S})$ is a 1-system
\end{enumerate}
\end{lemma}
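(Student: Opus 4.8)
The plan is to prove the three properties in sequence, relying on two elementary structural facts about \textsc{GreedyMax}. First, the candidate sets only shrink, $F=F_0\supseteq F_1\supseteq\cdots\supseteq F_k$, and a center can be deleted from $F_l$ only at iteration $l$ and only if it is not in $S_{l-1}$; hence any center added by the greedy step (being placed into $S$) is never deleted afterward, which gives the second fact $S_{l'}\subseteq F_{l''}$ for all $l''\ge l'$, in particular $S_{l-1}\subseteq F_l$. Property~1 is then almost definitional: if $i\notin F_l$ then $i\notin S_{l-1}$ (by $S_{l-1}\subseteq F_l$), so the pair $(l,i)$ satisfies the condition in the definition of $\mathcal{B}$ and $S_{l-1}\cup\{i\}\in\mathcal{B}$; since every member of $\mathcal{S}$ is contained in a set of $\mathcal{S}_{max}$, none of which has a subset in $\mathcal{B}$, we conclude $S_{l-1}\cup\{i\}\notin\mathcal{S}$.

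For Property~2, fix $i\in F_l$. If $i\in S_{l-1}$ I would reduce to the other case: writing the greedy augmentations as $S_{l'}=S_{l'-1}\cup\{i_{l'}\}$ with $i_{l'}\in F_{l'}\setminus S_{l'-1}$, it suffices to know that $S_{l'}\in\mathcal{S}$ for each such genuine augmentation (no circularity, since these are handled by the non-edge case). So assume $i\in F_l\setminus S_{l-1}$. Because $i$ survived the deletion step at iteration $l$, it lies in some $M$-cheap solution $T:=T_{l,i'}$, and by construction of the $T_{l,i'}$ (whether copied from $T_0$, copied from an earlier $T_{l'',i''}$, or produced freshly by the incremental $\ell_p$-clustering approximation with $S_{l-1}\cup\{i'\}$ as existing centers) we have $S_{l-1}\cup\{i'\}\subseteq T$, hence $S_{l-1}\cup\{i\}\subseteq T$, and $|T|=k$. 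Thus it remains to show $T\in\mathcal{S}_{max}$, i.e.\ that $T$ contains no subset lying in $\mathcal{B}$; this is the crux, which I would isolate as the claim: \emph{no $M$-cheap solution that ever occurs as $T_0$ or as some $T_{l,i}$ contains a subset in $\mathcal{B}$}.

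To prove the claim I would first trace $T$ back through the copying rule to its origin: either $T=T_0$ (set $l^*=0$), or $T$ was produced freshly by the incremental approximation at some iteration $l^*\ge 1$, in which case $T\subseteq F_{l^*-1}$; in both cases $l^*$ is no larger than the iteration at which $T$ currently appears. Now suppose for contradiction $S_{m-1}\cup\{j\}\subseteq T$ with $j\notin S_{m-1}$, $j\notin F_m$; replacing $m$ by the iteration at which $j$ is actually removed (which only shrinks $S_{m-1}$, so it stays inside $T$) we may assume $j\in F_{m-1}\setminus S_{m-1}$ and $j\notin F_m$. If $m<l^*$, then $T\subseteq F_{l^*-1}\subseteq F_m$ by nesting, contradicting $j\in T$ and $j\notin F_m$. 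If $m\ge l^*$, then $T$ is already available at iteration $m$, is $M$-cheap, and is a superset of $S_{m-1}\cup\{j\}$, so when \textsc{GreedyMax} processes $j$ at iteration $m$ it assigns $T_{m,j}$ an $M$-cheap value containing $j$ (via one of the two copying rules, referencing $T_0$ or $T$); hence $j$ appears in an $M$-cheap $T_{m,\cdot}$ and is \emph{not} deleted at iteration $m$, a contradiction. This proves the claim, so $T\in\mathcal{S}_{max}$ and $S_{l-1}\cup\{i\}\in\mathcal{S}$, establishing Property~2; in particular, together with Lemma~\ref{lemma:nonempty}, $S_k\in\mathcal{S}$ and so $\mathcal{S}_{max}\neq\emptyset$.

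Property~3 is then routine: every set in $\mathcal{S}_{max}$ has size exactly $k$, so every independent set has size at most $k$; a size-$k$ member of $\mathcal{S}_{max}$ therefore has no proper superset in $\mathcal{S}$ and is maximal, while any maximal $A\in\mathcal{S}$ must have $|A|=k$ (otherwise it sits strictly inside some $S\in\mathcal{S}_{max}\subseteq\mathcal{S}$, contradicting maximality), whence $A=S\in\mathcal{S}_{max}$; thus all maximal independent sets share the size $k$ and $(F,\mathcal{S})$ is a $1$-system. The genuinely delicate point is the claim inside Property~2 --- the bookkeeping of which $M$-cheap solutions are ``available'' at a given iteration under the copying rules, and matching a center's deletion iteration against the first iteration at which a protecting $M$-cheap solution exists. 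The nesting of the $F_l$ is precisely what makes the two cases $m<l^*$ and $m\ge l^*$ exhaustive, and the copying rules are precisely what guarantee that an $M$-cheap solution, once produced, keeps protecting every one of its centers at all later iterations.
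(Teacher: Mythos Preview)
Your proof is correct and follows essentially the same strategy as the paper: locate an $M$-cheap $T_{l,i'}\supseteq S_{l-1}\cup\{i\}$ and show it lies in $\mathcal{S}_{max}$ by contradiction, arguing that an offending center $j$ could not have been deleted because an $M$-cheap witness containing it was available. The one organizational difference is that you case on the deletion iteration $m$ versus the \emph{origin} iteration $l^*$ of $T$, whereas the paper cases on $m$ (their $l'$) versus the \emph{current} iteration $l$; both splits work and unwind to the same argument.

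One small imprecision worth tightening: in your case $m=l^*$ you assert that $T_{m,j}$ is assigned an $M$-cheap value via a copying rule referencing $T$. But if $T$ was freshly produced as $T_{m,i^*}$ and the loop over $i$ happens to compute $T_{m,j}$ \emph{before} $T_{m,i^*}$, the copying rule may not fire and $T_{m,j}$ could be a fresh (possibly not $M$-cheap) approximation output. Your conclusion that $j$ is not deleted is nonetheless correct, since $T_{m,i^*}$ is computed by the end of the first loop and is an $M$-cheap $T_{m,\cdot}$ containing $j$; the deletion test only runs after all $T_{m,\cdot}$ are set. So simply replace ``assigns $T_{m,j}$ an $M$-cheap value'' with ``$j$ appears in the $M$-cheap $T_{m,i^*}$, hence survives the deletion step.''
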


\begin{proof}
\textbf{Property 1:} This property is immediate from the definition of $\mathcal{B}$ and $\mathcal{S}$.

\textbf{Property 2:} Fix any $l, i \in F_l$. We want to show that $S_{l-1} \cup \{i\}$ is in $\mathcal{S}$. Since $i \in F_l$, there exists some $T_{l, i'}$ such that $S_{l-1} \cup \{i\}$ is a subset of $T_{l, i'}$ and $T_{l, i'}$ is $M$-cheap (otherwise, $i$ would have been deleted from $F_l$). If we can show $T_{l, i'}$ is in $\mathcal{S}_{max}$ then we immediately get that $S_{l-1} \cup \{i\}$ is in $\mathcal{S}$. 

Suppose not. Since $T_{l, i'}$ is $M$-cheap, this must be because some subset of $T_{l, i'}$ is of the form $S_{l'-1} \cup \{i''\}$ for $i'' \notin F_{l'}, i'' \notin S_{l'-1}$. In particular, consider the smallest value $l'$ for which this is true, i.e. let $l'$ be the iteration in which $i''$ was deleted from $F_{l'}$. 

If $l' < l$, since $i''$ was deleted from $F_{l'}$, $i''$ cannot appear in any $M$-cheap solution containing $S_{l'-1}$ generated by the incremental $k$-median approximation algorithm before the end of iteration $l'$ (otherwise, $T_{l', i''}$ could be initialized to this solution, preventing $i''$ from being deleted). Since $i''$ is not in $F_{l'}$ (and thus not in $F_{l'+1} \ldots F_l$), in iterations $l'+1$ to $l$ the approximation algorithm is not allowed to use $i''$. So no $M$-cheap solution is ever generated by the approximation algorithm which is a superset of $S_{l'-1} \cup \{i''\}$. But $T_{l, i'}$ is a $M$-cheap superset of $S_{l'-1} \cup \{i''\}$ which must have been generated by the approximation algorithm at some point, a contradiction.

Thus we can assume $l' \geq l$. However, recall that $T_{l, i'}$ is an $M$-cheap solution containing $S_{l'} \cup \{i''\}$. If $l' = l$, this prevents $i''$ from being deleted in iteration $l'$, giving a contradiction. If $l' > l$ then $T_{l', i''}$ can be initialized to a $M$-cheap superset of $S_{l'} \cup \{i''\}$, since $T_{l, i'}$ is such a superset. This also prevents $i''$ from being deleted in iteration $l''$, giving a contradiction. 

In all cases assuming $T_{l, i'}$ is not in $\mathcal{S}_{max}$ leads to a contradiction, so $T_{l, i'}$ is in $\mathcal{S}_{max}$ and thus $S_{l-1} \cup \{i\}$ is in $\mathcal{S}$. 

\textbf{Property 3: }$\mathcal{S}$ is defined so that all maximal independent sets are of size $k$, giving the property.
\end{proof}

\begin{corollary}\label{cor:1system}
Any run of \textsc{GreedyMax} outputs an $M$-cheap, $\frac{1}{2}$-maximizer of $f_y(S)$ over the system $(F, \mathcal{S})$ as defined in Lemma \ref{lemma:1system}.
\end{corollary}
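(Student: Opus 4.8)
The plan is to show that a run of \textsc{GreedyMax} invoked by \textsc{SepOracle} with a fixed $M$ and objective $f_y$ is, literally, an execution of the natural greedy algorithm of Theorem~\ref{thm:independencesystems} on the $1$-system $(F,\mathcal{S})$ built in Lemma~\ref{lemma:1system}; the corollary then follows by quoting that theorem. First I would record the two properties of $f_y$ needed to apply the theorem: $f_y(S) = \sum_{j \in C \setminus C_f}\left(\sum_i c_{ij}y_{ij} - d(j,S)\right)^+ \geq 0$ is non-negative (with $f_y(\emptyset)=0$), and it is monotone submodular in $S$ by exactly the argument of Lemma~\ref{lemma:lbsubmodularity}, which is unaffected by restricting the client set from $C$ to $C \setminus C_f$. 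So the hypotheses of Theorem~\ref{thm:independencesystems} hold once the correct independence system is identified, and by Property~3 of Lemma~\ref{lemma:1system} the system $(F,\mathcal{S})$ is a $1$-system.

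Next I would verify the iteration-by-iteration correspondence. Fix a run of \textsc{GreedyMax}, let $S_l, F_l$ be the objects it produces, and let $(F,\mathcal{S})$ be the system of Lemma~\ref{lemma:1system}. The greedy algorithm of Theorem~\ref{thm:independencesystems} maintains a partial independent set $S'$ and, while possible, adds the element $u$ maximizing $f_y(S'\cup\{u\})$ subject to $S'\cup\{u\}\in\mathcal{S}$. I claim \textsc{GreedyMax} does precisely this with $S' = S_{l-1}$ at the start of iteration $l$: by Properties~1 and~2 of Lemma~\ref{lemma:1system}, for $i\notin S_{l-1}$ we have $S_{l-1}\cup\{i\}\in\mathcal{S}$ iff $i\in F_l$, so the set of feasible single-element extensions of $S_{l-1}$ within the system is exactly $F_l \setminus S_{l-1}$; and \textsc{GreedyMax} sets $S_l = S_{l-1}\cup\{\argmax_{i\in F_l\setminus S_{l-1}} f_y(S_{l-1}\cup\{i\})\}$, which is the greedy choice. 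By Lemma~\ref{lemma:nonempty}, $F_l\setminus S_{l-1}$ is non-empty in every iteration and the algorithm runs for all $k$ iterations, producing $|S_k| = k$; since every maximal independent set of the $1$-system has size $k$, the set $S_k$ is a maximal independent set and the greedy process terminates there. Hence $S_k$ is the output of the greedy algorithm of Theorem~\ref{thm:independencesystems} on $(F,\mathcal{S})$ with objective $f_y$, so $f_y(S_k) \geq \frac{1}{2}\max_{S\in\mathcal{S}} f_y(S)$.

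Finally I would check that $S_k$ is $M$-cheap. Applying Property~2 of Lemma~\ref{lemma:1system} at $l=k$ to the element $i$ added in the last iteration (which lies in $F_k$) gives $S_k = S_{k-1}\cup\{i\}\in\mathcal{S}$; since $|S_k| = k$ and every set in $\mathcal{S}$ is a subset of some size-$k$ set in $\mathcal{S}_{max}$, the set $S_k$ must itself belong to $\mathcal{S}_{max}$, and all members of $\mathcal{S}_{max}$ are $M$-cheap by definition. This establishes the corollary. The only point that needs care is the equivalence ``$S_{l-1}\cup\{i\}\in\mathcal{S} \iff i\in F_l$,'' which is exactly the content of Properties~1 and~2 of Lemma~\ref{lemma:1system}; the substantive combinatorial work has therefore already been discharged in Lemmas~\ref{lemma:nonempty} and~\ref{lemma:1system}, and this corollary is essentially a bookkeeping step that packages them together with Theorem~\ref{thm:independencesystems}.
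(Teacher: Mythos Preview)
Your proposal is correct and follows essentially the same approach as the paper: both invoke Lemma~\ref{lemma:lbsubmodularity} for submodularity of $f_y$, use Properties~1 and~2 of Lemma~\ref{lemma:1system} to identify \textsc{GreedyMax} with the greedy algorithm of Theorem~\ref{thm:independencesystems}, use Lemma~\ref{lemma:nonempty} to guarantee a maximal independent set is found, and appeal to Property~3 plus Theorem~\ref{thm:independencesystems} for the $\tfrac12$-approximation, with $M$-cheapness coming from membership in $\mathcal{S}_{\max}$. Your write-up is somewhat more explicit about the iteration-by-iteration correspondence, but the content is the same.
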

\begin{proof}
Properties 1 and 2 in Lemma~\ref{lemma:1system} imply that at each step, \textsc{GreedyMax} adds the element to its current solution that maximizes the objective $f_y(S)$ while maintaining that the current solution is in $\mathcal{S}$. Thus \textsc{GreedyMax} is exactly the greedy algorithm in Theorem \ref{thm:independencesystems} for maximizing a monotone submodular objective over an independence system. By Lemma~\ref{lemma:nonempty}, \textsc{GreedyMax} always finds a maximal independent set, and the definition of $\mathcal{S}$ guarantees that this maximal independent set is $M$-cheap. Lemma~\ref{lemma:lbsubmodularity} gives that $f_y(S)$ is a monotone submodular function of $S$. Then, Property 3 combined with Theorem~\ref{thm:independencesystems} implies the solution output by \textsc{GreedyMax} is a $\frac{1}{2}$-maximizer.
\end{proof}

Of course, maximizing over an arbitrary 1-system is of little use. In particular, we would like to show that the 1-system Lemma \ref{lemma:1system} shows we are maximizing over approximates the 1-system of subsets of solutions whose cost on the fixed clients is at most $M$. The next lemma shows that while all such solutions may not be in this 1-system, all solutions that are $\frac{M}{\gamma}$-cheap are.

\begin{lemma}\label{lemma:inclusion}
In any run of \textsc{GreedyMax}, let $\mathcal{S}$ be defined as in Lemma \ref{lemma:1system}. For the value $M$ passed to \textsc{GreedyMax} in this run and any solution $S$ which is $\frac{M}{\gamma}$-cheap, $S \in \mathcal{S}$. 
\end{lemma}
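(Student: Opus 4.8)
The plan is to prove the statement directly by exhibiting $S$ as a maximal independent set of $(F,\mathcal{S})$. Fix a run of \textsc{GreedyMax} with parameter $M$ and a size-$k$ solution $S$ with $S(C_f)\le M/\gamma$. Since $\gamma\ge 1$, $S$ is $M$-cheap, so by the definition of $\mathcal{S}$ in Lemma~\ref{lemma:1system} it suffices to show that no subset of $S$ lies in $\mathcal{B}$, where $\mathcal{B}=\{\,S_{l-1}\cup\{i\}: i\notin F_l,\ i\notin S_{l-1}\,\}$. Because every member of $\mathcal{B}$ that is contained in $S$ must have its ``$S_{l-1}$'' part contained in $S$, it is natural to introduce $\ell$, the largest index with $S_\ell\subseteq S$ (well-defined since $S_0=\emptyset\subseteq S$). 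Then the only members of $\mathcal{B}$ that could possibly sit inside $S$ are of the form $S_{l-1}\cup\{i\}$ with $l\le\ell+1$ and $i\in S$, and such a set is in $\mathcal{B}$ only if $i\notin F_l$. So the whole statement reduces to: $S\subseteq F_l$ for every $l\le\ell+1$.

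The core of the argument is an induction on $l$ from $1$ to $\ell+1$ establishing the joint invariant ``$S_{l-1}\subseteq S$ and $S\subseteq F_{l-1}$''. The first half is free: for $l\le\ell+1$ we have $l-1\le\ell$, and the $S_m$'s are nested, so $S_{l-1}\subseteq S_\ell\subseteq S$. The content is propagating $S\subseteq F_{l-1}\Rightarrow S\subseteq F_l$. Take $j\in S$. If $j\in S_{l-1}$ it is never a candidate for deletion in iteration $l$, so assume $j\notin S_{l-1}$; I then examine the three branches by which \textsc{GreedyMax} sets $T_{l,j}$. In the first two branches $T_{l,j}$ is a previously generated $M$-cheap solution that (by the branch's condition) contains $S_{l-1}\cup\{j\}$, hence is $M$-cheap and contains $j$. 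In the third branch $T_{l,j}$ is the output of the $\gamma$-approximation for incremental $\ell_p$-clustering with existing centers $S_{l-1}\cup\{j\}$, ground set $F_{l-1}$, and client set $C_f$; here I invoke the inductive hypothesis, which makes $S$ a feasible solution of this instance (it contains $S_{l-1}\cup\{j\}$, has size $k$, and lies in $F_{l-1}$), so the instance's optimum is at most $S(C_f)\le M/\gamma$, and therefore $T_{l,j}(C_f)\le\gamma\cdot(M/\gamma)=M$, i.e.\ $T_{l,j}$ is $M$-cheap; moreover $T_{l,j}\supseteq S_{l-1}\cup\{j\}\ni j$ since by Lemma~\ref{lemma:fixedfacilities} the incremental approximation returns a superset of the existing centers. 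In every branch, $j$ appears in the $M$-cheap set $T_{l,j}$, so $j$ is not removed from $F_l$. Since $j\in S$ was arbitrary, $S\subseteq F_l$, closing the induction.

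Finally I conclude: suppose some $B=S_{l-1}\cup\{i\}\in\mathcal{B}$ with $B\subseteq S$. Then $S_{l-1}\subseteq S$ forces $l-1\le\ell$, so $l\le\ell+1$ and the induction gives $S\subseteq F_l$; but $i\in B\subseteq S\subseteq F_l$ contradicts $i\notin F_l$, which holds by the definition of $\mathcal{B}$. Hence no subset of $S$ lies in $\mathcal{B}$, so $S$ is a maximal independent set of $(F,\mathcal{S})$ and in particular $S\in\mathcal{S}$.

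I expect the main obstacle to be exactly the circular dependency the induction is designed to break: whether $j\in S$ survives into $F_l$ depends on the approximation algorithm returning an $M$-cheap witness in iteration $l$, but that witness is only guaranteed when $S$ is still an allowed solution, i.e.\ $S\subseteq F_{l-1}$ — which is the previous step of the very statement being proved. The key technical care points are (i) the $\gamma$-approximation in branch three is run over the \emph{restricted} ground set $F_{l-1}$, which is precisely why the hypothesis $S\subseteq F_{l-1}$ (and not merely $S\subseteq F$) is the right thing to carry through the induction, and (ii) the two ``reuse'' branches must also be checked to produce witnesses containing $j$, so that deletion of $j$ is ruled out uniformly across all three branches.
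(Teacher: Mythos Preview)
Your proof is correct. The core idea matches the paper's---show that if $B\in\mathcal{B}$ were a subset of $S$, then $S$ would be a feasible $\frac{M}{\gamma}$-cheap solution to the incremental instance on which the $\gamma$-approximation was run, forcing an $M$-cheap output and contradicting $B\in\mathcal{B}$---but you are more careful about one point that the paper's proof glosses over. The paper writes ``the incremental $k$-median instance with existing cluster centers $B$'' without specifying the ground set, and then uses $S$ as a witness to bound the optimum by $M/\gamma$. However, in \textsc{GreedyMax} the approximation in iteration $l$ is run over the \emph{restricted} ground set $F_{l-1}$, so $S$ is only a valid witness if $S\subseteq F_{l-1}$. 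Your induction establishes precisely this (for all $l\le\ell+1$), and your third-branch analysis uses it exactly where needed. The paper's shorter argument can be repaired by a minimal-index argument (take the smallest $l$ for which some $B=S_{l-1}\cup\{i\}\in\mathcal{B}$ lies inside $S$; minimality forces both that $i$ was deleted at step $l$ and that no element of $S$ was deleted earlier, hence $S\subseteq F_{l-1}$), but this is left implicit there. So your route is the same in spirit, just with the ground-set feasibility made explicit via induction rather than a minimal-counterexample step.
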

\begin{proof}
Fix any such $S$. Let $\mathcal{B}$ be defined as in Lemma \ref{lemma:1system}. For any element $B$ of $\mathcal{B}$, it must be the case that running a $\gamma$-approximation on the incremental $k$-median instance with existing cluster centers $B$ produced a solution with cost greater than $M$. This implies that for any $B$ in $\mathcal{B}$, the incremental $k$-median instance with existing cluster centers $B$ has optimal solution with cost greater than $\frac{M}{\gamma}$. However, for any subset $S'$ of $S$, the optimal solution to the incremental $k$-median instance with existing cluster centers $S'$ has cost at most $\frac{M}{\gamma}$ since $S$ is a feasible solution to this instance which is $\frac{M}{\gamma}$-cheap. Thus no subset of $S$ is in $\mathcal{B}$, and hence $S$ is in $\mathcal{S}$. 
\end{proof}

Lastly, we show that \textsc{SepOracle} never incorrectly outputs that a point is infeasible, i.e., that the region \textsc{SepOracle} considers feasible strictly contains the region that is actually feasible in the universal $k$-median LP. 

\begin{lemma}\label{lemma:basicfeasibility}
If $x, y, r$ is feasible for the universal $k$-median LP, \textsc{SepOracle} outputs ``Feasible''. 
\end{lemma}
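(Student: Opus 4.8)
The plan is to show that whenever \textsc{SepOracle} outputs a violated inequality, that inequality is genuinely violated by $(x,y,r)$ — hence if $(x,y,r)$ is feasible for the universal $k$-median LP, no such inequality can be produced and the oracle must fall through to ``Feasible''. There are two kinds of inequalities \textsc{SepOracle} can return: (i) one of the non-regret constraints of the LP, and (ii) an inequality of the form $\sum_{j \in C'}\sum_{i}c_{ij}y_{ij} - S(C') \le M - \sum_{j \in C_f}\sum_i c_{ij}y_{ij} + r$ coming from some value of $M$ and the solution $S$ returned by \textsc{GreedyMax}. Case (i) is immediate since those constraints are checked exactly. So the whole content is case (ii).

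For case (ii), first I would recall the chain of algebraic rewritings preceding Eq.~\eqref{eq:separation}: the regret constraint set Eq.~\eqref{eq:regret-constraint} is \emph{equivalent} to the (exponential) family of inequalities $\max_{S:\,|S|=k,\,S(C_f)\le M} f_y(S) \le M - \sum_{j\in C_f}\sum_i c_{ij}y_{ij} + r$ over all real $M\ge 0$. In particular, if $(x,y,r)$ is feasible then for \emph{every} $M$ and \emph{every} $S$ with $|S|=k$ and $S(C_f)\le M$ we have $f_y(S) \le M - \sum_{j\in C_f}\sum_i c_{ij}y_{ij} + r$. Now fix the iteration of \textsc{SepOracle} with value $M$, and let $S = S'$ be the set returned by the call to \textsc{GreedyMax}. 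By Corollary~\ref{cor:1system}, $S'$ is $M$-cheap, i.e.\ $S'(C_f)\le M$, and $|S'|=k$ by Lemma~\ref{lemma:nonempty}. Hence $S'$ is one of the sets over which the feasibility inequality for this $M$ quantifies, so $f_y(S') \le M - \sum_{j\in C_f}\sum_i c_{ij}y_{ij} + r$. Finally, by the definition of $f_y$ in Eq.~\eqref{eq:defn-f} and the choice of $C'$ in \textsc{SepOracle} as the maximizer $\argmax_{C^*\subseteq C\setminus C_f}[\sum_{j\in C^*}\sum_i c_{ij}y_{ij} - S'(C^*)]$, we have $\sum_{j\in C'}\sum_i c_{ij}y_{ij} - S'(C') = f_y(S')$. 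Chaining these, $\sum_{j\in C'}\sum_i c_{ij}y_{ij} - S'(C') \le M - \sum_{j\in C_f}\sum_i c_{ij}y_{ij} + r$, so the ``if'' test in \textsc{SepOracle} fails and no inequality is returned for this $M$. Since this holds for every $M$ in the loop, \textsc{SepOracle} returns ``Feasible''.

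The one point that needs a little care — and which I expect to be the main (minor) obstacle — is making sure the $M$-cheapness of $S'$ is applied to the \emph{same} $M$ that appears on the right-hand side of the inequality being tested. This is exactly what Corollary~\ref{cor:1system} delivers: \textsc{GreedyMax} invoked with parameter $M$ returns a maximizer over $(F,\mathcal S)$ whose maximal sets are all $M$-cheap, so $S'$ is $M$-cheap for \emph{that} $M$. No approximation loss enters here: even though \textsc{GreedyMax} only $\tfrac12$-maximizes $f_y$ over a 1-system that may be a strict subset of the ``true'' feasible family (Lemma~\ref{lemma:inclusion}), feasibility only requires an \emph{upper} bound on $f_y(S')$ for the particular $S'$ produced, which holds because $S'$ genuinely lies in the quantified set. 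Thus the proof is a short syntactic verification once the equivalence derivation leading to Eq.~\eqref{eq:separation} is in hand.
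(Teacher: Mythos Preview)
Your proposal is correct and takes essentially the same approach as the paper: both hinge on the fact that the set $S'$ returned by \textsc{GreedyMax} is $M$-cheap (via Corollary~\ref{cor:1system}), which makes the tested inequality a genuine consequence of the regret constraint set. The paper phrases it as the contrapositive---assuming the output inequality is violated and deriving directly that the regret constraint for realization $C'\cup C_f$ is violated---whereas you invoke the pre-established equivalence with the $M$-indexed family and argue feasibility forces the test to pass; the content is identical.
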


\begin{proof}
\textsc{SepOracle} can exactly check all constraints besides the regret constraint set, so assume that if \textsc{SepOracle} outputs that $x, y, r$ is not feasible, it outputs that $\sum_{j \in C'} \sum_{i \in F} c_{ij} y_{ij} - S(C') \leq M - \sum_{j \in C_f} \sum_{i \in F} c_{ij} y_{ij} + r$ is violated for some $M, S$. In particular, it only outputs that this constraint is violated if it actually is violated. If this constraint is violated, then since by Corollary \ref{cor:1system} $S$ is $M$-cheap:

\begin{equation*}
\begin{aligned}
\sum_{j \in C'} \sum_{i \in F} c_{ij} y_{ij} - S(C') &> M - \sum_{j \in C_f} \sum_{i \in F} c_{ij} y_{ij} + r\\
\sum_{j \in C'} \sum_{i \in F} c_{ij} y_{ij} + \sum_{j \in C_f} \sum_{i \in F} c_{ij} y_{ij} &> S(C') + M + r\\
\sum_{j \in C' \cup C_f} \sum_{i \in F} c_{ij} y_{ij} &> S(C') + M + r\\
\end{aligned}
\end{equation*}
$$\geq S(C') + S(C_f) + r = S(C' \cup C_f) + r \geq \opt(C' \cup C_f) + r$$

Which implies the point $x, y, r$ is not feasible for the universal $k$-median LP.
\end{proof}

We now have all the tools to prove our overall claim:
\begin{lemma}\label{lemma:fractionalsolution}
If there exists a deterministic polynomial-time $\gamma$-approximation algorithm for the $k$-median problem, then for every $\epsilon > 0$ there exists a deterministic algorithm that outputs a $(2\gamma(1 + \epsilon), 2)$-approximate fractional solution to the universal $k$-median problem in polynomial time.
\end{lemma}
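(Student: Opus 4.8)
The plan is to run the ellipsoid method on the universal $k$-median LP, using \textsc{SepOracle} (Figure~\ref{fig:separationoracle}) as an approximate separation oracle and minimizing $r$; the pair $(x,y)$ it returns, together with the value $r$ it attains, will be the claimed fractional solution. Three things must be verified: that the whole procedure runs in polynomial time, that the oracle never wrongly rejects a point (so that the objective value found is at most $\mr$), and that every point the oracle \emph{accepts} satisfies $\sum_{j\in C'}\sum_i c_{ij}y_{ij}\le 2\gamma(1+\epsilon)\opt(C')+2r$ for all realizations $C_f\subseteq C'\subseteq C$. The first two are essentially bookkeeping over results already established. Running time: $f_y(S)$ (Eq.~\eqref{eq:defn-f}) is computable in polynomial time since its maximizing $C^*$ is exactly the set of unfixed clients whose $y$-cost exceeds their distance to $S$; \textsc{GreedyMax} performs $k$ outer iterations, each making at most $|F|$ calls to the $\gamma$-approximation for incremental $k$-median (polynomial by Lemma~\ref{lemma:fixedfacilities}); and \textsc{SepOracle} loops over polynomially many geometrically spaced values of $M$. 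For the one-sided guarantee, Lemma~\ref{lemma:basicfeasibility} shows \textsc{SepOracle} accepts every LP-feasible point, and (as in its proof) each inequality it outputs is implied by a genuine regret constraint; hence all cuts the ellipsoid method encounters are valid for the true feasible region, so it halts at an accepted point $(x,y,r)$ whose $r$ is at most the minimum fractional regret, which is at most $\mr$ because \mrs{} is an integral (hence fractionally feasible) solution of regret $\mr$.

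The heart of the argument is the approximation bound for an accepted point $(x,y,r)$. Fix a realization $C'$ and let $S^*$ be an optimal size-$k$ $k$-median solution for $C'$, so $S^*(C')=\opt(C')$. After rescaling distances so the smallest positive $c_{ij}$ is at least $1$ (which scales all quantities in the instance linearly and is without loss of generality), $S^*(C_f)$ is either $0$ or at least $1$; in either case let $M$ be the smallest value in the grid of \textsc{SepOracle} with $M\ge\gamma\,S^*(C_f)$, so that $M\le(1+\epsilon)\gamma\,S^*(C_f)$ (and $M=0$ when $S^*(C_f)=0$), and $M$ is within the range of the grid since $S^*(C_f)\le|C_f|\max_{i,j}c_{ij}$. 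Since $T_0(C_f)$ is at most $\gamma$ times the optimal $k$-median cost on $C_f$, which is at most $\gamma\,S^*(C_f)\le M$, the solution $T_0$ is $M$-cheap, so \textsc{SepOracle} really did process this $M$. As $S^*$ is $\tfrac{M}{\gamma}$-cheap, Lemma~\ref{lemma:inclusion} places $S^*$ in the $1$-system $(F,\mathcal{S})$ that \textsc{GreedyMax} optimizes over for this $M$ (Lemma~\ref{lemma:1system}, Corollary~\ref{cor:1system}). Because the oracle accepted, the $\tfrac12$-maximizer $S'$ returned by \textsc{GreedyMax} satisfies $f_y(S')\le M-\sum_{j\in C_f}\sum_i c_{ij}y_{ij}+r$, and therefore $f_y(S^*)\le 2f_y(S')\le 2\big(M-\sum_{j\in C_f}\sum_i c_{ij}y_{ij}+r\big)$. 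Splitting the regret of $y$ against $S^*$ on $C'$ into its $C'\setminus C_f$ and $C_f$ parts and invoking the definition of $f_y$,
\[
\sum_{j\in C'}\sum_i c_{ij}y_{ij}-S^*(C')\;\le\; f_y(S^*)+\sum_{j\in C_f}\sum_i c_{ij}y_{ij}-S^*(C_f)\;\le\; 2M+2r-S^*(C_f).
\]
Substituting $M\le(1+\epsilon)\gamma\,S^*(C_f)$ and then using $\gamma\ge1$ (so $2(1+\epsilon)\gamma-1\ge0$) together with $S^*(C_f)\le S^*(C')=\opt(C')$ yields $\sum_{j\in C'}\sum_i c_{ij}y_{ij}\le 2\gamma(1+\epsilon)\opt(C')+2r$, which with the one-sided guarantee above proves the lemma.

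I expect the only genuine subtlety to be the final bookkeeping: one must keep the $-S^*(C_f)$ term in the displayed chain and bound $S^*(C_f)$ by $\opt(C')$ only at the very end, so that the extra additive $\opt(C')$ (coming from adding $S^*(C')$ back to both sides) is absorbed into the factor $2\gamma(1+\epsilon)$ rather than producing a weaker $(1+2\gamma(1+\epsilon),2)$ bound; this step is exactly where $\gamma\ge1$ and the choice of discounts feed into the constant. The accompanying scaling remark — needed so that the geometric grid for $M$ costs only a $(1+\epsilon)$ factor even when $S^*(C_f)$ is small — is minor but should be stated explicitly. Everything else is a direct assembly of Lemmas~\ref{lemma:fixedfacilities}, \ref{lemma:basicfeasibility}, \ref{lemma:inclusion}, \ref{lemma:1system} and Corollary~\ref{cor:1system} through the standard ellipsoid-with-approximate-oracle machinery.
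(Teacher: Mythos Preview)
Your proposal is correct and takes essentially the same approach as the paper: ellipsoid with \textsc{SepOracle}, the one-sided guarantee from Lemma~\ref{lemma:basicfeasibility} to get $r\le\mr$, and the chain Lemma~\ref{lemma:inclusion} $\Rightarrow$ Corollary~\ref{cor:1system} $\Rightarrow$ $f_y(S^*)\le 2f_y(S')$ for the chosen $M$. The only differences are stylistic: you argue directly that every accepted point satisfies the bound (the paper argues the contrapositive), and you are more explicit than the paper about two minor points it glosses over, namely that $T_0$ is $M$-cheap for the chosen $M$ and that a rescaling handles the grid discretization when $S^*(C_f)$ is small.
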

\begin{proof}%[Proof of Lemma \ref{lemma:fractionalsolution}]
We use the ellipsoid method where \textsc{SepOracle} is used as the separation oracle.
By Lemma \ref{lemma:basicfeasibility} since the minimum regret solution is a feasible solution to the universal $k$-median LP, it is also considered feasible by \textsc{SepOracle}. Then, the solution $x^*, y^*, r^*$ output by the ellipsoid method satisfies $r^* \leq \mr$.

Suppose the ellipsoid method outputs $x^*, y^*, r^*$ such that $x^*, y^*$ are not a $(2\gamma(1+\epsilon), 2)$-approximate solution. This means there exists $S, C_f \subseteq C' \subseteq C$ such that:
\begin{equation*}
\begin{aligned}
\sum_{j \in C'} \sum_{i\in F} c_{ij} y_{ij}^* &> 2\gamma(1+\epsilon) S(C') + 2\cdot\mr\\
%\sum_{j \in C_f} \sum_i c_{ij} y_{ij}^* + \sum_{j \in C' \setminus C_f} \sum_i c_{ij} y_{ij}^* &> 2\gamma(1+\epsilon) [S(C_f) + S(C' \setminus C_f)] + 2\cdot\mr\\
\sum_{j \in C' \setminus C_f} \sum_{i\in F} c_{ij} y_{ij}^* - S(C' \setminus C_f) &> 2\gamma(1+\epsilon) S(C_f) + (2\gamma(1+\epsilon)-1) S(C' \setminus C_f) - \sum_{j \in C_f} \sum_{i\in F} c_{ij} y_{ij}^* + 2\cdot\mr\\
&\geq 2\left[\gamma(1+\epsilon) S(C_f) - \sum_{j\in C_f}\sum_{i\in F} c_{ij} y_{ij}^* + \mr\right].
\end{aligned}
\end{equation*}
% \end{aligned}
% \end{equation*}
%  Then we get:
% \begin{equation*}
% \begin{aligned}
% \sum_{j \in C' \setminus C_f} \sum_i c_{ij} y_{ij}^* - S(C' \setminus C_f) &> 2\gamma(1+\epsilon) S(C_f) + (2\gamma(1+\epsilon)-1) S(C' \setminus C_f) - \sum_{j \in C_f} \sum_i c_{ij} y_{ij}^* + 2\cdot\mr\\
Thus,
for the value of $M$ in the set $\{0, 1, 1+\epsilon, (1+\epsilon)^2, \ldots (1+\epsilon)^{\lceil\log_{1+\epsilon}(\gamma|C_f|\max_{i,j}c_{ij})\rceil + 1}\}$ contained in the interval $[\gamma S(C_f), \gamma(1+\epsilon)S(C_f)]$, we have
\begin{equation*}
%\begin{aligned}
\sum_{j \in C' \setminus C_f} \sum_{i\in F} c_{ij} y_{ij}^* - S(C' \setminus C_f)
%&\geq 2M - \sum_{j \in C_f} \sum_{i\in F} c_{ij} y_{ij}^* + 2r^*
\geq 2\left[M - \sum_{j \in C_f} \sum_{i\in F} c_{ij} y_{ij}^* + \mr\right]
\geq 2\left[M - \sum_{j \in C_f} \sum_{i\in F} c_{ij} y_{ij}^* + r^*\right],
\text{ since } r^* \leq \mr.
%\end{aligned}
\end{equation*}
Then, consider the iteration in \textsc{SepOracle} where it runs \textsc{GreedyMax} for this value of $M$. Since $M \geq \gamma S(C_f)$, $S$ is $\frac{M}{\gamma}$-cheap. Thus by Lemma \ref{lemma:inclusion}, $S$ is part of the independence system $\mathcal{S}$ specified in Lemma \ref{lemma:1system} which \textsc{GreedyMax} finds a maximizer for in this iteration, and thus the maximum of the objective in this independence system is at least $2[M - \sum_{j \in C_f} \sum_i c_{ij} y_{ij}^* + r^*]$. By Corollary \ref{cor:1system}, \textsc{SepOracle} thus finds some $S', C'' \subseteq C \setminus C_f$ such that $S'$ is $M$-cheap and for which $\sum_{j \in C''} \sum_i c_{ij} y_{ij}^* - S'(C'')$ is at least $M - \sum_{j \in C_f} \sum_i c_{ij} y_{ij}^* + r^*$. But this means \textsc{SepOracle} will output that $x^*, y^*, r^*$ is infeasible, which means the ellipsoid algorithm cannot output this solution, a contradiction.
\end{proof}

\subsection{Rounding the Fractional Solution for Universal $k$-Median with Fixed Clients}
\label{subsection:kmedian-rounding}

\begin{proof}[Proof of Theorem \ref{thm:fixedkmalgorithm}]
The algorithm is as follows: Use the algorithm of Lemma~\ref{lemma:fractionalsolution} with error parameter $\frac{\epsilon}{54\gamma}$ to find a $(2\gamma(1 + \frac{\epsilon}{54\gamma}), 2)$-approximate fractional solution. Let $f_j$ be the connection cost of this fractional solution for client $j$. Construct a $k$-median with discounts instance with the same clients $C$ and cluster centers $F$ where client $j$ has discount 0 if it was originally a fixed client, and discount $3f_j$ if it was originally a unfixed client. The solution to this instance given by Lemma \ref{lemma:kmbicriteria} is the solution for the universal $k$-median instance.

Again using the integrality gap upper bound of 3 for $k$-median, we have:
%
% Bruce: This first inequality just says that since 3 times the cost of any fractional solution for any set of clients C' must exceed the cost of the optimal solution C', we can get a lower bound on \mr by subtracting 3 times the cost of the fractional solution on C' rather than subtracting \opt(C').
%
% But: And once we reformulate the lower bound this way we only have to consider one fixed fractional solution for all sets of clients C' rather than a different solution \opt(C') for each C'.  Then to maximize the lower bound over all possible C', we include only in C' only the unfixed clients whose costs in the minimum regret solution are at least three times higher than in the fractional solution.  (Of course at this point we don't know the minimum regret solution.) 
%
\begin{equation}\label{eq:frac-opt-2}
\mr = \max_{C'}[\mrs(C') - \opt(C')] \geq \max_{C'}\left[\mrs(C') - 3\sum_{j \in C'}f_j\right] = \sum_{j \in C_f}(m_j - 3f_j) + \sum_{j \in C \setminus C_f} (m_j - 3f_j)^+.
\end{equation}

The cost of the minimum regret solution in the $k$-median with discounts instance is given by:
%
% Now since the minimum-regret solution is one possible solution to the k-median-with-discounts problem, we can evaluate its cost as a solution to that problem.  Using the first inequality we derived, we can see this cost is at most the minimum regret plus three times the fractional cost for the fixed clients.  This cost is thus an upper bound on the cost of an optimal solution to the k-median-with discounts problem.
%
\begin{equation}\label{eq:frac-opt-3}
\sum_{j \in C_f}m_j + \sum_{j \in C \setminus C_f} (m_j - 3f_j)^+ = \sum_{j \in C_f}3f_j + \sum_{j \in C_f}(m_j - 3f_j) + \sum_{j \in C \setminus C_f} (m_j - 3f_j)^+ \leq \sum_{j \in C_f}3f_j + \mr, \text{ by Eq.~\eqref{eq:frac-opt-2}}.
\end{equation}

Let $c_j$ be the connection cost of the algorithm's solution for client $j$. Lemma \ref{lemma:kmbicriteria} and Eq.~\eqref{eq:frac-opt-3} give:
%
% Next, we plug in the (9,6) approximation bound guaranteed by our algorithm for the k-median-with-discounts problem (using the cost of the minimum-regret solution as an upper bound on the optimal solution.)  The last line just indicates that the set of clients whose connection costs are at least 27 times the fractional costs are the only clients that contribute to the left hand side of the previous line, so equivalently if we wanted to choose a set of clients that maximizes the quantity  without ignoring negative contributions, the best solution would be to pick the clients whose connection costs are at least 27 times the fractional costs.
%
\begin{equation}\label{eq:frac-opt-4}
\begin{aligned}
\sum_{j \in C_f}c_j + \sum_{j \in C \setminus C_f}(c_j - 9 \cdot 3f_j)^+ &\leq 6\left[\sum_{j \in C_f}3f_j + \mr\right]\\
\sum_{j \in C_f}c_j + \sum_{j \in C \setminus C_f}(c_j - 27f_j)^+ &\leq \sum_{j \in C_f}18f_j + 6\cdot\mr\\
\implies \quad  \max_{C'}\left[\sum_{j \in C'}c_j - 27\sum_{j \in C'}f_j\right] = \sum_{j \in C_f} (c_j - 27f_j) + \sum_{j \in C \setminus C_f} (c_j - 27f_j)^+ &\leq 6 \cdot \mr.
\end{aligned}
\end{equation}

% \begin{equation*}
% \begin{aligned}
% \sum_{j \in C_f}c_j + \sum_{j \in C \setminus C_f}(c_j - 27f_j)^+ &\leq \sum_{j \in C_f}18f_j + 6\cdot\mr\\
% \sum_{j \in C_f}(c_j - 18f_j) + \sum_{j \in C \setminus C_f}(c_j - 27f_j)^+ &\leq 6\cdot\mr\\
% \sum_{j \in C_f}(c_j - 27f_j) + \sum_{j \in C \setminus C_f}(c_j - 27f_j)^+ &\leq 6\cdot\mr\\
% \max_{C'}\left[\sum_{j \in C'}c_j - 27\sum_{j \in C'}f_j\right] &\leq 6\cdot\mr
% \end{aligned}
% \end{equation*}

Lemma \ref{lemma:fractionalsolution} then gives that for any valid $C'$:
\begin{equation}\label{eq:frac-opt-5}
\fracc(C') = \sum_{j \in C'} f_j \leq 2\gamma\left(1 + \frac{\epsilon}{54\gamma}\right)\cdot\opt(C') + 2\cdot\mr.
\end{equation}
Using Eq.~\eqref{eq:frac-opt-4} and \eqref{eq:frac-opt-5}, we can then conclude that
$$\forall C_f \subseteq C' \subseteq C: \sum_{j \in C'} c_j \leq 27\sum_{j \in C'}f_j + 6\cdot\mr \leq (54\gamma+\epsilon)\cdot\opt(C') + 60\cdot\mr.\mbox{\qedhere}$$
\end{proof}

\section{Universal $\ell_p$-Clustering with Fixed Clients}
\label{sec:lpalg2}

In this section, we give the following theorem:

\begin{theorem}\label{thm:lp-fixed}
For all $p \geq 1$, if there exists a $\gamma$-approximation for $\ell_p$-clustering, then for all $\epsilon > 0$ there exists a $(54p\gamma \cdot 2^{1/p} + \epsilon, 108p^2 + 6p^{1/p}+ \epsilon)$-approximate universal algorithm for $\ell_p$-clustering with fixed clients.

\end{theorem}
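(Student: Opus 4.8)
The plan is to fuse the fractional-solution machinery for universal $\ell_p$-clustering from Section~\ref{sec:ellp-frac} with the fixed-client separation oracle of Section~\ref{subsection:designingseparationoracle}, and then to round exactly as in Lemma~\ref{lemma:lp-rounding} with the fixed clients given zero discount (as in the proof of Theorem~\ref{thm:fixedkmalgorithm}). So this is a two-step argument, mirroring all previous universal-clustering results.

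\textbf{Step 1: a fractional solution.} Start from the regret-minimization relaxation \eqref{eq:mainlp-ellp}, but with the realization $C'$ in the regret constraint set ranging only over sets satisfying $C_f \subseteq C' \subseteq C$ (as in \eqref{eq:regret-constraint}). Separating this constraint set amounts to checking, for every candidate $S$ with $|S| = k$, whether $\max_{C_f \subseteq C' \subseteq C}[\fracc_p(C') - S_p(C')] \le r$, and we attack it with the two mechanisms from the two relevant sections acting in tandem. First, as in Section~\ref{sec:ellp-frac}, pass from $\ell_p$ to $\ell_p^p$ via the factorization $\fracc_p(C') - S_p(C') = \tfrac{\fracc_p^p(C') - S_p^p(C')}{\sum_{q=0}^{p-1}\fracc_p^q(C')\,S_p^{p-1-q}(C')} \simeq_p \tfrac{\fracc_p^p(C') - S_p^p(C')}{\fracc_p^{p-1}(C')}$ (losing a factor $p$), and guess the value $Y \approx \fracc_p^p(C')$ over a geometric grid. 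Second, as in Section~\ref{subsection:designingseparationoracle}, split $C' = C_f \cup C^*$ with $C^* \subseteq C \setminus C_f$, and — since the term $S_p^p(C_f)$ that appears on the right-hand side varies with $S$ — guess an upper bound $M$ for it over a geometric grid as well, restricting attention to ``$M$-cheap'' solutions with $S_p^p(C_f) \le M$. What then remains on the left-hand side is $\max\{\fracc_p^p(C^*) - S_p^p(C^*): C^* \subseteq C \setminus C_f,\ \fracc_p^p(C^*) \le Y - \fracc_p^p(C_f)\}$; relaxing $C^*$ to a fractional demand vector $\bd \in [0,1]^{C \setminus C_f}$ does not change its value (by the analogue of Lemma~\ref{lemma:maxatintegral}), and it becomes precisely the function $f_{y,\,Y-\fracc_p^p(C_f)}$ of Lemma~\ref{lemma:fracsubmodularity} on the ground set $C \setminus C_f$ — hence submodular and computable in polynomial time via fractional knapsack. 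We maximize this submodular function over $M$-cheap $k$-subsets using the modified greedy algorithm \textsc{GreedyMax} over a $1$-independence system exactly as in Section~\ref{subsection:designingseparationoracle}, invoking the approximate incremental $\ell_p$-clustering algorithm of Lemma~\ref{lemma:fixedfacilities} for the membership oracle; Lemmas~\ref{lemma:nonempty}--\ref{lemma:basicfeasibility} carry over verbatim with $\ell_p$-clustering in place of $k$-median, since they concern only the structure of $S_p^p(C_f)$ and the induced independence system. Running the ellipsoid method with the resulting oracle produces, for every $\epsilon' > 0$, a fractional solution $\fracc$ that is $(\gamma\cdot 2^{1/p}(1+\epsilon'),\ 2p(1+\epsilon'))$-approximate: the $\gamma$ is the incremental-clustering loss; the $2^{1/p}$ and the coefficient $2$ (rather than $\tfrac{e}{e-1}$, cf.\ Lemma~\ref{lemma:lp-fractional}) arise because \textsc{GreedyMax} provides only a $\tfrac12$-maximizer of an $\ell_p^p$-type surrogate (a $\tfrac12$-maximizer of a $p$-th power is a $2^{1/p}$-maximizer of the underlying quantity); and the $p$ is the loss from the $\ell_p \to \ell_p^p$ reduction.

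\textbf{Step 2: rounding.} Round $\fracc$ as in the proof of Lemma~\ref{lemma:lp-rounding}. Letting $f_j$ be the connection cost of $\fracc$ for client $j$, build an $\ell_p^p$-clustering-with-discounts instance in which every unfixed client $j$ has discount $6 f_j$ (the factor $3$ absorbs the integrality-gap bound of~\cite{ArcherRS03}, the extra factor $2$ is the one needed in the last step of Lemma~\ref{lemma:lp-rounding}) and every fixed client has discount $0$ (as in Theorem~\ref{thm:fixedkmalgorithm}), run the $(9^p, \tfrac23 9^p)$-approximation of Lemma~\ref{lemma:kmbicriteria-lpp} to obtain the integral solution $\alg$, and introduce the virtual solutions $\sol$ (connection costs $3 f_j$) and $\widetilde{\sol}$. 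Lemma~\ref{lemma:gaprealization} shows that the realization maximizing the $\ell_p$ regret of $\alg$ against $18\cdot\widetilde{\sol}$ also maximizes the $\ell_p^p$ regret, so that regret is bounded by the optimum of the $\ell_p^p$-with-discounts instance, and in turn — using the integrality-gap bound of~\cite{ArcherRS03} to relate $6\fracc$ to the optimal integral solution on every $C'$ with $C_f \subseteq C' \subseteq C$ — by $O(p^{1/p})\cdot\mr$, via the same two-case argument (comparing $\alg_p(C_1)$ to $\mrs_p(C_2)$) as in Lemma~\ref{lemma:lp-rounding} combined with the handling of the fixed clients from the proof of Theorem~\ref{thm:fixedkmalgorithm}. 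Thus the rounding converts an $(\alpha,\beta)$-approximate fractional solution into a $(54p\,\alpha,\ 54p\,\beta + O(p^{1/p}))$-approximate integral one; substituting $\alpha = \gamma\,2^{1/p}(1+\epsilon')$, $\beta = 2p(1+\epsilon')$ and choosing $\epsilon'$ small enough yields the claimed $(54p\gamma\cdot 2^{1/p} + \epsilon,\ 108p^2 + 6p^{1/p} + \epsilon)$-approximation.

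\textbf{Main obstacle.} The delicate point is Step~1, where the two complications must be dealt with simultaneously: the nonlinearity of the $\ell_p$ objective (handled by the $Y$-guess together with the fractional-knapsack submodularity of Lemma~\ref{lemma:fracsubmodularity}) and the dependence of $S_p^p(C_f)$ on $S$ (handled by the $M$-guess together with the \textsc{GreedyMax} $1$-independence-system construction built on approximate incremental clustering). The saving grace is that the two mechanisms act on disjoint parts of the instance — the submodular function lives only on the unfixed clients, while the independence system is defined purely through the cost on the fixed clients — so the analyses of Sections~\ref{sec:ellp-frac} and~\ref{subsection:designingseparationoracle} compose without any genuinely new idea. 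What remains is careful bookkeeping of the compounded approximation factors and, in the rounding, checking (as in Theorem~\ref{thm:fixedkmalgorithm}) that assigning the fixed clients discount $0$ and the unfixed clients discount $6f_j$ still lets us bound $\sol_p(C_2)$ and $\mrs_p(C_2)$ in terms of $\mr$ on the maximizing realization $C_2 \supseteq C_f$.
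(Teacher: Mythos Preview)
Your proposal is correct and follows essentially the same approach as the paper: the fractional step combines the $Y$-guess (for the $\ell_p \to \ell_p^p$ passage) with the $M$-guess and \textsc{GreedyMax} over $M$-cheap $1$-systems (for the fixed-client dependence of $S_p^p(C_f)$), yielding a $(2^{1/p}\gamma(1+\epsilon'),\,2p(1+\epsilon'))$-approximate fractional solution; and the rounding step uses $\ell_p^p$-clustering with discounts (zero on fixed clients, $6f_j$ on unfixed) together with the virtual solution $\widetilde{\sol}$ and the fixed-client analogue of Lemma~\ref{lemma:gaprealization}, exactly as the paper does. Your observation that the two mechanisms act on disjoint parts of the instance (submodular function on unfixed clients, independence system on fixed clients) is precisely why the composition goes through cleanly.
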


In particular, we get from known results \cite{AhmadianNSW17, GuptaT08}:
\begin{itemize}
    \item A $(162p^2 \cdot 2^{1/p} + \epsilon, 108p^2 + 18p^{1/p}+ \epsilon)$-approximate universal algorithm for $\ell_p$-clustering with fixed clients for all $\epsilon > 0$, $p \geq 1$.
    \item A $(459, 458)$-approximate universal algorithm for $k$-means with fixed clients.
\end{itemize}

The algorithm for universal $\ell_p$-clustering with fixed clients follows by combining techniques from $\ell_p$-clustering and $k$-median with fixed clients. 

\subsection{Finding a Fractional Solution}
We reuse the subroutine \textsc{GreedyMax} to do submodular maximization over an independence system whose bases are $M$-cheap solutions (that is, solutions with $\ell_p$-objective at most $M$ on only the fixed clients), and use the submodular function $f_{y,Y}$ with varying choices of $Y$ as we did for $\ell_p$-clustering. We can extend Lemma~\ref{lemma:maxatintegral} as follows:

\begin{lemma}\label{lemma:maxatintegral2}
For any two solutions $y, S$, if the global maximum of $\fracc_p(\bd) - S_p(\bd)$ over $1^{C_f} \times [0, 1]^{C \setminus C_f}$ is positive, then there is a maximizer that is in $1^{C_f} \times \{0, 1\}^{C \setminus C_f}$, i.e. $$\max_{\bd \in 1^{C_f} \times [0, 1]^{C \setminus C_f}:}\left[\fracc_p(\bd) - S_p(\bd)\right] = \max_{C_s \subseteq C' \subseteq C} \left[\fracc_p(C') - S_p(C')\right].$$
\end{lemma}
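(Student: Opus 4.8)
The plan is to reduce Lemma~\ref{lemma:maxatintegral2} to Lemma~\ref{lemma:maxatintegral} by treating the fixed clients as a fixed additive contribution to both $\fracc_p^p(\bd)$ and $S_p^p(\bd)$, and then running the same calculus argument on each unfixed coordinate. First I would write, for a point $\bd \in 1^{C_f} \times [0,1]^{C \setminus C_f}$ with a fractional unfixed coordinate $d_j$, the expression $\fracc_p(\bd) - S_p(\bd) = (A + c_1 d_j)^{1/p} - (B + c_2 d_j)^{1/p}$, where now $A := \fracc_p^p(\bd_{-j})$ and $B := S_p^p(\bd_{-j})$ include the (fixed, positive) contributions from all of $C_f$ as well as from the other unfixed coordinates, and $c_1, c_2$ are the $\ell_p^p$ distances from client $j$ to $\fracc$ and $S$ respectively. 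This is structurally identical to the expression analyzed in the proof of Lemma~\ref{lemma:maxatintegral}, so the same derivative computation shows the partial derivative with respect to $d_j$ (on the region where $\fracc_p(\bd) > S_p(\bd)$) is either always positive, always negative, or negative-then-positive on $(0,1)$.

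Next I would spell out the consequence: at any point $\bd$ with $\fracc_p(\bd) > S_p(\bd)$ and some fractional unfixed coordinate $d_j \in (0,1)$, one of $d_j = 0$ or $d_j = 1$ gives a value at least as large (moving $d_j$ in the direction the derivative allows cannot decrease the objective, and if the derivative is negative-then-positive the maximum over $[0,1]$ is at an endpoint). So any local maximum of $\fracc_p(\bd) - S_p(\bd)$ on $1^{C_f} \times [0,1]^{C\setminus C_f}$ at which the objective is positive can be pushed, coordinate by coordinate over $C \setminus C_f$, to a point in $1^{C_f} \times \{0,1\}^{C \setminus C_f}$ without decreasing the objective — exactly as in Lemma~\ref{lemma:maxatintegral}, but keeping the $C_f$ coordinates pinned at $1$ throughout. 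Since the global maximum over the compact set $1^{C_f} \times [0,1]^{C\setminus C_f}$ is attained and, by hypothesis, is positive, it follows that it is attained at some $\bd \in 1^{C_f}\times\{0,1\}^{C\setminus C_f}$, which corresponds to a realization $C'$ with $C_f \subseteq C' \subseteq C$. This gives the claimed equality $\max_{\bd \in 1^{C_f}\times[0,1]^{C\setminus C_f}}[\fracc_p(\bd) - S_p(\bd)] = \max_{C_f \subseteq C' \subseteq C}[\fracc_p(C') - S_p(C')]$, where I note that ``$\geq$'' is trivial and ``$\leq$'' is what the rounding argument just established.

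The only real subtlety — and the one place I would be careful rather than just citing Lemma~\ref{lemma:maxatintegral} verbatim — is the positivity hypothesis. In Lemma~\ref{lemma:maxatintegral} the all-zeroes vector is always available as a fallback that achieves objective exactly $0$, so one never needs an assumption; here the all-zeroes vector is \emph{not} feasible because the fixed clients must be present, so the empty-realization escape is unavailable and the fixed clients could in principle force $\fracc_p(\bd) - S_p(\bd) < 0$ everywhere (e.g. if $S$ is much better than $\fracc$ on $C_f$). This is exactly why the lemma is stated conditionally on the global maximum being positive. Concretely, in the derivative analysis the case $c_1 \le c_2$ was handled in Lemma~\ref{lemma:maxatintegral} by saying ``otherwise the all-zeroes vector does better''; I would instead invoke the standing assumption that the objective at the point in question is positive, i.e. $\fracc_p^p(\bd) > S_p^p(\bd)$, which is all that the derivative sign computation actually uses. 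With that adjustment the argument goes through unchanged, and this conditional statement is still strong enough for the fixed-client rounding step, since there the regret $\regalgsol$ is exactly this maximum and the case where it is $0$ is handled separately (just as in Lemma~\ref{lemma:lp-rounding}).
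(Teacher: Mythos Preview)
Your proposal is correct and matches the paper's approach exactly: the paper's proof of Lemma~\ref{lemma:maxatintegral2} simply says that the argument of Lemma~\ref{lemma:maxatintegral} goes through verbatim on the unfixed coordinates, with the explicit positivity hypothesis replacing the all-zeroes-vector fallback that was used in the $c_1 \le c_2$ case. You have identified precisely this subtlety and handled it correctly.
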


The proof follows exactly the same way as  Lemma~\ref{lemma:maxatintegral}. In that proof, the property we use of having no fixed clients is that if the global maximum is not the all zeroes vector, then it is positive and so  $\fracc_p(\bd) > S_p(\bd)$. In the statement of Lemma~\ref{lemma:maxatintegral2}, we just assume positivity instead. This shows that it is still fine to output separating hyperplanes based on fractional realizations of clients in the presence of fixed clients. The only time it is maybe not fine is in a fractional realization where if the ``regret'' of $\fracc$ is negative, but in this case we will not output a separating hyperplane anyway.

\begin{figure}[ht]
\fbox{\begin{minipage}{.98\textwidth}
\textproc{Fixed-$\ell_p$-SepOracle}($(x, y, r)$, $F$, $C_f$, $C$):\newline
\textbf{Input:} A fractional solution $x, y, r$, set of cluster centers $F$, set of fixed clients $C_f$, set of all clients $C$
\begin{algorithmic}[1]
\If{Any constraint in the universal $\ell_p$-clustering LP except the regret constraint set is violated} \Return the violated constraint
\EndIf
\State $T_0 \leftarrow$ output of $\gamma$-approximation algorithm $A$ for $\ell_p$-clustering run on instance with cluster centers $F$, clients $C_f$ 
\State $c_{\min} \leftarrow \min_{i \in F, j \in C \setminus C_f} c_{ij}^p, c_{\max} \leftarrow \sum_{j \in C \setminus C_f} \max_{i \in F} c_{ij}^p$
\State $Y_f \leftarrow \sum_{j \in C_f} \sum_{i \in F} c_{ij}^p y_{ij}$
\For{$M \in \{0, 1, 1+\epsilon, (1+\epsilon)^2, \ldots (1+\epsilon)^{\lceil\log_{1+\epsilon}(\gamma |C_f|^{1/p}\max_{i,j}c_{ij})\rceil + 1}\}$ such that $T_0$ is $M$-cheap} 
\For {$Y \in \{0, c_{\min}, c_{\min}(1+\epsilon'), c_{\min}(1+\epsilon')^2, \ldots c_{\min} (1+\epsilon')^{\lceil \log_{1+\epsilon'} c_{\max}/c_{\min}  \rceil}\}$}
\State $S' \leftarrow$ \textproc{GreedyMax}($(x, y, r)$, $F$, $C_f$, $C$, $M$, $T_0$, $f_{y,Y}$)
\State $\bd' \leftarrow \argmax_{\bd \in 1^{C_f} \times [0, 1]^{C \setminus C_f}: \sum_{j \in C \setminus C_f} d_j \sum_{i \in F}  c_{ij}^p y_{ij}\leq Y} \sum_{j \in {C \setminus C_f}} d_j \sum_{i \in F} c_{ij}^p y_{ij} - \sum_{j \in {C \setminus C_f}} d_j \min_{i \in S} c_{ij}^p$ \label{line:demand2} 
\If{$\frac{1}{p(Y_f + Y)^{1-1/p}}\left[\sum_{j \in C} d_j' \sum_{i \in F} c_{ij}^p y_{ij} - \sum_{j \in C} d_j' \min_{i \in S} c_{ij}^p\right] > r$}
\State \Return $\frac{1}{p(Y_f + Y)^{1-1/p}}\left[\sum_{j \in C} d_j' \sum_{i \in F} c_{ij}^p y_{ij} - \sum_{j \in C} d_j' \min_{i \in S} c_{ij}^p\right] \leq r$
\EndIf
\EndFor
\EndFor
\State \Return ``Feasible''
\end{algorithmic}
\end{minipage}}
\caption{Approximate Separation Oracle for Universal $\ell_p$-Clustering with Fixed Clients. \textsc{GreedyMax} is the same algorithm as presented in Figure~\ref{fig:submaxalg} for $k$-median.}
\label{fig:lpseparationoracle}
\end{figure}

\begin{lemma}\label{lemma:lp-fixed-solving}
If there exists a $\gamma$-approximation for $\ell_p$-clustering, then for all $\epsilon > 0$, $\alpha = 2^{1/p}\gamma (1+\epsilon), \beta = 2p(1+\epsilon)$ there exists an algorithm that outputs an $(\alpha, \beta)$-approximate universal fractional solution for $\ell_p$-clustering with fixed clients.
\end{lemma}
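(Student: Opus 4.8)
The plan is to combine the separation-oracle machinery for universal $\ell_p$-clustering (Section~\ref{sec:ellp-frac}) with the independence-system machinery for fixed clients (Section~\ref{section:kmalg2}), and then run the ellipsoid method. The fractional relaxation is Eq.~\eqref{eq:mainlp-ellp} with the regret constraint set restricted to realizations $C'$ with $C_f \subseteq C' \subseteq C$; every constraint except the regret constraint set separates exactly, so the whole task is an approximate separation oracle for the regret constraint set, namely \textsc{Fixed-$\ell_p$-SepOracle} of Figure~\ref{fig:lpseparationoracle}.

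\textbf{The reduction.} First, as in Section~\ref{section:kmalg2}, I would rewrite separating the regret constraint set as: for every $S \subseteq F$ with $|S|=k$, $\max_{C_f \subseteq C' \subseteq C}\left[\fracc_p(C') - S_p(C')\right] \leq r$. Next, exactly as in Section~\ref{sec:ellp-frac}, I would pass from the $\ell_p$ objective to the $\ell_p^p$ objective at a multiplicative cost of $p$ and a discretized guess: up to a factor $p$, the constraint is equivalent to requiring, for all such $S$ and all $Y$, that $\frac{\max\{\fracc_p^p(C') - S_p^p(C') : C' \supseteq C_f,\ \fracc_p^p(C') \leq Y\}}{Y^{1-1/p}} \leq r$. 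Now I would use that $\fracc_p^p(\cdot)$ and $S_p^p(\cdot)$ are additive over $C_f$ and $C^* := C' \setminus C_f$: writing $Y_f := \fracc_p^p(C_f)$ (a quantity fixed by $y$) and letting $Y$ henceforth denote a discretized guess for $\fracc_p^p(C^*)$, the numerator becomes $Y_f - S_p^p(C_f) + \max\{\fracc_p^p(C^*) - S_p^p(C^*) : C^* \subseteq C\setminus C_f,\ \fracc_p^p(C^*) \leq Y\}$ and the denominator becomes $(Y_f+Y)^{1-1/p}$. The last maximization is precisely the function $f_{y,Y}(S)$ of Section~\ref{sec:ellp-frac} with the client set taken to be $C\setminus C_f$; by Lemma~\ref{lemma:maxatintegral2} (the fixed-client analogue of Lemma~\ref{lemma:maxatintegral}) we may relax $C^*$ to a fractional demand vector in $[0,1]^{C\setminus C_f}$ without changing the value, and the proof of Lemma~\ref{lemma:fracsubmodularity} carries over verbatim to show $f_{y,Y}$ is monotone submodular in $S$.

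\textbf{Handling the fixed clients and assembly.} The remaining $S$-dependence lies only in the term $-S_p^p(C_f)$, and I would eliminate it exactly as in the fixed-client $k$-median oracle: guess a discretized value $M$ (a power of $1+\epsilon$) and restrict attention to $M$-cheap solutions, i.e.\ those with $S_p(C_f) \leq M$. Since testing membership in the relevant system requires only a $\gamma$-approximation for incremental $\ell_p$-clustering (available by Lemma~\ref{lemma:fixedfacilities}), I would reuse \textsc{GreedyMax} (Figure~\ref{fig:submaxalg}) with objective $f_{y,Y}$. Lemmas~\ref{lemma:nonempty}, \ref{lemma:1system}, \ref{lemma:inclusion} and Corollary~\ref{cor:1system} apply unchanged, since their proofs use only monotone submodularity of the objective and the existence of a $\gamma$-approximation for incremental clustering; hence for each discretized pair $(M,Y)$, \textsc{GreedyMax} returns an $M$-cheap solution $S'$ that is a $\tfrac12$-maximizer of $f_{y,Y}$ over a $1$-system containing every $\tfrac{M}{\gamma}$-cheap solution. \textsc{Fixed-$\ell_p$-SepOracle} iterates over the polynomially many pairs $(M,Y)$, and for the resulting $(S',\bd')$ outputs the corresponding linear inequality whenever it is violated. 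As in Lemma~\ref{lemma:basicfeasibility}, no feasible LP point is ever rejected, because any inequality output corresponds to a genuine regret witness: a fractional realization over $C\setminus C_f$ together with all of $C_f$ at demand $1$, which by Lemma~\ref{lemma:maxatintegral2} can be taken integral. For the converse, if $(x,y,r)$ is not an $(\alpha,\beta)$-approximate fractional solution there is a realization $C'$ and an optimal $k$-subset $S$ with $\fracc_p(C') > \alpha\, S_p(C') + \beta\, r$; choosing $M \in [\gamma\, S_p(C_f),\ \gamma(1+\epsilon)\, S_p(C_f)]$ makes $S$ $\tfrac{M}{\gamma}$-cheap, hence a member of \textsc{GreedyMax}'s $1$-system, and choosing $Y$ the smallest discretized value at least $\fracc_p^p(C^*)$ makes $f_{y,Y}(S') \geq \tfrac12 f_{y,Y}(S)$ large enough that the checked inequality is violated. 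Accounting for the losses --- a factor $2$ from the $\tfrac12$-maximizer (which, for the portion of the cost carried by the fixed clients, combines with the $\gamma$ from the $M$-guess and contributes $2^{1/p}\gamma$ once we translate back from $\ell_p^p$ to $\ell_p$), a factor $p$ from the $\ell_p$-to-$\ell_p^p$ step, and $(1+\epsilon)$ factors from the two discretizations --- yields $\alpha = 2^{1/p}\gamma(1+\epsilon)$ and $\beta = 2p(1+\epsilon)$.

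\textbf{Main obstacle.} The hard part is making the two approximation devices --- the $\ell_p\to\ell_p^p$ surrogate $f_{y,Y}$ with its guessed knapsack budget $Y$, and the $1$-system with its guessed fixed-client budget $M$ --- compose into a single polynomial-size family of \emph{linear} inequalities whose violation certifies LP-infeasibility with exactly the stated $(\alpha,\beta)$ loss. Concretely, one must verify that (i) discretizing both $Y$ and $M$ and passing through the surrogate $f_{y,Y}$ never causes the oracle to emit an inequality that does not correspond to a real realization --- so the ``no false rejection'' direction survives --- and (ii) the $\tfrac12$-maximizer of the $\ell_p^p$ surrogate still certifies enough $\ell_p$-regret for the bad $(S,C')$, which requires the careful separation of the always-present clients $C_f$ (at demand $1$) from the fractional knapsack realization over $C\setminus C_f$ and the bookkeeping underlying Lemma~\ref{lemma:maxatintegral2}.
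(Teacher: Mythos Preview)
Your proposal is correct and follows essentially the same approach as the paper: you build \textsc{Fixed-$\ell_p$-SepOracle} by nesting the discretized guesses for $M$ (the fixed-client budget, handled via \textsc{GreedyMax} over the approximate $1$-system) and for $Y$ (the knapsack budget, handled via the submodular surrogate $f_{y,Y}$), then prove the two directions (no false rejection via Lemma~\ref{lemma:maxatintegral2}; rejection of any non-$(\alpha,\beta)$-approximate point by choosing the right $(M,Y)$ pair). Your accounting for the constants---the $\tfrac12$-maximizer loss on the $\ell_p^p$ scale combining with the $\gamma(1+\epsilon)$ from the $M$-guess to give $\alpha = 2^{1/p}\gamma(1+\epsilon)$, and the factor $p$ from the $\ell_p\to\ell_p^p$ surrogate together with the $\tfrac12$-maximizer and the $(1+\epsilon)$ discretization to give $\beta = 2p(1+\epsilon)$---matches the paper's inequality chain exactly.
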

\begin{proof}
If \textsc{Fixed-$\ell_p$-SepOracle} ever outputs an inequality in the regret constraint set, for the corresponding $Y_f, Y, \bd', S$, let $\fracc_p^q(\bd), \sol_p^q(\bd)$ denote the $\ell_p^q$ costs of the fractional solution and $S$ as before. Then we have by definition of $Y_f$ and the constraint that $\sum_{j \in C} d_j \sum_{i \in F}  c_{ij}^p y_{ij}\leq Y$:

$$r < \frac{1}{p(Y_f + Y)^{1-1/p}}\left[\sum_{j \in C} d_j' \sum_{i \in F} c_{ij}^p y_{ij} - \sum_{j \in C} d_j' \min_{i \in S} c_{ij}^p\right]  \leq$$
$$\frac{1}{\sum_{j = 0}^{p-1} \fracc_p^j(\bd') \sol_p^{p-1-j}(\bd')}\left[\fracc_p^p(\bd') - \sol_p^p(\bd')\right] = \fracc_p(\bd') - \sol_p(\bd').$$

The second inequality uses that \textsc{Fixed-$\ell_p$-SepOracle} only outputs an inequality in the regret constraint set such that $\fracc_p(\bd') > \sol_p(\bd')$. 
We then have by Lemma~\ref{lemma:maxatintegral2} that for any feasible fractional solution, the inequality output by \textsc{Fixed-$\ell_p$-SepOracle} is satisfied.

Now, suppose there exists some $\bd, \sol$ such that $\fracc_p(\bd) > \alpha \sol_p(\bd) + \beta r$ (for $r \geq 0$). Consider the values of $Y, M$ iterated over by \textsc{Fixed-$\ell_p$-SepOracle} such that $\sum_{j \in C \setminus C_f} d_j \sum_{i \in F}  c_{ij}^p y_{ij}\leq Y \leq (1+\epsilon)( \sum_{j \in C \setminus C_f} d_j \sum_{i \in F}  c_{ij}^p y_{ij})$ and $\gamma \sol_p(C_f) \leq M \leq\gamma (1+\epsilon) \sol_p(C_f)$. Then:

\begin{equation*}
\begin{aligned}[rll]
\left(\sum_{j \in C'} \sum_{i\in F} c_{ij}^p y_{ij}\right)^{1/p} &> \alpha \sol_p(C') + \beta r& \\
\left(\sum_{j \in C'} \sum_{i\in F} c_{ij}^p y_{ij}\right)^{1/p} - \alpha \sol_p(C')&> \beta r& \\
\frac{\sum_{j \in C'} \sum_{i\in F} c_{ij}^p y_{ij} - \alpha^p \sol_p^p(C')}{\left(\sum_{j \in C'} \sum_{i\in F} c_{ij}^p y_{ij}\right)^{1-1/p}}&> \beta r&(i) \\
\frac{(1+\epsilon)\left(\sum_{j \in C'} \sum_{i\in F} c_{ij}^p y_{ij} - \alpha^p \sol_p^p(C')\right)}{\left(Y_f + Y\right)^{1-1/p}}&> \beta r&(ii) \\
\sum_{j \in C'} \sum_{i\in F} c_{ij}^p y_{ij} - \alpha^p \sol_p^p(C')&> \frac{\beta r \left(Y_f + Y\right)^{1-1/p}}{1 + \epsilon} \\
\sum_{j \in C' \setminus C_f} \sum_{i\in F} c_{ij}^p y_{ij} - \alpha^p \sol_p^p(C' \setminus C_f)&> \frac{\beta r (Y_f + Y)^{1-1/p}}{1 + \epsilon}+\alpha^p \sol_p^p(C_f) -  \sum_{j \in C_f} \sum_{i\in F} c_{ij}^p y_{ij} \\
\sum_{j \in C' \setminus C_f} \sum_{i\in F} c_{ij}^p y_{ij} - \alpha^p \sol_p^p(C' \setminus C_f)&> \frac{\beta r (Y_f + Y)^{1-1/p}}{1 + \epsilon}+\alpha^p \left(\frac{M}{\gamma(1+\epsilon)}\right)^p -  \sum_{j \in C_f} \sum_{i\in F} c_{ij}^p y_{ij}. & (iii)\\
\end{aligned}
\end{equation*}

$(i)$ follows from the fact that $\sol_p(C') > \sum_{j \in C'}\sum_{i \in F}c_{ij}^p$ if $a > b$. $(ii)$ follows from definitions of $Y_f, Y$. $(iii)$ follows from the choice of $M$. 
Let $\bd'_{C'}$ denote the vector whose $j$th element is $\bd'_j$ if $j \in C'$ and $0$ otherwise.  By the analysis in Section~\ref{thm:fixedkmalgorithm}, since $\sol$ is $M/\gamma$-cheap it is in the independence system that \textsc{GreedyMax} finds a $1/2$-maximizer for. That is, \textsc{GreedyMax} outputs some $S$ and \textsc{Fixed-$\ell_p$-Oracle} finds some $\bd'$ such that $S$ is $M$-cheap, $\sum_{j \in C \setminus C_f} d_j' \sum_{i \in F} c_{ij}^p y_{ij} \leq Y$, and such that:

\begin{equation*}
\begin{aligned}[rll]
\sum_{j \in C' \setminus C_f} d_j'\sum_{i\in F} c_{ij}^p y_{ij} - \alpha^p S_p^p(\bd'_{C \setminus C_f})&> \frac{1}{2}\left[ \frac{\beta r (Y_f + Y)^{1-1/p}}{1 + \epsilon}+\alpha^p \left(\frac{M}{\gamma(1+\epsilon)}\right)^p -  \sum_{j \in C_f} \sum_{i\in F} c_{ij}^p y_{ij}\right] &\\
\sum_{j \in C' \setminus C_f} d_j'\sum_{i\in F} c_{ij}^p y_{ij} - S_p^p(\bd'_{C \setminus C_f})&> \frac{1}{2}\left[ \frac{\beta r (Y_f + Y)^{1-1/p}}{1 + \epsilon}+\alpha^p \left(\frac{S_p(C_f)}{\gamma(1+\epsilon)}\right)^p\right] -  \sum_{j \in C_f} d_j' \sum_{i\in F} c_{ij}^p y_{ij} & (iv)\\
\sum_{j \in C' \setminus C_f} d_j'\sum_{i\in F} c_{ij}^p y_{ij} - S_p^p(\bd'_{C \setminus C_f})&> \frac{\beta r (Y_f + Y)^{1-1/p}}{2(1 + \epsilon)}+S_p^p(C_f) -  \sum_{j \in C_f} d_j' \sum_{i\in F} c_{ij}^p y_{ij} & (v)\\
\sum_{j \in C'} d_j'\sum_{i\in F} c_{ij}^p y_{ij} - S_p^p(\bd')&> \frac{\beta r (Y_f + Y)^{1-1/p}}{2(1 + \epsilon)} &\\
\frac{\sum_{j \in C'} d_j'\sum_{i\in F} c_{ij}^p y_{ij} - S_p^p(\bd')}{p (Y_f + Y)^{1-1/p}}&> \frac{\beta r }{2p(1 + \epsilon)} &\\
\frac{\sum_{j \in C'} d_j'\sum_{i\in F} c_{ij}^p y_{ij} - S_p^p(\bd')}{p (Y_f + Y)^{1-1/p}}&> r. & (vi)\\
\end{aligned}
\end{equation*}
$(iv)$ follows the $M$-cheapness of $S$. $(v)$ follows from the choice of $\alpha$. $(vi)$ follows from the choice of $\beta$. So, \textsc{Fixed-$\ell_p$-SepOracle} outputs an inequality as desired.
\end{proof}

\subsection{Rounding the Fractional Solution}

Again, we show how to generalize the approach for rounding fractional solutions for $k$-median with fixed clients to round fractional solutions for $\ell_p$ clustering with fixed clients. We extend Lemma~\ref{lemma:gaprealization} as follows:

\begin{lemma}\label{lemma:gaprealization2}
Suppose $\alg$ and $\sol$ are two (possibly virtual) solutions to an $\ell_p$-clustering instance with fixed clients $C_f$, such that there is a subset of clients $C^* \subset (C \setminus C_f)$ such that for every client in $C^*$ $\alg$'s connection cost is greater than $p$ times $\sol$'s connection cost, and for every client in $C \setminus C_f \setminus C^*$, $\sol$'s connection cost is at least $\alg$'s connection cost. Then 

\begin{equation*}
f(C') :=  
\begin{cases} 
\frac{\alg_p^p(C') - \sol_p^p(C')}{\alg_p^{p-1}(C')} & \alg_p^p(C') > 0 \\
0 & \alg_p^p(C') = 0
\end{cases}
\end{equation*}
is maximized by $C_f \cup C^*$.

\end{lemma}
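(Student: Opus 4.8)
The plan is to reproduce the proof of Lemma~\ref{lemma:gaprealization} almost verbatim, the only structural change being that the fixed clients $C_f$ are forced into every realization and already lie inside the candidate maximizer $C_f\cup C^*$, so every ``exchange'' step will be performed on \emph{non-fixed} clients only. Concretely, I would first recall the continuous‑extension device from that proof: for a realization $C'$ and a non-fixed client $j\notin C'$, writing $a$ and $s$ for the $p$‑th powers of $j$'s distances to $\alg$ and to $\sol$, set
\[
\tilde f_j(C',x)=\frac{\alg_p^p(C')+ax-\sol_p^p(C')-sx}{\bigl(\alg_p^p(C')+ax\bigr)^{(p-1)/p}},\qquad x\in[0,1],
\]
so that $\tilde f_j(C',0)=f(C')$ and $\tilde f_j(C',1)=f(C'\cup\{j\})$ whenever these are positive, and — by exactly the derivative computation carried out in the proof of Lemma~\ref{lemma:gaprealization} — whenever $\tilde f_j(C',x)>0$ the sign of $\tfrac{d}{dx}\tilde f_j(C',x)$ is the sign of $(a-s)-\tfrac{a(p-1)}{p}\cdot\frac{\alg_p^p(C')+ax-\sol_p^p(C')-sx}{\alg_p^p(C')+ax}$, a quantity that is $>\tfrac ap-s$ and $<a-s$.

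Next I would use the hypothesis to classify the non-fixed clients and read off two monotonicity facts. If $j\in C^*$ then $c^{\alg}_j>p\,c^{\sol}_j$, hence $a>p^p s\ge ps$ (using $p^p\ge p$ for $p\ge1$; note $a>0$ even when $s=0$), so $\tfrac ap-s>0$ and $\tilde f_j$ is strictly increasing wherever it is positive — thus for any realization $C'$ with $f(C')>0$ and $j\notin C'$ we get $f(C'\cup\{j\})>f(C')$. If $j\in C\setminus C_f\setminus C^*$ then $c^{\sol}_j\ge c^{\alg}_j$, hence $a\le s$, so $a-s\le0$ and $\tilde f_j$ is non-increasing wherever it is positive — thus for any realization $C'$ with $f(C')>0$ and $j\in C'$ we get $f(C'\setminus\{j\})\ge f(C')$ (and $\alg_p^p(C'\setminus\{j\})>0$ here, so the continuous extension is legitimate, since $\alg_p^p(C'\setminus\{j\})=0$ together with $a\le s$ would force $f(C')\le0$).

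Finally I would close with an exchange argument over a maximizer. Since the realizations form a finite set, a maximizer $\hat C$ of $f$ over $\{C':C_f\subseteq C'\subseteq C\}$ exists; as in the analogous rounding step (and exactly as the degenerate/all‑zero realization is dispensed with in the proof of Lemma~\ref{lemma:gaprealization}), we may assume some realization has strictly positive $f$‑value, so $f(\hat C)>0$. By the first monotonicity fact, $\hat C$ cannot omit any client of $C^*$ — omitting one would strictly increase $f$ — so $C^*\subseteq\hat C$. By the second monotonicity fact, deleting any non-fixed non‑$C^*$ client of $\hat C$ yields a set with $f$‑value $\ge f(\hat C)$, hence again a maximizer with positive value; iterating this (each deletion is legal because only non-fixed clients are removed, so $C_f$ stays inside, and it strictly decreases the cardinality) we reach a maximizer $\tilde C$ containing $C_f$ and $C^*$ but no client of $C\setminus C_f\setminus C^*$, i.e.\ $\tilde C=C_f\cup C^*$. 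Hence $C_f\cup C^*$ attains the maximum of $f$. The one genuinely delicate point — identical to the subtlety already handled in Lemma~\ref{lemma:gaprealization} — is ensuring every intermediate set stays both feasible and in the $f>0$ regime so that the derivative analysis keeps applying; confining all exchanges to non-fixed clients is precisely what makes this go through, since then $C_f$ is never touched.
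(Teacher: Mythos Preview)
Your proposal is correct and follows exactly the approach the paper intends: the paper's own proof is the single line ``The proof follows exactly as that of Lemma~\ref{lemma:gaprealization},'' and you have spelled out precisely that argument with the only necessary modification, namely restricting all exchange steps to non-fixed clients so that $C_f$ remains in every intermediate realization. Your handling of the degenerate case (assuming some realization has $f>0$) matches how the paper disposes of it in the surrounding rounding lemma.
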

The proof follows exactly as that of Lemma~\ref{lemma:gaprealization}.

\begin{lemma}\label{lemma:lp-fixed-rounding}
There exists an algorithm that given any $(\alpha, \beta)$-approximate universal fractional solution for $\ell_p$-clustering with fixed clients, outputs a $(54p\alpha, 54p\beta + 18p^{1/p})$-approximate universal integral solution.
\end{lemma}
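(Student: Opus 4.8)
The plan is to mirror the rounding pipeline of Lemma~\ref{lemma:lp-rounding} for (non-fixed) $\ell_p$-clustering, modifying only the discounted instance to treat fixed clients as in the rounding step of Theorem~\ref{thm:fixedkmalgorithm}: fixed clients receive discount $0$, while each unfixed client $j$ receives discount $2\cdot 3 f_j$, where $f_j$ is the connection cost of client $j$ in the given fractional solution $\fracc$. Concretely, let $\sol$ be the virtual solution with connection cost $3 f_j$ for every client, build the $\ell_p^p$-clustering with discounts instance with discount $r_j = 0$ for $j \in C_f$ and $r_j = 2\cdot 3 f_j$ for $j \in C \setminus C_f$, and let $\alg$ be the solution returned by the $(9^p, \tfrac23 9^p)$-approximation of Lemma~\ref{lemma:kmbicriteria-lpp} on this instance. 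As in Lemma~\ref{lemma:lp-rounding}, also form the second virtual solution $\widetilde{\sol}$ by multiplying $\sol$'s connection cost by $p$ for exactly those unfixed clients $j$ whose $\alg$-cost lies strictly between $18$ and $18p$ times their $\sol$-cost, leaving all remaining clients (including every fixed client) with their $\sol$-cost. Since only unfixed clients are modified, the exceptional set in the farness condition is disjoint from $C_f$, so $\alg$ and $18\cdot\widetilde{\sol}$ satisfy the hypothesis of Lemma~\ref{lemma:gaprealization2}, and $18\cdot\widetilde{\sol}$ is a $(54p\alpha, 54p\beta)$-approximation whenever $\fracc$ is $(\alpha,\beta)$-approximate.

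\textbf{Key steps.} Set $\regalgsol := \max_{C_f \subseteq C' \subseteq C}[\alg_p(C') - 18\cdot\widetilde{\sol}_p(C')]$; if $\regalgsol = 0$ the lemma is immediate, so assume it is positive and let $C_1$ attain it. By Lemma~\ref{lemma:gaprealization2}, $C_1 = C_f \cup \{j \in C\setminus C_f : \alg\text{'s cost} > 18\cdot\widetilde{\sol}\text{'s cost}\}$, and this same $C_1$ maximizes $\alg_p^p(C') - 18^p\widetilde{\sol}_p^p(C')$. Factoring $a - b = (a^p - b^p)/\sum_{j=0}^{p-1} a^j b^{p-1-j}$ with $a = \alg_p(C_1)$, $b = 18\widetilde{\sol}_p(C_1)$, bounding the denominator below by $\alg_p^{p-1}(C_1)$, and using $\widetilde{\sol}_p \ge \sol_p$ pointwise, we obtain
\[
\regalgsol \;\le\; \frac{\max_{C_f\subseteq C'\subseteq C}\left[\alg_p^p(C') - 18^p\cdot\sol_p^p(C')\right]}{\alg_p^{p-1}(C_1)}.
\]
The numerator is exactly the objective of the $\ell_p^p$-clustering with discounts instance solved by $\alg$ (with $\gamma = 9$ and the $r_j$ above, the $(\cdot)^+$ clip being vacuous on $C_f$ since $C_f \subseteq C_1$), so Lemma~\ref{lemma:kmbicriteria-lpp} bounds it by $\tfrac23 9^p$ times the optimum of that instance, which is at most $\tfrac23 9^p \cdot \max_{C_f\subseteq C'\subseteq C}[\mrs_p^p(C') - 2^p\cdot\sol_p^p(C')]$ (the cost of $\mrs$ in the discounts instance, again using that the clip is inactive on $C_f$). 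From here the argument proceeds verbatim as in Lemma~\ref{lemma:lp-rounding}: factor the numerator as $(\mrs_p(C_2) - 2\sol_p(C_2))\sum_{j} \mrs_p^j(C_2)(2\sol_p(C_2))^{p-1-j} \le \mr\cdot\sum_j \mrs_p^j(C_2)(2\sol_p(C_2))^{p-1-j}$, using $2\sol = 6\fracc$ together with the integrality-gap-$3$ bound of~\cite{ArcherRS03} and $C_f \subseteq C_2$, then split into the cases $\alg_p(C_1) \ge 9p^{1/p}\mrs_p(C_2)$ and $\alg_p(C_1) < 9p^{1/p}\mrs_p(C_2)$; the factor $2$ in the discounts is what forces $\sol_p(C_2) \le \mr$ and hence $\mrs_p(C_2) \le 2\mr$ in the second case. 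This yields $\regalgsol \le 18 p^{1/p}\cdot\mr$. Finally, since $\widetilde{\sol}$'s costs are at most $p$ times $\sol$'s, we have $18\cdot\widetilde{\sol}_p(C') \le 54p\cdot\fracc_p(C')$, so $\alg_p(C') \le 54p\cdot\fracc_p(C') + 18p^{1/p}\cdot\mr$ for every realization; combining with $\fracc_p(C') \le \alpha\opt(C') + \beta\mr$ gives the claimed $(54p\alpha,\ 54p\beta + 18p^{1/p})$-approximation.

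\textbf{Main obstacle.} The delicate point is the bookkeeping around fixed clients in the reduction to $\ell_p^p$-clustering with discounts: because every realization contains $C_f$, the fixed clients' costs must enter \emph{unclipped} in all three places where a $(\cdot)^+$ appears — the objective of the discounts instance solved by $\alg$, the cost of $\mrs$ in that instance, and the relation of that cost to $\mr$ — and one must verify that assigning them discount $0$ makes each $(c_j^p - \gamma^p r_j^p)^+$ collapse to $c_j^p$ so that these quantities line up exactly as in Eqs.~(\ref{eq:frac-opt-2})–(\ref{eq:frac-opt-4}) of Theorem~\ref{thm:fixedkmalgorithm}, now carried through the non-linear $\ell_p$-to-$\ell_p^p$ conversion and the extra factor-$p$ virtual solution $\widetilde{\sol}$. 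This is also precisely what dictates that $\widetilde{\sol}$ be formed by modifying only unfixed clients: the farness hypothesis of Lemma~\ref{lemma:gaprealization2} requires the exceptional client set $C^*$ to be disjoint from $C_f$.
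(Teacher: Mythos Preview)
Your overall approach matches the paper's exactly: same discounted instance (zero discount on $C_f$, discount $2\cdot 3f_j$ on unfixed clients), same virtual solution $\widetilde{\sol}$, same invocation of Lemma~\ref{lemma:gaprealization2}, and the same case split borrowed from Lemma~\ref{lemma:lp-rounding}. However, the step where you identify the numerator with the discounts objective is incorrect, and this is precisely the bookkeeping you flagged as the main obstacle.

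With discount $r_j=0$ on $C_f$ and $r_j=2\sol_j$ on $C\setminus C_f$, the $(9^p,\tfrac23 9^p)$-approximation's objective for $\alg$ is
\[
\sum_{j\in C_f} c_j^p \;+\; \sum_{j\in C\setminus C_f}(c_j^p-18^p\sol_j^p)^+
\;=\; \max_{C_f\subseteq C'\subseteq C}\bigl[\alg_p^p(C') - 18^p\sol_p^p(C'\setminus C_f)\bigr],
\]
which is \emph{not} your numerator $\max_{C'}[\alg_p^p(C')-18^p\sol_p^p(C')]$; the two differ by exactly $18^p\sol_p^p(C_f)$. The same discrepancy appears on the $\mrs$ side: the cost of $\mrs$ in the discounts instance is $\max_{C'}[\mrs_p^p(C')-2^p\sol_p^p(C'\setminus C_f)]$, which is \emph{larger} than your claimed $\max_{C'}[\mrs_p^p(C')-2^p\sol_p^p(C')]$. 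So you cannot simply ``bound by $\tfrac23 9^p$ times the optimum'' and land where you wrote; the inequality runs the wrong way at the $\mrs$ step. The paper handles this by explicitly splitting off the $-18^p\sol_p^p(C_f)$ term \emph{before} applying Lemma~\ref{lemma:kmbicriteria-lpp} and carrying it through:
\[
\max_{C'}\bigl[\alg_p^p(C')-18^p\sol_p^p(C')\bigr]
= \max_{C'}\bigl[\alg_p^p(C')-18^p\sol_p^p(C'\setminus C_f)\bigr]-18^p\sol_p^p(C_f),
\]
then bounding the first term by $\tfrac23 9^p\max_{C'}[\mrs_p^p(C')-2^p\sol_p^p(C'\setminus C_f)]$, and finally reabsorbing the $C_f$ piece via $-18^p\sol_p^p(C_f)\le -\tfrac23 9^p\cdot 2^p\sol_p^p(C_f)$ to recover $\tfrac23 9^p\max_{C'}[\mrs_p^p(C')-2^p\sol_p^p(C')]$. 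Once this correction is made, the remainder of your argument goes through as written.
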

\begin{proof}
Let $\sol$ be the virtual solution whose connection costs are 3 times the fractional solution's for all clients. The algorithm is to solve the $\ell_p^p$-clustering with discounts instance using Lemma~\ref{lemma:kmbicriteria-lpp} where the discounts are 0 for fixed clients and $2$ times $\sol$'s connection costs for the remaining clients. Note that using these discounts, the $\ell_p^p$-clustering with discounts objective equals $\max_{C_f \subseteq C' \subseteq C}\left[\alg^p_p(C') - 2^p \cdot \sol^p_p(C' \setminus C_f)\right]$ instead of $\max_{C_f \subseteq C' \subseteq C}\left[\alg^p_p(C') - 2^p \cdot \sol^p_p(C')\right]$. Let $\alg$ be the output solution. We will again bound $\alg$'s cost against the virtual solution $\widetilde{\sol}$ whose connection costs are $\sol$'s connection costs times $p$ for non-fixed clients $j$ such that $\alg$'s connection cost to $j$ is at least $18$ times $\sol$'s but less than $18p$ times $18 \cdot \sol$'s, and the same as $\sol$'s for the remaining clients.

We use $\max_{C'}$ to denote $\max_{C_f \subseteq C' \subseteq C}$. If $\max_{C'}\left[\alg_p(C') - 18 \widetilde{\sol}(C')\right] \leq 0$ then $\alg$'s cost is always bounded by 18 times $\widetilde{\sol}$'s cost and we are done. So assume $\max_C'[\alg_p(C') - 18\widetilde{\sol}_p(C')] > 0$.  Let $C_1 = \argmax_{C'}\left[\alg_p(C') - 18 \widetilde{\sol}_p(C')\right]$ and $C_2 = \argmax_{C'}\left[\mrs_p^p(C') - 2^p \cdot \sol_p^p(C')\right]$. Like in the proof of Lemma~\ref{lemma:lp-rounding}, via Lemma~\ref{lemma:gaprealization2} we have:

$$\max_{C'}\left[\alg_p(C') - 18 \widetilde{\sol}_p(C') \right] =
\frac{\max_{C'}\left[\alg_p^p(C') - 18^p{\sol}_p^p(C')\right]}{\alg_p^{p-1}(C_1)} =$$
$$\frac{\max_{C'}\left[\alg_p^p(C') - 18^p{\sol}_p^p(C' \setminus C_f)\right] - 18^p{\sol}_p^p(C_f)}{\alg_p^{p-1}(C_1)} \leq $$
$$\frac{2}{3} \cdot 9^p\frac{ \max_{C'}\left[\mrs_p^p(C') - 2^p{\sol}_p^p(C' \setminus C_f)\right] - 18^p{\sol}_p^p(C_f)}{\alg_p^{p-1}(C_1)} \leq $$
$$\frac{2}{3} \cdot 9^p \frac{\max_{C'}\left[\mrs_p^p(C') - 2^p{\sol}_p^p(C')\right]}{\alg_p^{p-1}(C_1)}.$$

Using the same analysis as in Lemma~\ref{lemma:lp-rounding} we can upper bound this final quantity by $18p^{1/p} \cdot \mr$, proving the lemma. 
\end{proof}

Theorem~\ref{thm:lp-fixed} follows from Lemmas~\ref{lemma:lp-fixed-solving} and~\ref{lemma:lp-fixed-rounding}.
\section{Universal $k$-Center with Fixed Clients}
\label{sec:kcalg2}

In this section, we discuss how to extend the proof of Theorem~\ref{thm:kcenteralg} to prove the following theorem:
\begin{theorem}
There exists a $(9, 3)$-approximate  algorithm for universal $k$-center with fixed clients.
\end{theorem}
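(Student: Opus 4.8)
The plan is to run the greedy algorithm of Theorem~\ref{thm:kcenteralg} essentially verbatim, but with a more refined per-client ``target radius.'' Write $C_f$ for the set of fixed clients, so a realization is any $C'$ with $C_f \subseteq C' \subseteq C$, and for each client $j$ set $q_j := \opt(C_f \cup \{j\})$ (so $q_j = \opt(C_f)$ when $j \in C_f$). The quantity $q_j$ will play exactly the role that $\min_{i \in F} c_{ij}$ played in the fixed-client-free case, for two reasons. First, $C_f \cup \{j\}$ is itself a legal realization, so $\mrs(C_f \cup \{j\}) - q_j \le \mr$, and since $\mrs(C_f \cup \{j\}) \ge \min_{i \in \mrs} c_{ij}$, we get $\min_{i \in \mrs} c_{ij} \le q_j + \mr$ for every $j$. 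Second, for any realization $C'$ containing $j$ we have $C_f \cup \{j\} \subseteq C'$, hence $q_j \le \opt(C')$. The one difficulty is that $q_j$ is \np-hard to compute (it is a $k$-supplier optimum) and there are too many $q_j$'s to guess jointly; so I would instead compute, for each $j$, an estimate $\hat q_j$ with $q_j \le \hat q_j \le 3 q_j$ by running the Hochbaum--Shmoys $3$-approximation for $k$-center (i.e., $k$-supplier, as $F \neq C$ in general)~\cite{HochbaumS86} on the instance with facilities $F$ and clients $C_f \cup \{j\}$. This single approximation is the only place a constant is lost relative to the $(3,3)$ bound of Theorem~\ref{thm:kcenteralg}, and it is exactly what turns $(3,3)$ into $(9,3)$.

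Given a guess $r \ge 0$ for $\mr$, the algorithm sets $r_j := \hat q_j + r$, orders the clients in nondecreasing order of $r_j$ (equivalently of $\hat q_j$), and, processing them in that order, adds $j$'s closest facility in $F$ whenever no already-chosen facility lies within distance $3 r_j$ of $j$. By construction every client $j$ ends up within $3 r_j = 3\hat q_j + 3r \le 9 q_j + 3r$ of the chosen facilities, and its nearest facility is at distance $\min_{i \in F} c_{ij} \le q_j \le r_j$. The crux is the bound on the number of facilities: when $r \ge \mr$, at most $k$ are chosen. The proof is the mapping argument of Theorem~\ref{thm:kcenteralg}: send each chosen facility (added because of some client $j$) to the facility of $\mrs$ closest to $j$, which is within $\min_{i \in \mrs} c_{ij} \le q_j + \mr \le q_j + r \le r_j$ of $j$; if two chosen facilities $i_1$ (added first, for $j_1$) and $i_2$ (for $j_2$, so $r_{j_1} \le r_{j_2}$) were sent to the same facility $i'$ of $\mrs$, then the path $(j_2, i', j_1, i_1)$ would give $c_{i_1 j_2} \le c_{i' j_2} + c_{i' j_1} + c_{i_1 j_1} \le r_{j_2} + r_{j_1} + r_{j_1} \le 3 r_{j_2}$, so $j_2$ was already covered and $i_2$ was never added --- a contradiction. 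Hence the map is injective and at most $k = |\mrs|$ facilities are chosen.

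Putting the pieces together: for any realization $C'$, $\alg(C') = \max_{j \in C'} \min_{i \in \alg} c_{ij} \le \max_{j \in C'}(9 q_j + 3r) \le 9\,\opt(C') + 3r$, using $q_j \le \opt(C')$ for $j \in C'$. Since $\mr$ is a difference of two facility-to-client distances, it lies among polynomially many candidate values; I would run the greedy algorithm for each candidate value of $r$ in increasing order and return the solution from the first $r$ for which at most $k$ facilities are chosen. That $r$ is at most $\mr$ (because the run at $r = \mr$ already chooses at most $k$ facilities), so the returned solution is feasible and satisfies $\alg(C') \le 9\,\opt(C') + 3\,\mr$ for every realization, which is the claimed $(9,3)$ guarantee. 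The main obstacle --- and the only genuinely new ingredient compared with Theorem~\ref{thm:kcenteralg} --- is identifying $\opt(C_f \cup \{j\})$ as the right per-client surrogate and observing that, although it is neither computable nor jointly guessable, one independent $k$-supplier approximation per client recovers it up to a constant factor and hence costs only a constant factor in $\alpha$ (reducing to the $(3,3)$ algorithm exactly when $C_f = \emptyset$, since then $q_j = \min_{i\in F} c_{ij}$ is known exactly).
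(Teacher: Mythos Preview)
Your proposal is correct and is essentially the same as the paper's proof: the paper also sets $r_j := \apx(C_f \cup \{j\}) + r$ where $\apx$ is a $3$-approximation to $k$-center (your $\hat q_j$), runs the greedy algorithm of Theorem~\ref{thm:kcenteralg} with these $r_j$, and concludes via $3r_j \le 9\cdot\opt(C_f\cup\{j\}) + 3\mr \le 9\cdot\opt(C') + 3\mr$.
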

\begin{proof}
To extend the proof of Theorem~\ref{thm:kcenteralg} to the case where fixed clients are present, let $\apx(C')$ denote the cost of a $3$-approximation to the $k$-center problem with client set $C'$; it is well known how to compute $\apx(C')$ in polynomial time \cite{HochbaumS86}. A solution with regret $r$ must be within distance $r_j := \apx(C_s \cup \{j\}) + r$ of client $j$, otherwise in realization $C_s \cup \{j\}$ the solution has regret larger than $r$ due to client $j$. The same algorithm as in the proof of Theorem~\ref{thm:kcenteralg} using this definition of $r_j$ finds $\alg$ within distance $3r_j = 3 \cdot \apx(C_s \cup \{j\}) + 3 \cdot \mr \leq 9 \cdot \opt(C_s \cup \{j\}) + 3 \mr$ of client $j$. $\opt(C') \geq \opt(C_s \cup \{j\})$ for any realization $C'$ and any client $j \in C'$, so this solution is a $(9, 3)$-approximation.
\end{proof}
\section{Hardness of Universal Clustering for General Metrics}\label{section:hardness}

In this section we give some hardness results to help contextualize the algorithmic results.  Much like the hardness results for $k$-median, all our reductions are based on the NP-hardness of approximating set cover (or equivalently, dominating set) due to the natural relation between the two types of problems. We state our hardness results in terms of $\ell_p$-clustering. Setting $p = 1$ gives hardness results for $k$-median, and setting $p = \infty$ (and using the convention $1/\infty = 0$ in the proofs as needed) gives hardness results for $k$-center.

\subsection{Hardness of Approximating $\alpha$}

\begin{theorem}
\label{theorem:alphahardness}
For all $p \geq 1$, finding an $(\alpha, \beta)$-approximate solution for universal $\ell_p$-clustering where $\alpha < 3$ is NP-hard.
\end{theorem}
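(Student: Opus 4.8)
The plan is to reduce from the NP-hardness of approximating dominating set (equivalently set cover). Recall that it is NP-hard to distinguish, given a graph $G$ on $n$ vertices and an integer $k$, whether $G$ has a dominating set of size $k$ or every dominating set has size $> k$; in fact a stronger gap version holds, but we will only need the exact/gap threshold version for a constant lower bound on $\alpha$. Given such an instance $(G,k)$, I would build a universal $\ell_p$-clustering instance whose metric is (essentially) the shortest-path metric of $G$ with all edge lengths $1$, taking $F = C = V(G)$, and asking for $k$ cluster centers. The key observation I want to engineer is that the minimum regret $\mr$ is $0$ exactly when $G$ has a dominating set of size $k$: if $S$ is a dominating set of size $k$, then in \emph{every} realization $C'$ each client is at distance $\le 1$ from $S$, and in every realization $\opt(C')\ge$ (something) so that $S(C') - \opt(C') = 0$; conversely if $\mr = 0$ then the chosen $k$ centers must be at distance $0$ from every single client in each singleton realization, forcing them to dominate.

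The heart of the argument is to make singleton (or small) realizations pin down both $\mr$ and the behavior of $\opt$, so that a purported $(\alpha,\beta)$-approximation with $\alpha < 3$ is forced to produce an exact dominating set. Concretely: when $\mr = 0$, the guarantee $\sol(C') \le \alpha\cdot\opt(C') + \beta\cdot\mr = \alpha\cdot\opt(C')$ must hold for all $C'$. Using singleton realizations $C' = \{j\}$ where $j$ is a vertex not adjacent to any ``obviously good'' center, one gets $\opt(\{j\})$ small (e.g. a co-located center gives $0$, or an adjacent center gives distance $1$) while a center two hops away costs $2$ and three hops costs $3$. By carefully choosing gadget distances — e.g. attaching pendant structure or scaling so that the only way to be within a factor $\alpha < 3$ of the optimum on every singleton is to have a center within distance $1$ of every vertex — I can force the algorithm's $k$ centers to form a dominating set. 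Since distinguishing $\mr = 0$ from $\mr > 0$ (i.e. dominating-set-of-size-$k$ vs not) is NP-hard, and an $(\alpha,\beta)$-approximation with $\alpha<3$ would let us decide this, we get the theorem. The role of $p$ here is mild: for singleton realizations the $\ell_p$ objective equals the single client's connection cost regardless of $p$, so the same reduction works uniformly for all $p\ge 1$ (including $p=\infty$, i.e. $k$-center), which is why the bound is stated for all $\ell_p$.

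The step I expect to be the main obstacle is choosing the metric/gadget so that the factor-$3$ threshold is tight — i.e. so that \emph{any} deviation from a genuine dominating set is penalized by at least a factor $3$ on some realization, while a true dominating set stays within factor $1$ on all realizations (not just singletons, but also larger realizations where $\opt$ could itself be forced to pay). This requires ensuring that for a non-dominating choice of centers there is a client at shortest-path distance exactly $3$ (or the configuration costs $3\times$ the optimum) while the optimum for that realization is exactly $1$; getting distances of exactly $1$ and exactly $3$ to coexist cleanly typically needs an auxiliary vertex/edge gadget (a path of length $3$ hanging off each vertex, or similar) and a careful check that adding these gadgets does not create a cheaper dominating configuration or change $\mr$. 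A secondary technical point is handling the ``$\beta$ is arbitrary'' claim: since in the yes-case $\mr = 0$, the term $\beta\cdot\mr$ vanishes identically, so the hardness of achieving $\alpha < 3$ holds irrespective of $\beta$ — I would make this explicit at the end.
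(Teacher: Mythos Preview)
Your high-level strategy --- reduce from set cover/dominating set, arrange that $\mr=0$ in the yes-case, then use singleton realizations so the $\beta\cdot\mr$ term vanishes and the guarantee becomes a pure multiplicative bound on $\alpha$ --- is exactly the paper's approach. The observation that singleton realizations make the argument uniform in $p$ is also correct.

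However, your concrete instantiation with $F=C=V(G)$ and the shortest-path metric does not work, and the gadgets you gesture at in the last paragraph do not obviously repair it. The problem is precisely the one you half-identify: with $F=C$, every singleton has $\opt(\{j\})=0$ (place a center at $j$), so (i) no set of $k<|V|$ centers can achieve regret $0$ --- your claimed ``$\mr=0$ iff a dominating set exists'' is false, since a dominating set $S$ still pays $S(\{j\})\ge 1 > 0 = \opt(\{j\})$ for $j\notin S$ --- and (ii) even if $\mr$ were $0$, the singleton bound would read $\sol(\{j\})\le \alpha\cdot 0 + \beta\cdot 0 = 0$, which is a constraint forcing $j$ to be a center (so $k=|V|$), not a constraint on $\alpha$. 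Paths of length $3$ hanging off vertices do not help: they either leave $\opt(\{j\})=0$ for the original vertices or move the problem to the pendant leaves, where the same issue recurs.

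The fix, which is what the paper does, is to \emph{separate} $F$ and $C$ so that the minimum client-to-facility distance is exactly $1$ rather than $0$. Concretely: clients are elements, cluster centers are sets, and the distance is $1$ if the element is in the set and $3$ otherwise (this is a metric). Now $\opt(\{j\})=1$ for every singleton, a set cover achieves cost $|C'|^{1/p}$ in every realization which equals $\opt(C')$ (so $\mr=0$), and any non-cover leaves some client $j$ at distance $3$, giving $3\le \alpha\cdot 1 + \beta\cdot 0=\alpha$. No auxiliary gadgets are needed.
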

\begin{proof}

We will show that given a deterministic $(\alpha, \beta)$-approximate algorithm where $\alpha < 3$, we can design an algorithm (using the $(\alpha, \beta)$-approximate algorithm as a subroutine) that solves the set cover problem (i.e. finds a set cover of size $k$ if one exists) giving the lemma by NP-hardness of set cover. The algorithm is as follows: Given an instance of set cover, construct the following instance of universal $\ell_p$-clustering:

\begin{itemize}
\item For each element, there is a corresponding client in the universal $\ell_p$-clustering instance.
\item For each set $S$, there is a cluster center which is distance 1 from the clients corresponding to elements in $S$ and 3 from other all clients.
\end{itemize}

Then, we just run the universal $\ell_p$-clustering algorithm on this instance, and output the sets corresponding to cluster centers this algorithm buys. 

Assume for the rest of the proof that a set cover of size $k$ exists. Then the corresponding $k$ cluster centers are as close as possible to every client, and are always an optimal solution. This gives that $\mr = 0$ for this universal $\ell_p$-clustering instance. 

Now, suppose by contradiction that this algorithm does not solve the set cover problem. That is, for some set cover instance we run an $(\alpha, \beta)$-approximate algorithm where $\alpha < 3$ on the produced $\ell_p$-clustering instance, and it produces a solution $\alg$ that does not choose cluster centers corresponding to a set cover. This means it is distance 3 from some client $j$. For realization $C' = \{j\}$, we have by the definition of $(\alpha, \beta)$-approximation:

$$\alg(C') \leq \alpha \cdot \opt(C') + \beta \cdot \mr \implies 3 \leq \alpha \cdot 1 + \beta \cdot 0 = \alpha$$

Which is a contradiction, giving the lemma. 
\end{proof}

Note that for e.g. $k$-median, we can classically get an approximation ratio of less than 3. So this theorem shows that the universal version of the problem is harder, even if we are willing to use arbitrary large $\beta$.

\subsection{Hardness of Approximating $\beta$}

We give the following result on the hardness of universal $\ell_p$-clustering.

\begin{theorem}
\label{theorem:betahardness}
For all $p \geq 1$, finding an $(\alpha, \beta)$-approximate solution for universal $\ell_p$-clustering where $\beta < 2$ is NP-hard.
\end{theorem}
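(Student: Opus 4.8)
Looking at the structure of the previous hardness proof (Theorem~\ref{theorem:alphahardness}), the plan is to again reduce from set cover, but now the instance must be designed so that $\mr$ is a small positive quantity rather than zero, and so that any algorithm whose solution is \emph{not} a set cover is forced to incur regret strictly more than $2\cdot\mr$ on some single-client realization.

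\smallskip
\noindent\textbf{Plan of the reduction.} First I would take an instance of set cover with universe of $n$ elements and the question of whether a cover of size $k$ exists. I build a universal $\ell_p$-clustering instance with one client per element, and one cluster center per set. Distances are chosen from a two-value (or small-value) menu so that a set cover of size $k$ yields a solution that is optimal for every realization \emph{except} for singleton realizations, where it is off by exactly some fixed gap; this gap will equal $\mr$. Concretely, a natural choice is: a cluster center corresponding to set $S$ is at distance $1$ from clients whose element lies in $S$, and at distance $3$ from all other clients; additionally include an auxiliary ``near'' cluster center at distance $0$ from each client but arranged so that no $k$ cluster centers can simultaneously be distance $0$ from all clients (e.g., by having more than $k$ such auxiliary centers, each covering only one client). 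Then for a singleton realization $\{j\}$, $\opt(\{j\})=0$ using $j$'s auxiliary center, while a size-$k$ set cover pays exactly $1$ on $\{j\}$; one checks the worst singleton is the binding case, so $\mr = 1$ (achieved by the set-cover solution, which is the \mrs here).

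\smallskip
\noindent\textbf{Forcing $\beta\ge 2$.} Now suppose we had a $(\alpha,\beta)$-approximate algorithm with $\beta<2$. Run it on the constructed instance (assuming a size-$k$ cover exists, so $\mr=1$). If its output \alg\ is not a set cover, then \alg\ is at distance $3$ from some client $j$. Taking the realization $C'=\{j\}$, the approximation guarantee gives
$$\alg(C') \le \alpha\cdot\opt(C') + \beta\cdot\mr = \alpha\cdot 0 + \beta\cdot 1 = \beta,$$
so $3 \le \beta$ — contradicting $\beta<2$ (indeed contradicting $\beta<3$, but $\beta<2$ suffices and is presumably what is needed to make both separate lower bounds clean). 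Hence \alg\ must be a set cover of size $\le k$, so the algorithm solves set cover, and by NP-hardness of set cover this is impossible unless $\mathbf{P}=\mathbf{NP}$. The value $\opt(\{j\})=0$ is what makes the argument work: the client's own auxiliary center is free, so the entire right-hand side collapses to $\beta\cdot\mr$.

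\smallskip
\noindent\textbf{Main obstacle.} The delicate part is designing the distances so that $\mr$ is \emph{exactly} the intended constant and is achieved by the set-cover solution — in particular, verifying that no solution can do better than regret $1$ on every realization (so $\mr\ge 1$) while the set-cover solution achieves regret exactly $1$ (so $\mr\le 1$). This requires checking that on multi-client realizations the set-cover solution, which connects every client at distance $1$, never has regret exceeding $1$: since $\opt(C')\ge$ (the $\ell_p$-norm of the vector of minimum achievable distances), and the only way to beat distance $1$ for a client is its auxiliary center, one must argue (using that there are $>k$ distinct auxiliary centers needed to zero out $>k$ clients, or a similar counting/packing argument, together with subadditivity of $\ell_p$-norms) that the extra cost the set-cover solution pays over \opt\ on any $C'$ is at most $1$ in $\ell_p$-norm. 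Ensuring this holds simultaneously for all $p\ge 1$ (including $p=\infty$ for $k$-center) is where care is needed; the singleton realizations remain the worst case for every $p$ because adding more clients to $C'$ can only increase $\opt(C')$ at least as fast as it increases the set-cover solution's cost. I would handle the $p=\infty$ case with the convention $1/\infty = 0$ as the paper suggests, noting the $\ell_\infty$ regret of the set-cover solution on any realization is still just $1$.
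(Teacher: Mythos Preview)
Your proposal has a genuine gap: the claim that $\mr=1$ in your construction is false, and this breaks the argument. Consider any realization $C'$ with $|C'|=k$. Then $\opt(C')=0$, since one can open the $k$ auxiliary centers for exactly those clients; but the set-cover solution connects each client at distance $1$ and so pays $|C'|^{1/p}=k^{1/p}$. Hence the set-cover solution already has regret $k^{1/p}$ on such a realization, so $\mr\ge k^{1/p}$, not $1$. Your assertion that ``adding more clients to $C'$ can only increase $\opt(C')$ at least as fast as it increases the set-cover solution's cost'' is simply wrong for $|C'|\le k$: $\opt$ stays at $0$ while the set-cover cost grows. With $\mr\ge k^{1/p}$, your singleton argument collapses: a non-cover solution at distance $3$ from some $j$ only yields $3\le \beta\cdot k^{1/p}$, which gives no useful lower bound on $\beta$ for $k$ large (e.g., for $p=1$ this is $\beta\ge 3/k$).

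The paper's proof avoids this by a different construction that forces the ``bad'' case to involve $k$ clients simultaneously. It reduces from dominating set and replaces each vertex by a $k$-clique of co-located clients (with the shortest-path metric on the resulting graph). A dominating-set solution covers every client at distance $\le 1$, and one checks $\mr=k^{1/p}$ (the worst case is any $k$ clients at distance $1$, with $\opt=0$). Crucially, any non-dominating solution leaves an entire $k$-clique at distance $2$; on that realization $C'$ one has $\opt(C')=0$ and $\alg(C')=2k^{1/p}$, so the $(\alpha,\beta)$-guarantee forces $2k^{1/p}\le \beta\cdot k^{1/p}$, i.e., $\beta\ge 2$. The key idea you are missing is that the ``penalty'' for not solving the underlying combinatorial problem must scale with $\mr$, and replicating each element into a size-$k$ block is exactly what makes the penalty $2k^{1/p}$ rather than a constant.
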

% * <bmm@cs.duke.edu> 2018-01-03T23:44:21.863Z:
% 
% State of the theorem seems weird for two reasons.   First, the second condition, beta < 2, seems to imply the first.  So presumably there is something about "deterministic" in the second condition that doesn't hold for the first?
% 
% Second, it seems strange to say that it is NP-hard to find a polynominal time approximation.   You mean an approximation algorithm?  But really,  stating that something is NP-hard doesn't require mentioning polynomial time nor does it require mentioning an algorithm.  And technically NP-hard problems are decision problems...
% 
% I cut down the theorem to only include the interesting part (beta < 2 is NP-hard for deterministic algorithms) and cut the polynomial-time part. Also, the rest of the proof solves a decision problem now - Arun
%
% ^.
\begin{proof}
We will show that given a deterministic $(\alpha, \beta)$-approximate algorithm where $\beta < 2$, we can design an algorithm (using the $(\alpha, \beta)$-approximate algorithm as a subroutine) that solves the dominating set problem (i.e. outputs at most $k$ vertices which are a dominating set of size $k$ if a dominating set of size $k$ exists) giving the lemma by NP-hardness of dominating set. The algorithm is as follows: Given an instance of dominating set $G=(V, E)$, construct the following instance of universal $\ell_p$-clustering:

\begin{itemize}
\item For each vertex $v \in V$, there is a corresponding $k$-clique of clients in the universal $\ell_p$-clustering instance.
\item For each $(u, v) \in E$, connect all clients in $u$'s corresponding clique to all those in $v$'s.
\item Impose the shortest path metric on the clients, where all edges are length $1$.
\end{itemize}

Then, we just run the universal $\ell_p$-clustering algorithm on this instance, and output the set of vertices corresponding to cluster centers this algorithm buys. 

Assume for the rest of the proof that a dominating set of size $k$ exists in the dominating set instance. Then, a dominating set of size $k$ also exists in the constructed universal $\ell_p$-clustering instance (where the cliques this set resides in correspond to the vertices in the dominating set in the original instance). Thus, there is a solution to the universal $\ell_p$-clustering instance that covers all clients at distance at most $1$. 

We will first show this dominating set solution is a minimum regret solution. Given a dominating set solution, note that in any realization of the demands, $\opt$ can cover $k$ locations at distance $0$, and must cover the rest of the clients at distance at least $1$. Thus, to maximize the regret of a dominating set solution, we pick any $k$ clients covered at distance $1$ by the dominating set, and choose the realization including only these clients.

Now, consider any solution which is not a dominating set. For such a solution, there is some $k$-clique covered at distance 2. We can make such a solution incur regret $2k^{1/p}$ by including all $k$ clients in this clique, with the optimal solution being to buy all cluster centers in this clique at cost $0$. Thus, the dominating set solution is a minimum regret solution, and $\mr = k^{1/p}$.
% * <bmm@cs.duke.edu> 2018-01-04T00:03:05.732Z:
% 
% A little more explanation of why the regret of a solution corresponding to a dominating set is k:  what do we set the demands to?  Set to 0 for the k clients that are covered at distance 0, and set to 1 for the other clients.  The optimal solution could have covered k of these clients that had their demand set to 1.
% 
% Also, we should explain the non-dominating set example more clearly.  If the solution is not a dominating set, then there is some set of k clients (belong to the same clique) that are at distance at least 2 from any member of the dominating set.  Set their demands to 1 and all other demands to 0.  Then the cost is 2k, whereas the optimal cost would be 0.
% 
% ^.

Now consider any $(\alpha, \beta)$-approximation algorithm and suppose this algorithm when run on the reduced dominating set instance does not produce a  dominating set solution while one exists. Consider the realization $C'$ including only the clients in some $k$-clique covered at distance 2. By definition of $(\alpha, \beta)$-approximation we get:

\begin{equation}
\begin{aligned}
\alg(C') &\leq \alpha \cdot \opt(C') + \beta \cdot \mr\\
2k^{1/p} &\leq 0 + \beta k^{1/p}\\
2 &\leq \beta
\end{aligned}
\end{equation}

If $\beta < 2$, this is a contradiction, i.e. the algorithm will always output a dominating set of size $k$ if one exists. Thus an $(\alpha, \beta)$-approximation algorithm where $\beta < 2$ can be used to solve the dominating set problem, proving the theorem.
\end{proof}
\eat{
\subsection{Randomized Hardness of Universal $k$-Median}

%Using a similar proof to that of Theorem \ref{theorem:randomizedhardness}, we can give a hardness result for randomized algorithms for universal $k$-median.

\begin{theorem}
\label{theorem:kmrandhardness}
Finding a polynomial time $(\alpha, \beta)$-approximation for universal $k$-median where $\beta < 1 + \frac{1}{e}$ is NP-hard.
\end{theorem}

\begin{proof}
As in the proof of Theorem \ref{theorem:randomizedhardness}, we show that any $(\alpha, \beta)$-approximation algorithm can be used to construct an approximation algorithm for minimum dominating set. It is well known that the dominating set problem and set cover problem are identical and thus the result from \cite{DinurS13} can be extended to ``it is NP-hard to find a $(1-\epsilon)\ln n$-approximation algorithm for minimum dominating set for all $\epsilon > 0$''.

Given an $(\alpha, \beta)$-approximation algorithm $A$ for the universal $k$-median problem, we design the following approximation algorithm for minimum dominating set. First, assume there is a dominating set of size $k$. If so, our approximation algorithm will find a feasible solution which is at most some factor $f(n)$ times $k$.  If there is no such dominating set, we make no guarantees. Then we can run this algorithm for all $k$ values and pick the smallest feasible solution produced - since at some point we set $k$ to be the minimum dominating set value, this iteration will give us a feasible $f(n)$-approximation.

The algorithm (which assumes a minimum dominating set size of $k$) is as follows:
\begin{enumerate}
\item Given $k$ and the dominating set instance, construct a universal $k$-median instance as described in the proof of Theorem \ref{theorem:kmdethardness}, except, to simplify analysis, let each clique be of size $\frac{k}{\epsilon}$ for some small constant $\epsilon > 0$.
\item Let $p = 2 - \beta$. Run $A$ $\log_{\frac{1}{1-p}} \frac{n}{k}$ times on this instance.
\item Let $S$ be the union of cluster centers bought by all iterations of $A$. Construct $S'$, the set of vertices in the dominating set instance corresponding to cluster centers in $S$.
\item Greedily add uncovered vertices to $S'$ until $S'$ is a dominating set, and output $S'$
\end{enumerate}

Note that each step in the proof of Theorem \ref{theorem:kmdethardness} only requires each clique to be of size at least $k$, so any statements made about this reduction in that proof still hold when cliques are larger.

In addition, note that in each clique, there are guaranteed to be $k$ locations where $A$ places a cluster center with probability at most $\epsilon$ (otherwise, in expectation $A$ places more than $k$ cluster centers). Call these "unlikely" locations, and the clients at these locations "unlikely" clients.

Then, if for any unlikely client in the constructed instance $p$ is the probability it is covered at distance at most $1$, $A$ covers that client at distance $0$ with probability at most $\epsilon$, at distance $1$ with probability at most $p-\epsilon$ and thus at distance at least $2$ with probability $1-p$. Then, $A$ covers that client at expected distance at least $p+2(1-p-\epsilon)$. 

Then, if $p$ is the minimum probability any unlikely client is covered at distance at most $1$, then for some $k$ unlikely clients in the same clique, they are covered at expected distance $p-\epsilon+2(1-p) = 2-p-\epsilon$. This quantity thus lower bounds $\beta$, as in the scenario where these locations have demand $1$ and all others have demand $0$, $A$'s expected cost is at least $k(2-p-\epsilon)$ while $\opt = 0, \mr = k$.

We now analyze the expected cost of the minimum dominating set approximation algorithm. This algorithm buys $k \log_{\frac{1}{1-p}} \frac{n}{k}$ cluster centers in the iterations of $A$. Then, in expectation $n(1-p)^{\log_{\frac{1}{1-p}} \frac{n}{k}} = k$ vertices remain uncovered in the dominating set instance, so the algorithm greedily in expectation buys at most $k$ more vertices.

Thus, the overall cost of this algorithm is in expectation:

$$k \log_{\frac{1}{1-p}} \frac{n}{k} + k = k \log_\frac{1}{1-p} n - k \log_\frac{1}{1-p} k + k$$
 
Again, we can assume the first term dominates the expression, otherwise $k$ is small and thus we can brute force search minimum dominating set solutions efficiently. Thus, the approximation ratio of the algorithm is at most $\log_\frac{1}{1-p} n = \frac{\ln n}{ln\frac{1}{1-p}}$.

Then, by \cite{DinurS13} we get that it is NP-hard to achieve:

$$\frac{1}{\ln \frac{1}{1-p}} \geq 1$$

Solving for $p$ gives:

$$p \leq 1 - \frac{1}{e}$$

Since $\beta \geq 2-p-\epsilon$, this gives:

$$\beta \geq 1 + \frac{1}{e} - \epsilon$$

Which is true for all $\epsilon > 0$, completing the proof.
\end{proof}
 }
 
\section{Hardness of Universal Clustering for Euclidean Metrics}
\label{sec:euclidean}

\subsection{Hardness of Approximating $\alpha$}
We can consider the special case of $\ell_p$-clustering where the cluster center and client locations are all points in $\mathbb{R}^d$, and the metric is a $\ell_q$-norm in $\mathbb{R}^d$. One might hope that e.g. for $d = 2$, $\alpha = 1+\epsilon$ is achievable since for the classic Euclidean $k$-median problem, a PTAS exists \cite{AroraRR98}. We show that there is still a lower bound on $\alpha$ even for $\ell_p$-clustering in $\mathbb{R}^2$. 

\begin{theorem}\label{thm:euclideanalphahardness}

For all $p \geq 1$, finding an $(\alpha, \beta)$-approximate solution for universal $\ell_p$-clustering in $\mathbb{R}^2$ using the $\ell_q$-norm where $\alpha < \frac{1+\sqrt{7}}{2}$ for $q = 2$ or $\alpha < 2$ for $q = 1, \infty$ is NP-hard.
\end{theorem}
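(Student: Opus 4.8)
The plan is to mimic the reduction from set cover (dominating set) used in Theorem~\ref{theorem:alphahardness}, but now realize the gadget distances geometrically in $\mathbb{R}^2$. Recall that in the general-metric proof, each set $S$ had a cluster center at distance $1$ from clients it covers and distance $3$ from all others; the key ratio $3$ is exactly what forced $\alpha \geq 3$. In $\mathbb{R}^2$ we cannot make \emph{all} non-covered distances exactly $3$, but we don't need to: we only need that a cluster center covering a client is at some distance $d_{\mathrm{near}}$, an uncovered client is at distance $\geq d_{\mathrm{far}}$, and no \emph{other} cluster center can do better than $d_{\mathrm{far}}$ for that client. Then the same one-client realization argument forces $\alpha \geq d_{\mathrm{far}}/d_{\mathrm{near}}$.

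First I would set up the point configuration. Start from a set cover instance and, as is standard for Euclidean hardness of $k$-median/$k$-center (cf.\ the constructions underlying \cite{AroraRR98, NagarajanSS13}), one typically uses planar instances of dominating set / vertex cover so the incidence structure embeds in the plane. Place each element-client and each set-center so that: (i) if $S$ covers element $e$, then the center for $S$ is at distance exactly $1$ (say) from client $e$; (ii) if $S$ does not cover $e$, the center for $S$ is at distance at least $d_{\mathrm{far}}$ from $e$; and (iii) crucially, \emph{every} point of $\mathbb{R}^2$ that is within distance $<d_{\mathrm{far}}$ of two clients $e_1, e_2$ that are ``far'' in the incidence structure is excluded — i.e., for each client $e$, any center not covering $e$ is genuinely at distance $\geq d_{\mathrm{far}}$. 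The geometric content is to compute the best achievable $d_{\mathrm{far}}$: for $q=2$ this is a packing/triangle computation that yields $d_{\mathrm{far}} = \frac{1+\sqrt7}{2}$ (the value $\frac{1+\sqrt7}{2}$ is the positive root of $x^2 - x - \tfrac32 = 0$ scaled appropriately, arising from placing three mutually-far clients as an equilateral-type triangle and bounding how close a single point can be to two of them while being distance $1$ from configurations forcing the third), and for $q=1,\infty$ the looser geometry of the $\ell_1/\ell_\infty$ balls gives $d_{\mathrm{far}} = 2$.

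The argument then runs exactly as in Theorem~\ref{theorem:alphahardness}: if a size-$k$ set cover exists, the corresponding $k$ centers are simultaneously optimal for every realization, so $\mr = 0$; hence an $(\alpha,\beta)$-approximate algorithm on realization $C' = \{e\}$ for an uncovered client $e$ would need $d_{\mathrm{far}} \le \alpha\cdot 1 + \beta\cdot 0 = \alpha$, so it must in fact output a set cover whenever one exists, contradicting \np-hardness of set cover. I would state this for $d=2$ and note it extends to all $d\ge 2$ by padding extra coordinates with zeros.

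The main obstacle is the geometry in step one: verifying that the configuration can be embedded in the plane so that property (iii) holds with the claimed sharp constant $d_{\mathrm{far}}$. This requires (a) choosing a planar family of set cover instances whose bipartite incidence graph admits a suitable straight-line-ish embedding with the prescribed distances, and (b) a careful case analysis of how close a single point can be to a pair of ``non-co-covered'' clients — essentially bounding the circumradius/relevant chord of the forbidden triangle — which is where $\frac{1+\sqrt7}{2}$ comes from for $q=2$ and $2$ for $q\in\{1,\infty\}$. The metric and gap arguments after that are verbatim copies of the general-metric proof, so I expect the whole difficulty to be concentrated in pinning down this one extremal distance and the embeddability claim.
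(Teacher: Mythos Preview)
Your high-level strategy---show that $\mr=0$ in the reduced instance and then use the single-client realization to force $\alpha \geq d_{\mathrm{far}}/d_{\mathrm{near}}$---is exactly the argument the paper uses. The difference is that the paper does not attempt to build the planar gadget from scratch. Instead it invokes an existing hardness result for Euclidean $k$-center due to \cite{Mentzer16}: Mentzer gives a reduction from planar 3-SAT to $k$-center in $\mathbb{R}^2$ with the properties that every client's nearest cluster center is at distance exactly $1$, a satisfiable instance yields a radius-$1$ solution, and any solution with radius $< \alpha$ (for $\alpha = \frac{1+\sqrt{7}}{2}$ when $q=2$, and $\alpha = 2$ when $q\in\{1,\infty\}$) can be converted in polynomial time to a radius-$1$ solution. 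Once you have that black box, the universal argument is two lines: $\mr = 0$ because the radius-$1$ solution is pointwise optimal, and an $(\alpha,\beta)$-approximation with $\alpha$ below the threshold would place every client within distance $\alpha$, hence (by Mentzer's conversion property) solve planar 3-SAT.

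What you are proposing instead is essentially to reprove Mentzer's result, starting from planar dominating set and engineering the embedding yourself. That is where the real difficulty lives, and your sketch underestimates it: the constant $\frac{1+\sqrt{7}}{2}$ does not fall out of a generic ``equilateral-type triangle'' packing argument---it comes from a rather delicate gadget design specific to the planar-3-SAT-to-$k$-center reduction (variable chains, clause junctions, and a case analysis of how a single disc can simultaneously cover points from adjacent gadgets). Your description of property~(iii) is also not quite the right target: Mentzer's guarantee is not that non-covering centers are far from each client, but the stronger (and harder) statement that \emph{any} sub-$\alpha$ solution can be rounded to a radius-$1$ solution. So the gap in your plan is not in the regret argument, which is fine, but in treating the geometric construction as a detail---it is the entire content, and the clean fix is to cite the existing $k$-center hardness rather than rebuild it.
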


\begin{proof}

The hardness is via reduction from the discrete $k$-center problem in $\mathbb{R}^2$. Section 3 of \cite{Mentzer16} shows how to reduce an instance of planar 3-SAT (which is NP-hard) to an instance of Euclidean $k$-center in $\mathbb{R}^2$ using the $\ell_q$ norm as the metric such that:

\begin{itemize}
    \item For every client, the distance to the nearest cluster center is 1. 
    \item There exists a $k$-cluster center solution which is distance 1 from all clients if the planar 3-SAT instance is satisfiable, and none exists if the instance is unsatisfiable.
    \item Any solution that is strictly less than distance $\alpha - \epsilon$ away from all clients can be converted in polynomial time to a solution within distance 1 of all clients for $\alpha = \frac{1+\sqrt{7}}{2}$ if $q = 2$, $\alpha = 2$ if $q = 1, \infty$.
\end{itemize}

We note that \cite{Mentzer16}'s reduction is actually to the ``continuous'' version of the problem where every point in $\mathbb{R}^2$ can be chosen as a cluster center, including the points clients are located at. That is, if we use this reduction without modification then the first property is not actually true (since the minimum distance is 0). However, in the proof of correctness for this construction \cite{Mentzer16} shows that (both for the optimal solution and any algorithmic solution) it suffices to only consider cluster centers located at the centers of a set of $\ell_p$ discs of radius 1 chosen such that every client lies on at least one of these discs and no client is contained within any of these discs. So, taking this reduction and then restricting the choice of cluster centers to the centers of these discs, we retrieve an instance with the desired properties.

Now, consider the corresponding instance as a universal $\ell_p$-clustering instance. Like in the proof of Theorem~\ref{theorem:alphahardness}, if the planar 3-SAT instance reduced from is satisfiable, there exists a clustering solution that is as close as possible to every client, i.e. has regret 0. So $\mr = 0$. Thus, an $(\alpha, \beta)$-approximate clustering solution is within distance $\alpha$ of every client (in the realization where only client $j$ appears, $\opt$ is 1 so an $\alpha$-approximate solution must be within distance $\alpha$ of this client). In turn, using the properties of the reduced clustering instance, an $(\alpha, \beta)$-approximation where $\alpha$ is less than the lower bound given in \cite{Mentzer16} can be converted into an algorithm that solves planar 3-SAT.
\end{proof}

\subsection{Hardness of Approximating $\beta$}

We can also show that $\beta = 1$ is NP-hard in $\mathbb{R}^2$ using a similar reduction: 

\begin{theorem}
For all $p \geq 1$, finding an $(\alpha, \beta)$-approximate solution for universal $\ell_p$-clustering in $\mathbb{R}^2$ using the $\ell_q$-norm where $\beta = 1$ for $q = 1, 2, \infty$ is NP-hard.
\end{theorem}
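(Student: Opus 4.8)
The plan is to reduce from planar 3-SAT, reusing Mentzer's geometric gadgetry exactly as in the proof of Theorem~\ref{thm:euclideanalphahardness}, but perturbing the constructed clustering instance so that its minimum regret becomes strictly positive and so that the hardness is witnessed on realizations with $\opt = 0$ (which is precisely what a lower bound on $\beta$, as opposed to a lower bound on $\alpha$, requires). As in the $\alpha$-hardness proof, the starting point is an instance with clients $C_0$ and cluster-center locations $F_0$ such that every client of $C_0$ is at distance exactly $1$ from $F_0$, a $k_0$-subset of $F_0$ covering every client of $C_0$ at distance $1$ exists iff the formula is satisfiable, and otherwise every $k_0$-subset leaves some client of $C_0$ at distance at least $\alpha^\star > 1$.

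To this instance I would add two ingredients. First, a far-away ``miss-one-point'' gadget $G$: a constant number of points of $\mathbb{R}^2$ that are pairwise at distance $1$ in the $\ell_q$-norm (a triangle suffices for $q = 2$; $q \in \{1, \infty\}$ admit even more mutually equidistant points, which only helps), each point carrying both a co-located cluster center and a co-located client, and placed so far from $C_0 \cup F_0$ that no center can be simultaneously useful to $G$ and to $C_0$. The budget is set so that a solution can afford a satisfying $k_0$-subset of $F_0$ together with all but one point of $G$. Second, and crucially, a co-located cluster center at each client location of $C_0$: this makes $\opt(\{j\}) = 0$ for each $j \in C_0$, so that whenever a solution fails to cover some $C_0$-client $j_0$ at distance $1$, the singleton realization $\{j_0\}$ has $\opt = 0$ while the solution pays strictly more than $1$ on it.

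I would then show, in the satisfiable case, that $\mr$ equals a fixed value $R$ attained exactly by the solutions whose $F_0$-part is a satisfying $k_0$-subset, that these solutions attain regret exactly $R$, and that every other solution --- in particular every solution in the unsatisfiable case --- has regret strictly larger than $R$, with the excess witnessed on a realization consisting solely of co-located clients (hence with $\opt = 0$): either $\{j_0\}$ for a $C_0$-client left at distance $> 1$, or a small subset of $G$ on which a solution lacking enough gadget centers overpays. Consequently, any purported $(\alpha, 1)$-approximation algorithm, run on this instance, must output a solution whose $F_0$-part covers $C_0$ at distance $1$ for \emph{every} value of $\alpha$ (otherwise it violates the $(\alpha, 1)$-guarantee on an $\opt = 0$ realization), and from such a solution a satisfying assignment is recovered in polynomial time. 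This proves NP-hardness of $(\alpha, 1)$-approximate universal $\ell_p$-clustering in $\mathbb{R}^2$ for every $p \geq 1$ and $q \in \{1, 2, \infty\}$.

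The main obstacle is the calibration in the previous step: one must choose the number of co-located copies, the budget, and the metric scale of the Mentzer sub-instance relative to the distance-$1$ gadget so that three things hold at once --- (i) in the satisfiable case $\mr$ is exactly $R$ and is not inflated by realizations that pack together many co-located $C_0$-clients (this is what limits how many co-located centers may be introduced and forces the scale of the Mentzer part to sit in a narrow window: shrinking it too far collapses the ``distance $\leq 1$'' covering structure, since all of $C_0$ and its co-located centers would fit inside a unit ball, while enlarging it lets the $C_0$-realizations dominate the gadget's contribution); (ii) Mentzer's yes/no dichotomy survives the addition of co-located centers, i.e., a co-located center can only help the single client it sits on --- this uses the robustness of Mentzer's construction (no client lies inside another's covering disc, and disc centers are spread out); and (iii) the excess regret of every non-satisfying solution is actually realized on an $\opt = 0$ subset rather than merely on a subset of $C_0$ for which $\opt$ must pay a positive amount. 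Verifying these, together with checking the equidistant gadget configurations for $q \in \{1, \infty\}$, is where the work lies; the reduction itself is otherwise a direct adaptation of the $\alpha$-hardness argument.
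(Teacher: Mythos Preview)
The paper takes a much simpler route that avoids every obstacle you list. It does \emph{not} reuse the Section~3 construction from the $\alpha$-hardness proof; instead it invokes the Section~4 reductions in \cite{Mentzer16}, which directly produce instances where every client has a co-located cluster center and the second-closest center is at distance exactly~$1$ (so all pairwise inter-point distances are at least~$1$). No extra gadget, no co-located centers to graft on, no calibration. In a satisfiable instance there is a $k$-subset covering every other point at distance exactly~$1$, and one checks in two lines that $\mr = k^{1/p}$: the worst realization against any $k$-solution $S$ is any $k$ points disjoint from $S$, on which $\opt = 0$ and $S$ pays at most $k^{1/p}$. Then for any $(\alpha,1)$-approximate $\alg$, pick any $k$ clients $C'$ disjoint from $\alg$'s centers: $\opt(C') = 0$ forces $\alg(C') \leq k^{1/p}$, and since each of these clients is at distance $\geq 1$ from $\alg$, all distances must be exactly~$1$, so $\alg$ corresponds to a satisfying assignment.

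Your obstacle~(ii) is a genuine, unresolved gap in your approach. In Mentzer's Section~3 construction nothing guarantees that two $C_0$-clients are at distance $\geq 1$ from each other (they sit on disc boundaries and can be arbitrarily close), so a co-located center at one client may cover a nearby client at distance $<1$. This breaks your claim that ``a co-located center can only help the single client it sits on,'' and with it both your computation of $\mr$ and the yes/no dichotomy you need. Your gadget $G$ does not address this and is in fact unnecessary: once co-located centers exist, any solution already has positive regret on singleton $C_0$-realizations. The calibration issues you raise in (i) and (iii) are downstream of (ii); the paper's choice of the Section~4 construction, where the pairwise lower bound of~$1$ is built in, makes all three obstacles disappear simultaneously.
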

\begin{proof}
We again use a reduction from planar 3-SAT due to \cite{Mentzer16}. This time, we use the reductions in Section 4 of \cite{Mentzer16} for simplicity, which has the properties that:

\begin{itemize}
    \item Every client is distance 0 from a co-located cluster center, and the distance to the second-closest cluster center is 1. 
    \item There exists a $k$-cluster center solution which is distance 1 from all but $k$ clients and distance 0 from $k$ clients (the ones at the cluster centers) if the planar 3-SAT instance is satisfiable, and none exists if the instance is unsatisfiable.
\end{itemize}

Consider any instance reduced from a satisfiable planar 3-SAT instance. The solution in the resulting instance $\sol^*$ with the second property above has regret $k^{1/p}$ (and in fact, this is $\mr$): by the first property above, no solution can be less than distance 1 away from any clients other than the $k$ clients co-located with its cluster centers. In turn, the regret of the $\sol^*$ against any adversarial $\sol$ is maximized by the realization $C'$ only including the clients co-located with the $k$ cluster centers in the $\sol$. We then get $\sol^*(C') - \sol(C') = k^{1/p} - 0 = k^{1/p}$.

Now consider an arbitrary $(\alpha, 1)$-approximate universal solution $\alg$ in this instance. Consider any set of $k$ clients $C'$ not co-located with $\alg$'s cluster centers. $\opt(C') = 0$, so we get $\alg(C') \leq \alpha \cdot \opt(C') + \mr = \mr \leq k^{1/p}$. $\alg$ is distance at least 1 from all clients in $C'$ by construction, so this only holds if $\alg$ is distance 1 from all clients in $C'$. This gives that $\alg$ is distance 1 from all but $k$ clients (those co-located with cluster centers in $\alg$), and distance 0 from the remaining clients. In turn, $\alg$ satisfies the property of a solution corresponding to a satisfying assignment to the planar 3-SAT instance. This shows that an $(\alpha, 1)$-approximate universal solution to $\ell_p$-clustering in $\mathbb{R}^2$ can be used to solve planar 3-SAT.
\end{proof}

\section{Future Directions}

In this paper, we gave the first universal algorithms for clustering problems: $k$-median, $k$-means, and $k$-center (and their generalization to $\ell_p$-clustering). While we achieve constant approximation guarantees for these problems, the actual constants are orders of magnitude larger than the best (non-universal) approximations known for these problems. In part to ensure clarity of presentation, we did not attempt to optimize these constants.  But it is unlikely that our techniques will lead to {\em small} constants for the $k$-median and $k$-means problems (although, interestingly, we got small constants for $k$-center). On the other hand, we show that in general it is \np-hard to find an $(\alpha, \beta)$-approximation algorithm for a universal clustering problem where $\alpha$ matches the approximation factor for the standard clustering problem. Therefore, it is not entirely clear what one should expect: {\em are there universal algorithms for clustering with approximation factors of the same order as the classical (non-universal) bounds?} %But, there is a large amount of room between these lower bounds and the upper bounds we prove. In particular, even if we improved the approximation factor of $k$-median with discounts to that of classical $k$-median, our rounding via $k$-median with discounts requires losing at least the integrality gap upper bound of 3 as well the approximation factor for classical $k$-median. So, this rounding technique cannot be used in an algorithm with $\alpha$ very closely to the classical approximation factor. Such an algorithm would likely require a new and perhaps more combinatorial approach, and thus finding one is an interesting open question.

One possible approach to improving the constants is considering algorithms that use more than $k$ cluster centers. For example, our $(9^p, \frac{2}{3} \cdot 9^p)$-approximation for $\ell_p^p$-clustering with discounts can easily be improved to an $(3^p, 3^p)$-approximation if it is allowed to use $2k-1$ cluster centers. This immediately improves all constants in the paper. For example, our $(27, 49)$-approximation for universal $k$-median becomes a $(9, 18)$-approximation if it is allowed to use $2k-1$ cluster centers. Unfortunately, our lower bounds on $\alpha, \beta$ apply even if the algorithm is allowed to use $(1-\epsilon)k \ln n$ cluster centers, but it is an interesting problem to show that e.g. using $(1+\epsilon)k \ln n$ cluster centers allows one to beat either bound.

Another open research direction pertains to Euclidean clustering. Here, we showed that in $\mathbb{R}^d$ for $d \geq 2$, $\alpha$ needs to be bounded away from $1$, which is in  stark contrast to non-universal clustering problems that admit PTASes in constant-dimension Euclidean space. %Moreover, we also showed that $\beta$ must exceed $1$ in this scenario. 
But, for $d = 1$, i.e., for universal clustering on a line, the picture is not as clear. On a line, the lower bounds on $\alpha$ are no longer valid, which brings forth the possibility of (non-bicriteria) approximations of regret. Indeed, it is known that 
%In addition, we showed that for $\ell_p$-clustering achieving $\alpha = 1$ is NP-hard even if the metric is $\mathbb{R}^2$, but in $\mathbb{R}^1$ the full picture is unclear. 
there is $2$-approximation for universal $k$-median on a line~\cite{KasperskiZ07}, and even better, an {\em optimal} algorithm for universal $k$-center on a line~\cite{AverbakhB97}.
This raises the natural question: {\em can we design a PTAS for the universal $k$-median problem on a line?}

%Note that these results are constant approximations to the regret. However, to the best of our knowledge determining the complexity of e.g. universal $k$-median on a line metric remains an open question. In particular, it is unclear whether a subproblem structure property such as that exploited by dynamic programming approaches for classical $k$-median on a line can be used for universal $k$-median as well.

%\clearpage

\bibliographystyle{abbrv}
\bibliography{ref}

\appendix 

\section{Relations Between Vector Norms}\label{section:vector}

For completeness, we give a proof of the following well known fact that relates vector $\ell_p$ norms. 

\begin{fact}
\label{fact:plogn}
For any $1 \leq p \leq q$ and $x \in \mathbb{R}^n$, we have $$\norm{x}_q \leq \norm{x}_p \leq n^{1/p - 1/q} \norm{x}_q.$$
\end{fact}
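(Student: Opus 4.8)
The plan is to prove the two inequalities separately, both by elementary means. First I would establish the left inequality $\norm{x}_q \leq \norm{x}_p$ for $p \leq q$. The slick way is to normalize: if $x = 0$ the claim is trivial, so assume $x \neq 0$ and set $y = x / \norm{x}_p$, so that $\norm{y}_p = 1$, which means $\sum_i |y_i|^p = 1$ and in particular $|y_i| \leq 1$ for every coordinate $i$. Since $|y_i| \leq 1$ and $q \geq p$, we have $|y_i|^q \leq |y_i|^p$ for each $i$, hence $\sum_i |y_i|^q \leq \sum_i |y_i|^p = 1$, i.e. $\norm{y}_q \leq 1$. Rescaling back gives $\norm{x}_q \leq \norm{x}_p$.

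For the right inequality $\norm{x}_p \leq n^{1/p - 1/q}\norm{x}_q$, I would apply H\"older's inequality to the product $\sum_i |x_i|^p \cdot 1$. Writing $\sum_i |x_i|^p = \sum_i |x_i|^p \cdot 1$ and applying H\"older with exponents $q/p$ and its conjugate $r$ satisfying $\frac{p}{q} + \frac{1}{r} = 1$ (note $q/p \geq 1$ since $q \geq p$, so this is legitimate, with the convention that if $q = p$ the bound is immediate), we get
$$\sum_i |x_i|^p \leq \left(\sum_i |x_i|^{p \cdot q/p}\right)^{p/q} \left(\sum_i 1^{r}\right)^{1/r} = \left(\sum_i |x_i|^q\right)^{p/q} \cdot n^{1/r}.$$
Since $1/r = 1 - p/q$, raising both sides to the power $1/p$ yields $\norm{x}_p \leq n^{(1 - p/q)/p}\norm{x}_q = n^{1/p - 1/q}\norm{x}_q$, as desired. (Alternatively one can invoke the power-mean inequality, but H\"older keeps things self-contained.)

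Neither direction presents a genuine obstacle — this is a standard fact — but if I had to name the one step requiring a moment of care, it is the bookkeeping of the conjugate exponent in the H\"older application: one must check $q/p \geq 1$ so that a valid conjugate pair exists, and correctly track that the exponent on $n$ comes out to exactly $1/p - 1/q$ after the final power $1/p$ is taken. I would also note explicitly that when $q = \infty$ the statement should be read with $1/q = 0$ and both inequalities follow directly from $\max_i |x_i| \leq (\sum_i |x_i|^p)^{1/p} \leq n^{1/p}\max_i|x_i|$, so that the fact covers the $k$-center limit used elsewhere in the paper.
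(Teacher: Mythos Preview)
Your proof is correct. For the right inequality $\norm{x}_p \leq n^{1/p-1/q}\norm{x}_q$, you take exactly the same route as the paper: H\"older on $\sum_i |x_i|^p \cdot 1$ with exponents $q/p$ and its conjugate.

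For the left inequality $\norm{x}_q \leq \norm{x}_p$, your approach differs from the paper's. You normalize to $\norm{y}_p = 1$, note that then $|y_i| \leq 1$, and use $|y_i|^q \leq |y_i|^p$ pointwise. The paper instead first proves $\norm{x}_p \leq \norm{x}_1$ for all $p$ by iterating Minkowski's inequality, and then applies that auxiliary fact to the vector with entries $|x_i|^p$ in the $\ell_{q/p}$ norm. Your normalization argument is more direct and requires no named inequality; the paper's route is slightly more roundabout but has the minor side benefit of establishing the $\ell_p \leq \ell_1$ comparison explicitly. Either way the content is the same standard monotonicity of $p$-norms.
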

\begin{proof}
For all $p$, repeatedly applying Minkowski's inequality we have:

$$\norm{x}_p = \left(\sum_{i = 1}^n |x|^p\right)^{1/p} \leq \left(\sum_{i = 1}^{n-1} |x|^p\right)^{1/p} + |x_n| \leq \left(\sum_{i = 1}^{n-2} |x|^p\right)^{1/p} + |x_{n-1}| + |x_n| \leq \ldots \leq \sum_{i=1}^n |x_i| = \norm{x}_1.$$

Then we bound $\norm{x}_q$ by $\norm{x}_p$ as follows:

$$\norm{x}_q = \left(\sum_{i = 1}^n |x|^q\right)^{1/q} = \left(\left(\sum_{i = 1}^n \left(|x|^{p}\right)^{q/p}\right)^{p/q}\right)^{1/p}\leq \left(\sum_{i = 1}^n |x|^{p}\right)^{1/p} = \norm{x}_p.$$

The inequality is by applying $\norm{x'}_{q/p} \leq \norm{x'}_1$ to the vector $x'$ with entries $x_i' = |x_i|^p$. 
To bound $\norm{x}_p$ by $\norm{x}_q$, we invoke Holder's inequality as follows:

$$\norm{x}_p^p = \sum_{i = 1}^n |x|^p = \sum_{i = 1}^n |x|^p \cdot 1 \leq \left(\sum_{i=1}^n\left(|x_i|^p\right)^{q/p}\right)^{p/q} \left(\sum_{i=1}^{n}1^{q/(q-p)}\right)^{1-p/q} = \norm{x}_q^p \cdot n^{1-p/q}.$$

Taking the $p$th root of this inequality gives the desired bound.
\end{proof}

The limiting behavior as $q \rightarrow \infty$ shows that $\norm{x}_{\infty} \leq \norm{x}_{c \log n} \leq n^{1/c \log n} \norm{x}_\infty = 2^{1/c} \norm{x}_\infty$, i.e. that the $\ell_\infty$-norm and $\ell_p$-norm for $p = \Omega(\log n)$ are within a constant factor.
  \section{Approximations for All-Clients Instances are not Universal}
  \label{sec:approx-fail}
  
  In this section, we demonstrate that even $(1+\epsilon)$-approximate (integer) solutions for the ``all clients'' instance for clustering problems are not guaranteed to be $(\alpha, \beta)$-approximations for any finite $\alpha, \beta$. This is in sharp contrast to the optimal (integer) solution, which is known to be a $(1,2)$-approximation for a broad range of problems including the clustering problems considered in this paper~\cite{KasperskiZ07}.
 
 Consider an instance of universal $1$-median with clients $c_1, c_2$ and cluster centers $f_1, f_2$. Both the cluster centers are at distance $1$ from $c_1$, and at distances $0$ and $\epsilon$ respectively from $c_2$ (see Figure~\ref{fig:badapprox}). $f_2$ is a $(1+\epsilon)$-approximate solution for the realization containing both clients. In this instance, $\mr$ is 0 and so $f_2$ is not an $(\alpha, \beta)$-approximation for any finite $\alpha, \beta$ due to the realization containing only $c_2$. The same example can be used for the $\ell_p$-clustering objective for all $p \geq 1$ since $f_2$ has approximation factor $(1+\epsilon^p)^{1/p} \leq (1 + \epsilon)$ when all clients are present. In the case of $k$-center, $f_2$ is an optimal solution when all clients are present.

\begin{figure}[H]
    \centering
    \includegraphics[width=.5\textwidth]{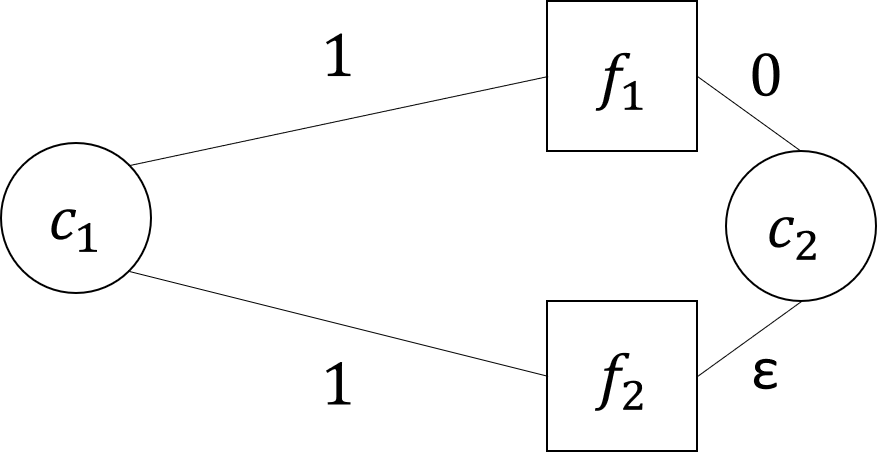}
    \caption{Example where a $(1+\epsilon$)-approximation for all clients has no $(\alpha, \beta)$-approximation guarantee, for any $\ell_p$-clustering objective including $k$-center.}
    \label{fig:badapprox}
\end{figure}

\section{Algorithms for $k$-Median and $\ell_p^p$-Clustering with Discounts}
\label{section:kmwithdiscounts}

In this section, we prove Lemma~\ref{lemma:kmbicriteria-lpp} which states that there exists a $(9^p, \frac{2}{3} \cdot 9^p)$-approximation algorithm for the $\ell_p^p$-clustering with discounts problem. As a corollary, by setting $p=1$, we will obtain Lemma~\ref{lemma:kmbicriteria} which states that there exists a $(9, 6)$-approximation algorithm for the $k$-median with discounts problem. To prove Lemma \ref{lemma:kmbicriteria-lpp}, we will first use a technique due to Jain and Vazirani~\cite{JainV01} to design a Lagrangian-preserving approximation for the $\ell_p^p$-facility location with discounts problem. $\ell_p^p$-facility location with discounts (FLD) is the same as $\ell_p^p$-clustering with discounts, except rather than being restricted to buying $k$ cluster centers, each cluster center has a cost $f_i$ associated with buying it (discounts and cluster centers costs are not connected in any way).

\subsection{Algorithm for $\ell_p^p$-Facility Location with Discounts} \label{section:mfld}

Since FLD is a special case of non-metric facility location, we can consider the standard linear programming primal-dual formulation for the latter. The primal program is as follows:
% * <bmm@cs.duke.edu> 2018-01-07T18:24:30.841Z:
% 
% It seems a little worrisome to me to use alpha and beta here for a different purpose so soon after introducing them as the approximation factors
% 
% Yup, again an issue with alpha/beta being the 'standard' to use in this LP. I'll replace all such instances. - Arun
% ^.
\begin{equation*}
\begin{array}{lll}
& \min & \sum_{i \in F} f_ix_i + \sum_{i \in F, j \in C} (c_{ij}^p-r_j^p)^+y_{ij}\\
\textnormal{s.t.} &\forall j \in C: &\sum_{i \in F} y_{ij} \geq 1\\
&\forall i \in F,j \in C:& y_{ij} \leq x_i\\
&\forall i \in F: &x_i \geq 0 \\
&\forall i \in F, j \in C: & y_{ij} \geq 0
\end{array}
\end{equation*}

The dual program is as follows:

\begin{equation*}
\begin{array}{lll}
& \max &\sum_{j \in C} a_j\\
\textnormal{s.t.} &\forall i \in F, j \in C:& a_j - (c_{ij}^p-r_j^p)^+ \leq b_{ij}\\
&\forall i \in F: &\sum_{j \in C} b_{ij} \leq f_i\\
& \forall j \in C: & a_j \geq 0\\
& \forall i \in F,j \in C: & b_{ij} \geq 0\\
\end{array}
\end{equation*}

We design a primal-dual algorithm for the FLD problem. This FLD algorithm operates in two phases. In both programs, all variables start out as $0$. 

In the first phase, we generate a dual solution. For each client $j$ define a ``time'' variable $t_j$ which is initialized to $0$. We grow the dual variables as follows: we increase the $t_j$ uniformly. We grow the $a_j$ such that for any $j$, at all times $a_j = (t_j - r_j^p)^+$ (or equivalently, all $a_j$ start at 0, and we increase all $a_j$ at a uniform rate, but we only start growing $a_j$ at time $r_j^p$). Each $b_{ij}$ is set to the minimum feasible value, i.e. $(a_j - (c_{ij}^p-r_j^p)^+)^+$. If the constraint $\sum_{j \in C} b_{ij} \leq f_i$ is tight, we stop increasing all $t_j, a_j$ for which $b_{ij} = a_j - (c_{ij}^p-r_j^p)^+$, i.e., for the clients $j$ that contributed to increasing the value $\sum_{j \in C} b_{ij}$ (we say these clients put weight on this cluster center). We continue this process until all $t_j$ stop growing. Note that at any time the dual solution grown is always feasible.

In the second phase, consider a graph induced on the cluster centers whose constraints are tight, where we place an edge between cluster centers $i, i'$ if there exists some client $j$ that put weight on both cluster centers. Find a maximal independent set $S$ of this graph and output this set of cluster centers. Let $\pi$ be a map from clients to cluster centers such that $\pi(j)$ is the cluster center which made $t_j$ stop increasing in the first phase of the algorithm. If $\pi(j) \in S$, connect $j$ to cluster center $\pi(j)$, otherwise connect $j$ to one of $\pi(j)$'s neighbors in the graph arbitrarily. 

We can equivalently think of the algorithm as generating an integral primal solution where $x_i = 1$ for all $i \in S$ and $x_i = 0$ otherwise, and $y_{ij} = 1$ if $j$ is connected to $i$ and is $y_{ij} = 0$ otherwise. Based again on the technique of \cite{JainV01}, we can show the following Lemma holds:
% * <bmm@cs.duke.edu> 2018-01-07T19:28:58.172Z:
% 
% I wonder if it is worth reminding the reader that  for any feasible solution of the dual, \sum{\alpha j} is a lower bound on the optimal value of \sum{f_i x_i}+\sum{(c_{ij}-r_j^p)^+}y_{ij}. 
% 
% We're not too interested in the facility location with discounts problem on its own, so I think for now we might avoid pointing this out to avoid confusion - Arun
%
% ^.
\begin{lemma}\label{lemma:uflbicriteria}
Let $x, y$ be the primal solution and $a, b$ be the dual solution generated by the above FLD algorithm. $x, y$ satisfies 
$$\sum_{j \in C} \sum_{i \in F} (c_{ij}^p - 3^p r_j^p)^+y_{ij} + 3^p \sum_{i \in F} f_ix_i \leq 3^p \sum_{j \in C} a_j $$
\end{lemma}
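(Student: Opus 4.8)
The plan is to follow the Jain--Vazirani primal-dual analysis for (non-metric) facility location, but tracking the discount terms $r_j^p$ carefully through the triangle-inequality-style steps and keeping in mind that the ``distances'' $c_{ij}^p$ satisfy only a relaxed triangle inequality (any $c_{ij}^p$ is at most $3^{p-1}(c_{i\ell}^p + c_{\ell k}^p + c_{kj}^p)$ via $(a+b+c)^p \le 3^{p-1}(a^p+b^p+c^p)$ and the metric triangle inequality on the underlying $c$). First I would split the dual objective $\sum_j a_j$ into contributions from \emph{directly connected} clients (those $j$ with $\pi(j) \in S$) and \emph{indirectly connected} clients (those $j$ with $\pi(j) \notin S$, connected instead to a neighbor $i \in S$ of $\pi(j)$ in the tight-cluster-center graph).

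For a directly connected client $j$ connected to $i = \pi(j)$, the stopping condition gives $t_j = a_j + r_j^p \ge c_{ij}^p$ when $a_j$ is positive (and the dual growth rule ties $a_j = (t_j - r_j^p)^+$), so $a_j \ge c_{ij}^p - r_j^p \ge (c_{ij}^p - 3^p r_j^p)^+$; these clients also pay their fair share $b_{ij}$ toward $f_i$. For an indirectly connected client $j$ connected to $i \in S$, there is an intermediate cluster center $i' = \pi(j)$ adjacent to $i$ in the graph, and a witness client $j'$ that put weight on both $i$ and $i'$; the key timing fact (standard in JV) is that $t_{j'} \le t_j$, because $i$ became tight no later than the time $i'$ made $j$ stop, and $j$ would otherwise have stopped at $i$. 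From this and the dual constraints I would bound $c_{ij}^p$ by a $3^{p-1}$-weighted combination of $c_{i'j}^p$, $c_{i'j'}^p$, $c_{ij'}^p$, each of which is at most roughly $t_\cdot$, hence at most $a_\cdot + r_\cdot^p \le a_j + r_j^p$ (using monotonicity of the $t$'s and that $r_{j'} \le$ the relevant quantities, or absorbing $r_{j'}^p$ into the slack). This yields $c_{ij}^p \le 3^p a_j + 3^p r_j^p$, i.e. $(c_{ij}^p - 3^p r_j^p)^+ \le 3^p a_j$, and these clients contribute nothing to the facility-opening cost since $j$ does not pay toward $i$.

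Summing over all clients and using that $S$ is an independent set in the tight graph (so each opened facility's cost $f_i$ is exactly paid for by $\sum_{j: \pi(j)=i, \text{directly}} b_{ij} \le \sum_j a_j$ restricted appropriately, with no double counting across $S$), I would assemble
\[
\sum_{j}\sum_{i}(c_{ij}^p - 3^p r_j^p)^+ y_{ij} + 3^p\sum_i f_i x_i \;\le\; 3^p\sum_j a_j.
\]
The main obstacle I anticipate is the bookkeeping in the indirect case: ensuring the relaxed triangle inequality is applied with the correct constant $3^{p-1}$ (so that three terms each $\lesssim a_j + r_j^p$ combine to $\le 3^p(a_j + r_j^p)$, not something larger), and verifying that the discount $r_{j'}^p$ of the \emph{witness} client does not leak into the bound for $j$ — this should follow because $a_{j'} \ge c_{i'j'}^p - r_{j'}^p$ already nets out the witness discount, and $t_{j'} \le t_j$ lets us upper bound everything by $t_j = a_j + r_j^p$. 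I would also need to handle the boundary case where some $a_j = 0$ (the client never ``activates'' past its discount), where the inequality is trivial since $(c_{ij}^p - 3^p r_j^p)^+$ may still need bounding — but if $t_j \le r_j^p$ then $c_{ij}^p \le t_j \le r_j^p \le 3^p r_j^p$ so that term vanishes. Once Lemma~\ref{lemma:uflbicriteria} is established, Lemma~\ref{lemma:kmbicriteria-lpp} follows by the standard Lagrangian-relaxation argument reducing $k$-cluster-center-constrained problems to facility location.
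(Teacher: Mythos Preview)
Your proposal is correct and follows essentially the same Jain--Vazirani primal--dual argument as the paper: split into directly connected clients (where $a_j = b_{ij} + (c_{ij}^p - r_j^p)^+$ covers both connection cost and a share of $f_i$) and indirectly connected clients (where a three-edge detour through the witness $j'$ gives $c_{ij}^p \le 3^p t_j$), then combine with the factor $3^p$ on the direct part.

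One small simplification the paper makes that you may want to adopt: in the indirect case the paper works entirely with the time variables $t_j$, using the metric triangle inequality on $c$ to get $c_{ij} \le c_{ij'} + c_{\pi(j)j'} + c_{\pi(j)j} \le 3 t_j^{1/p}$ directly (since $t_j \ge c_{\pi(j)j}^p$ and $t_{j'} \ge c_{ij'}^p, c_{\pi(j)j'}^p$ together with $t_{j'} \le t_j$). This sidesteps your concern about the witness discount $r_{j'}^p$ entirely, since $r_{j'}$ never enters the bound; you only unpack $t_j = a_j + r_j^p$ at the very end to get $(c_{ij}^p - 3^p r_j^p)^+ \le 3^p a_j$. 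Your route through $a_\cdot + r_\cdot^p$ reaches the same place but with the extra bookkeeping you flagged as an obstacle.
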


% * <bmm@cs.duke.edu> 2018-01-07T19:33:26.150Z:
% 
% Uh, you say that C^{(1)} is a set of cluster centers, but then you take C\C^{(1)}, where C so far denotes the set of clients; I think you meant for C^{(1)} to denote a set of clients; but also the language "the set of clients in j" would be confusing; I think you mean "the set of clients j"
% 
% Yup, fixed - Arun
% 
% ^.

\begin{proof}
Let $C^{(1)}$ be the set of clients in $j$ such that $\pi(j) \in S$ and $C^{(2)} = C \setminus C^{(1)}$. For any $i \in S$, let $C_i$ be the set of all clients $j$ such that $\pi(j) = i$. Note that:

$$\sum_{j \in C_i} a_j = \sum_{j \in C_i} [b_{ij} + (c_{ij} - r_j^p)^+] =  \sum_{j \in C_i} (c_{ij} - r_j^p)^+ + f_i$$
% * <bmm@cs.duke.edu> 2018-01-07T19:41:00.538Z:
% 
% I think you mean "Since no client" below;
% the following statement also assumes that there are no ties, i.e., that as we increase sigma_j for some client j, we don't simultaneously make the constraints tight for two different cluster centers i, which I think we need to address
%
% I don't think so? The statement is guaranteed by the fact that if some client put weight on multiple cluster centers, only one of those cluster centers can be in the solution since we take an independent set of the cluster centers. - Arun
% 
% ^.
No client in $C^{(1)}$ contributes to the sum $\sum_{i \in F} b_{ij}$ for multiple $i$ in $S$ (because $S$ is an independent set). This gives us:

% * <bmm@cs.duke.edu> 2018-01-07T19:43:20.896Z:
% 
% In the sums over i below, is i taken over just cluster centers in S, or over all cluster centers?
%
% Over all cluster centers and over cluster centers in S are the same, since x_i = 0 if i is not in S. I used sum over all i since later we need to later combine two solutions to get a new solution, and adding sums over disjoint sets looks messy - Arun
%
% Ah, also having a little trouble seeing where this inequality comes from.  The equality at the end would be true of any optimal solutions to the primal and dual, but it's not clear to me from the decription of the of the algorithm if we mean for alpha_j to denote the values in an optimal (fractional) solution to the dual, or the values set by our rounding algorithm (or whether they are the same).
% 
% The first step of the inequality is because all r_j^p are positive, so (c_ij - 3r_j^p) <= (c_ij, - r_j^p), and making both sides positive can't change the inequality. The second inequality comes from the equality on the previous line. I think the Lemma statement should imply the a_j come from the dual solution our algorithm generates - Arun
% 
% But I'll assume that the equality is not coming automatically from duality, and instead we are claiming it holds here for C^(1) and see if I can understand why
%
% I added some more lines to hopefully clarify - Arun
% 
% ^.

\begin{eqnarray} \label{eq:lagrangianproof1}
%\begin{split}
\sum_{j \in C^{(1)}} \sum_{i \in F} (c_{ij}^p - 3^pr_j^p)^+y_{ij} + \sum_{i \in F} f_ix_i 
& \leq & \sum_{j \in C^{(1)}} \sum_{i \in F} (c_{ij}^p - r_j^p)^+y_{ij} + \sum_{i \in F} f_ix_i\nonumber\\
& = & \sum_{i \in S}[f_i + \sum_{j \in C_i} (c_{ij}^p - r_j^p)^+]\nonumber\\ 
& = & \sum_{i \in S} \sum_{j \in C_i} a_j\nonumber\\ 
& = & \sum_{j \in C^{(1)}} a_j.
%\end{split}
\end{eqnarray}

% * <bmm@cs.duke.edu> 2018-01-11T01:55:02.764Z:
% 
% Hmm, I think we need a little more explanation here about why t_j' stopped increasing because of i rather than pi(j).   I see that cluster centers pi(j) and i have tight constraints, or they wouldn't be vertices.  And there must be some j' that puts weight on both, or they wouldn't be neighbors.  But why couldn't every client j' that put weight on both pi(j) and I have had t_j' stop increasing because of i?
% 
% Ah, my original explanation was wrong - I should have fixed it now (it's not necessarily that j' stopped growing because of i, it's just that the fact it put weight on pi(j) means t_j' is bounded by t_j, the time pi(j) stopped growing). - Arun
%
% ^.
% * <bmm@cs.duke.edu> 2018-01-11T02:00:28.658Z:
% 
% I think we should also add an explanation for why the distance from j to pi(j) is at most t_j:  Since j puts weight on pi(j), it must be that a_j -  (c_pi(j)j - r_j^p)^+ >= 0, or (t_j-r_j^p)^+ - (c_pi(j),j) - r_j^p)^+ >= 0.  So t_j > = c_pi(j)j
% 
% Right, the use of t_j/a_js growing at the same time makes this not immediately clear. Added - Arun
% 
% ^.
For each client $j$ in $C^{(2)}$, $j$ is connected to one of $\pi(j)$'s neighbors $i$. Since $\pi(j)$ and $i$ are neighbors, there is some client $j'$ that put weight on both $\pi(j)$ and $i$. Since $j'$ put weight on $\pi(j)$ and thus $\pi(j)$ going tight would have stopped $t_{j'}$ from increasing, $t_{j'}$ stopped increasing before or when $\pi(j)$ went tight, which was when $t_j$ stopped growing. Since all $t_j$ start growing at the same time and grow uniformly, $t_{j'} \leq t_j$. Since $j$ put weight on $\pi(j)$, we know $a_j - (c_{\pi(j)j} - r_j^p)^+ > 0$ and thus $(t_j - r_j^p)^+ - (c_{\pi(j)j} - r_j^p)^+ > 0$, implying $t_j \geq c_{\pi(j)j}^p$. Similarly, $t_{j'} \geq c_{ij'}^p, c_{\pi(j)j'}^p$. Triangle inequality gives $c_{ij} \leq c_{ij'} + c_{\pi(j)j'} + c_{\pi(j)j} \leq 3t_j^{1/p}$. Then, we get:

\begin{equation} \label{eq:lagrangianproof2}
\sum_{j \in C^{(2)}} \sum_{i \in F} (c_{ij}^p - 3^pr_j^p)^+y_{ij} \leq 
\sum_{j \in C^{(2)}} (3^pt_j - 3^pr_j^p)^+ = 3^p\sum_{j \in C^{(2)}} (t_j - r_j^p)^+ = 3^p\sum_{j \in C^{(2)}} a_j
\end{equation}

Adding $3^p$ times Eq.~\eqref{eq:lagrangianproof1} to Eq.~\eqref{eq:lagrangianproof2} gives the Lemma.
\end{proof}

\subsection{Algorithm for $\ell_p^p$-Clustering with Discounts}

We now move on to finding an algorithm for $\ell_p^p$-clustering with discounts. We can represent the problem as a primal/dual linear program pair as follows. The primal program is:

\begin{equation*}
\begin{array}{lll}
& \min & \sum_{i \in F, j \in C} (c_{ij}^p-r_j^p)^+y_{ij}\\
\textnormal{s.t.} &\forall j \in C: &\sum_{i \in F} y_{ij} \geq 1\\
&\forall i \in F,j \in C: &y_{ij} \leq x_i\\
& &\sum_{i \in F} x_i \leq k\\
& \forall i \in F: &x_i \geq 0 \\
& \forall i \in F,j \in C: &y_{ij} \geq 0
\end{array}
\end{equation*}

The dual program is as follows:
\begin{equation*}
\begin{array}{lll}
& \max &\sum_{j \in C} a_j - kz\\
\textnormal{s.t.} &\forall i \in F,j \in C:& a_j - (c_{ij}^p-r_j^p)^+ \leq b_{ij}\\
&\forall i \in F: &\sum_{j \in C} b_{ij} \leq z\\
& \forall j \in C: & a_j \geq 0\\
& \forall i \in F,j \in C: & b_{ij} \geq 0\\
\end{array}
\end{equation*}
% * <bmm@cs.duke.edu> 2018-01-11T17:17:56.789Z:
% 
% I think the statement of the lemma needs to say something about the algorithm being randomized and the approximation factor being expected...
% 
% I think it'll be easier to just make it say deterministic and then note in the proof how to derandomize the algorithm, so I've changed the proof accordingly - Arun
%
% ^.

% * <bmm@cs.duke.edu> 2018-01-11T14:57:12.922Z:
% 
% Is the variable z here the same as in the dual?
% 
% Yes - Arun
%
% ^.

% * <bmm@cs.duke.edu> 2018-01-11T14:59:03.479Z:
% 
% Uh, is it totally obvious that as the cost of cluster centers is increased the number of cluster centers purchased decreases monotonically, so that binary search applies?   Are we relying on something like the earlier proof of monotonicy/submodularity of f_y(S) ?
% 
% Ah, I see the confusion - it doesn't need to be increasing - as long as the number of cluster centers our algorithm buys starts at >k cluster centers and ends at 1 facility, bisection search can find some z value where the number flips from >k to <k cluster centers - Arun
%
% ^.
% * <bmm@cs.duke.edu> 2018-01-11T15:08:30.368Z:
% 
% I think we need to explain here that we use our algorithm for metric facility location with discounts that we just described and then analyzed in Lemma 10.
%
% I tried to clarify it a bit more by separating the metric facility location with discounts algorithm into its own section and referring to that section, as well as calling it the 'FLD algorithm'. - Arun
% 
% ^.
We now describe the algorithm we will use to prove Lemma \ref{lemma:kmbicriteria-lpp}, which uses the FLD algorithm from Section \ref{section:mfld} as a subroutine. By taking our $\ell_p$-clustering with discounts instance and assigning all cluster centers the same cost $z$, we can produce a FLD instance. When $z = 0$, the FLD algorithm will either buy more than $k$ cluster centers, or find a set of at most $k$ cluster centers, in which case we can output that set. When $z = |C|\max_{i,j} c_{ij}$, the FLD algorithm will buy only 1 cluster center. Thus, for any $\epsilon$ such that $\log{\frac{1}{\epsilon}} = n^{O(1)}$, via bisection search using polynomially many runs of this algorithm we can find a value of $z$ such that this algorithm buys a set of cluster centers $S_1$ of size $k_1 \geq k$ when cluster centers cost $z$ and a set of cluster centers $S_2$ of size $k_2 \leq k$ cluster centers when cluster centers cost $z + \epsilon$ (the bisection search starts with the range $[0, |C|\max_{i,j} c_{ij}]$ and in each iteration, determines how many cluster centers are bought when $z$ is the midpoint value in its current range. It then recurses on the half $[a, b]$ of its current range which maintains the invariant that when $z = a$, at least $k$ cluster centers are bought and when $z = b$, at most $k$ cluster centers are bought).

If either $k_1 = k$ or $k_2 = k$, we output the corresponding cluster center set. Otherwise, we will randomly choose a solution which is roughly a combination of $S_1$ and $S_2$ (we will describe how to derandomize this process as is required to prove Lemma \ref{lemma:kmbicriteria} later). Let $\rho$ be the solution in $[0, 1]$ to $\rho k_1 + (1-\rho)k_2 = k$, i.e. $\rho = \frac{k-k_2}{k_1 - k_2}$. Construct a set $S_1'$ that consists of the closest cluster center in $S_1$ to each cluster center in $S_2$. If  the size of $S_1'$ is less than $k_2$, add arbitrary cluster centers from $S_1 \setminus S_1'$ to $S_1'$ until its size is  $k_2$. Then, with probability $\rho$, let $S^* = S_1'$, otherwise let $S^* = S_2$. Then, sample a uniformly random subset of $k - k_2$ elements from $S_1 \setminus S_1'$ and add them to $S^*$. Then output $S^*$ (note that $S_1 \setminus S_1'$ is of size $k_1 - k_2$ so every element in $S_1 \setminus S_1'$ has probability $\rho$ of being chosen). 

\begin{proof}[Proof of Lemma \ref{lemma:kmbicriteria-lpp}]
Note that if the FLD algorithm ever outputs a solution which buys exactly $k$ cluster centers, then by Lemma \ref{lemma:uflbicriteria} we get that for the LP solution $x, y$ encoding this solution and a dual solution $a$:

\begin{equation*}
\begin{aligned}
\sum_{j \in C} \sum_{i \in F} (c_{ij}^p - 3^pr_j^p)^+y_{ij} + 3^p \sum_{i \in F} f_ix_i &\leq 3^p \sum_{j \in C} a_j \\
\sum_{j \in C} \sum_{i \in F} (c_{ij}^p - 3^pr_j^p)^+y_{ij} + 3^pkz &\leq 3^p \sum_{j \in C} a_j \\
\sum_{j \in C} \sum_{i \in F} (c_{ij}^p - 3^pr_j^p)^+y_{ij} + &\leq 3^p [\sum_{j \in C} a_j - kz]
\end{aligned}
\end{equation*}

Which by duality means that this solution is also a $(3^p, 3^p)$-approximation for the $\ell_p$-clustering with discounts instance.

If bisection search never finds a solution with exactly $k$ cluster centers, but instead a pair of solutions $S_1, S_2$ where $|S_1| > k, |S_2| < k$, the idea is that the algorithm constructs a "bi-point" fractional solution from these solutions (i.e. constructs a fractional solution that is just a convex combination of the two integral solutions) and then rounds it. 

Consider the primal/dual solutions $x^{(1)}, y^{(1)}, a^{(1)}$ and $x^{(2)}, y^{(2)}, a^{(2)}$ corresponding to $S_1, S_2$. By Lemma \ref{lemma:uflbicriteria} we get:

\begin{equation*}
\begin{aligned}
\sum_{j \in C} \sum_{i \in F} (c_{ij}^p - 3^pr_j^p)^+y_{ij}^{(1)} + 3^p k_1z &\leq 3^p \sum_{j \in C} a_j^{(1)} \\
\sum_{j \in C} \sum_{i \in F} (c_{ij}^p - 3^pr_j^p)^+y_{ij}^{(2)} + 3^p k_2(z+\epsilon) &\leq 3^p \sum_{j \in C} a_j^{(2)} \\
\end{aligned}
\end{equation*}

By combining the two inequalities and choosing $\epsilon$ appropriately we can get that:

\begin{equation*}
\sum_{j \in C} \sum_{i \in F} (c_{ij}^p - 3^pr_j^p)^+(\rho y_{ij}^{(1)}+(1-\rho)y_{ij}^{(2)}) \leq (3^p+\epsilon') [\sum_{j \in C} (\rho a_j^{(1)}+(1-\rho)a_j^{(2)}) - kz]
\end{equation*}

For an $\epsilon'$ we will fix later.

% * <bmm@cs.duke.edu> 2018-01-11T15:14:39.260Z:
% 
% Uh, there's a reference to alpha below, but I think we must have changed alpha to some other variable name, because there's no longer any alpha in the primal or dual.
% 
% Fixed - Arun
%
% ^.
Note that $\rho(x^{(1)}, y^{(1)}, a^{(1)})+(1-\rho)(x^{(2)}, y^{(2)}, a^{(2)})$ and $z$ form a feasible (fractional) primal/dual solution pair for the $\ell_p$-clustering with discounts problem, and by the above inequality the primal solution is a $(3^p, 3^p+\epsilon')$-approximation.

Then, we round the convex combination of the two solutions as described above. Let $c_j$ be the connection cost of client $j$ in the rounded solution, and $c_j^{(1)}, c_j^{(2)}$ the connection cost of client $j$ in solutions $S_1, S_2$. Then since $(3^p+\epsilon')(2 \cdot 3^{p-1}-\epsilon') < \frac{2}{3} \cdot 9^p$ for $\epsilon \in [0, 1]$ to prove the lemma it suffices to show that for each client $j$ the expected contribution to the objective using discount $9r_j$ for client $j$ is at most $2 \cdot 3^{p-1}-\epsilon'$ times the contribution of client $j$ to the primal solution's objective using discount $3r_j$. That is:

$$\mathbb{E}[(c_j^p - 9^pr_j^p)^+] \leq (2 \cdot 3^{p-1}-\epsilon')[\rho(c_j^{(1)p} - 3^pr_j^p)^+ + (1-\rho)(c_j^{(2)p} - 3^pr_j^p)^+]$$

% * <bmm@cs.duke.edu> 2018-01-11T15:22:31.683Z:
% 
% Is there a typo here?  When connected to the nearest facility in S_2, isn't the cost c_j^{(2)} rather than c_j^{(1)} ?
% 
% Yes, fixed - Arun
%
% ^.

Suppose client $j$'s nearest cluster center in $S_1$ is in $S_1'$. Then with probability $\rho$, $j$ is connected to that cluster center at connection cost $c_j^{(1)}$, and with probability $1-\rho$ it is connected to the nearest cluster center in $S_2$ at connection cost $c_j^{(2)}$. Then:

$$\mathbb{E}[(c_j^p - 9^pr_j^p)^+] \leq \mathbb{E}[(c_j^p - 3^pr_j^p)^+] = \rho(c_j^{(1)p} - 3^pr_j^p)^+ + (1-\rho)(c_j^{(2)p} - 3r_j^p)^+$$

% * <bmm@cs.duke.edu> 2018-01-11T15:27:00.411Z:
% 
% Hmmm, I don't see why each facility in S_1 \ S_1' has probability p of being opened.  I see that each cluster center is S_1' has probability p of being opened, but depending on the size of S_1 versus S_1', the probability that a facility in S_1\S_1' is opened might be close to 0.
% 
% Perhaps we meant to say probability at most p?
% 
% Added a clarification - Arun
% ^.

Suppose client $j$'s nearest cluster center in $S_1$ (call it $i_1$) is not in $S_1'$. Note that each cluster center in $S_1 \setminus S_1'$ has probability $\rho$ of being opened. Thus with probability $\rho$, we can upper bound $c_j$ by the distance from $j$ to $i_1$. If this does not happen, let $i_2$ be $j$'s nearest cluster center in $S_2$ and $i_1'$ be the cluster center nearest to $i_2$ in $S_1$. One of $i_1', i_2$ must be opened, so we can bound $j$'s connection cost by its connection cost to whichever is opened. Then one of three cases occurs:
% * <bmm@cs.duke.edu> 2018-01-11T15:42:29.572Z:
% 
% Hmmm, I don't see the basis for multiplying probablities p(1-p) or (1-p)^2, as it seems that these events are not independent ?  The algorithm only makes one random choice, with probability p choosing S_1' and with probability (1-p) choosing S_2 ?
% 
% Addressed in the call - Arun
%
% ^.
\begin{itemize}
\item With probability $\rho$, $j$'s nearest cluster center in $S_1$ is opened. Then $c_j$ is at most the distance from $j$ to $i_1$, i.e. $c_j^{(1)}$.

% * <bmm@cs.duke.edu> 2018-01-11T15:50:07.666Z:
% 
% Hmmm... don't quite see why the distance form i_1' to i_2 is "by definition" at most the distance from i_1 to i_2.   Which definition are we talking about?   Maybe we can explain.  i_2 is the nearest facility in S_2 to j.  Also, the nearest facility i_1'' to i_2 is included in S_1'.  The nearest facility to j in S_1 is i_1, and the nearest facility to j in S_1' is i_1'.  Not quite sure where to go from here.  i_1'' is not necessarily the same as i_1'
% 
% Fixed, sorry again! - Arun
%
% ^.
% * <bmm@cs.duke.edu> 2018-01-11T17:21:51.467Z:
% 
% We say that the c_j is at most the distance from j to i_1' - isn't it exactly this distance?  Never mind - might have actually chosen S_2, and j's nearest facility in S_1 is also opened, but not as close as the nearest facility in S_2.
% 
% Addressed in the call - Arun
%
% ^.
\item With probability $(1-\rho)\rho$, $j$'s nearest cluster center in $S_1$ is not opened and $S_1'$ is opened. $c_j$ is at most the distance from $j$ to $i_1'$. Since $i_1'$ is the cluster center closest to $i_2$ in $S_1$, the distance from $i_1'$ to $i_2$ is at most the distance from $i_1$ to $i_2$, which is at most $c_j^{(1)} + c_j^{(2)}$. Then by triangle inequality, the distance from $j$ to $i_1'$ is at most $c_j^{(1)} + 2c_j^{(2)}$. Using the AMGM inequality, we get $c_{i_1' j}^p \leq 3^{p-1}(c_j^{(1) p} + 2 c_j^{(2)p})$.
\item With probability $(1-\rho)^2$, $j$'s nearest cluster center in $S_1$ is not opened and $S_2$ is opened. $c_j$ is at most the distance from $j$ to $i_2$, i.e. $c_j^{(2)}$.
\end{itemize}

Then we get:
\begin{equation*}
\begin{aligned}
& \mathbb{E}[(c_j - 9^p r_j^p)^+] \\
&\leq \rho(c_j^{(1)p} - 9^p r_j^p)^+ + (1-\rho)^2(c_j^{(2)p} - 9^p r_j^p)^+ + (1-\rho)\rho(3^{p-1}(c_j^{(1)} + 2c_j^{(2)}) - 9^p r_j^p)^+ \\
&= \rho(c_j^{(1)p} - 9^p r_j^p)^+ + (1-\rho)^2(c_j^{(2)p} - 9^p r_j^p)^+ + 3^{p-1}(1-\rho)\rho(c_j^{(1)} + 2c_j^{(2)} - 3 \cdot 3^p r_j^p)^+ \\
&\leq \rho(c_j^{(1)} - 3^p r_j^p)^+ + (1-\rho)^2(c_j^{(2)} - 3^p r_j^p)^+ + 3^{p-1}(1-\rho)\rho(c_j^{(1)} - 3^p r_j^p)^+ + 2\cdot 3^{p-1}(1-\rho)\rho(c_j^{(2)} - 3^p r_j^p)^+\\
&= (3^{p-1} (1 - \rho) +1)[\rho(c_j^{(1)} - 3^p r_j^p)^+] + (2 \cdot 3^{p-1} \rho +1 - \rho)[(1-\rho)(c_j^{(2)} - 3^p r_j^p)^+]\\
&\leq (2 \cdot 3^{p-1} - \min\{\rho, 1-\rho\})[\rho(c_j^{(1)} - 3^p r_j^p)^+ + (1-\rho)(c_j^{(2)} - 3^p r_j^p)^+]\\
&\leq (2 \cdot 3^{p-1} - \epsilon')[\rho(c_j^{(1)} - 3^p r_j^p)^+ + (1-\rho)(c_j^{(2)} - 3^p r_j^p)^+].\\
\end{aligned}
\end{equation*}

Where the last step is given by choosing $\epsilon'$ to be at most $\frac{1}{|F|}$, since $p$ = $\frac{k-k_2}{k_1 - k_2}$ and $1 \leq k_2 < k < k_1 \leq |F|$ and thus $\rho$ and $1-\rho$ are both at least $\frac{1}{|F|}$. This gives the lemma, except that the algorithm is randomized. However, the randomized rounding scheme can easily be derandomized: first, we choose $S^*$ to be whichever of $S_1', S_2$ has a lower expected objective. Then, to choose the remaining $k - k_2$ cluster centers to add to $S^*$, we add cluster centers by one by one.  When we have $c$ cluster centers left to add to $S^*$, we add the cluster center $i$ from $S_1 \setminus S_1'$  that minimizes the expected objective achieved by $S^* \cup \{i\}$ and $c-1$  random cluster centers from $(S_1 \setminus S_1') \setminus (S^* \cup \{i\})$. Each step of the derandomization cannot increase the expected objective, so the derandomized algorithm achieves the guarantee of Lemma \ref{lemma:kmbicriteria-lpp}. 
\end{proof}

Again, we note that Lemma~\ref{lemma:kmbicriteria} is obtained as a corollary of Lemma~\ref{lemma:kmbicriteria-lpp}, where we set $p=1$.

\end{document}